\title{Expanderizing Higher Order Random Walks}
\author{Vedat Levi Alev\thanks{Hebrew University of Jerusalem -- 
\href{mailto:vedatle.alev@mail.huji.ac.il}{vedatle.alev@mail.huji.ac.il}} \and Shravas Rao\thanks{Portland State University -- 
\href{mailto:shravas@pdx.edu}{shravas@pdx.edu}}}
\date{\today}
\begin{document}
\begin{titlepage}
    \def\thepage{}
    \thispagestyle{empty}

    \maketitle
    \begin{abstract}
	   We study a variant of the down-up (also known as the Glauber dynamics) and up-down walks over an $n$-partite simplicial complex, which we call  \emph{expanderized higher order random walks} -- where the sequence of updated coordinates correspond to the sequence of vertices visited by a random walk over an auxiliary expander graph $H$. When $H$ is the clique with self loops on $[n]$, this random walk reduces to the usual down-up walk and when $H$ is the directed cycle on $[n]$, this random walk reduces to the well-known systematic scan Glauber dynamics. We show that whenever the usual higher order random walks satisfy a log-Sobolev inequality or a Poincar\'e inequality, the expanderized walks satisfy the same inequalities with a loss of quality related to the two-sided expansion of the auxillary graph $H$. Our construction can be thought as a higher order random walk generalization of the derandomized squaring algorithm of Rozenman and Vadhan (RANDOM 2005).
    
    We study the mixing times of our expanderized walks in two example cases: We show that when initiated with an expander graph our expanderized random walks have mixing time (i) $O(n \log n)$ for sampling a uniformly random list colorings of a graph $G$ of maximum degree $\Delta = O(1)$ where each vertex has at least $(11/6 - \ee) \Delta$ and at most $O(\Delta)$ colors, (ii) $O_{\etch}\parens*{ \frac{n \log n}{(1 - \norm*{\Jay}_\opp)^2}}$ for sampling the Ising model with a PSD interaction matrix $\Jay \in \RR^{n \times n}$ satisfying $\norm{\Jay}_\opp \le 1$ and the external field $\etch \in \RR^n$-- here the $O(\bullet)$ notation hides a constant that depends linearly on the largest entry of $\etch$. As expander graphs can be very sparse, this decreases the amount of randomness required to simulate the down-up walks by a logarithmic factor.
    
        We also prove some simple results which enable us to argue about log-Sobolev constants of higher order random walks and provide a simple and self-contained analysis of local-to-global $\Phi$-entropy contraction  in simplicial complexes -- giving simpler proofs for many pre-existing results. 
    \end{abstract}
\end{titlepage}
\newpage

\section{Introduction}
\subsection{Higher Order Random Walks, Systematic Scan, and Expanderized Walks}
\footnote{All concepts and random walks we define in this introductory section of the paper, will be formally defined in \cref{sec:prelim} and \cref{sec:expander}}Let $U_1, \cdots, U_n$ be a collection of finite sets. The \emph{down-up walk $\Duw$} on $\Omega \subset U_1 \times \cdots U_n$ with respect to a given distribution $\pi: \Omega \to \RRp$, also known as the \emph{Glauber dynamics} on $\Omega$ according to $\pi$, is the following simple process: Starting from an arbitrary tuple $\omega^{(0)}$, we obtain the $(t+1)$-st tuple $\omega^{(t+1)}$ visited by this random walk from the $t$-th tuple $\omega^{(t)}$ as follows,\\
\begin{boxedc}{0.85\textwidth}{ Update Rule for the Down-Up Walk, $\Duw$}
\begin{enumerate}
    \item sample a uniformly random coordinate $i \sim \uni_{[n]}$,
    \item sample a random tuple $\omega^{(t+1)} \sim \pi$ conditional on $\omega^{(t+1)}_j = \omega^{(t)}_j$ for all $j \in [n] \setminus \set i$.
\end{enumerate}
\end{boxedc}

The following variant of the down-up walk, called the \emph{systematic scan $\Psc$} on $\Omega$ according to $\pi$, is a variant of the down-up walk $\Duw$ which uses less randomness and is easier to implement in practice: starting from an arbitrary tuple $\omega^{(0)}$, we obtain the $(t+1)$-st tuple $\omega^{(t+1)}$ visited by this random walk from the $t$-th tuple $\omega^{(t)}$ as follows,
\begin{boxedc}{0.85\textwidth}{Update Rule for the  Systematic Scan, $\Psc$}
\begin{enumerate}
    \item set $i = t + 1 \pmod n$,
    \item sample a random tuple $\omega^{(t+1)} \sim \pi$ conditional on $\omega^{(t+1)}_j = \omega^{(t)}_j$ for all $j \in [n] \setminus \set i$.
\end{enumerate}
\end{boxedc}
In both cases, the coordinate $i$ that is sampled on the first step of the update can be thought as a vertex visited by the simple random walk on a graph. For the down-up walk, this is a random walk on the clique with self-loops, whereas for the systematic scan this is a (deterministic) walk on the directed cycle. 

The main object of study in this paper will be the so-called \emph{expanderized down-up walk} $\Paqx$ on $\Omega$ with respect to the distribution $\pi: \Omega \to \RRpp$ and the $k$-regular graph $H = ([n], E)$ for some constant $k$. Starting this random-walk from an arbitrary coordinate $i^{(0)} \in [n]$ and an arbitrary tuple $\omega^{(0)}$, we obtain the $(t+1)$-st coordinate $i^{(t+1)}$ and tuple $\omega^{(t+1)}$ according to the following update rule,
\begin{boxedc}{0.9 \textwidth}{Update Rule for the Expanderized Down-Up Walk $\Paqx$}
    \begin{enumerate}
        \item sample a random neighbor $s$ of $i^{(t)}$ in $H$,
        \item sample a random tuple $\omega^{(t+1)} \sim \pi$ conditional on $\omega^{(t+1)}_j = \omega^{(t)}_j$ for all $j \in [n] \setminus \set s$,
        \item set $i^{(t+1)}$ to be a random neighbor of $s$ in $H$.
    \end{enumerate}
\end{boxedc}
We notice that according to the above update rule when $i^{(0)}$ is sampled uniformly at random and $H$ equals the clique with self-loops on $[n]$ the evolution of $\omega^{(t)}$ is as dictated by the down-up walk $\Duw$. Similarly, when $i^{(0)} = 1$ and $H$ is the directed cycle, the evolution of $\omega^{(t)}$ is as dictated by the systematic scan $\Psc$. 

The main contribution of this paper is an analysis of the expanderized down-up walk assuming, (i) the graph $H$ is a spectral expander\footnote{i.e.~all non-trivial eigenvalues of $H$ are bounded away from 1} and (ii) the down-up walk $\Duw$ satisfies some kind of isoperimetric inequality, e.g.~a log-Sobolev inequality or a Poincar\'e inequality. Indeed our methods allow us to extend our results to all down-up and up-down walks.
\subsection{Motivation: Systematic Scan, Expander Graphs, and Derandomized Squaring}
The systematic scan $\Psc$ is a random walk of great practical and theoretical interest. Yet, rapid mixing results for this walk are only known under restricted circumstances \cite{DiaconisR00,Hayes06, DyerGJ06,DyerGJ08, RobertsR15, FengGWWY23} and it is very hard to directly relate the rapid mixing of $\Duw$ to that of $\Psc$. A particularly useful framework for establishing rapid mixing for the down-up walk is the method of high-dimensional expansion, in particular the frameworks of spectral independence and entropic independence \cite{AlevL20, AnariLO20, ChenLV20, ChenLV21, ChenE22, AnariJKP22,AnariJKP22a, AnariJKPV23} which led to many breakthrough results in the field of sampling algorithms. 

In \cite{AlevP23}, an attempt was made to study the mixing of the systematic scan\footnote{More formally, $n$ successive steps of the systematic scan, which the authors call the \emph{sequential sweep} $\Psq$.} using techniques of high-dimensional expansion -- while their techniques allowed them to establish rapid mixing results for constant dimensional partite simplicial complexes, their result is too restrictive to take advantage of mixing results obtained through spectral independence or entropic independence.  As a step towards directly being able to take advantage of the mixing results for $\Duw$, which could potentially be obtained through the high-dimensional expansion framework, we introduce our expanderized down-up walks $\Paqx$. As expander graphs have proven themselves very successful at approximating dense objects, we hope -- and indeed also prove -- that transfering mixing time bounds from the usual down-up walks to our expanderized walks to be an easier task than establishing mixing times for $\Psc$.  As expander graphs can be very sparse, our expanderized walks can be thought as replacing the sparse object used in the definition of the systematic scan $\Psc$, i.e.~the directed cycle, with another sparse yet highly connected object -- an expander graph with constant degree. 

In spirit, the expanderized walks can be thought as a higher order random walk analogue of the derandomized squaring algorithm introduced in \cite{RozenmanV05}. This algorithm was introduced to simplify the seminal result of \cite{Reingold08} concerning the existence of a logspace algorithm for deciding undirected connectivity. The derandomized squaring operation uses an auxiliary $k$-regular expander graph $H$ on the vertex set $[d]$ to approximate the square of a graph $d$-regular graph $G$ on $[n]$. Whereas the actual square $G^2$ is a $d^2$-regular graph, by picking $k = O(1)$ one can ensure that the \emph{derandomized square} is $O(d)$-regular, i.e.~a much sparser object. This result rests on the observation that the actual square $G^2$ is obtained from the graph $G$ by attaching a clique to every vertex -- replacing this clique with an expander graph suffices to ensure that the resulting \emph{derandomized square} is closed to the actual square. Fortunately, the same intuition also leads to proofs showing that the expanderized walks approximate the standard walks well.
\subsection{Our Results}
\subsubsection*{Expanderized Walks}
Our main contribution in this paper is the study and analysis of expanderized higher order random walks. Since throughout the paper we use the language of simplicial complexes, we recall some basics: A simplicial complex $X$ is a downward closed collection of subsets of some finite set $U$. We write $X^{(j)}$ to denote the subsets in $X$ of size $j$ -- we call the cardinality of the largest element in $X$ the rank of $X$ and elements of $X$ its faces. We call $(X, \pi)$ a weighted simplicial complex of rank $n$, where $\pi: X^{(n)} \to \RRpp$ is a probability distribution on $X^{(n)}$. Throughout, we will assume that $X$ is obtained by taking the downward closure of some collection of interest, i.e.~we have $X^{(n)} = \Omega$ where $\Omega$ is a collection of $n$-elemented sets and $X^{(j)}$ is the collection of $j$-elemented subsets which are contained in $\omega \in \Omega$ for all $j = 0, \ldots, n-1$. We will say $X$ is an $n$-partite simplicial complex, if it is of rank $n$ and there exists some finite sets  $U_1,\ldots, U_n$ such that $X^{(n)}$ can be identified with a subset of $U_1 \times \cdots \times U_n$. For example by, identifying the tuple $(u_1,\ldots,u_n)$ with the set $\set*{(1, u_1), \ldots, (n, u_n)}$. The sets $U_1,\ldots, U_n$ are called the sides of the simplicial complex. For $\omega \in X^{(n)}$ and $S \subset [n]$, we will write $\omega_S$ for the restriction of $\omega$ to the coordinates in $S$, i.e.~if $\omega = (u_1, \ldots, u_n)$ we identify $\omega_S$ with the set $\set*{ (s, u_s) \mid s \in S}$. For now, we will restrict our attention to partite simplicial complexes. In particular, every element in $\widehat\omega \in X^{(\ell)}$ in a partite simplicial complex can be obtained from a face $\omega \in X^{(n)}$ and some $S \in \binom{[n]}{\ell}$, by setting $\widehat\omega = \omega_{S}$. Whereas the choice of $\omega$ is not unique, the choice of $S$ is. For $\widehat\omega \in X^{(< n)}$ we denote this unique choice by $\typ(\widehat\omega)$.

We now recall the following random walks between $X^{(n)}$ and $X^{(\ell)}$. The first random walk is the so-called down-up walk $\Duw_{n \lra \ell}$, which is also known as the Glauber dynamics with block size $(n-\ell)$. Starting from an arbitrary face $\omega^{(0)} \in X^{(n)}$, we obtain the $t$-th face $\omega^{(t)} \in X^{(n)}$ that we visit, from the $(t-1)$-st face $\omega^{(t-1)}$ according to the following update rule, 
\begin{boxedc}{0.85\textwidth}{ Update Rule for the Down-Up Walk According to $\pi$, $\Duw_{n \lra \ell}$}
\begin{enumerate}
    \item sample a uniformly random set of coordinates $S \sim \uni_{\binom{[n]}{\ell}}$,
    \item sample a random tuple $\omega^{(t+1)} \sim \pi$ conditional on $\omega^{(t+1)}_S = \omega^{(t)}_S$.
\end{enumerate}
\end{boxedc}
We notice that for $\ell = n - 1$, this is the same walk we have defined in the preceding section.

We also recall the up-down walk $\Udw_{\ell \lra n}$ between  $X^{(\ell)}$ and $X^{(n)}$ (according to $\pi$) -- where $(X,\pi)$ is a partite simplicial complex. This is the random walk that starts from an arbitrary face $\widehat\omega^{(0)} \in X^{(\ell)}$, and determines the $t$-th face $\widehat\omega^{(t)}$ visited by this random walk using the $(t-1)$-st face $\widehat\omega^{(t-1)}$ according to the following update rule,
\begin{boxedc}{0.85\textwidth}{ Update Rule for the Up-Down Walk According to $\pi$, $\Duw_{\ell \lra n}$}
\begin{enumerate}
    \item sample a random face $\omega \sim \pi$ conditional on containing $\widehat\omega^{(t-1)}$,
    \item sample a uniformly random set $S \sim \uni_{\binom{[n]}{\ell}}$, 
    \item output $\widehat\omega^{(t)} = \omega_{S}$.
\end{enumerate}
\end{boxedc}
 Given a degree regular graph $H$ on the vertex set $\binom{[n]}{\ell}$, we define the expanderized up-down walk between $X^{(\ell)}$ and $X^{(n)}$ (via $H$ according to $\pi$), as a random walk which starts from an arbitrary face $\widehat\omega^{(0)} \in X^{(\ell)}$ and picks the $t$-th face visited $\widehat\omega^{(t)}$ by this random walk using the $(t-1)$-st face $\widehat\omega^{(t-1)}$ according to the following update rule, 
\begin{boxedc}{0.85\textwidth}{ Update Rule for the Expanderized Up-Down Walk via $H$ according to $\pi$, $\Papx_{\ell \lra n}$}
\begin{enumerate}
    \item sample a random face $\omega \sim \pi$ conditional on containing $\widehat\omega^{(t-1)}$,
    \item sample a random neighbor $S$ of $\typ(\widehat\omega^{(t-1)})$ in $H$,
    \item output $\widehat\omega^{(t)} = \omega_S$.
\end{enumerate}
\end{boxedc}
Finally, we define the expanderized down-up walk $\Paqx_{n \lra \ell}$ between $X^{(n)}$ and $X^{(\ell)}$ according to $\pi$ as the following random walk on $X^{(n)} \times \binom{[n]}{\ell}$, starting from an arbitrary pair $(\omega^{(0)}, S^{(0)}) \in X^{(n)} \times \binom{[n]}{\ell}$, this walk picks the $t$-th pair visited by this random walk as follows,
\begin{boxedc}{0.9 \textwidth}{Update Rule for the Expanderized Down-Up Walk $\Paqx_{n \lra \ell
}$}
    \begin{enumerate}
        \item sample a random neighbor $T$ of $S^{(t-1)}$ in $H$,
        \item sample a random tuple $\omega^{(t)} \sim \pi$ conditional on $\omega^{(t)}_T = \omega^{(t-1)}_T$,
        \item set $S^{(t)}$ to be a random neighbor of $T$ in $H$.
    \end{enumerate}
\end{boxedc}

\begin{remark}
    Once the expanderized down-up walk reaches stationarity, the face component will be distributed according to $\pi$. Similarly, once the expanderized up-down walk reaches stationarity, it will be distributed according to the correct marginal distribution of $\pi$. Both random walks are reversible. 
\end{remark}

We also recall that the \ref{eq:gap-def} $\Gap(\Pii)$ of a reversible random walk matrix $\Pii \in \RR^{\Omega \times \Omega}$ is defined to be,
\[ \Gap(\Pii) = 1 - \lambda(\Pii),\]
where $\lambda(\Pii)$ denotes the \ref{eq:two-sided} expansion of $\Pii$, i.e.~$\lambda(\Pii) = \max\set*{ \lambda_2(\Pii), | \lambda_{\min}(\Pii)|}$.\footnote{In the main body of the paper we will adopt the convention $\Gap(\Pii) = 1 - \lambda_2(\Pii)$, but to keep our exposition simple we will work with the \ref{eq:two-sided} parameter $\lambda(\Pii)$ throughout the introduction.}

It is well known that a bound on the \ref{eq:gap-def} translates into a bound on the mixing time of the random walk $\Pii$. The following bound is well known, cf.~\cite[Proposition 1.12]{MontenegroT05}
\begin{theorem}
    Let $\Pii \in \RR^{\Omega \times \Omega}$ be a reversible random walk matrix with stationary distribution $\pi: \Omega \to \RRpp$, i.e.~$\pi\Pii = \pi$. We have,
    \[ \Tmix(\Pii, \ee) \le \frac{1}{1 - \lambda(\Pii)} \cdot \log\frac{1}{\ee \cdot \sqrt{\min_{\omega \in \Omega} \pi(\omega)}}.\] 
\end{theorem}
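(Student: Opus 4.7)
The plan is to prove this by a standard spectral $L^2$ argument, pushing through the inequality $\lVert \mu - \pi\rVert_{\mathrm{TV}} \le \tfrac12 \lVert \mu/\pi - 1\rVert_{2,\pi}$ and the exponential decay of the $L^2$ norm under iterates of a self-adjoint contraction on the orthogonal complement of constants.

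First I would set up the spectral picture. Since $\Pii$ is reversible with stationary distribution $\pi$, it is self-adjoint on $L^2(\pi)$ (the weighted inner product $\langle f,g\rangle_\pi = \sum_\omega \pi(\omega) f(\omega) g(\omega)$). Hence $\Pii$ admits an orthonormal eigenbasis with real eigenvalues, the top one being $1$ with eigenfunction the constant $\mathbf 1$, and all others lying in $[-\lambda(\Pii), \lambda(\Pii)] \cup \{1\}$ by the definition of $\lambda(\Pii)$. Consequently, for any $g \in L^2(\pi)$ with $\langle g, \mathbf 1\rangle_\pi = 0$, we have $\lVert \Pii^t g\rVert_{2,\pi} \le \lambda(\Pii)^t \, \lVert g\rVert_{2,\pi}$.

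Next I would transport this to the distributional side. Fix a start state $\omega_0 \in \Omega$ and let $\mu_t = \delta_{\omega_0}\Pii^t$ be the distribution after $t$ steps; let $f_t = \mu_t / \pi$ denote its density with respect to $\pi$. Reversibility gives $(\delta_{\omega_0}\Pii^t)(x)/\pi(x) = (\Pii^t f_0)(x)$ where $f_0 = \delta_{\omega_0}/\pi$, so $f_t = \Pii^t f_0$. The function $f_t - 1$ is orthogonal to constants (both $\mu_t$ and $\pi$ are probability distributions), so by the spectral bound
\[
\lVert f_t - 1 \rVert_{2,\pi} \;\le\; \lambda(\Pii)^t \, \lVert f_0 - 1\rVert_{2,\pi}.
\]
A direct computation gives $\lVert f_0 - 1 \rVert_{2,\pi}^2 = 1/\pi(\omega_0) - 1 \le 1/\pi(\omega_0)$. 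Combining with Cauchy--Schwarz, which yields the standard bound $\lVert \mu_t - \pi\rVert_{\mathrm{TV}} = \tfrac12 \lVert f_t - 1\rVert_{1,\pi} \le \tfrac12 \lVert f_t - 1\rVert_{2,\pi}$, we obtain
\[
\lVert \mu_t - \pi\rVert_{\mathrm{TV}} \;\le\; \tfrac12 \, \lambda(\Pii)^t \, \frac{1}{\sqrt{\pi(\omega_0)}} \;\le\; \tfrac12 \, \lambda(\Pii)^t \, \frac{1}{\sqrt{\min_\omega \pi(\omega)}}.
\]

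Finally, I would convert this to a mixing-time statement by setting the right-hand side to $\ee$. Using the elementary inequality $\lambda(\Pii)^t \le e^{-t(1-\lambda(\Pii))}$, it suffices to choose $t$ so that $e^{-t(1-\lambda(\Pii))} \le 2\ee\sqrt{\min_\omega \pi(\omega)}$, i.e.\ $t \ge \frac{1}{1 - \lambda(\Pii)} \log\frac{1}{2\ee \sqrt{\min_\omega \pi(\omega)}}$, which is implied by the (slightly weaker) bound claimed in the theorem.

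There is no serious obstacle here; the whole argument is routine once reversibility is used to identify $\Pii$ with a self-adjoint operator on $L^2(\pi)$. The only spot requiring a little care is the identification $\mu_t/\pi = \Pii^t(\delta_{\omega_0}/\pi)$, which is precisely where reversibility (equivalently, the detailed balance equations $\pi(x)\Pii(x,y) = \pi(y)\Pii(y,x)$) enters.
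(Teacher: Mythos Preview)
Your proof is correct and is the standard spectral $L^2$ argument for this bound. The paper itself does not prove this theorem: it is stated in the preliminaries as a well-known result with a citation to \cite[Proposition 1.12]{MontenegroT05}, so there is no paper proof to compare against. One minor point worth making explicit is that the worst-case initial distribution in the definition of $\Tmix$ is attained at a point mass (by convexity of total variation), which justifies restricting to $\mu_0 = \delta_{\omega_0}$.
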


We now describe our results concerning expanderized random walks, which we will be proven in~\cref{sec:expander}. 
\begin{theorem}[Simplified Version of \cref{cor:gaplift}]\label{thm:gshow}
    Let $(X, \pi)$ be an $n$-partite simplicial complex and $H$ a degree regular graph on the vertex set $\binom{[n]}{\ell}$ where $0 \le \ell \le n$. Writing $\Duw_{n \lra\ell}$ and $\Udw_{\ell \lra n}$ for the down-up walk and up-down walks between $X^{(\ell)}$ and $X^{(n)}$  according to $\pi$, and $\Papx_{\ell \lra n}$ for the expanderized up-down walk via $H' = H^2$ and $\Paqx_{n \lra \ell}$ for the expanderized down-up walk via $H$, we have
    \[ \Gap\parens*{ \Paqx_{n \lra \ell} } \ge \Gap\parens*{ \Duw_{n \lra \ell} } \cdot \Gap(H^2)~~\textrm{and}~~\Gap\parens*{\Papx_{\ell \lra n}} \ge \Gap\parens*{ \Udw_{\ell \lra n} } \cdot \Gap(H).\]
\end{theorem}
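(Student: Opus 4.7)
The plan is to set up a common operator-theoretic framework on the lifted space $L^2(X^{(n)} \times \binom{[n]}{\ell},\, \pi \otimes \uni)$ that displays $\Paqx_{n \lra \ell}$ and $\Papx_{\ell \lra n}$ as $AB$ and $BA$ for the same pair of operators, reducing both inequalities to a single spectral computation. I would introduce: $W' = I \otimes W$ with $W$ the transition matrix of $H$; the isometric extension $U'$ defined by $(U'\widetilde h)(\omega, T) = \widetilde h(\omega_T)$; and its adjoint $D' = (U')^*$, which works out to $(D'f)(\widehat\omega) = \mathbb{E}_{\omega\sim\pi\,\mid\,\omega \supseteq \widehat\omega}[f(\omega, \typ(\widehat\omega))]$. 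A direct check should then show that $U'D'$ is the self-adjoint projection ``resample $\omega$ conditional on the restriction to $T$'', from which the factorizations
\[
\Paqx_{n \lra \ell} \;=\; W'\,U'D'\,W', \qquad \Papx_{\ell \lra n} \;=\; D'\,(W')^2\,U'
\]
fall out by unwinding the update rules (the latter using that the auxiliary graph for the up-down walk is $H^2$). Setting $C = D'W'$, one has $\Paqx = C^*C$ and $\Papx = CC^*$, so both are PSD and share the same nonzero spectrum; thus $\lambda_2(\Paqx) = \lambda_2(\Papx)$, and it will suffice to bound the gap of $\Papx$.

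Next I would decompose $W = J + E$ with $J$ the rank-one averaging and $\|E\| \le \lambda(H)$. Since $JE = EJ = 0$, we have $W^2 = J + E^2$, and lifting to $(W')^2 = J' + (E')^2$ gives
\[
\Papx = D'J'U' + D'(E')^2 U' = \Udw_{\ell \lra n} + (E'U')^*(E'U'),
\]
because replacing $T$ with a uniform sample recovers the standard up-down walk, i.e.~$D'J'U' = \Udw$. Consequently, for any mean-zero $\widetilde h \in L^2(X^{(\ell)})$,
\[
\langle \widetilde h, \Papx \widetilde h\rangle \;=\; \langle \widetilde h, \Udw \widetilde h\rangle + \|E'U'\widetilde h\|^2.
\]

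The heart of the proof is bounding the noise term. The plan is to observe that $(J'U'\widetilde h)(\omega, T) = \mathbb{E}_{T'}\widetilde h(\omega_{T'}) = (U\widetilde h)(\omega)$ with $U$ the classical up-operator, so $\|J'U'\widetilde h\|^2 = \|U\widetilde h\|_\pi^2 = \langle \widetilde h, \Udw \widetilde h\rangle$; then, using that $U'$ is an isometry and that $E'$ annihilates $\mathrm{range}(J')$,
\[
\|E'U'\widetilde h\|^2 \;\le\; \lambda(H)^2\,\|U'\widetilde h - J'U'\widetilde h\|^2 \;=\; \lambda(H)^2\bigl(\|\widetilde h\|^2 - \langle \widetilde h, \Udw \widetilde h\rangle\bigr).
\]
Substituting and using $\langle \widetilde h, \Udw \widetilde h\rangle \le \lambda(\Udw)\|\widetilde h\|^2$ on mean-zero vectors will give $\lambda_2(\Papx) \le 1 - \Gap(H^2)\Gap(\Udw)$, whence $\Gap(\Papx) \ge \Gap(H^2)\Gap(\Udw) \ge \Gap(H)\Gap(\Udw)$. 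The bound for $\Paqx$ will then follow from $\Gap(\Paqx) = \Gap(\Papx)$ combined with the classical identity $\Gap(\Udw_{\ell \lra n}) = \Gap(\Duw_{n \lra \ell})$ (since $DU$ and $UD$ share nonzero spectra). The main obstacle I anticipate is the bookkeeping needed to verify the factorization identities with the correct inner-product weights—particularly that $U'D'$ really is the claimed projection and that $D'J'U' = \Udw$; the $AB/BA$ reduction across spaces of different dimension will go through cleanly because we only compare nonzero eigenvalues of PSD operators.
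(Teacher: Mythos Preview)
Your proposal is correct and lands on the same numerical bound as the paper, but the route is organized differently enough to be worth a comment.

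The paper does not write the exact identity $\Papx_{\ell\lra n}=\Udw_{\ell\lra n}+(E'U')^{*}(E'U')$. Instead it proves a standalone operator-norm approximation (\cref{thm:exp-close}): for the expanderized up-down walk via an arbitrary $H$ one has $\|\Papx-(1-\lambda(H))\Udw\|_{\opp,\pi_\ell}\le\lambda(H)$. The gap bound then falls out of a one-line triangle inequality applied to an eigenvector. Your decomposition $W^2=J+E^2$ and the resulting PSD remainder is morally the same computation specialized to $H^2$; indeed your $C=D'W'$ is exactly the paper's $\Qup_{\ell\to n}$, and $C^{*}=W'U'$ is $\Qdo_{n\to\ell}$, so the $AB/BA$ step is identical to the paper's use of \cref{cor:e2v} and \cref{fac:switcheroo}. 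The refinement you exploit---that the remainder $(E'U')^{*}(E'U')\preceq \lambda(H)^2(\Ide-\Udw)$ rather than merely having operator norm $\le\lambda(H)^2$---gives the same final inequality because $\Udw$ is PSD, but it is a slightly sharper pointwise statement.

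What the paper's packaging buys is reusability: the operator-norm theorem works for $\Papx$ via any $H$ (not just squares), and is invoked verbatim to transfer log-Sobolev inequalities in \cref{cor:entropicstuff}. Your exact-identity approach is cleaner for this specific corollary but is tied to the factorization through $H^2$; to recover the paper's $\Gap(\Papx_{\text{via }H})\ge\Gap(\Udw)\cdot\Gap^\star(H)$ for a non-square $H$ you would have to fall back on an operator-norm argument anyway, since $D'E'U'$ is no longer PSD.
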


\begin{remark}
    We note that the difference in the dependence on $\Gap(H)$ between the bounds of $ \Gap\parens*{ \Paqx_{n \lra \ell} }$ and $\Gap\parens*{\Papx_{\ell \lra n}}$ is an artifact of the update rule of the two walks, which we have chosen to ensure that the down- and up-movements of the expanderized walks to be adjoints of each other. In particular, in the expanderized down-up walk a sequence of two random vertices in $H$ is sampled, while in the expanderized up-down only one vertex is sampled.  
\end{remark}
Similarly, we recall that the \ref{eq:ent-def} functional $\Ent_\pi(\eff)$ is defined as,\[ \Ent_\pi(\eff) = \Exp_{\omega \sim \pi} \eff(\omega) \log \eff(\omega) - \parens*{ \Exp_\pi \eff}\log\parens*{ \Exp_\pi \eff}.\]
We recall that the entropy contraction constant $\EC(\Pii)$ and the log-Sobolev constant $\LS(\Pii)$ of a reversible  random walk $\Pii \in \RR^{\Omega \times \Omega}$ are the largest constants $C_{ec} \ge 0$ and $C_{ls} \ge 0$ respectively, satisfying the following inequalities for all $\eff \in \RRp^{\Omega}$,
\begin{align*}
    \Ent_\pi(\Pii\eff) &~\le~(1 - C_{ec}) \cdot \Ent_\pi(\eff),\\
    C_{ls} \cdot \Ent_\pi(\eff^2) &~\le~\mathcal E_{\Pii}(\eff, \eff)
\end{align*}
where we have written $\mathcal E_\Pii(\bullet, \bullet)$ for the Dirichlet form of the random walk $\Pii$. We recall that the entropy contraction constant $\EC(\Pii)$ allows establishing a sharper bound on the mixing time,
\begin{theorem}[\cite{BlancaCPSV21}]
    There exists a universal constant $C$ such that, for any reversible random walk $\Pii \in \RR^{\Omega \times \Omega}$ with stationary distribution $\pi: \Omega \to \RRpp$, i.e.~$\pi \Pii = \pi$. We have
    \[ \Tmix(\Pii, \ee) \le \frac{C}{\EC(\Pii)} \cdot  \parens*{{\log\log\frac{1}{\min_{\omega \in \Omega} \pi(\omega)} + \log \ee^{-1}}},\]
    where the constant $C$ does not depend on the pair $(\Pii, \pi)$ and $\EC(\Pii)$ denotes the entropy contraction constant of the random walk $\Pii$.
\end{theorem}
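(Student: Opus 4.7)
The plan is to combine iterated entropy contraction with Pinsker's inequality. Fix an arbitrary starting state $\omega_0 \in \Omega$, let $\mu_t$ denote the distribution at time $t$ of the chain started at $\omega_0$, and set $\eff_t = \mu_t/\pi$ to be its density with respect to $\pi$. Reversibility of $\Pii$ gives the evolution $\eff_{t+1} = \Pii \eff_t$, and unpacking the definition of $\Ent_\pi$ shows that $\Ent_\pi(\eff_t)$ equals the KL divergence of $\mu_t$ from $\pi$. At $t=0$ the density is $\eff_0 = \mathbf 1_{\omega_0}/\pi(\omega_0)$, so $\Ent_\pi(\eff_0) = \log(1/\pi(\omega_0)) \le \log(1/\pi_{\min})$, where $\pi_{\min} = \min_{\omega \in \Omega}\pi(\omega)$.

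The first step is to iterate the hypothesis $\Ent_\pi(\Pii \eff) \le (1 - \EC(\Pii)) \cdot \Ent_\pi(\eff)$ for $t$ steps; since $\eff_t \ge 0$ at every step (as $\Pii$ has nonnegative entries) the inequality applies inductively, yielding
\[ \Ent_\pi(\eff_t) \;\le\; \parens*{1 - \EC(\Pii)}^t \cdot \log(1/\pi_{\min}).\]
The second step is to apply Pinsker's inequality $\|\mu_t - \pi\|_{\mathrm{TV}} \le \sqrt{\Ent_\pi(\eff_t)/2}$. Combining the two, it suffices to choose $t$ large enough that $(1 - \EC(\Pii))^t \log(1/\pi_{\min}) \le 2\ee^2$, which holds whenever
\[ t \;\ge\; \frac{1}{\EC(\Pii)} \cdot \parens*{\log\log(1/\pi_{\min}) + 2\log(1/\ee) + O(1)}.\]
Absorbing the additive $O(1)$ and the factor of $2$ into the universal constant $C$ recovers the claimed mixing-time bound.

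The main conceptual point -- and the reason the bound contains $\log\log(1/\pi_{\min})$ rather than the spectral-gap style $\log(1/\pi_{\min})$ -- is that the entropy functional records the initial discrepancy to stationarity only \emph{logarithmically} (as $\log(1/\pi_{\min})$, not $1/\pi_{\min}$) while contracting \emph{exponentially} under iteration; solving for $t$ introduces one more logarithm and produces the double log. The only real obstacle I anticipate is being careful about the elementary inequality $\log(1/(1-x)) \ge x$ used to replace $\log(1/(1 - \EC(\Pii)))$ by $\EC(\Pii)$ when inverting the exponential decay, which is precisely where the universal constant $C$ is absorbed.
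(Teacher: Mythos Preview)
Your argument is correct and is exactly the standard route to this bound: iterate the entropy contraction inequality on the density $\eff_t = \mu_t/\pi$ (using reversibility to write $\eff_{t+1} = \Pii\eff_t$), then convert to total variation via Pinsker, and solve for $t$. The only cosmetic point is that the additive $-\log 2$ and the edge case $\pi_{\min} \ge 1/e$ (where $\log\log(1/\pi_{\min})$ may be nonpositive) require a moment's care when absorbing everything into a single universal constant, but this is routine.

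As for comparison with the paper: there is nothing to compare. The paper does not prove this theorem; it is quoted as a known result from \cite{BlancaCPSV21} (both in the introduction and again as \cref{thm:entmix} in the preliminaries) and used as a black box when deriving mixing-time consequences in \cref{sec:hijack}. Your sketch is precisely the argument one finds in that reference.
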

We can prove the following bounds for the entropy contraction of expanderized walks,
\begin{theorem}[Simplified Version of \cref{cor:entropicstuff}]\label{thm:eshow}
    Let $(X, \pi)$ be a weighted $n$-partite simplicial complex and $H$ a degree-regular graph on $\binom{[n]}{\ell}$ where $0 \le \ell \le n$. Then, writing $\Duw_{n \lra \ell}$ for the down-up walk between $X^{(n)}$ and $X^{(\ell)}$ according to $\pi$, $\Udw_{\ell \lra n}$ for the up-down walk on between $X^{(\ell)}$ and $X^{(n)}$ according to $\pi$,  $\Paqx_{n \lra \ell}$ for the expanderized down-up walk via $H$, and $\Papx$ for the expanderized up-down walk via the graph $H^2$ we have
    \[ \EC\parens*{ \Paqx_{\ell \lra n} } \ge \LS\parens*{ \Udw_{\ell \lra n} } \cdot \Gap(H^2)~~\textrm{and}~~\EC\parens*{\Papx_{\ell \lra n}} \ge \LS\parens*{\Udw_{\ell \lra n}} \cdot \Gap(H^2),\]
    where $\Gap(\bullet)$ is as defined before.
\end{theorem}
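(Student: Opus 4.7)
The plan is to derive entropy contraction for the expanderized walk by combining the log-Sobolev inequality of the standard walk with the spectral gap of the auxiliary expander, using the local-to-global $\Phi$-entropy framework advertised as the paper's third contribution. The underlying philosophy is the same as in the spectral gap bound of \cref{thm:gshow}: the expanderized walk factors into a standard-walk step and an auxiliary-walk step, and the two isoperimetric constants combine multiplicatively.

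First I would view $\Papx_{\ell \lra n}$ as the composition of three operations: an up-step $\omega \sim \pi(\cdot \mid \widehat\omega)$, a type transition $S \to T$ drawn from one $H^2$-step starting at $\typ(\widehat\omega)$, and the restriction $\omega \mapsto \omega_T$. The standard walk $\Udw_{\ell \lra n}$ uses the uniform distribution on $\binom{[n]}{\ell}$ in place of the $H^2$-step; since $H^2$ is regular with uniform stationary distribution, $\Papx_{\ell \lra n}$ has the same stationary distribution as $\Udw_{\ell \lra n}$, namely the natural $\ell$-marginal $\pi_\ell$ of $\pi$, whose type marginal is itself uniform on $\binom{[n]}{\ell}$.

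Next I would apply a law-of-total-entropy decomposition along the type stratification $X^{(\ell)} = \bigsqcup_{S} \set*{\widehat\omega \in X^{(\ell)} : \typ(\widehat\omega) = S}$:
\[ \Ent_{\pi_\ell}(\eff) ~=~ \Exp_{S \sim \uni_{\binom{[n]}{\ell}}} \Ent_{\pi_\ell(\cdot \mid S)}(\eff) + \Ent_{\uni}(\bar\eff), \qquad \bar\eff(S) ~:=~ \Exp_{\pi_\ell(\cdot \mid S)} \eff, \]
and bound $\Ent_{\pi_\ell}(\eff) - \Ent_{\pi_\ell}(\Papx \eff)$ summand by summand. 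The intra-type contribution is dominated, in the Dirichlet form sense, by a single standard $\Udw$-step and is therefore absorbed by $\LS(\Udw_{\ell \lra n})$; the inter-type contribution acts on $\bar\eff$ precisely as the random walk on $H^2$, and is controlled by $\Gap(H^2)$. The companion bound for $\Paqx$ then follows from the adjoint construction of the down-up walk, where the type move is effectively squared, explaining why $H^2$ reappears on both sides.

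The main obstacle is the inter-type step: the Poincar\'e inequality of $H^2$ only directly contracts variance, whereas we need entropy contraction. I plan to bridge this by working with the modified log-Sobolev / entropy-Dirichlet form $\mathcal E_\Papx(\eff, \log \eff)$ and using that the log-Sobolev step of $\Udw$ ``mediates'' the variance-to-entropy conversion; this is precisely the mechanism that causes $\LS$ and $\Gap$ to combine multiplicatively rather than additively. Once this mediation step is in place, the remaining analysis is routine Dirichlet-form bookkeeping, and the product form $\LS(\Udw_{\ell \lra n}) \cdot \Gap(H^2)$ reads off from the combined bound.
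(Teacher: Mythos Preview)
Your decomposition-by-type strategy is not how the paper proceeds, and the step you flag as the ``main obstacle'' is in fact a genuine gap: you never say how the inter-type Poincar\'e bound and the intra-type log-Sobolev bound are supposed to \emph{multiply}. A law-of-total-entropy split gives an additive decomposition of $\Ent_{\pi_\ell}(\eff) - \Ent_{\pi_\ell}(\Papx\eff)$, and nothing in your outline explains why the two pieces should combine as $\LS(\Udw)\cdot\Gap(H^2)$ rather than, say, $\min\{\LS(\Udw),\Gap(H^2)\}$ or some weighted sum. The phrase ``the log-Sobolev step of $\Udw$ mediates the variance-to-entropy conversion'' names a hope, not a mechanism.

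The paper's route avoids this entirely. It first proves the operator-norm estimate of \cref{thm:show-close},
\[
\norm*{ \Papx_{\ell \lra n} - (1 - \lambda(H))\,\Udw_{\ell \lra n} }_{\opp,\pi_\ell} \le \lambda(H),
\]
and then plugs this into the Dirichlet form: writing $\lambda = \lambda(H)$,
\[
\inpr*{\eff,(\Ide - \Papx)\eff}_{\pi_\ell}
\ge \inpr*{\eff,\bigl(\Ide - (1-\lambda)\Udw - \lambda\Ide\bigr)\eff}_{\pi_\ell}
= (1-\lambda)\,\inpr*{\eff,(\Ide - \Udw)\eff}_{\pi_\ell},
\]
which immediately gives $\LS(\Papx) \ge (1-\lambda(H))\cdot\LS(\Udw)$. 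The multiplicative form arises because the approximation says $\Papx$ dominates a \emph{lazy} version of $\Udw$, and laziness scales the Dirichlet form by exactly $1-\lambda(H)$. Entropy contraction then follows from Miclo's inequality $\EC(\Pii) \ge \LS(\Pii^*\Pii)$, applied to the expanderized down-operator, together with the factorization $\Paqx = \Qdo\,\Qup$ and $\Qup\,\Qdo = \Papx$ (with $H^2$). No type-stratified entropy decomposition appears anywhere; the expander gap enters only through the operator-norm comparison, which is why there is no variance-versus-entropy mismatch to bridge.
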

Both \cref{thm:gshow} and \cref{thm:eshow} follow from an argument showing that the expanderized up-down walks attain a good approximation for a dampened version of the regular up-down walks,
\begin{theorem}[Simplified Version of \cref{thm:exp-close}]\label{thm:show-close}
    Let $(X, \pi)$ be an $n$-partite simplicial complex and $0 \le \ell \le n$. Writing $\Udw_{\ell \lra n}$ for up-down walk between $X^{(\ell)}$ and $X^{(n)}$  according to $\pi$ and $\Papx$ for the expanderized up-down walk between $X^{(\ell)}$ and $X^{(n)}$ via a degree-regular graph $H$ on $\binom{[n]}{\ell}$, we have
    \[ \norm*{ \Papx_{\ell \lra n} - (1 - \lambda(H)) \cdot \Udw_{\ell \lra n} }_\opp \le \lambda(H),\]
    where $\lambda(H)$ is the \ref{eq:two-sided} of $H$ defined by $\lambda(H) = \max\set*{ \lambda_2(H), \lambda_{\min}(H)|}$.
\end{theorem}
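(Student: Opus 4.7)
The plan is to lift both walks to a common auxiliary product space on which the expander appears as a tensor factor, and then to transfer the desired operator norm bound back through an isometric intertwiner. Let $\nu$ denote the uniform distribution on $\binom{[n]}{\ell}$ and consider the auxiliary space $Y = X^{(n)} \times \binom{[n]}{\ell}$ equipped with the product measure $\pi \otimes \nu$. Define $\psi \colon Y \to X^{(\ell)}$ by $\psi(\omega, S) = \omega_S$, and let the pullback $\psi^* \colon \RR^{X^{(\ell)}} \to \RR^Y$ be given by $(\psi^* \eff)(\omega, S) = \eff(\omega_S)$. Since the pushforward of $\pi \otimes \nu$ under $\psi$ is exactly the stationary marginal $\pi_\ell$ on $X^{(\ell)}$, the map $\psi^*$ is an isometry from $\ell^2(\pi_\ell)$ into $\ell^2(\pi \otimes \nu)$.

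Next I would introduce two Markov operators on $Y$. The lifted expanderized walk $P_Y$ sends $(\omega, S)$ to $(\omega', S')$ by sampling $\omega' \sim \pi$ conditional on $\omega'_S = \omega_S$ and, independently, picking $S'$ as a random neighbor of $S$ in $H$. Its uniformly-chosen counterpart $P_Y^{\Udw}$ is defined identically except that $S' \sim \nu$. Using the identity $\typ(\omega_S) = S$ for $\omega \in X^{(n)}$, one checks directly that $\psi^*$ intertwines these lifted walks with the walks downstairs,
\[
    \psi^* \Papx_{\ell \lra n} = P_Y \psi^* \quad \text{and} \quad \psi^* \Udw_{\ell \lra n} = P_Y^{\Udw} \psi^*.
\]
The key observation is that $P_Y$ factors cleanly as $P_Y = B A$, where $B$ is the reversible step $(\omega, S) \mapsto (\omega', S)$ with $\omega' \sim \pi \mid \omega_S$, and $A = I \otimes P_H$ acts only on the $S$-coordinate. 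Analogously $P_Y^{\Udw} = B(I \otimes J_\nu)$, where $J_\nu$ is the rank-one projection onto constants in $\RR^{\binom{[n]}{\ell}}$.

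With this factorization in place,
\[
    P_Y - (1 - \lambda(H)) P_Y^{\Udw} \;=\; B \bigl( I \otimes (P_H - (1 - \lambda(H)) J_\nu) \bigr).
\]
The operator $B$ is a conditional-expectation projection on $\ell^2(\pi \otimes \nu)$ and therefore satisfies $\|B\|_\opp \le 1$. For the tensor factor, decompose $\RR^{\binom{[n]}{\ell}}$ into constants and their orthogonal complement: on constants both $P_H$ and $J_\nu$ act as the identity so the combination scales by $\lambda(H)$; on the orthogonal complement $J_\nu$ vanishes while $P_H$ has spectral radius at most $\lambda(H)$ by the very definition of the two-sided expansion. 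Hence $\|P_H - (1 - \lambda(H)) J_\nu\|_\opp \le \lambda(H)$, and submultiplicativity together with $\|I \otimes M\|_\opp = \|M\|_\opp$ yields $\|P_Y - (1 - \lambda(H)) P_Y^{\Udw}\|_\opp \le \lambda(H)$.

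Combining the intertwining relations with the isometry property of $\psi^*$, for any $\eff \in \RR^{X^{(\ell)}}$ one has
\[
    \|(\Papx_{\ell \lra n} - (1 - \lambda(H)) \Udw_{\ell \lra n}) \eff\|_{\ell^2(\pi_\ell)} = \|(P_Y - (1 - \lambda(H)) P_Y^{\Udw}) \psi^* \eff\|_{\ell^2(\pi \otimes \nu)} \le \lambda(H) \|\eff\|_{\ell^2(\pi_\ell)},
\]
which is the claimed bound. The main conceptual step is identifying the correct lift $\psi$; once $Y$ is in place, the expander $P_H$ sits cleanly as a tensor factor and the remainder is a routine norm computation. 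The only point needing care is the verification that $\pi \otimes \nu$ is genuinely invariant under both $B$ and $A$, so that the operator-norm bounds on $\ell^2(\pi \otimes \nu)$ are composed in the right ambient Hilbert space.
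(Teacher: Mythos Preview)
Your proof is correct and, at its core, the same argument as the paper's, but repackaged in cleaner operator-theoretic language. The paper works pointwise: for $\bar\omega \in X^{(\ell)}$ with $\typ(\bar\omega)=S$ it introduces the auxiliary function $\gee_\omega(T)=\eff(\omega_T)$, expresses $[\Papx\eff](\bar\omega)$ and $[\Udw\eff](\bar\omega)$ as $\Exp_{\widehat\omega}[\Emm\gee_{\bar\omega\oplus\widehat\omega}](S)$ and $\Exp_{\widehat\omega}[\Jay\gee_{\bar\omega\oplus\widehat\omega}](S)$, applies Jensen's inequality to pull the square inside the conditional expectation, changes variables to $(\omega,S)\sim\pi\otimes\nu$, and then invokes $\|\Emm-(1-\lambda)\Jay\|_\opp\le\lambda$. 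Your lift $\psi^*$ is exactly the map $\eff\mapsto\gee_\bullet$; your contraction $\|B\|_\opp\le 1$ is the Jensen step (conditional expectation is an $L^2$-contraction); your tensor bound $\|I\otimes(P_H-(1-\lambda)J_\nu)\|_\opp\le\lambda$ is the expander bound; and your isometry of $\psi^*$ is the paper's closing identity $\Exp_{\omega\sim\pi}\|\gee_\omega\|_\nu^2=\|\eff\|_{\pi_\ell}^2$. What your formulation buys is that the factorization $P_Y=B\,(I\otimes P_H)$ makes the role of the expander transparent as a literal tensor factor, so submultiplicativity replaces the explicit change-of-variables computation; the paper's version is more hands-on and makes the dependence on partiteness (through $\typ$) more visible.
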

This result is proven using ideas from the derandomized graph squaring algorithm of \cite{RozenmanV05}: One can think of the up-down walk as running a random walk on a clique when picking which vertex to remove while going down\footnote{Step (2) in the presentation above.} -- the expanderized up-down walk replaces this clique with an expander graph and as a result achieves a good approximation of the up-down walk. The statements of \cref{thm:gshow} and \cref{thm:eshow} concerning the expanderized down-up walk are obtained by noticing that the expanderized down-up walk can be decomposed into an expanderized down- and an expanderized up-walk. We use the up-down walk in our proofs as we notice that the type identifies each face of $\omega^{(\ell)} \in X^{(\ell)}$ with a unique subset in $\binom{[n]}{\ell}$, i.e.~the vertex set of $H$. This differs from the analysis of \cite{RozenmanV05} in the following manner.
In the expanderized up-down walk, the face we are at after step (1) and the randomness used in step (2) is not enough to determine which face we arrive at after step (3).
In particular, if we start from different faces before step (1), it is possible to arrive at different faces after step (3), even if the face we are at after step (1) and the randomness used in step (2) is the same.
On the other hand, in a typical derandomized product construction \'a la \cite{RozenmanV05}, this would necessitate arriving at the same vertex.

Whereas \cref{thm:gshow} shows that the loss one suffers in the spectral gap is related to the \ref{eq:gap-def} of $H$ is not surprising, we note that \cref{thm:eshow} shows that one can obtain entropy contraction for the expanderized walks paying only a price for the expansion of $H$ -- and not it's entropy contraction factor $\EC(H)$ which can be considerably worse, especially if $H$ is a sparse graph. This is achieved by the following intermediate inequality, which by appealing to an argument of \cite{Miclo97} can be used to bound the entropy contraction constants,
\begin{theorem}[Simplified Version of \cref{cor:entropicstuff}]
Let $(X, \pi)$ be an $n$-partite simplicial complex and let $H$ be a degree regular graph on the vertex set $\binom{[n]}{\ell}$. Writing $\Udw_{\ell \lra n}$ for the up-down walk between $X^{(\ell)}$ and $X^{(n)}$ according to $\pi$ and $\Papx_{\ell \lra n}$ for the expanderized up-down walk between $X^{(\ell)}$ and $X^{(n)}$ via $H$, we have
    \[ \LS\parens*{ \Papx_{\ell \lra n} } \ge \LS\parens*{ \Udw_{\ell \lra n} } \cdot \Gap(H). \]
\end{theorem}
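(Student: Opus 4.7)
The plan is to reduce the log-Sobolev comparison to a comparison of Dirichlet forms, and then to invoke the operator approximation result \cref{thm:show-close}. Concretely, \cref{thm:show-close} gives
\[ \norm*{ \Papx_{\ell \lra n} - (1 - \lambda(H)) \cdot \Udw_{\ell \lra n} }_\opp \le \lambda(H), \]
so I would define
\[ E ~=~ \frac{1}{\lambda(H)} \parens*{ \Papx_{\ell \lra n} - (1 - \lambda(H)) \cdot \Udw_{\ell \lra n} }, \]
which has operator norm at most $1$, and write $\Papx_{\ell \lra n} = (1 - \lambda(H)) \Udw_{\ell \lra n} + \lambda(H) E$.

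The first key observation is that $\Papx_{\ell \lra n}$ and $\Udw_{\ell \lra n}$ are both reversible with respect to the same stationary distribution on $X^{(\ell)}$ (the marginal of $\pi$), so they are self-adjoint on the corresponding $L^2$ space; consequently $E$ is self-adjoint. Since a self-adjoint operator with operator norm at most $1$ has all eigenvalues in $[-1,1]$, we obtain $\langle f, (I - E) f \rangle_\pi \ge 0$ for every $f$. Expanding the Dirichlet form linearly in the walk,
\[ \mathcal{E}_{\Papx_{\ell \lra n}}(f, f) ~=~ (1 - \lambda(H)) \cdot \mathcal{E}_{\Udw_{\ell \lra n}}(f, f) + \lambda(H) \cdot \langle f, (I - E) f \rangle_\pi ~\ge~ \Gap(H) \cdot \mathcal{E}_{\Udw_{\ell \lra n}}(f, f), \]
where I used $\Gap(H) = 1 - \lambda(H)$.

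The second step is a direct application of the definition of the log-Sobolev constant: applying the defining inequality to $\Udw_{\ell \lra n}$ on any $f \in \RRp^{X^{(\ell)}}$ yields
\[ \LS(\Udw_{\ell \lra n}) \cdot \Gap(H) \cdot \Ent_\pi(f^2) ~\le~ \Gap(H) \cdot \mathcal{E}_{\Udw_{\ell \lra n}}(f, f) ~\le~ \mathcal{E}_{\Papx_{\ell \lra n}}(f, f), \]
and since $\LS(\Papx_{\ell \lra n})$ is defined as the largest such constant, we conclude $\LS(\Papx_{\ell \lra n}) \ge \LS(\Udw_{\ell \lra n}) \cdot \Gap(H)$.

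The only genuine content is the operator approximation statement \cref{thm:show-close}; everything else is bookkeeping. The main subtlety I would be careful about is that the operator norm in \cref{thm:show-close} must be the one induced by the $\pi$-inner product so that self-adjointness of $E$ yields a spectral, not merely operator-norm, bound on $\langle f, (I-E)f\rangle_\pi$; this is immediate from the reversibility of both walks with respect to the common stationary distribution, which is noted in the remark preceding \cref{thm:gshow}.
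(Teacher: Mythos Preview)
Your proposal is correct and follows essentially the same approach as the paper. The paper's proof of \cref{cor:entropicstuff} writes the inequality $\langle \eff, ((1-\lambda(H))\Udw - \Papx)\eff\rangle_{\pi_\ell} \ge -\lambda(H)\langle \eff, \eff\rangle_{\pi_\ell}$ directly from \cref{thm:exp-close} and then observes $\Ide - \Papx \succeq (1-\lambda(H))(\Ide - \Udw)$, which is algebraically identical to your decomposition $\Papx = (1-\lambda(H))\Udw + \lambda(H)E$ with $\langle f,(I-E)f\rangle \ge 0$; the only cosmetic difference is that you name the error operator explicitly.
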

\subsubsection*{Sampling Using the Expanderized Walks}
In \cref{sec:hijack} we analyze the mixing times of expanderized walks for two different problems (i) sampling $q$-colorings of a graph $G$ with maximum degree $\Delta = O(1)$ where $O(\Delta) \ge q \ge (11/6 - \ee)\Delta$ and (ii) sampling from the Ising model with interaction matrix $\Jay \in \RR^{n \times n}$ and external field $\etch \in \RR^{n \times n}$ under the assumption that $\Jay$ is PSD and satisfies $\norm*{\Jay}_\opp \le 1$. Before presenting our results, we briefly talk about these problems and state of the art sampling results for them.  

We note that lifting rapid mixing results of the usual down-up walk to the setting of expanderized walks often proceeds in black-box fashion, and these examples are chosen as representative results obtained with fundamentally different techniques. In particular, this list of applications is chosen as representative applications and is not meant be exhaustive. For example, one can easily extend our results to prove rapid mixing in the case of $p$-spin models \cite{AnariJKPV23, Lee23} and ferromagnetic Ising/Potts models \cite{BlancaCPCPS22} under the suitable assumptions ensuring bounded marginals.

Let $q \in \NN$ and a graph $G = (V, E)$ on $n := |V|$ vertices be given: a $q$-coloring $\chi: V \to [q]$ of a graph $G = (V, E)$ is an assignment of vertices to colors in $[q]$ such that no two pair of adjacent vertices receive the same color, i.e.~$\chi(u) \ne \chi(v)$ for all $\set*{u, v} \in E$. In the sampling problem, one is interested minimizing the the number of colors in relation to the maximum degree $\Delta$ of this graph -- a natural conjecture is that the down-up walk mixes in time $O( n\log n)$ whenever $q \ge \Delta + 2$. An $O(n \log n)$ mixing time bound for $q > 2\Delta$ was established by \cite{Jerrum95}, which was simplified by \cite{BubleyD97}. Then, an $O(n^2)$ mixing time was proven by \cite{Vigoda00} in the case $q > 11/6 \Delta$. Recently, this bound was improved for to $q > (11/6 - \ee) \Delta$ in the recent work of \cite{ChenDMPP19} where $\ee \approx 10^{-5}$ is a small constant. Finally, an $O(n \log n)$ mixing time was proven by \cite{Liu21, BlancaCPCPS22} under the assumption that $\Delta = O(1)$. We show that expanderized random walks rapidly mix in the setting where \cite{Liu21, BlancaCPCPS22}'s result holds,
\begin{theorem}[Simplified Version of \cref{thm:colhijack}]\label{thm:colshow}
    Let $G$ be a graph on $n$ vertices of maximum degree $\Delta \le O(1)$ and $H_n$ be a labelled graph on $[n]$ of constant  degree and \ref{eq:two-sided} $\lambda(H_n)$ bounded away from 1. Then, for some absolute constant $\ee \approx 10^{-5}$,\footnote{See \cite{ChenDMPP19}} and any $K = O(1)$, if $(11/6 + K)\Delta \ge q \ge (11/6 - \ee) \cdot \Delta$, the mixing time of the expanderized down-up walk $\Paqx_n$ on $q$-colorings of $G$  with auxilary graph $H_n$ satisfies,
    \[\Tmix(\Paqx_n, \ee) \le C_0 \cdot n \parens*{ \log n +  \log \ee^{-1}},\]
    where $C_0$ is a universal constant not depending on $n$ but on $\Delta$.
\end{theorem}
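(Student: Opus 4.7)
The plan is to combine \cref{thm:eshow} with existing log-Sobolev bounds for Glauber dynamics on $q$-colorings, and then convert the resulting entropy contraction bound into a mixing time bound via the Blanca--Caputo--Parisi--Sinclair--Vigoda theorem stated above.

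First, I would set up the right simplicial complex: let $X$ be the $n$-partite simplicial complex whose top faces $X^{(n)}$ are exactly the proper $q$-colorings of $G$ (the $i$-th side being the available colors for vertex $v_i$), equipped with the uniform distribution $\pi$. In this setup, the down-up walk $\Duw_{n \lra n-1}$ is precisely the Glauber dynamics on $q$-colorings, and the expanderized down-up walk $\Paqx_{n \lra n-1}$ (with $H = H_n$ a graph on $\binom{[n]}{n-1} \cong [n]$) is the walk $\Paqx_n$ appearing in the statement. Under the assumption $\Delta = O(1)$ and $(11/6 - \ee)\Delta \le q \le (11/6 + K)\Delta$, the results of \cite{Liu21, BlancaCPCPS22} give a log-Sobolev inequality for the up-down walk $\Udw_{n-1 \lra n}$ with constant $\LS(\Udw_{n-1 \lra n}) \ge \Omega_{\Delta, K}(1/n)$ (the usual normalization convention); this is the input from the theory of spectral/entropic independence that we black-box.

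Next, since $H_n$ has constant degree and $\lambda(H_n)$ is bounded away from $1$ by hypothesis, the squared graph $H_n^2$ has all nonnegative eigenvalues with $\lambda(H_n^2) = \lambda_2(H_n)^2 \le \lambda(H_n)^2$, so $\Gap(H_n^2) \ge 1 - \lambda(H_n)^2 = \Omega(1)$. Feeding these two estimates into \cref{thm:eshow} yields
\[ \EC\parens*{ \Paqx_{n \lra n-1} } \;\ge\; \LS\parens*{ \Udw_{n-1 \lra n} } \cdot \Gap(H_n^2) \;\ge\; \Omega_{\Delta, K}(1/n). \]

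To finish, I would invoke the Blanca--Caputo--Parisi--Sinclair--Vigoda mixing time bound. For uniform $\pi$ on a set of at most $q^n$ proper colorings we have $\log(1/\min_\omega \pi(\omega)) \le n \log q = O(n)$, so $\log\log(1/\min_\omega \pi(\omega)) = O(\log n)$. Combining,
\[ \Tmix(\Paqx_n, \ee) \;\le\; \frac{C}{\EC(\Paqx_n)}\parens*{ \log\log\tfrac{1}{\min_\omega \pi(\omega)} + \log \ee^{-1}} \;\le\; C_0 \cdot n \parens*{\log n + \log \ee^{-1}}, \]
for a constant $C_0$ depending only on $\Delta$ and $K$, as required.

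The only genuinely nontrivial step is the first one, namely importing the correct log-Sobolev constant from the coloring literature; this is essentially bookkeeping, since \cite{Liu21, BlancaCPCPS22} already state an $\Omega(1/n)$ modified log-Sobolev / log-Sobolev inequality for Glauber dynamics under these conditions, and a quick check of conventions is enough to translate it to $\LS(\Udw_{n-1 \lra n})$. The rest of the argument is a purely mechanical composition of \cref{thm:eshow} with the entropy contraction mixing time bound, and the $\log \log$ saving is the reason we use entropy contraction rather than the spectral gap bound \cref{thm:gshow}, which would have given a worse $O(n^2)$ time.
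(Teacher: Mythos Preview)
Your overall architecture matches the paper's: set up the coloring complex, obtain $\LS(\Udw_{n-1 \lra n}) = \Omega(1/n)$, feed this into \cref{thm:eshow} (the paper's \cref{cor:baked}) together with $\Gap(H_n^2) = \Omega(1)$ to get $\EC(\Paqx_n) = \Omega(1/n)$, and finish with \cref{thm:entmix}. The mixing-time arithmetic and the handling of $\min_\omega \pi(\omega)$ are fine (the state space of $\Paqx_n$ is $X^{(n)}\times[n]$, so the extra factor of $n$ is harmless under $\log\log$).

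The one place you are too casual is the step you call ``bookkeeping.'' The cited results of \cite{Liu21,BlancaCPCPS22} give $\LS(\Duw_{n\lra n-1}) = \Omega(1/n)$ for the \emph{down-up} walk on $X^{(n)}$, not for the \emph{up-down} walk $\Udw_{n-1\lra n}$ on $X^{(n-1)}$. These are $\Doo\Upo$ versus $\Upo\Doo$; they share nonzero eigenvalues, so their spectral gaps agree, but log-Sobolev constants do \emph{not} transfer between them by a convention check, since the underlying state spaces and stationary measures differ. The paper does not black-box this. It instead runs the chain
\[
\LS(\Duw_{n\lra n-1}) \xrightarrow{\text{\cref{lem:lsfun}}} \EC(\Upo_{n-1\to n}) \xrightarrow{\text{\cref{lem:lees}}} \EC(\Upo_{n-2\to n-1}) \xrightarrow{\text{\cref{lem:lsi} + \cref{prop:colfolk}}} \LS(\Udw_{n-1\lra n}),
\]
where the last step uses that for $\Delta=O(1)$ the links at level $n-2$ have $\Gap_{n-2}=\Omega(1)$ and the marginals are bounded below by a constant (so $C_{n-2,1}=\Omega(1)$). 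This is exactly the machinery of \cref{sec:fi} and is the substantive content you are skipping. Once you fill this in, your proof coincides with the paper's.
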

We note that by \cite{Alon21}, for each large enough $n \in \NN$, one can \emph{efficiently} construct a graph $H_n$ that will satisfy the requirements of \cref{thm:colshow}. Our proof of \cref{thm:colshow} almost follows in blackbox fashion assuming \cite{Liu21, BlancaCPCPS22}'s result. Furthermore, we note that our algorithm will require $O(n \log n)$ random bits in total -- as in every step we require $O(1)$ bits to sample the vertex whose color we update and $O(1)$ bits to sample a new color for this vertex. The usual down-up walk in contrast, would need at least $\Omega(n \log^2 n)$ random bits, as in every step the cost of sampling the vertex whose color we update is $\Omega(\log n)$ bits.

The Ising model $\mu := \mu_{\Jay, \etch}$ with interaction matrix $\Jay \in \RR^{n \times n}$ and external field $\etch \in \RR^n$ is  a probability distribution on the hypercube $\set*{+1, - 1}^n$ which assigns to each $\eks \in \set*{+1, -1}^n$ the measure,
\[ \mu(\eks) = \frac{ \exp\parens*{ \frac{1}{2} \cdot \inpr*{ \eks, \Jay \eks} + \inpr*{ \etch, \eks} }}{Z(\Jay, \etch)}~~\textrm{where}~~Z(\Jay, \etch) = \sum_{\eks \in \set*{+1, -1}^n} \exp\parens*{ \frac{1}{2} \cdot \inpr*{ \eks, \Jay \eks} + \inpr*{ \etch, \eks} }. \]
Quite recently, \cite{EldanKZ22} established that the \ref{eq:gap-def} of the down-up walk on the hypercube according to $\mu$ is at least $\frac{1 - \norm*{\Jay}_\opp}{n}$ when $\Jay$ is PSD and has small operator norm, i.e.~$\norm*{\Jay}_\opp < 1$. In particular, this implies a mixing time of $O\parens*{\frac{n}{1 - \norm*{\Jay}_\opp} (n + \norm*{\etch}_{\ell_1})}$. Subsequently, this \ref{eq:gap-def} bound was promoted to a bound on the modified-log Sobolev constant by \cite{AnariJKP22}. By employing  a clever argument based on the \emph{approximate exchange property} \cite{AnariLOVV21}, they established a mixing time bound of $O(n \log n)$ -- bypassing the dependence on the external field $\etch$ completely. 

We show that in this setting, the expanderized walks mix rapidly so long as the maximum entry $\norm*{\etch}_{\ell_\infty}$ does not depend on $n$.

\begin{theorem}[Simplified Version of \cref{thm:ishijack}]
    Let $(X^{(\Jay, \etch)}, \mu_{\Jay, \etch})$ be the simplicial complex defined above corresponding to the Ising model defined by the interaction matrix $\Jay \in \RR^{n \times n}$ and external field $\etch \in \RR^n$ and $H_n$ a constant degree graph whose \ref{eq:two-sided} is a constant -- bounded away from 1. Under the assumption that $\Jay$ is PSD and satisfies $\norm*{\Jay}_\opp \le 1$, the following bound holds for the expanderized down-up walk with the auxillary graph $H_n$,
\[ \Tmix(\Paqx_n, \ee) \le \frac{ O\parens*{\norm*{\etch}_{\ell_\infty}} \cdot n}{(1 - \norm*{\Jay}_\opp)^2} \parens*{\log( n  + \norm*{\etch}_{\ell_1} )+ \log \ee^{-1} },\]

where the  $O(\bullet)$ notation hides a universal constant not depending on $n$ or $\Jay$.
\end{theorem}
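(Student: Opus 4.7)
The strategy is to combine the rapid mixing results for the standard Glauber dynamics on the Ising model from \cite{EldanKZ22, AnariJKP22} with \cref{thm:eshow} and the entropy contraction mixing time bound of \cite{BlancaCPSV21}. Concretely, I would first argue that the up-down walk $\Udw_{n-1 \lra n}$ on the $n$-partite simplicial complex $X^{(\Jay,\etch)}$ encoding the Ising model admits a log-Sobolev constant of order $\Omega((1-\norm*{\Jay}_\opp)^2 / (n \cdot \norm*{\etch}_{\ell_\infty}))$. Then \cref{thm:eshow} applied with the constant-degree graph $H_n$, whose two-sided expansion is bounded away from $1$ by hypothesis so that $\Gap(H_n^2) = \Omega(1)$, immediately yields
\[ \EC(\Paqx_n) \ge \LS(\Udw_{n-1 \lra n}) \cdot \Gap(H_n^2) = \Omega\parens*{\frac{(1-\norm*{\Jay}_\opp)^2}{n \cdot \norm*{\etch}_{\ell_\infty}}}. \]

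For the log-Sobolev bound, I would appeal to the paper's self-contained local-to-global $\Phi$-entropy contraction framework advertised in the introduction. The two ingredients fed into the framework are (i) an entropic/spectral independence style bound for the Ising model with PSD $\Jay$ satisfying $\norm*{\Jay}_\opp \le 1$ (as in \cite{EldanKZ22, AnariJKP22}), which telescopes through the recursion to contribute a $(1-\norm*{\Jay}_\opp)$ factor, and (ii) a link log-Sobolev constant at each face $\widehat\omega \in X^{(n-1)}$, where the link is a single-spin Bernoulli-type conditional marginal. Exploiting the PSD hypothesis on $\Jay$ together with the assumption $\norm*{\etch}_{\ell_\infty} = O(1)$ should give a link LSI constant of order $\Omega(1/\norm*{\etch}_{\ell_\infty})$, and the local-to-global recursion then produces the claimed global log-Sobolev constant.

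To wrap up, I would invoke the \cite{BlancaCPSV21} mixing bound
\[ \Tmix(\Paqx_n, \ee) \le \frac{C}{\EC(\Paqx_n)} \cdot \parens*{\log\log\tfrac{1}{\min_{\eks} \mu_{\Jay, \etch}(\eks)} + \log \ee^{-1}}. \]
A short calculation bounds $-\log \mu_{\Jay, \etch}(\eks) \le \tfrac12 \inpr*{\eks, \Jay \eks} + \norm*{\etch}_{\ell_1} + \log Z(\Jay, \etch)^{-1} \le O(n + \norm*{\etch}_{\ell_1})$ for every $\eks \in \set*{+1,-1}^n$ (upper bounding the numerator and lower bounding the partition function by the contribution of a single configuration), and therefore $\log\log(1/\min_\eks \mu_{\Jay,\etch}(\eks)) \le O(\log(n + \norm*{\etch}_{\ell_1}))$. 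Substituting the entropy contraction estimate above gives exactly the claimed mixing time.

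The main obstacle is the first step: the bounds of \cite{EldanKZ22, AnariJKP22} directly give a Poincar\'e inequality and a modified log-Sobolev inequality respectively, but neither yields a genuine log-Sobolev constant of the form required by \cref{thm:eshow}. The delicate point is to extract a link log-Sobolev constant scaling as $1/\norm*{\etch}_{\ell_\infty}$ even though the conditional field at a single spin a priori contains interaction contributions from $\Jay$ which are not uniformly bounded by $\norm*{\etch}_{\ell_\infty}$ alone. The hope is that the PSD and operator norm hypotheses allow absorbing the interaction contributions into the global entropic-independence factor (producing the second $(1-\norm*{\Jay}_\opp)$) so that only the external field $\etch$ shows up in the link-level LSI bound; verifying this absorption carefully within the paper's local-to-global framework is the technical heart of the argument.
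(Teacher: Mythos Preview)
Your high-level strategy matches the paper's: establish a log-Sobolev inequality for $\Udw_{n-1 \lra n}$, transfer it to the expanderized walk via \cref{cor:baked} (the full version of \cref{thm:eshow}), and finish with the \cite{BlancaCPSV21} mixing bound together with a minimum-probability estimate $-\log \mu_{\Jay,\etch}(\eks) = O(n + \norm*{\etch}_{\ell_1})$. That last estimate is essentially the paper's.

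Where you diverge is in the resolution of the obstacle you correctly flag. You propose to control the link LSI constant at faces $\widehat\omega \in X^{(n-1)}$, i.e.\ single-spin conditionals; but the effective field there is $h_i + \sum_{j \ne i} J_{ij} x_j$, and the PSD/operator-norm hypotheses do \emph{not} bound this pointwise, so your hoped-for ``absorption into the global entropic-independence factor'' has no clear mechanism. The paper instead drops one level and uses the second inequality of \cref{lem:lsi},
\[
\LS(\Udw_{n-1}) \ge \tfrac{n-1}{n}\cdot C_{n-2,1}\cdot \Gap_{n-2}(X,\pi)\cdot \EC(\Upo_{n-2 \to n-1}),
\]
with each factor handled separately in \cref{prop:is-conds} and \cref{lem:lees}: (i) $\EC(\Upo_{n-2 \to n-1}) = \Omega((1-\norm*{\Jay}_\opp)/n)$ by pushing the \cite{AnariJKP22,Lee23} bound $\EC(\Upo_{n-1 \to n}) \ge (1-\norm*{\Jay}_\opp)/n$ down one step via \cref{lem:lees}; (ii) $\Gap_{n-2} \ge 1-\theta \ge 1-\norm*{\Jay}_\opp$, where $\theta$ is the largest operator norm of a $2\times 2$ principal submatrix of $\Jay$, by applying the \cite{EldanKZ22} spectral-gap bound---which is \emph{independent of the external field}---to the conditioned two-spin Ising model (this is where the second $(1-\norm*{\Jay}_\opp)$ actually comes from); (iii) $C_{n-2,1} = \Omega(1/\norm*{\etch}_{\ell_\infty})$ from a direct marginal bound on the two-spin link, using that the $2\times 2$ interaction energy is at most $1$. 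So the interaction contribution you worried about is neutralized by routing it through the field-independent spectral gap $\Gap_{n-2}$, not through the marginal constant.
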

The $\norm*{\etch}_{\ell_\infty}$ blow-up in the mixing time is due to the reliance of our entropy contraction bounds on the log-Sobolev constant of the up-down walk. When $\norm*{\etch}_{\ell_\infty} = O(1)$ and $\norm*{\Jay}_\opp \ll 1$, we note that -- ignoring numerical difficulties in simulating biased coins -- this random walk requires $O(n \log n)$ random bits as the walk in \cref{thm:colshow}. In this regime the standard Glauber dynamics still requires $O(n \log^2 n)$ random bits. In case where every 2 by 2 principle submatrix of $\Jay$ has operator norm at most $\theta$, we can replace the $(1 - \norm*{\Jay}_\opp)^2$ in the denominator by $(1- \norm*{\Jay}_\opp) \cdot (1 - \theta)$.
\subsubsection*{Functional Inequalities on Simplicial Complexes}
We also prove results which will provide us with the tools necessary to take advantage of \cref{thm:eshow} in settings of interest to the random sampling community. Our main tool here is the Garland method \cite{Garland73} -- more generally the local-to-global method. The improvements we make to the state of the art results are modest at best and in most cases can be recaptured by the local-to-global method of \cite{AlimohammadiASV21}. However, since our proofs are very simple and to the best of our knowledge contain some results which have not explicitly appeared in the literature before (such as \cref{thm:mixing-show}), we choose to include them in the present paper.

We recall that given a distribution $\pi: \Omega \to \RRpp$, and a convex function $\Phi:\RRp \to \RRp$, the \ref{eq:pent-defn} functional $\Ent_{\pi}^{\Phi}(\eff)$ is defined as follows for all $\eff \in \RRp^{\Omega}$, 
\[ \Ent_\pi^{\Phi}(\eff) = \Exp_{\omega \sim \pi} \Phi(\eff(\omega)) - \Phi\parens*{ \Exp_{\omega \sim \pi} \eff(\omega) }.\]
We say that the random walk $\Pii \in \RR^{\Omega_1 \times \Omega_2}$ with two distinguished measures $\pi_1: \Omega_1 \to \RRpp$ and $\pi_2: \Omega \to \RRpp$ satisfying $\pi_1 \Pii_1 = \pi_2$ is said to satisfy a \ref{eq:pent-con} inequality with constant $C > 0$ if the following equality holds for all $\eff \in \RR^{\Omega_2}$,
\[ \Ent_{\pi_1}\parens*{ \Pii \eff} \le (1 - C) \cdot \Ent_{\pi_2}\parens*{ \eff }.\]
The largest constant $C$ for which such an inequality holds is called the \ref{eq:pent-con} constant of $\Pii$ and is denoted by $\CF_\Phi(\Pii)$. Given a simplicial complex $(X, \pi)$ of rank $n$ -- not necessarily partite -- we will prove several results concerning the \ref{eq:pent-con} of the so-called \ref{eq:upw-def} $\Upo_{\ell \to n} \in \RR^{X^{(\ell)} \times X^{(n)}}$ between $X^{(\ell)}$ and $X^{(n)}$. The right action of $\Upo_{\ell \to n}$ on $\eff \in \RR^{X^{(n)}}$ is defined by,
\[ [\Upo_{\ell \to n} \eff](\widehat\omega) = \Exp_{\omega \sim \pi}\sqbr*{ \eff(\omega) \mid \omega \supset \widehat\omega}~~\textrm{for all}~~\widehat\omega \in X^{(\ell)}.\]

 For the choice of $\Phi(t) = t \log t$, the \ref{eq:pent-defn} functi$\Ent_{\pi}^{\Phi}(\bullet)$ corresponds to the usual entropy functional $\Ent_\pi(\bullet)$ defined above and $\CF_\Phi(\Pii)$ is simply $\EC(\Pii)$. For the choice of $\Phi(t) = t^2$, $\Ent^{\Phi}_\pi(\bullet)$ simply corresponds to the \ref{eq:v-def} functional $\Var_\pi(\bullet)$ -- further it is well understood that that $\CF_\Phi(\Upo_{\ell \to n})$ corresponds to the \ref{eq:gap-def} of the walk $\Duw_{n \lra \ell}$. 
 
We note that the concepts of \ref{eq:pent-defn} and \ref{eq:pent-con} are dual concepts to the concept of $f$-divergences and strong data processing inequalities studied in the context of higher order random walks in \cite{AlimohammadiASV21, AnariJKP22a, AnariJKPV23, Lee23}
and more generally in \cite{Chafai04, Raginsky16}. In particular, this duality is the reason why we study the \ref{eq:pent-con} of the \ref{eq:upw-def} whereas the works mentioned above study the $f$-divergence contraction for the \ref{eq:downw-def} $\Doo_{n \to \ell} \in \RR^{X^{(n)} \times X{^(\ell)}}$, whose left action on distributions $\mu: X^{(n)} \to \RRp$ is defined by,
\[ \sqbr*{\mu\Doo_{n \to \ell}}(\widehat\omega) = \sum_{\omega \in X^{(n)},\atop \omega \supset \widehat\omega } \frac{ \mu(\omega)}{\binom{n}{\ell}}~~\textrm{for all}~~\widehat\omega \in X^{(\ell)}. \]

The two concepts coincide for the special class of homogenous \ref{eq:pent-defn}, cf.~\cite{Raginsky16}, which includes the $\Phi$-entropies induced by $\Phi(t) = t^2$ and $\Phi(t) = t\log t$. 

 The local \ref{eq:pent-con} $\lc_\Phi(\varnothing)$ in $(X, \pi)$ is defined as the smallest constant $c > 0$ satisfying the following inequality, 
\[ \Ent_{\pi_1}\parens*{ \Upo_{1 \to n} \eff } \le c \cdot \Ent_{\pi}( \eff )~~\textrm{for all}~~\eff \in \RRp^{\Omega}, \]
where $\pi_1$ is the marginal distribution on $X^{(1)}$ obtained by,
\[ \pi_1(x) = \frac{1}{n} \cdot \Pr_{\omega \sim \pi}[ x \in \omega ]~~\textrm{for all}~~x \in X^{(1)}.\]
Similarly, we define the constant $\lec(\widehat\omega)$ in $(X, \pi)$ analogously by passing to the simplicial complex $(X_{\widehat\omega}, \pi^{(\widehat\omega)})$, where we set
\[ X_{\widehat\omega} = \set*{ \omega  \setminus \widehat\omega \mid \omega \supset \widehat\omega, \omega \in X},\]
and define the distribution $\pi^{(\widehat\omega)}: X_{\omega}^{(n - |\widehat\omega|)} \to \RRpp$ by,
\[ \pi^{(\widehat\omega)}(\alpha) = \Pr_{\omega \sim \pi}\sqbr*{ \omega = \widehat\omega \sqcup \alpha \mid \omega \supset \widehat\omega}.\]
We show,
\begin{theorem}[Simplified Version of \cref{thm:pent-c}]\label{thm:pentc-show}
    Let $(X, \pi)$ be a simplicial complex of rank $n$ and $\Phi: \RRp \to \RRp$ a convex function. We write $\lc^{(i)}_\Phi(X, \pi) = \max_{\widehat\omega \in X^{(i)}} \lc_\Phi(\widehat\omega)$. Then,
    \[ \CF_\Phi\parens*{\Upo_{\ell \lra n}} \ge \prod_{j = 0}^{\ell - 1} \parens*{ 1 - \lc^{(j)}_\Phi(X, \pi)}.\]
\end{theorem}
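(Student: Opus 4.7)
The plan is to chain the local hypotheses one level at a time, reducing the whole argument to a ``law of total $\Phi$-entropy''. Concretely, I would first establish that for every $0 \le m \le n$ and every $\eff \in \RRp^{X^{(n)}}$,
\[
\Ent^\Phi_\pi(\eff) = \Ent^\Phi_{\pi_m}(\Upo_{m \to n} \eff) + \Exp_{\widehat\omega \sim \pi_m}\sqbr*{\Ent^\Phi_{\pi^{(\widehat\omega)}}(\eff|_{\widehat\omega})}.
\]
This identity is obtained by sampling $(\omega, \widehat\omega)$ jointly with $\omega \sim \pi$ and $\widehat\omega \subset \omega$ uniform of size $m$ -- so the marginal of $\widehat\omega$ is $\pi_m$ and the conditional of $\omega \setminus \widehat\omega$ given $\widehat\omega$ is $\pi^{(\widehat\omega)}$ -- and then adding and subtracting $\Exp_{\widehat\omega}\sqbr*{\Phi(\Upo_{m \to n}\eff(\widehat\omega))}$ inside the defining difference $\Exp_\pi\sqbr*{\Phi(\eff)} - \Phi(\Exp_\pi \eff)$. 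Nothing beyond the tower property is needed, so the identity holds for any convex $\Phi$ and any weighted complex $(X,\pi)$ (not necessarily partite).

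Applying this identity to $(X,\pi)$ with the function $\eff$, and also to its rank-$(m+1)$ restriction with distribution $\pi_{m+1}$ and function $g_{m+1} := \Upo_{m+1 \to n}\eff$, and writing $g_m := \Upo_{m \to n}\eff = \Upo_{m \to m+1} g_{m+1}$, I would obtain
\begin{align*}
\Ent^\Phi_\pi(\eff) - \Ent^\Phi_{\pi_m}(g_m) &= \Exp_{\widehat\omega \sim \pi_m}\sqbr*{\Ent^\Phi_{\pi^{(\widehat\omega)}}(\eff|_{\widehat\omega})},\\
\Ent^\Phi_{\pi_{m+1}}(g_{m+1}) - \Ent^\Phi_{\pi_m}(g_m) &= \Exp_{\widehat\omega \sim \pi_m}\sqbr*{\Ent^\Phi_{\pi_1^{(\widehat\omega)}}(g_{m+1}|_{\widehat\omega})}.
\end{align*}
A direct unfolding of the definitions shows that $g_{m+1}|_{\widehat\omega}(x) = g_{m+1}(\widehat\omega \cup \set*{x})$ coincides on $X_{\widehat\omega}^{(1)}$ with the local up-walk image $\Upo^{(\widehat\omega)}_{1 \to n-m}(\eff|_{\widehat\omega})(x)$ in the link, so the local $\Phi$-entropy contraction at $\widehat\omega$ gives
\[
\Ent^\Phi_{\pi_1^{(\widehat\omega)}}(g_{m+1}|_{\widehat\omega}) \le \lc_\Phi(\widehat\omega) \cdot \Ent^\Phi_{\pi^{(\widehat\omega)}}(\eff|_{\widehat\omega}) \le \lc^{(m)}_\Phi(X,\pi) \cdot \Ent^\Phi_{\pi^{(\widehat\omega)}}(\eff|_{\widehat\omega}).
\]
Averaging over $\widehat\omega \sim \pi_m$ and setting $a_m := \Ent^\Phi_{\pi_m}(g_m)$ and $a_n := \Ent^\Phi_\pi(\eff)$, the two decompositions combine to the one-step recursion $a_{m+1} - a_m \le \lc^{(m)}_\Phi(X,\pi) \cdot (a_n - a_m)$, equivalently
\[
a_n - a_{m+1} \ge \parens*{1 - \lc^{(m)}_\Phi(X,\pi)}(a_n - a_m).
\]

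Finally I would iterate this recursion from the base case $a_0 = 0$, which holds because $g_0 = \Exp_\pi \eff$ is constant. Telescoping for $m = 0, 1, \dots, \ell - 1$ yields $a_n - a_\ell \ge a_n \prod_{j=0}^{\ell - 1}\parens*{1 - \lc^{(j)}_\Phi(X,\pi)}$, which rearranges to the claimed bound
\[
\Ent^\Phi_{\pi_\ell}(\Upo_{\ell \to n}\eff) \le \parens*{1 - \prod_{j = 0}^{\ell - 1}\parens*{1 - \lc^{(j)}_\Phi(X,\pi)}} \cdot \Ent^\Phi_\pi(\eff),
\]
i.e.\ $\CF_\Phi(\Upo_{\ell \to n}) \ge \prod_{j=0}^{\ell-1}\parens*{1 - \lc^{(j)}_\Phi(X,\pi)}$.

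The main step I expect to need care is the law of total $\Phi$-entropy in this generality: for $\Phi(t) = t \log t$ and $\Phi(t) = t^2$ it is completely standard, but with an arbitrary convex $\Phi$ and a non-product underlying measure $\pi$ one has to verify that the argument really only uses the tower property and the algebraic definition of $\Ent^\Phi$, without sneaking in any tensorization-type hypothesis that would demand more structure on $\Phi$. Once that identity is in place, the local-to-global step is a single application of local contraction in each link, and the rest is a one-line telescoping.
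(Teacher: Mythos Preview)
Your proof is correct and follows essentially the same approach as the paper: both rest on the chain rule for $\Phi$-entropy (your ``law of total $\Phi$-entropy'' is exactly Fact~\ref{fac:pent-cr}, and your worry about it is unfounded---its proof is a one-line add-and-subtract using only the tower property, valid for any $\Phi$) together with the local contraction hypothesis at each level. The only difference is organizational---the paper inducts on rank by conditioning on a single vertex $x \in X^{(1)}$ and passing to its link, whereas you telescope directly on the sequence $a_m = \Ent^\Phi_{\pi_m}(\Upo_{m \to n}\eff)$ in the original complex---but this is a rearrangement of the same computation (and the paper's organization additionally yields the finer chain-wise bound \eqref{eq:pentf}, which your telescoping collapses to the level-wise maximum from the start).
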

Our proof of \cref{thm:pentc-show} is a simple instantiation of the Garland method and is inspired by the exposition in \cite{ChenE22} combined with the chain rule for \ref{eq:pent-defn}. After submitting our results to arxiv, it came to our attention that the same proof technique for proving \cref{thm:pentc-show} already appeared in \cite{Liu23} in the context of variance contraction -- though only for the case $\ell = n-1$.
    
Let $\widehat\omega \in X$ be such that $|\widehat\omega| \le n - 2$. We recall that the \ref{eq:link-def} graph $G_{\widehat\omega} = (X^{(1)}_{\widehat\omega}, X^{(2)}_{\widehat\omega}, c_{\widehat\omega})$ with the vertex set $X_{\widehat\omega}^{(1)}$ and the edge set $X_{\widehat\omega}^{(2)}$ where
\[ \textrm{for all}~~x, y \in X_{\widehat\omega}^{(1)}~~\textrm{we have}~~c_{\widehat\omega}(x, y) = \begin{cases} 0 &~\textrm{if}~x = y,\\
\Pr\sqbr*{ \omega \supset \widehat\omega \sqcup \set*{x, y} \mid \omega \supset \widehat\omega}&~\textrm{otherwise.}\end{cases}\]
We will write $\Emm_{\widehat\omega}$ for the random walk matrix of the graph $G_{\widehat\omega}$ where transitions are taken with probability proportional to the weight function. A direct consequence of \cref{thm:pentc-show} is the following bound on the spectral gap,
\begin{corollary}[Spectral Gap Bound]\label{thm:mixing-show}
    Let $(X, \pi)$ be a simplicial complex of rank $n$ and $0 \le \ell \le n$. Writing $\Gap_k(X, \pi) := \min_{x \in X^{(k)}} \Gap(\Emm_x)$
    we have
    \[ \Gap(\Duw_{n \leftrightarrow \ell}) \ge \frac{n - \ell}{n} \cdot \prod_{i = 0}^{\ell - 1} \Gap_i(X, \pi).\]
\end{corollary}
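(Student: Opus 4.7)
The plan is to specialize Theorem~\ref{thm:pentc-show} to $\Phi(t) = t^2$, in which case the $\Phi$-entropy reduces to the variance and the $\Phi$-entropy contraction constant of $\Upo_{\ell \to n}$ reduces to the spectral gap of the corresponding down-up walk. Writing $\pi_\ell$ for the stationary distribution of the down walk on $X^{(\ell)}$ and using the adjoint relation $\langle f, \Doo_{n\to\ell} g\rangle_\pi = \langle \Upo_{\ell\to n} f, g\rangle_{\pi_\ell}$, one computes
\begin{equation*}
\langle f, (I - \Doo_{n\to\ell}\Upo_{\ell\to n})f\rangle_\pi \;=\; \|f\|_\pi^2 - \|\Upo_{\ell\to n} f\|_{\pi_\ell}^2 \;=\; \Var_\pi(f) - \Var_{\pi_\ell}(\Upo_{\ell\to n} f),
\end{equation*}
where the second equality uses $\Exp_{\pi_\ell}[\Upo f] = \Exp_\pi f$. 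Dividing by $\Var_\pi(f)$ and taking the infimum yields $\Gap(\Duw_{n\leftrightarrow\ell}) = \CF_{t^2}(\Upo_{\ell\to n})$, and Theorem~\ref{thm:pentc-show} then gives
\begin{equation*}
\Gap(\Duw_{n\leftrightarrow\ell}) \;\ge\; \prod_{j=0}^{\ell-1}\bigl(1 - \lc^{(j)}_{t^2}(X,\pi)\bigr).
\end{equation*}

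Next I would translate each local factor into the spectral gap of a link graph. Applying the identification of the first paragraph inside the link $(X_{\widehat\omega}, \pi^{(\widehat\omega)})$ at level $1$ gives $1 - \lc_{t^2}(\widehat\omega) = \Gap\bigl(\Duw^{(\widehat\omega)}_{m\leftrightarrow 1}\bigr)$, where $m := n - |\widehat\omega|$. Unwinding the definitions of $\Doo_{m\to 1}$ and $\Upo_{1\to m}$ on the link, a short calculation with conditional probabilities gives the entrywise identity
\begin{equation*}
\Upo_{1\to m}\Doo_{m\to 1} \;=\; \frac{1}{m}\,I \;+\; \frac{m-1}{m}\,\Emm_{\widehat\omega}.
\end{equation*}
Since $\Upo\Doo$ and $\Doo\Upo$ share their nonzero spectrum, it follows that $\Gap\bigl(\Duw^{(\widehat\omega)}_{m\leftrightarrow 1}\bigr) = \tfrac{m-1}{m}\Gap(\Emm_{\widehat\omega})$. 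Taking the minimum over $\widehat\omega \in X^{(j)}$ gives $1 - \lc^{(j)}_{t^2}(X,\pi) = \tfrac{n-j-1}{n-j}\,\Gap_j(X, \pi)$.

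Substituting back and exploiting the telescoping cancellation $\prod_{j=0}^{\ell-1} \tfrac{n-j-1}{n-j} = \tfrac{n-\ell}{n}$ yields the stated bound. The only step that is not cosmetic is the entrywise identity $\Upo_{1\to m}\Doo_{m\to 1} = \tfrac{1}{m}I + \tfrac{m-1}{m}\Emm_{\widehat\omega}$; this is where the combinatorial prefactor $\tfrac{n-\ell}{n}$ is produced. Everything else is a direct specialization of Theorem~\ref{thm:pentc-show} together with the standard variance-to-spectral-gap dictionary.
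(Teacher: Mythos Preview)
Your proof is correct and follows essentially the same approach as the paper: the paper states this corollary as immediate from Theorem~\ref{thm:pent-c} specialized to $\Phi(t)=t^2$ together with Proposition~\ref{eq:varcont}, and your inline derivation of $1-\lc_{t^2}(\widehat\omega)=\tfrac{m-1}{m}\Gap(\Emm_{\widehat\omega})$ via the identity $\Upo_{1\to m}\Doo_{m\to 1}=\tfrac{1}{m}I+\tfrac{m-1}{m}\Emm_{\widehat\omega}$ is exactly the content of Proposition~\ref{eq:varcont} (proved the same way in the appendix). The telescoping of the factors $\tfrac{n-j-1}{n-j}$ into $\tfrac{n-\ell}{n}$ is likewise how the paper obtains the prefactor.
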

This result is best compared with the following result due to \cite{AlevL20},
\begin{theorem}
    Let $(X, \pi)$ a simplicial complex of rank $n$. Writing $\Gap_k(X, \pi) := \min_{\widehat\omega \in X^{(k)}} \Gap(\Emm_{\widehat\omega})$ for all $0 \le k \le n-2$,
    we have
    \[ \Gap(\Duw_{n \lra n-1}) \ge \frac{1}{n} \cdot \prod_{i = 0}^{n - 2} \Gap_i(X, \pi).\]
\end{theorem}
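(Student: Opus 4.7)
The plan is to apply \cref{thm:pentc-show} with $\Phi(t) = t^2$, for which $\Ent_\pi^\Phi = \Var_\pi$, and then to translate both sides into spectral gaps of concrete random walks. The key identification is
\[ \CF_{t^2}\parens*{\Upo_{\ell \to n}} = \Gap\parens*{\Duw_{n \lra \ell}}, \]
which I would prove as follows: since $\Var$ is invariant under constant shifts, the restriction to $\eff \ge 0$ in the definition of $\CF_\Phi$ is harmless. Using that $\Doo_{n \to \ell}$ is the adjoint of $\Upo_{\ell \to n}$ (with respect to the $\pi_\ell$- and $\pi$-weighted inner products) together with the identity $\Duw_{n \lra \ell} = \Doo_{n \to \ell} \Upo_{\ell \to n}$, one has
\[ \Var_\pi(\eff) - \Var_{\pi_\ell}\parens*{\Upo_{\ell \to n} \eff} = \inpr*{\eff, (I - \Duw_{n \lra \ell}) \eff}_\pi = \mathcal E_{\Duw_{n \lra \ell}}(\eff, \eff), \]
so the contraction $\Var_{\pi_\ell}(\Upo_{\ell \to n} \eff) \le (1-C)\Var_\pi(\eff)$ is equivalent to the Poincar\'e inequality $\mathcal E_{\Duw_{n \lra \ell}}(\eff, \eff) \ge C \cdot \Var_\pi(\eff)$, i.e.~to $\Gap(\Duw_{n \lra \ell}) \ge C$. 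Applying the very same identification inside each link $(X_{\widehat\omega}, \pi^{(\widehat\omega)})$ of rank $m := n - |\widehat\omega|$ yields $1 - \lc_{t^2}(\widehat\omega) = \Gap\parens*{\Duw^{(\widehat\omega)}_{m \lra 1}}$.

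Next, I would compare the bottom-level down-up walk $\Duw^{(\widehat\omega)}_{m \lra 1}$ to the link $1$-skeleton walk $\Emm_{\widehat\omega}$. Unpacking the update rule, from a vertex $x$ one first resamples a top face $\omega \ni x$ in the link and then picks a uniformly random vertex $y \in \omega$; with probability $1/m$ one has $y = x$ (a self-loop) and otherwise the distribution of $y$ is exactly that of a step of $\Emm_{\widehat\omega}$ from $x$, since $\sum_{y \ne x} c_{\widehat\omega}(x, y) = (m-1) \cdot \Pr\sqbr*{x \in \omega \mid \omega \supset \widehat\omega}$ by counting pairs $(y, \omega)$ in two ways. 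Hence
\[ \Duw^{(\widehat\omega)}_{m \lra 1} = \frac{1}{m} \cdot I + \frac{m-1}{m} \cdot \Emm_{\widehat\omega}, \]
so that $\Gap\parens*{\Duw^{(\widehat\omega)}_{m \lra 1}} = \frac{m-1}{m}\Gap(\Emm_{\widehat\omega})$, and minimizing over $\widehat\omega \in X^{(j)}$,
\[ 1 - \lc^{(j)}_{t^2}(X, \pi) = \frac{n - j - 1}{n - j} \cdot \Gap_j(X, \pi). \]

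Finally, plugging the two computations into \cref{thm:pentc-show}, the scalar prefactors telescope:
\[ \Gap(\Duw_{n \lra \ell}) \ge \prod_{j=0}^{\ell - 1} \frac{n - j - 1}{n - j} \cdot \Gap_j(X, \pi) = \frac{n - \ell}{n} \cdot \prod_{j = 0}^{\ell - 1} \Gap_j(X, \pi). \]
No step is conceptually difficult; the main subtlety is the $1/m$ laziness in $\Duw^{(\widehat\omega)}_{m \lra 1}$, which is precisely what produces the $(n - \ell)/n$ prefactor and, in particular, recovers the $1/n$ factor of \cite{AlevL20} in the special case $\ell = n - 1$.
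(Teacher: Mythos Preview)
Your proposal is correct and matches the paper's argument: the paper derives this bound (in the more general form \cref{thm:mixing-show}) by specializing \cref{thm:pentc-show} to $\Phi(t)=t^2$ and invoking \cref{eq:varcont}, which is exactly your lazy-walk decomposition $\Udw_{\widehat\omega,\,1\lra m}=\tfrac{1}{m}\Ide+\tfrac{m-1}{m}\Emm_{\widehat\omega}$ followed by the same telescoping. One cosmetic slip: the walk you describe and decompose in the second paragraph is the up-down walk $\Udw_{\widehat\omega,\,1\lra m}$ on vertices, not $\Duw^{(\widehat\omega)}_{m\lra 1}$ (which lives on top faces); since the two share the same nonzero spectrum this does not affect the argument.
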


Similar results to \cref{thm:mixing-show} for variance contraction was proven in \cite{ChenLV20, GuoM21, StefankovicV23}, however for the case $\ell = n-1$ they do not necessarily
    recover the main result of \cite{AlevL20} whereas \cref{thm:mixing-show} does
    indeed recover this guarantee. 

When $\Phi(t) = t \log t$, we will simply write $\lec(\widehat\omega)$ in place of $\lc_\Phi(\widehat\omega)$. We note that this immediately implies,
\begin{corollary}\label{cor:ramen}
     Let $(X, \pi)$ a simplicial complex of rank $n$ and $0 \le \ell \le n$. Writing $\lec_k(X, \pi) := \min_{\widehat\omega \in X^{(k)}} \lec(\widehat\omega)$ for all $0 \le k \le \ell -1$,
    we have
    \[ \EC(\Upo_{\ell \to n}) \ge \prod_{i = 0}^{\ell - 1} \parens*{1- \lec_i(X, \pi)}.\]
\end{corollary}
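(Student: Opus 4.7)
The plan is to derive \cref{cor:ramen} as an immediate specialization of \cref{thm:pentc-show} to the convex function $\Phi(t) = t \log t$. This function is convex on $\RRp$ (its second derivative is $1/t > 0$), so the hypotheses of \cref{thm:pentc-show} are satisfied and the theorem yields
\[ \CF_\Phi\parens*{\Upo_{\ell \lra n}} \ge \prod_{j = 0}^{\ell - 1} \parens*{ 1 - \lc_\Phi^{(j)}(X, \pi) }. \]
All that remains is to identify the two sides of this inequality with the constants appearing in the corollary.

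For the left-hand side, the usual unpacking
\[ \Ent_\pi^\Phi(\eff) = \Exp_\pi\sqbr*{\eff \log \eff} - \parens*{\Exp_\pi \eff} \log \parens*{\Exp_\pi \eff} = \Ent_\pi(\eff) \]
shows that the $\Phi$-entropy coincides with the standard entropy, so the $\Phi$-entropy contraction inequality defining $\CF_\Phi$ is literally the inequality defining $\EC$, and therefore $\CF_\Phi(\Upo_{\ell \lra n}) = \EC(\Upo_{\ell \lra n})$. For the right-hand side, the definition of $\lc_\Phi(\widehat\omega)$ collapses under the same identification to the definition of $\lec(\widehat\omega)$ given just before the corollary, so $\lc_\Phi^{(j)}(X, \pi) = \lec_j(X, \pi)$ (where, to obtain a valid lower bound on the product, $\lec_j$ must be read as the worst-case local entropy contraction over faces in $X^{(j)}$, matching the convention of \cref{thm:pentc-show}). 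Substituting these identifications into the displayed inequality recovers the claim.

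Because the whole argument consists of unpacking definitions, I do not expect any real obstacle. The only bookkeeping to carry out is verifying that $\Phi(t) = t \log t$ is an admissible convex function in the sense required by \cref{thm:pentc-show} --- which is immediate --- and that the $\Phi$-entropy notation reduces, term for term, to the standard entropy notation used in the statement of the corollary.
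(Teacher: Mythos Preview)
Your proposal is correct and matches the paper's own treatment: the paper states \cref{cor:ramen} immediately after \cref{thm:pentc-show} with the remark that specializing to $\Phi(t)=t\log t$ ``immediately implies'' the corollary, which is exactly the argument you give. Your parenthetical observation that $\lec_j(X,\pi)$ must be read as the worst-case (maximum) local contraction over $X^{(j)}$, consistent with the $\lc_\Phi^{(j)}$ convention in \cref{thm:pentc-show}, is also on point.
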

We observe that when the distribution $\pi$ is $a$-entropically independent in the sense of \cite{AnariJKP22}, we have $\lec_i(X, \pi) = \frac{1}{a \cdot (n - i)}$ and the resulting lowerbound in  \cref{cor:ramen} is the same as the bound given in \cite{AnariJKP22}.

We also note that we can relate the entropy contraction constant $\EC(\Upo_{\ell \to n})$ to the log-Soblev constant thusly. We call the distribution $\pi$ over $X^{(n)}$ $b$-marginally bounded if we have,
\[ \Pr_{\omega \sim \pi}\sqbr*{x \in \omega \mid \widehat\omega \subset \omega} \ge b~~\textrm{or}~~\Pr_{\omega \sim \pi}\sqbr*{x \in \omega \mid \widehat\omega \subset \omega} = 0\]
for all $x \in X_{\widehat\omega}^{(1)}$.
\begin{lemma}[Simplified Version of \cref{lem:lsi}] \label{thm:showent}
     Let $(X, \pi)$ be a simplicial complex of rank $n$. For any $\widehat\omega \in X$, we set $\Gap(\Emm_{\widehat\omega})$.
    where $\Emm_{\widehat\omega}$ is the \ref{eq:link-def} of $\widehat\omega$ and $\Gap(\bullet)$ denotes the \ref{eq:gap-def}. Then $\Duw_{n \lra \ell}$ for the up-down walk between $X^{(n)}$ and $X^{(\ell)}$ and $\Udw_{n-1}$ for the up-down walk between $X^{(n-1)}$ and $X^{(n)}$ (both according to $\pi$), we have
    
	Recalling that $\EC(\bullet)$ denotes the \ref{eq:pent-con} for $\Phi(t) = t \log t$, we have
    \begin{align*}
		\LS(\Duw_{n \lra \ell})&~\ge~
		C_{b,\ell} \cdot \EC\parens*{\Upo_{\ell \to n}},\\
    \LS(\Udw_{n-1}) &~\ge~ C_b \cdot \Gap_{n-2}(X, \pi) \cdot \EC(\Upo_{n-2\to n-1}),
    \end{align*}
    where $C_{b}$ and $C_{b, \ell}$ are constants depending on $b$ and $n-\ell$ only.
\end{lemma}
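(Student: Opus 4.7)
The plan is to prove both inequalities by a Miclo-style local-to-global comparison: decompose the entropy functional through a lower-level marginal using the chain rule, use entropy contraction of the appropriate up operator on the projected piece, and use a local log-Sobolev inequality on each link to lower bound the remaining conditional piece by a Dirichlet form. For the first inequality, I would start from the chain rule
\[ \Ent_\pi(\eff^2) = \Exp_{\widehat\omega \sim \pi_\ell} \Ent_{\pi^{(\widehat\omega)}}(\eff^2) + \Ent_{\pi_\ell}\parens*{\Upo_{\ell \to n} \eff^2},\]
valid for every $\eff \in \RRp^{X^{(n)}}$, and apply the definition of $\EC(\Upo_{\ell \to n})$ to the second term on the right hand side to obtain $\Exp_{\widehat\omega \sim \pi_\ell} \Ent_{\pi^{(\widehat\omega)}}(\eff^2) \ge \EC(\Upo_{\ell \to n}) \cdot \Ent_\pi(\eff^2)$. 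In parallel, I would observe that $\Duw_{n \lra \ell}$ restricted to the link at $\widehat\omega$ is exactly the one-shot resampling walk associated with $\pi^{(\widehat\omega)}$, giving the decomposition $\mathcal{E}_{\Duw_{n \lra \ell}}(\eff, \eff) = \Exp_{\widehat\omega \sim \pi_\ell} \Var_{\pi^{(\widehat\omega)}}(\eff)$.

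Applying marginal boundedness one coordinate at a time, every atom of $\pi^{(\widehat\omega)}$ has mass at least $b^{n - \ell}$, so the classical Diaconis--Saloff-Coste/Miclo comparison for one-shot resampling walks gives a log-Sobolev constant $C_{b, \ell}$ for $\pi^{(\widehat\omega)}$ depending only on $b$ and $n - \ell$ (via a factor of order $1/((n-\ell)\log(1/b))$). Combining $\Var_{\pi^{(\widehat\omega)}}(\eff) \ge C_{b, \ell} \cdot \Ent_{\pi^{(\widehat\omega)}}(\eff^2)$ pointwise in $\widehat\omega$ with the entropy inequality above yields $\LS(\Duw_{n \lra \ell}) \ge C_{b,\ell} \cdot \EC(\Upo_{\ell \to n})$. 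For the second inequality I would apply the same template, but decomposed one level further down. The up-down walk $\Udw_{n-1 \lra n}$ admits a standard Garland-style decomposition $\mathcal{E}_{\Udw_{n-1}}(\eff, \eff) = \Exp_{\widetilde\omega \sim \pi_{n-2}} \mathcal{E}_{\Emm_{\widetilde\omega}}(\eff, \eff)$; each link graph $\Emm_{\widetilde\omega}$ has $b$-bounded stationary distribution by marginal boundedness, so Diaconis--Saloff-Coste gives $\LS(\Emm_{\widetilde\omega}) \ge C_b \cdot \Gap(\Emm_{\widetilde\omega}) \ge C_b \cdot \Gap_{n-2}(X, \pi)$. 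Chaining this with the chain rule applied at level $n-2$ and the entropy contraction of $\Upo_{n-2 \to n-1}$ then yields the second inequality.

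The step I expect to be the main obstacle is verifying the exact normalizations in the Garland-style decomposition of $\mathcal{E}_{\Udw_{n-1}}$ into the weighted sum of link Dirichlet forms $\mathcal{E}_{\Emm_{\widetilde\omega}}$; one has to track the link weightings carefully to make sure the $\pi_{n-2}$-weighting is consistent with the stationary measure of $\Udw_{n-1}$ rather than producing a reweighted variant. Every other step is an application of a standard identity, of Miclo's comparison, or of the chain rule for entropy, and the resulting constants $C_b$ and $C_{b, \ell}$ depend only on $b$ and on $n - \ell$ (through $(n - \ell) \log(1/b)$), never on the ambient $n$ or on $|X^{(n)}|$.
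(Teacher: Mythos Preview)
Your proposal is correct and follows essentially the same route as the paper: decompose the Dirichlet form of $\Duw_{n \lra \ell}$ (resp.\ $\Udw_{n-1}$) over links at level $\ell$ (resp.\ $n-2$) via Garland, apply the Diaconis--Saloff-Coste bound on each link to pass from local variance/Dirichlet form to local entropy, and then use the chain rule together with the definition of $\EC(\Upo_{\ell \to n})$ (resp.\ $\EC(\Upo_{n-2 \to n-1})$) to reassemble. The normalization you flag is exactly the $(n-1)/n$ prefactor coming from the identity $\Udw_{n-1} = \tfrac{1}{n}\Ide + \tfrac{n-1}{n}\Emm_{\widehat\omega}$ in each link (item (3) of \cref{lem:garland}), which is harmlessly absorbed into $C_b$.
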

\begin{remark}
    Here we are thinking $C_b = O(\log(1/b)^{-1})$ and $C_{b, \ell} = O(( \ell \cdot \log(1/b))^{-1})$, i.e.~the dependence of $C_b$ and $C_{b, \ell}$ on $b$ is inversely logarithmic. Similarly, the dependence of $C_{b, \ell}$ on $\ell$ inversely linear.
\end{remark}
In particular, when the distribution at hand is $b$-marginally bounded for some $b = O(1)$ and $\ell = O(1)$ is a constant, \cref{thm:showent} in conjunction with \cref{thm:eshow} indicates in the entropy contraction of our expanderized random walk we only pay a price according to the \ref{eq:two-sided} of the graph $H_n$ that we use in \cref{thm:eshow}. 

\subsection{Related Work}
High dimensional expansion has proven itself to be a very successful research program for establishing mixing times for down-up walks. For example \cite{KaufmanM17, DinurK17, KaufmanO18, DiksteinDFH18, AlevL20} use spectral local-to-global arguments for establishing spectral gap bounds for these walks. In conjunction with the spectral independence framework, due to \cite{AnariLO20, CGSV20, FGYZ20}, these results paved the way for many new in the field of random sampling: rapid mixing of the down-up walk for the hardcore model in the uniqueness regime \cite{AnariLO20}, rapid mixing of the down-up walk for sampling graph colorings in correlation decay regime \cite{FGYZ20, CGSV20}, optimal mixing for many Markov chains of interest\cite{ChenLV21, BlancaCPCPS22, Liu21}. For more information regarding spectral independence, we refer the reader to the excellent survey \cite{StefankovicV23} and the references therein. In \cite{AnariJKP22a, AnariJKPV23, ChenLV21, GuoM21} local-to-global strategies for establishing entropic contraction bounds was studied. In \cite{ChenE22} a connection between these local-to-global methods and the stochastic localization framework of \cite{Eldan13} was explored. We refer to the works~\cite{Klartag18, Eldan20, Chen21, KlartagM22, EldanS22, EldanKZ22} and references therein for applications of the stochastic localization framework. Our inductive strategy for establishing \ref{eq:pent-con} on simplicial complexes is heavily inspired by the presentation in \cite{ChenE22}. In \cite{Lee23} mixing estimates about the walk $\Duw_{n \lra n -1}$ is used to obtain estimates for $\Duw_{n \lra \ell}$ for all $\ell < n- 1$. The key intuition behind this work is  the observation that the
down move of the down-up walk is (passively) utilizing an expander, the down-move of the down-up walk of the so-called \emph{Bernoulli-Laplace model}, and that one can use the expansion of this walk to show that once $\ell$ decreases the mixing times estimates get better and better. Morally, this is very similar to our idea of picking the replacement-indices for our expanderized walks via an expander walk as opposed to sampling them uniformly at random. For other classical techniques which can be used to bound mixing times of Markov chains, we refer the reader to the texts \cite{AldousF95, MontenegroT05, WilmerLP09}.

In contrast with down-up walks, results establishing rapid mixing for the random walk $\Psc$ are fewer \cite{DiaconisR00, Hayes06, DyerGJ06, RobertsR15} and mostly rely on estimates on the Dobrushin matrix \cite{Dobrushin70}. \cite{AlevP23} studied the mixing time of this random walk using techniques of high dimensional expansion, however their techniques fell short of establishing mixing time bounds under the assumption of spectral independence.

The work of \cite{FengGWWY23} is also related to our work in spirit. In this work, the authors show that under suitable assumptions a wide array of random walks, including the single site systematic scan $\Psc$ and the down-up walk $\Duw$, can be derandomized, i.e.~they devise efficient deterministic counting algorithms on the basis of rapid mixing results for these chains. It is an interesting question whether one can carefully pick the expander graph $H$, to make this derandomization task more efficient.

As mentioned above our expanderized random walks are heavily inspired by the derandomized squaring algorithm of \cite{RozenmanV05}. This algorithm was initially used to give an alternative and simpler proof of the seminal result of \cite{Reingold08} concerning the derandomization of the complexity class $\mathbf{SL}$ and establishing $\mathbf{SL} = \mathbf{L}$. Concretely, both \cite{Reingold08} and the subsequent work of \cite{RozenmanV05} show the existence of a deterministic logspace algorithm deciding undirected graph connectivity. Since then, the derandomized squaring algorithm has also found other uses in derandomization, e.g.~\cite{MurtaghRSV17, MurtaghRSV21}. We conclude by noting that the inital algorithm of \cite{Reingold08} was based on the zigzag product construction \cite{ReingoldVW00}, which has also inspired research in the field of high dimensional expansion \cite{KarniK20}. For more information on expander graphs, we refer the reader to the excellent survey \cite{HooryLW06}.
\subsection{Organization}
Our results about expanderized walks are to be found in \cref{sec:expander}. Our results about functional inequalities and local-to-global analysis in simplicial complexes are to be found in \cref{sec:fi}. These two sections can be read independently of each other. In \cref{sec:hijack} we give some example instances where expanderized walks mix rapidly by utilizing the results proven in \cref{sec:expander} and \cref{sec:fi}.
\subsection{Open Questions and Future Directions}

\begin{itemize} 
    \item A current limitation of our method for proving optimal, $O(n \log n)$, mixing times for many problems on $n$-vertex graphs is our reliance on the log-Sobolev constant $\LS(\Udw)$ of the up-down walk to bound the entropy contraction $\EC(\Paqx)$ of expanderized walks. This presents a natural blocker to extend our methods beyond cases where the target distributions are marginally bounded. One can alternatively try to bound the entropy contraction directly, however a naive calculation shows that it is difficult to avoid a blow up related to the entropy contraction $\EC(H)$ of the graph $H$ here. It is a natural question whether a more cunning analysis, not relying on the log-Sobolev constant, can show that the loss in the entropy contraction one will suffer when passing from the usual walks to the expanderized walks depends only on the \ref{eq:two-sided} of $H$.
    \item Our \cref{thm:show-close} shows that the expanderized up-down walks are close to the regular up-down walks. Can we use this result or a result of similar flavor to establish the hypercontractivity of an expanderized noise operator over a simplicial complex? We recall that for $\rho \in [0,1]$ the usual noise operator $\Tee_\rho \in \RR^{X^{(n)}\times X^{(n)}}$ is defined by the equation
    \[ \Tee_\rho = \sum_{j = 0}^n \rho^j (1 - \rho)^{n -j} \cdot \binom{n}{j} \cdot \Duw_{n \lra j}.\]
    In \cite{BafnaHKL22, BafnaHKL22a, GurLL22} the hypercontractivity of the noise operator was established in various cases of interest. An expanderized hypercontractive noise operator a simplicial complex can be useful in constructing sparser integrality examples for many problems of interest. 
\end{itemize}

\subsection*{Acknowledgements}
VLA and SR would like to thank Fernando Granha Jeronimo for many insightful discussions concerning expander graphs. VLA was supported by the ERC grant of Alex Lubotzky (European Union's
Horizon 
2020/882751), the ISF grant 2669/21 and ERC grant 834735 of Gil Kalai, and ISF grant 2990/21 of Ori Parzanchevski. 
SR was supported by the National Science Foundation under Grant No. 2348489.
\section{Preliminaries}\label{sec:prelim}
\subsection{Linear Algebra}
We will denote functions and vectors by bold faces, i.e.~$\eff \in \RR^V$. The indicator function of $i \in V$ will be denoted by $\one_i$, i.e.~$\one_i(j) = 0$ for all $j \ne i$ and $\one_i(i) = 1$. For $A \subseteq V$, we will write $\one_A = \sum_{a \in A} \one_a$. We will adopt the convention of using $\pi, \nu, \mu: V \to \RRp$ for various probability distributions over $V$.

Let
$\eff, \gee \in \RR^V$ and a measure $\pi: V \to \RRpp$ be given. We will use the notations $\langle \eff, \gee\rangle_{\pi}$
and $\norm{\eff}_\pi$ to denote the inner-product and the norm with respect to
the distribution $\pi$, i.e.  
\begin{equation}\langle \eff, \gee\rangle_{\pi} = \Exp_{x \sim
	\pi}\eff(x)\gee(x) =  \sum_{x \in V}
\pi(x) \cdot \eff(x)\gee(x)  ~~\textrm{ and }~~ \norm{\eff}_\pi^2 = \langle \eff,
\eff\rangle_{\pi}.\label{eq:inpr-defn}
\end{equation}
Given $\eff, \gee \in \RR^n$ we will write $\inpr*{\eff, \gee}_{\ell_2}$ for the inner-product between $\eff$ and $\gee$ in the counting measure, i.e.~$\inpr*{\eff, \gee}_{\ell_2} = \sum_{i = 1}^n \eff(i) \gee(i)$. We will also write $\norm*{ \eff }_{\ell_1}, \norm*{\eff}_{\ell_2}$, and $\norm*{\eff}_{\ell_\infty}$ for the $\ell_1, \ell_2,$ and $\ell_\infty$ norms of $\eff$ respectively. Formally,
\[ \norm*{\eff}^2_{\ell_2} = \sum_{i = 1}^n \eff(i)^2~~\textrm{;}~~\norm*{\eff}_{\ell_1} = \sum_{i = 1}^n |\eff(i)|~~\textrm{; and ;}~~\norm*{\eff}_{\ell_\infty} = \max_{i \in [n]} |\eff(i)|.\]
\subsubsection*{Matrices and Eigenvalues}
In this section, we will recall some results concerning eigenvalues and
eigenvectors of matrices.

Serif faces will be used to denote matrices, i.e.~$\Aye,\Bee \in \RR^{U \times V}$. We
will call a matrix $\Bee \in \RR^{U \times V}$ \ref{eq:rs-def} if rows of $\Bee$ sum up to 1 and $\Bee$ contains no negative entries. Formally,
\begin{equation}\tag{row stochastic}\label{eq:rs-def}\textrm{for all } u \in U, v \in V~~\Bee(u, v) \ge 0~~\textrm{and}~~\Bee
\one = \one.
\end{equation}
Let $\Bee \in \RR^{U \times V}$ and distributions $\pi_U : U \to \RRpp$ and $\pi_V: V \to \RRpp$ be given. The \ref{eq:adj-defn} $\Bee^*$ of $\Bee$ with respect to the measures $\pi_U$ and $\pi_V$ is the unique matrix which satisfies the following equation,
\begin{equation}\label{eq:adj-defn}\tag{adjoint} \langle \eff, \Bee
\gee\rangle_{\pi_U} = \langle \Bee^* \eff, \gee\rangle_{\pi_V}~~\textrm{ for all
} \eff \in \RR^U, \gee \in \RR^V.
\end{equation}
If $U = V$ and $\pi_U = \pi_V$, the operator
$\Bee$ is called self-adjoint when $\Bee^*= \Bee$.  If $\Bee$ is a
row-stochastic matrix, we will call $\Bee^*$ the time-reversal of $\Bee$ with
respect to $\pi_U,\pi_V$ and
say that $\Bee$ is reversible if $\Bee = \Bee^*$. It is well known that the
operator $\Bee^* \in \RR^{V \times U}$ is uniquely determined by the choice of
$\Bee \in \RR^{U \times V}$ and the
inner-products defined by $\pi_U$ and $\pi_V$ (see
e.g.~\cite[p.~318]{Saloff-Coste97}),
\begin{proposition}\label{prop:adjoint-defn}
    Let $\Bee \in \RR^{U \times V}$ be arbitrary. We write $\Bee^*$ for the
    adjoint operator to $\Bee$ with respect to the inner-products defined by
    the distributions $\pi_U$ and $\pi_V$. 
    Then,
    \[ \Bee^*(y, x) = \Bee(x, y) \cdot \frac{\pi_U(x)}{\pi_V(y)}~~\textrm{for
    all}~~x \in U, y \in V.\]
\end{proposition}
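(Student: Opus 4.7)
The plan is to verify the claimed formula by plugging indicator functions into the defining adjoint identity, then appeal to uniqueness. Since $\Bee \in \RR^{U \times V}$ acts on $\gee \in \RR^V$ producing a vector in $\RR^U$, and $\Bee^* \in \RR^{V \times U}$ acts on $\eff \in \RR^U$ producing a vector in $\RR^V$, the identity $\langle \eff, \Bee \gee\rangle_{\pi_U} = \langle \Bee^*\eff, \gee\rangle_{\pi_V}$ is a bilinear identity between the coordinate pairings on $U$ and $V$, so testing on a single pair of indicators pins down a single matrix entry.

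Concretely, I would fix $x \in U$ and $y \in V$ and instantiate the adjoint identity with $\eff = \one_x$ and $\gee = \one_y$. Expanding each side using the definitions in \eqref{eq:inpr-defn},
\[
\langle \one_x, \Bee \one_y\rangle_{\pi_U} \;=\; \sum_{u \in U} \pi_U(u)\, \one_x(u)\, [\Bee \one_y](u) \;=\; \pi_U(x)\, \Bee(x,y),
\]
and similarly
\[
\langle \Bee^* \one_x, \one_y\rangle_{\pi_V} \;=\; \sum_{v \in V} \pi_V(v)\, [\Bee^* \one_x](v)\, \one_y(v) \;=\; \pi_V(y)\, \Bee^*(y,x).
\]
Equating these two quantities and dividing by $\pi_V(y) > 0$ (which is legal since $\pi_V$ takes values in $\RRpp$) immediately gives the claimed formula $\Bee^*(y,x) = \Bee(x,y) \cdot \pi_U(x)/\pi_V(y)$.

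For uniqueness, I would note that the adjoint relation evaluated on the basis $\{\one_x\}_{x \in U} \times \{\one_y\}_{y \in V}$ determines every entry of $\Bee^*$ via exactly the computation above, so no two matrices in $\RR^{V \times U}$ can both satisfy \eqref{eq:adj-defn}. There is essentially no obstacle here; the statement is a standard reformulation of the adjoint in terms of reweighted transpose entries, and the only subtlety worth flagging is being careful that $\pi_U, \pi_V$ are strictly positive so the division is always well-defined, which is built into the $\RRpp$ assumption in the hypothesis.
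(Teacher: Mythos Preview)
Your proof is correct: testing the adjoint identity on indicator pairs $\one_x, \one_y$ and dividing by $\pi_V(y)>0$ is the standard derivation of the reweighted-transpose formula, and the uniqueness remark is fine. The paper does not actually supply its own proof of this proposition---it is stated as a well-known fact with a citation to \cite[p.~318]{Saloff-Coste97}---so there is nothing to compare against; your argument is exactly the routine verification one would expect.
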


We also recall the following standard fact which is an immediate consequence of
\cref{prop:adjoint-defn},
\begin{fact}\label{fac:reversal}
	If $\Bee \in \RR^{U, V}$ is a row-stochastic matrix satisfying $\pi_U \Bee =
	\pi_V$, then the adjoint matrix $\Bee^*$ with respect to $\pi_U, \pi_V$ is
	also row-stochastic and satisfies $\pi_V \Bee^* = \pi_U$.
\end{fact}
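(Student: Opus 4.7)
The plan is to derive the claim directly from the explicit formula for the adjoint provided by \cref{prop:adjoint-defn}, which expresses $\Bee^*(y, x) = \Bee(x, y) \cdot \pi_U(x) / \pi_V(y)$. Since $\pi_U, \pi_V$ take values in $\RRpp$ and $\Bee$ has nonnegative entries, this formula immediately yields $\Bee^*(y, x) \ge 0$ for all $x \in U, y \in V$. Thus nonnegativity is free, and both remaining assertions (rows of $\Bee^*$ sum to $1$, and $\pi_V \Bee^* = \pi_U$) reduce to two one-line computations.

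First, to check row stochasticity of $\Bee^*$, I would fix $y \in V$ and compute
\[ \sum_{x \in U} \Bee^*(y, x) ~=~ \frac{1}{\pi_V(y)} \sum_{x \in U} \pi_U(x) \Bee(x, y) ~=~ \frac{(\pi_U \Bee)(y)}{\pi_V(y)} ~=~ \frac{\pi_V(y)}{\pi_V(y)} ~=~ 1, \]
where the third equality uses the hypothesis $\pi_U \Bee = \pi_V$. Next, to check $\pi_V \Bee^* = \pi_U$, I would fix $x \in U$ and compute
\[ (\pi_V \Bee^*)(x) ~=~ \sum_{y \in V} \pi_V(y) \cdot \Bee(x, y) \cdot \frac{\pi_U(x)}{\pi_V(y)} ~=~ \pi_U(x) \sum_{y \in V} \Bee(x, y) ~=~ \pi_U(x), \]
where the last step invokes the row stochasticity of $\Bee$ itself.

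There is no real obstacle here: the statement is essentially a consistency check on the definition of the adjoint, and the only substantive input beyond the explicit formula is the standing assumption that the reference measures are strictly positive (so that the division by $\pi_V(y)$ is always well defined). Once \cref{prop:adjoint-defn} is in hand, the proof is two lines of algebra.
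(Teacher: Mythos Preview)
Your proof is correct and follows exactly the route the paper indicates: the paper does not write out a proof but simply states that \cref{fac:reversal} is an immediate consequence of \cref{prop:adjoint-defn}, and your two computations make that immediacy explicit.
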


It is well known that a self-adjoint matrix $\Aye \in \RR^{V \times V}$ has $|V|$ real eigenvalues. We will write,
\[ \lambda_1(\Aye) \ge \lambda_2(\Aye) \ge \cdots \ge \lambda_{|V|}(\Aye) := \lambda_{\min}(\Aye),\]
for the sequence of eigenvalues of $\Aye$ sorted in decreasing order. We say that the matrix is positive semi-definite, henceforth PSD, if it is self-adjoint and satisfies $\lambda_{\min}(\Aye) \ge 0$.

Given a matrix $\Aye \in \RR^{V \times V}$ and a distribution $\mu: V \to \RRpp$, we will write $\norm*{ \Aye }_{\opp, \mu}$ for the operator norm of $\Aye$, defined in the following manner
\begin{equation}\label{eq:opn-def}\tag{operator norm} \norm*{ \Aye }_{\opp, \mu} := \max\set*{ \left. \frac{ \norm*{\Aye \eff}_\mu }{\norm{\eff}_\mu} ~\right|~\eff \in \RR^V~\textrm{and}~\eff \ne 0}.
\end{equation}
If $\Aye$ is self-adjoint with respect to the measure $\mu$, we have $\norm*{\Aye}_{\opp, \mu} = \max\set*{\lambda_1(\Aye), |\lambda_{\min}(\Aye)|}$. When $\mu$ is the counting measure, we will simply write $\norm{\bullet}_{\ell_2}$.

Similarly when $\Aye \in \RR^{V \times V}$ is a reverisble row-stochastic matrix, with stationary measure $\mu$. We will write $\lambda(\Aye)$ for the \ref{eq:two-sided} of $\Aye$. Formally,
\begin{equation}\label{eq:two-sided}\tag{two-sided expansion}
    \lambda(\Aye) = \max\set*{ \lambda_2(\Aye), \Abs*{\lambda_{\min}(\Emm) }}.
\end{equation}

When $\Aye$ represents the simple random walk over an undirected graph $H = (V, E)$, i.e.~
\[\Aye(i, j) = \frac{\one[\set*{i, j} \in E]}{\deg(i)}~~\textrm{for all}~~i, j \in V,\]
we will simply write $\lambda(H)$ instead of $\lambda(\Aye)$. For convenience, we recall
\begin{fact}\label{fac:baby}
    Let $H = (V, E)$ be a $k$-regular graph and suppose $\Aye$ represents the random walk over $H$. Then, $\uni_V \Aye = \uni_V$, i.e.~the uniform distribution on $V$ is stationary for $\Aye$.
\end{fact}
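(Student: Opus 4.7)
The plan is to verify the stationarity equation $\uni_V \Aye = \uni_V$ by direct computation from the definitions. Since $H$ is $k$-regular, every vertex has degree exactly $k$, and the walk matrix simplifies to $\Aye(i,j) = \one[\set*{i,j} \in E]/k$ for all $i,j \in V$. I would fix an arbitrary target vertex $j \in V$ and expand the left-hand side of the stationarity equation coordinate-wise.

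Concretely, I would compute
\[
(\uni_V \Aye)(j) ~=~ \sum_{i \in V} \uni_V(i) \cdot \Aye(i,j) ~=~ \sum_{i \in V} \frac{1}{|V|} \cdot \frac{\one[\set*{i,j} \in E]}{k}~=~\frac{\deg(j)}{|V| \cdot k},
\]
using that the sum of the indicators counts precisely the neighbors of $j$ in $H$. Then $k$-regularity gives $\deg(j) = k$, so the expression collapses to $1/|V| = \uni_V(j)$. Since $j$ was arbitrary, this establishes $\uni_V \Aye = \uni_V$.

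No real obstacle arises here: the claim is essentially a one-line verification, and the only ingredient used beyond the definition of $\Aye$ is the hypothesis that $\deg(j)$ is the constant $k$. I would therefore present the proof as a single display chain, without additional machinery.
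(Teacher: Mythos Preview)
Your proof is correct; the paper states this fact without proof, treating it as a standard folklore observation, so your direct coordinate-wise verification is exactly the intended elementary argument.
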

We note that there exist infinite families of graphs such that every graph $H$ in the family has constant degree and $\lambda(H)$ bounded above by a constant bounded above by $1$~\cite{LubotzkyPS86, Margulis88}. In this paper, we will consider families that contain graphs on $n$ vertices for \emph{every} sufficiently large $n$. Such constructions were given in~\cite{Alon21}, and in particular were based on the infinite families from~\cite{LubotzkyPS86, Margulis88}. We refer the reader to the excellent survey \cite{HooryLW06} for more information on expander graphs. For our purposes we will only need to rely on the following result,
\begin{theorem}[Theorem 1.1 in \cite{Alon21}, simplified]\label{thm:alon}
    For every prime number $p \equiv 1 \pmod 4$, and every sufficiently large $n > n_0(p)$, there exists a strongly explicit\footnote{i.e.~the list of neighbors of any vertex can be generated deterministically in time $\plog(n)$} construction of a $d$-regular graph $H_n$ on $n$-vertices with $\lambda(H_n) \le \frac{(1 +\sqrt 2)\sqrt{d-1} + o(1)}{d}$, where the $o(1)$ vanishes as $n$ tends to infinity.  
\end{theorem}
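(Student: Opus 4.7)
The plan is to combine the explicit Ramanujan constructions of Lubotzky--Phillips--Sarnak / Margulis, which exist only for a sparse sequence of sizes, with a ``filling in'' procedure that patches the gaps. Concretely, for a prime $p\equiv 1\pmod 4$, LPS gives a strongly explicit $d$-regular Ramanujan graph on $n_q = q(q^2-1)/2$ vertices (for suitable primes $q$) with $\lambda \le 2\sqrt{d-1}/d$, where $d=p+1$. By the prime number theorem the sequence $\{n_q\}$ satisfies $n_{q'}/n_q \to 1$ for consecutive valid $q,q'$, so for every sufficiently large $n$ one can pick an LPS base graph $G_0$ on $n_0 \le n$ vertices with $n-n_0 = o(n)$.

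The next step is an \emph{inflation} of $G_0$ from $n_0$ to $n$ vertices while preserving $d$-regularity. The construction I would use is to take a small auxiliary $d$-regular graph $R$ on $n-n_0 + O(1)$ vertices (itself drawn from a slightly smaller Ramanujan family, or built by local gadgetry since its size is sublinear), delete an appropriately chosen matching from each of $G_0$ and $R$, and reconnect the free endpoints across the cut so that the resulting graph $H_n$ is $d$-regular on exactly $n$ vertices. Strong explicitness is preserved because neighborhood queries into $G_0$ and $R$ are each $\mathrm{polylog}(n)$, and the matching/reconnection rule can be made deterministic and local.

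The spectral analysis proceeds by writing the adjacency operator of $H_n$ as the ``block-diagonal'' adjacency of $G_0 \sqcup R$ (with a few edges removed) plus a perturbation supported on the $O(n-n_0)$ cut edges. On the bulk component the spectrum is controlled by the Ramanujan bound $2\sqrt{d-1}$. The perturbation, having support size $o(n)$ and entrywise norm $O(1)$, admits an upper bound via a test-function / Cauchy interlacing argument; the sharp constant $(1+\sqrt{2})\sqrt{d-1}$ emerges from balancing two contributions — one of size $\sqrt{d-1}$ coming from eigenvectors that align with the Ramanujan bulk, and one of size $\sqrt{2}\cdot\sqrt{d-1}$ coming from a worst-case bipartite-like mode localized across the cut between $V(G_0)$ and $V(R)$. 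The $o(1)$ in the theorem statement absorbs the dependence on how small the gap $n-n_0$ is.

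The main obstacle is achieving the exact constant $1+\sqrt{2}$ rather than a softer bound such as $(2+\varepsilon)\sqrt{d-1}$. A crude application of Weyl's inequality would give only the latter; getting the sharp factor requires (i) choosing the gadget $R$ so that \emph{its} non-trivial spectrum is itself near-Ramanujan, and (ii) a refined quadratic-form bound on any candidate top eigenvector of $H_n$ that splits its $\ell_2$-mass between $V(G_0)$ and $V(R)$, showing that the worst case is attained precisely by the bipartite-like configuration that produces the $\sqrt{2}$. Once these two ingredients are in place, Markov/Cheeger-style interlacing combines them into the final bound.
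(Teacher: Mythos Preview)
The paper does not prove this theorem at all: it is quoted verbatim as a simplified form of \cite[Theorem~1.1]{Alon21} and used as a black box (see the sentence right after the statement, and the later invocations in the remarks following \cref{thm:colhijack} and \cref{thm:ishijack}). There is therefore no ``paper's own proof'' to compare your attempt against.

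For what it is worth, your high-level plan --- take an LPS/Margulis Ramanujan graph on the largest available $n_0\le n$, graft on a small $d$-regular patch on the remaining $n-n_0=o(n)$ vertices, and control the perturbation --- is indeed the scheme Alon uses. Two points of caution if you want to turn this into an actual proof rather than a sketch: (i) the way the extra vertices are attached in \cite{Alon21} is not via a separate gadget graph $R$ joined across a cut, but rather by a more local ``edge-splitting'' rule (delete an edge $uv$ of the base graph and route it through a new vertex, repeated appropriately), which is what makes the quadratic-form analysis clean enough to yield the sharp $(1+\sqrt 2)\sqrt{d-1}$; and (ii) your account of where the constant $\sqrt 2$ comes from (``a bipartite-like mode localized across the cut'') is heuristic --- in Alon's argument the $1+\sqrt 2$ arises from solving a small optimization over how a putative bad eigenvector can distribute mass between old and new vertices, and you would need to reproduce that calculation rather than appeal to interlacing alone.
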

We emphasize that if the prime $p \equiv 1 \pmod 4$ is a constant, i.e.~has no dependency on $n$, then the graph $H_n$ is a $d$-regular expander graph with $d = O(1)$ and $\lambda(H_n) = 1 - \ee$ for some constant $\ee := \ee(p)$, i.e.~the \ref{eq:two-sided} of $H_n$ is bounded away from 1.

The following variational characterizations of $\lambda_2(\bullet), \lambda_{\min}(\bullet),$ and $\lambda(\bullet)$ are simple consequences of the Courant-Fischer-Weyl principle and the Perron-Frobenius Theorem, see for example \cite{Bhatia2013, HornJ12, AldousF95}.
\begin{fact}\label{fac:cf-baby}
    Let $\Aye \in \RR^{V \times V}$ be a reversible row-stochastic matrix with respect to the measure $\mu: V \to \RRpp$. Then, the following hold,
    \begin{enumerate}
        \item $\lambda_2(\Aye) = \max\set*{ \left.\frac{ \inpr*{ \eff, \Aye \eff}_\mu }{\norm{\eff}^2_{\mu}}~\right|~ \eff \in \RR^V \setminus \boldsymbol 0, \inpr*{\eff,\one}_\mu = 0},$
        \item $\lambda_{\min}(\Aye) = \min\set*{ \left. \frac{ \inpr*{ \eff, \Aye \eff}_\mu}{\norm{\eff}^2_{\mu}}~\right|~\eff \in \RR^V \setminus \boldsymbol 0, \inpr*{\eff,\one}_\mu = 0},$
        \item $\lambda(\Aye) = \max\set*{ \left.\frac{\norm*{\Aye \eff}_\mu}{\norm{\eff}_\mu}~\right|~\eff \in \RR^V \setminus \boldsymbol 0, \inpr*{\eff, \one}_\mu = 0}$,
        \item $\lambda(\Aye) = \max\set*{ \left.\frac{\Abs*{\inpr*{\eff,\Aye \eff}_\mu}}{\norm{\eff}^2_\mu}~\right|~\eff \in \RR^V \setminus \boldsymbol 0, \inpr*{\eff, \one}_\mu = 0}$.
    \end{enumerate}
\end{fact}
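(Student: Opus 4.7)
The plan is to produce an orthonormal eigenbasis of $\Aye$ in the $\mu$-inner product and then read off each variational characterization from the resulting spectral decomposition. First I would observe that reversibility together with \cref{prop:adjoint-defn} gives $\Aye = \Aye^*$ with respect to $\inpr*{\cdot,\cdot}_\mu$, so $\Aye$ is self-adjoint on $(\RR^V,\inpr*{\cdot,\cdot}_\mu)$. The spectral theorem therefore supplies an orthonormal basis $\phi_1,\ldots,\phi_{|V|}$ of eigenvectors with real eigenvalues $\lambda_1(\Aye)\ge\lambda_2(\Aye)\ge\cdots\ge\lambda_{\min}(\Aye)$. Row-stochasticity gives $\Aye\one = \one$, and an appeal to Perron--Frobenius (using that $\Aye$ is nonnegative and has row sums $1$) forces $\lambda_1(\Aye)=1$; I would take $\phi_1 = \one/\norm{\one}_\mu$.

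Next, I would identify the hyperplane $W := \set*{\eff\in\RR^V : \inpr*{\eff,\one}_\mu = 0}$ with $\mathrm{span}(\phi_2,\ldots,\phi_{|V|})$, since these $\phi_i$ are $\mu$-orthogonal to $\phi_1 \propto \one$. The restriction $\Aye|_W$ is self-adjoint and its eigenvalues are exactly $\lambda_2(\Aye),\ldots,\lambda_{\min}(\Aye)$. Applying the Courant--Fischer--Weyl principle to this restriction immediately yields items $(1)$ and $(2)$: the maximum and minimum of the Rayleigh quotient $\inpr*{\eff,\Aye\eff}_\mu/\norm{\eff}^2_\mu$ over $\eff\in W\setminus\set*{\boldsymbol 0}$ are $\lambda_2(\Aye)$ and $\lambda_{\min}(\Aye)$, respectively, attained at the corresponding eigenvectors.

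For items $(3)$ and $(4)$, I would write an arbitrary $\eff\in W\setminus\set*{\boldsymbol 0}$ as $\eff = \sum_{i\ge 2} c_i\,\phi_i$ and compute, using $\mu$-orthonormality of $\set*{\phi_i}$,
\[
\norm{\eff}^2_\mu \,=\, \sum_{i\ge 2} c_i^{\,2},\qquad \inpr*{\eff,\Aye\eff}_\mu \,=\, \sum_{i\ge 2} c_i^{\,2}\,\lambda_i,\qquad \norm{\Aye\eff}^2_\mu \,=\, \sum_{i\ge 2} c_i^{\,2}\,\lambda_i^{\,2}.
\]
Since the definition of the \ref{eq:two-sided} rearranges to $\lambda(\Aye) = \max_{i\ge 2}|\lambda_i|$, dividing the last two identities by the first gives $\norm{\Aye\eff}_\mu/\norm{\eff}_\mu \le \lambda(\Aye)$ and $|\inpr*{\eff,\Aye\eff}_\mu|/\norm{\eff}^2_\mu \le \lambda(\Aye)$, with equality in both cases witnessed by $\eff = \phi_{j^\star}$ for any $j^\star\ge 2$ realizing $|\lambda_{j^\star}| = \lambda(\Aye)$.

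No serious obstacle arises; the only subtlety is verifying that passing to $W$ truly removes only the top eigenvalue $\lambda_1 = 1$ and leaves $\lambda_2,\ldots,\lambda_{\min}$ intact on $W$. Even if $1$ is a repeated eigenvalue, $\phi_1 \propto \one$ is one of its eigenvectors and the remaining basis $\set*{\phi_i}_{i\ge 2}$ still spans $W$ and carries the eigenvalues $\lambda_2,\ldots,\lambda_{\min}$, so the identifications above go through and the four claims reduce to elementary manipulations of the spectral decomposition.
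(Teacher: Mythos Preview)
Your proposal is correct and follows precisely the approach the paper indicates: the paper does not give a detailed proof of this fact but states it as a simple consequence of the Courant--Fischer--Weyl principle and the Perron--Frobenius theorem, which are exactly the two ingredients you invoke (Perron--Frobenius to pin down $\lambda_1=1$ with eigenvector $\one$, then Courant--Fischer on the $\mu$-orthogonal complement of $\one$).
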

We will also make use of the following simple result,
\begin{fact}\label{fac:switcheroo}
    Let a matrix $\Aye \in \RR^{U \times V}$ and measures $\mu_U: U \to \RRpp$ and $\mu_V: V \to \RRpp$ be given, such that $\mu_U \Aye = \mu_V$. Assume without loss of generality that $|U| \le |V|$, then
    \[ \lambda_j(\Aye \Aye^*) = \lambda_j(\Aye^* \Aye)~~\textrm{for all}~~j=1, \ldots, |U|,\]
    where $\Aye^*$ is the \ref{eq:adj-defn} of $\Aye$ with respect to the measures $\mu_U$ and $\mu_V$.
\end{fact}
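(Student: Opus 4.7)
The plan is to reduce \cref{fac:switcheroo} to the familiar statement that $\Aye\Aye^*$ and $\Aye^*\Aye$ share the same nonzero eigenvalues with matching multiplicities, and then use the cardinality assumption $|U|\le|V|$ to align the sorted eigenvalue sequences.

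First, I would record that $\Aye\Aye^*\in\RR^{U\times U}$ is self-adjoint with respect to $\mu_U$ and PSD: for any $\eff\in\RR^U$,
\[
\langle\eff,\Aye\Aye^*\eff\rangle_{\mu_U}=\langle\Aye^*\eff,\Aye^*\eff\rangle_{\mu_V}=\|\Aye^*\eff\|_{\mu_V}^2\ge 0,
\]
using the defining property \eqref{eq:adj-defn} of the adjoint. An identical computation with roles swapped shows $\Aye^*\Aye\in\RR^{V\times V}$ is self-adjoint with respect to $\mu_V$ and PSD. In particular, both operators have real nonnegative eigenvalues, so the sorted sequences $\lambda_1(\Aye\Aye^*)\ge\cdots\ge\lambda_{|U|}(\Aye\Aye^*)$ and $\lambda_1(\Aye^*\Aye)\ge\cdots\ge\lambda_{|V|}(\Aye^*\Aye)$ are well-defined.

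Next I would perform the standard eigenvector transfer. Fix $\lambda>0$ and suppose $\eff\in\RR^U\setminus\{0\}$ satisfies $\Aye\Aye^*\eff=\lambda\eff$. Set $\gee=\Aye^*\eff\in\RR^V$. Because $\Aye\gee=\Aye\Aye^*\eff=\lambda\eff\neq 0$, we have $\gee\neq 0$, and
\[
\Aye^*\Aye\,\gee=\Aye^*\Aye\Aye^*\eff=\Aye^*(\lambda\eff)=\lambda\gee,
\]
so $\gee$ is a $\lambda$-eigenvector of $\Aye^*\Aye$. The map $\eff\mapsto\Aye^*\eff$ is linear, and its restriction to the $\lambda$-eigenspace of $\Aye\Aye^*$ is injective (from $\Aye^*\eff=0$ and $\lambda\neq 0$ we would get $\lambda\eff=\Aye\Aye^*\eff=0$, hence $\eff=0$). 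Symmetrically, $\gee\mapsto\Aye\gee$ embeds the $\lambda$-eigenspace of $\Aye^*\Aye$ into that of $\Aye\Aye^*$. Hence both $\lambda$-eigenspaces have the same (finite) dimension, i.e.\ matching multiplicities for every $\lambda>0$.

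Finally I would align the sorted sequences. Since $\Aye\Aye^*$ is PSD and self-adjoint, it has exactly $|U|$ real eigenvalues (with multiplicity); the previous paragraph shows every positive eigenvalue of $\Aye\Aye^*$ appears in $\Aye^*\Aye$ with the same multiplicity, and the remaining eigenvalues in either spectrum are $0$. Because $|U|\le|V|$, the spectrum of $\Aye^*\Aye$ is obtained from that of $\Aye\Aye^*$ by appending $|V|-|U|$ extra zeros, which sit at the tail of the decreasing order. Consequently $\lambda_j(\Aye\Aye^*)=\lambda_j(\Aye^*\Aye)$ for every $j=1,\dots,|U|$, as claimed. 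The main (and only) subtle point is the multiplicity-preserving injection above; everything else is bookkeeping once the adjoint is interpreted with respect to the weighted inner products $\mu_U,\mu_V$, and the assumption $\mu_U\Aye=\mu_V$ plays no role beyond guaranteeing that the adjoint $\Aye^*$ appearing in the statement is the one defined by \eqref{eq:adj-defn}.
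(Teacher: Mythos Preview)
Your argument is correct: the eigenvector transfer $\eff\mapsto\Aye^*\eff$ (and its inverse $\gee\mapsto\Aye\gee$) gives a multiplicity-preserving bijection between the positive eigenspaces, and PSD-ness plus $|U|\le|V|$ handles the alignment. The paper states \cref{fac:switcheroo} as a standard fact without proof, so there is nothing to compare against; your write-up is exactly the classical justification one would supply.
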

\subsection{Probability Distributions}
We will denote the \ref{eq:simplex-defn} with vertices $\Omega$ by $\triangle_\Omega$, i.e.~
\begin{equation}\label{eq:simplex-defn}\tag{probability simplex} 
\triangle_{\Omega} 
= \set*{ \left. \mu: \Omega \to \RRp~\right|~\sum_{\omega \in \Omega} \mu(\omega) = 1}
\end{equation}
 Throughout the paper, we will assume
$\Omega$ (or $X^{(n)}$) to  be a set of $n$-tuples for some $n \ge 1$. Given a set $S \subset [n]$,  the \ref{eq:proj-defn} of $\Omega$ on $S$ is denoted by $\Omega[S]$ , i.e.~
\begin{equation}\label{eq:proj-defn}\tag{projection}
\Omega[S] = \set*{ (\omega_s)_{s \in S} : (\omega_1, \ldots, \omega_n) \in \Omega}.
\end{equation}

For $\omega_S \in \Omega_S$, the notations $\Omega_{\omega_S}$ and
$\mu^{(\omega_S)}$ will be used for the
\ref{eq:p-def} of $\Omega$ and $\mu_S$ respectively, where
\begin{equation}\Omega_{\omega_S} = \set*{\bar\omega \in \Omega[S^c] : \omega_S \oplus
\bar\omega \in \Omega}~~\textrm{and}~~\mu^{(\omega_S)}(\bar\omega) =
\frac{\mu(\omega_S \oplus \bar\omega)}{\sum_{\tilde \omega \in \Omega[S^c]}
\mu(\omega_S \oplus \tilde\omega)},\label{eq:p-def}\tag{$\omega_S$-pinning}
\end{equation}

We recall that the \ref{eq:tv-def} $\norm{\mu - \nu}_{\TV}$ between two distributions $\mu, \nu \in \triangle_{\Omega}$ is defined as follows,
\begin{equation}\label{eq:tv-def}\tag{total variation distance}
\norm*{\mu - \nu}_{\TV} = \frac{1}{2} \cdot \sum_{\omega \in \Omega} \Abs*{ \mu(\omega) - \nu(\omega)}
\end{equation}

Finally, we talk about some conventions that we will use throughout the paper: (i) We will be using the notation $\uni_{A} \in \triangle_{A}$ to denote the uniform distribution over various finite sets $A$. (ii) When we want to emphasize that the a distribution $\mu \in \triangle_{\Omega}$ has full support, we will simply write $\mu: \Omega \to \RRpp$ and emphasize in words that $\mu$ is a distribution as opposed to writing $\mu \in \triangle_{\Omega}.$ 

Finally we recall that the product distribution $\mu \tensor \nu \in \triangle_{\Omega \times \Omega'}$, given $\mu \in \triangle_{\Omega}$ and $\nu \in \triangle_{\Omega'}$ is defined by setting
\[ (\mu \tensor \nu)(\omega, \omega') = \mu(\omega) \cdot \nu(\omega')~~\textrm{for all}~~\omega \in \Omega, \omega' \in \Omega.\]

\subsection{Functional Inequalities, Isoperimetric Constants, and Mixing Times}\label{ss:fuuuun}
Given a distribution $\mu \in \triangle_{\Omega}$ and a convex function $\Phi: \RRp \to \RRp$ \ref{eq:pent-defn} functional $\Ent^\Phi_\mu(\bullet)$ is defined by the equation,
\begin{equation}\label{eq:pent-defn}\tag{$\Phi$-entropy}
    \Ent^\Phi_\mu(\eff) = \Exp_{\omega \sim \mu}\Phi\parens*{\eff(\omega)} - \Phi\parens*{ \Exp_{\omega \sim \pi_n} \eff(\omega) }
\end{equation}
for all $\eff \in \RRp^{\Omega}.$

We also recall that for the special choices of $\Phi(t) = t \log t$ and $\Phi(t) = t^2$, the \ref{eq:pent-defn} equals the \ref{eq:v-def} functional $\Var_\mu(\bullet)$ and \ref{eq:ent-def} functional $\Ent_\mu(\bullet)$ respectively. 
\begin{align}
    \Ent_\mu(\eff) &~=~\Exp_{\omega \sim \mu}\sqbr*{ \eff(\omega) \log \eff(\omega)} - \parens*{ \Exp_{\omega \sim \mu} \eff(\omega)} \log\parens*{ \Exp_{\omega \sim \mu}\eff(\omega)},\label{eq:ent-def}\tag{entropy}\\
    \Var_\mu(\eff) &~=~\Exp_{\omega \sim \mu} \eff(\omega)^2 - \parens*{ \Exp_{\omega \sim \mu} \eff(\omega)}^2. \label{eq:v-def}\tag{variance}
\end{align}
\begin{remark}\label{rem:aoe}We notice that when $\eff = c \cdot \one$ for some constant $c \in \RRp$, we have $\Ent_\mu^{\Phi}(\eff) = 0$.
\end{remark}
Let $\Pii \in \RR^{\Omega \times \Omega}$ be a reversible Markov chain, with
stationary measure of $\pi$. A \ref{eq:pcare-def} for $\Pii$ is
an inequality of the form,
\begin{equation}\label{eq:pcare-def}
	C \cdot \Var_\pi(\eff) \le \inpr*{ \eff, (\Ide - \Pii)
	\eff}_\pi~~\textrm{for all}~~\eff \in \RR^{\Omega}.\tag{Poincar\'e
inequality}
\end{equation}
The largest constant $C >0$ for which this inequality holds, is called the
Poincar\'e constant or the \ref{eq:gap-def} of $\Pii$ and
is denoted by $\Gap(\Pii)$. This nomenclature is due to the following
well-known consequence of the Courant-Fischer-Weyl Principle,
\begin{equation}\label{eq:gap-def}\tag{spectral gap}
	\Gap(\Pii) = \min\set*{ \left. \frac{ \inpr*{\eff, (\Ide - \Pii) \eff}_\pi
	}{\Var_\pi(\eff) } ~\right|~ \Var_\pi(\eff) \ne 0} = 1 -
	\lambda_2(\Pii).
\end{equation}

The modified log-Sobolev (\ref{eq:mlsi-def}) and the log-Sobolev (\ref{eq:lsi-def}) inequalities
for a reversible random walk $\Pii \in \RR^{\Omega \times \Omega}$ with
stationary measure $\pi$ are defined to be,
\begin{align}
	C_0 \cdot \Ent_\pi(\eff) &~\le~\inpr*{ \eff, (\Ide - \Pii) \log \eff
	}_\pi&&\textrm{for all}~~\eff \in \RR_{\ge
0}^{\Omega}\tag{mLSI}\label{eq:mlsi-def},\\
	C_1 \cdot \Ent_\pi(\eff^2) &~\le~\inpr*{ \eff, (\Ide - \Pii) \eff}_\pi
							   &&\textrm{for all}~~\eff \in \RR_{\ge
							   0}^{\Omega}\tag{LSI}\label{eq:lsi-def}.
\end{align}
The largest constants $C_0, C_1 \ge 0$ for which \ref{eq:mlsi-def} and
\ref{eq:lsi-def} hold are called the modified log-Sobolev
constant and
the log-Sobolev constant of $\Pii$ respectively -- and they are denoted by
$\MLS(\Pii)$ and $\LS(\Pii)$. Formally,
\begin{align}
	\MLS(\Pii) &~=~\inf\set*{ \left. \frac{ \inpr*{ \eff, (\Ide - \Pii) \log \eff}_\pi
	}{\Ent_\pi(\eff)} ~\right|~ \Ent_\pi(\eff) \ne 0, \eff \in \RR_{\ge 0}^{\Omega}},\label{eq:mlsc-def}\\
	\LS(\Pii) &~=~\inf\set*{ \left.\frac{ \inpr*{ \eff, (\Ide - \Pii) \eff
			}_\pi } {\Ent_\pi(\eff^2)}
			~\right|~
	\Ent_\pi(\eff) \ne 0, \eff \in \RR_{\ge 0}^{\Omega}}.\label{eq:lsc-def}
\end{align}

\begin{fact}[\cite{DiaconisSC96}]\label{fac:evb}
    Let $\pi: \Omega \to \RR_{> 0}$ be a probability distribution and write $\Jay_\pi = \one \cdot \pi$, i.e.~$\Jay_\pi$ is the walk with stationary measure $\pi$ which mixes in a single step.

    Then,
    $\LS(\Jay_\pi) \ge \frac{1 - 2 \pi_{\star} }{\log(\pi_{
	\star}^{-1} - 1)}$ if $|\supp(\pi)| > 2$ else $\LS(\Jay_\pi) = 1$. More generally for any reversible Markov chain $\Emm \in \RR^{\Omega \times \Omega}$ and stationary distribution $\pi$, we have
    $\LS(\Emm) \ge \frac{1 - 2 \pi_{\star} }{\log(\pi_{
	\star}^{-1} - 1)} \cdot \Gap(\Emm)$ if $|\supp(\pi)| > 2$ else $\LS(\Jay_\pi) = \Gap(\Emm)$.
\end{fact}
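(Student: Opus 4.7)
The plan is to first reduce the claim about $\Jay_\pi$ to the classical discrete log-Sobolev inequality relating entropy to variance, and then promote the resulting bound to an arbitrary reversible chain $\Emm$ by a Dirichlet-form comparison against the spectral gap. Since $\Jay_\pi = \one \pi$ acts by $\Jay_\pi \eff = (\Exp_\pi \eff) \cdot \one$, a direct expansion yields
\[
\inpr*{\eff, (\Ide - \Jay_\pi)\eff}_\pi ~=~ \Exp_{\omega\sim\pi}[\eff(\omega)^2] - \parens*{\Exp_{\omega\sim\pi}\eff(\omega)}^2 ~=~ \Var_\pi(\eff),
\]
so that by definition $\LS(\Jay_\pi)$ is exactly the largest $C \ge 0$ for which the inequality $C \cdot \Ent_\pi(\eff^2) \le \Var_\pi(\eff)$ holds uniformly over $\eff \in \RR_{\ge 0}^\Omega$.

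Second, I would invoke the classical sharp estimate of \cite{DiaconisSC96} for this optimal constant: it equals $\frac{1 - 2\pi_\star}{\log(\pi_\star^{-1} - 1)}$ when $|\supp(\pi)| > 2$, and equals $1$ when $|\supp(\pi)| = 2$ (in the degenerate case $|\supp(\pi)| = 1$ the entropy vanishes identically and the statement is vacuous). A self-contained proof, which I would reproduce only if needed, proceeds by reducing the variational problem $\sup \Ent_\pi(\eff^2)/\Var_\pi(\eff)$ to two-valued nonnegative functions via a convexity/rearrangement argument on $\eff^2$, and then showing that among two-valued functions the worst case concentrates on the atom of minimum mass $\pi_\star$ and its complement. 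The inequality then collapses to a one-variable calculus problem whose minimizer produces the stated closed form $\frac{1 - 2p}{\log(p^{-1} - 1)}$ at $p = \pi_\star$.

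Third, for a general reversible chain $\Emm$ with stationary distribution $\pi$, the variational characterization \eqref{eq:gap-def} of the spectral gap immediately gives the Dirichlet-form comparison
\[
\Gap(\Emm) \cdot \Var_\pi(\eff) ~\le~ \inpr*{\eff, (\Ide - \Emm)\eff}_\pi.
\]
Chaining this with the LSI for $\Jay_\pi$ established above yields, for every $\eff \in \RR_{\ge 0}^\Omega$,
\[
\LS(\Jay_\pi) \cdot \Gap(\Emm) \cdot \Ent_\pi(\eff^2) ~\le~ \Gap(\Emm) \cdot \Var_\pi(\eff) ~\le~ \inpr*{\eff, (\Ide - \Emm)\eff}_\pi,
\]
which by the definition of the log-Sobolev constant is exactly the bound $\LS(\Emm) \ge \LS(\Jay_\pi) \cdot \Gap(\Emm)$, as required.

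The main obstacle is the second step, namely the sharp two-point log-Sobolev computation that yields the explicit constant $\frac{1 - 2\pi_\star}{\log(\pi_\star^{-1}-1)}$; the first and third steps are short bookkeeping exercises once the Dirichlet form of $\Jay_\pi$ has been identified with the variance and the spectral-gap variational principle is invoked.
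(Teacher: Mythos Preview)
The paper does not give its own proof of this statement: it is stated as a \emph{Fact} with a citation to \cite{DiaconisSC96} and used as a black box. Your proposal is therefore not competing against any argument in the paper. The outline you give is the standard one and is correct: identifying the Dirichlet form of $\Jay_\pi$ with the variance, invoking the sharp two-point log-Sobolev computation of Diaconis--Saloff-Coste for the constant $\frac{1-2\pi_\star}{\log(\pi_\star^{-1}-1)}$, and then lifting to a general reversible $\Emm$ via the Poincar\'e inequality $\Gap(\Emm)\cdot\Var_\pi(\eff)\le\inpr*{\eff,(\Ide-\Emm)\eff}_\pi$. This is exactly the route taken in the cited reference, so if anything you are supplying the proof the paper chose to omit.
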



For a convex function $\Phi: \RRp \to \RRp$, we also define the \ref{eq:pent-con} constant $\CF_\Phi(\Pii)$ of a Markov chain $\Pii \in \RR^{\Omega_1 \times \Omega_2}$ satisfying $\pi_1 \Pii = \pi_2$ for some choice of measures $\pi_1 \in \triangle_{\Omega_1}, \pi_2 \in \triangle_{\Omega_2}$, as the solution to the following variational problem,
\begin{equation}\label{eq:pent-con}\tag{$\Phi$-entropy contraction}
    \CF_\Phi(\Pii) = 1 - \sup\set*{\left. \frac{\Ent^\Phi_{\pi_1}(\Pii \eff)}{\Ent^\Phi_{\pi_2}(\eff)}~\right|
~\eff \in \RRp^{\Omega}, \Ent^\Phi_{\pi_2}(\eff) \ne 0}.
\end{equation}
We note that $\CF_\Phi(\Pii)$ cruicially depends on the choice of distributions $\pi_1, \pi_2$. Since for our purposes the choice of measures $\pi_1$ and $\pi_2$ will always be clear, we will supress this dependency. 

It is equivalent to define $\CF_\Phi(\Pii)$ as the largest constant $C \in \RRp$ such that the inequality,
\[ \Ent^\Phi_\pi(\Pii \eff) \le (1 - C) \cdot \Ent^\Phi_\pi(\eff),\]
is valid for each $\eff \in \RRp$. When $\Phi(t) = t \log t$, we will simply write $\EC(\Pii)$ in place of $\CF_\Phi(\Pii)$. Similarly, for the choice of $\Phi(t) = t^2$, it is easy to observe that $\CF_\Phi(\Pii) = \Gap(\Pii^*\Pii)$.

\begin{remark}\label{rem:divent}
    \ref{eq:pent-con} and the quantity $\CF_\Phi(\Pii)$ is closely related to the concept of $f$-divergence contraction, studied in the context of higher-order random walks in the works of \cite{AnariJKP22, AnariJKP22a, Lee23}. We note that for the choices of $\Phi(t) = f(t) = t^2$ and $\Phi(t) = f(t) = t \cdot \log t$, $f$-divergence contraction for the walk $\Pii$ is equivalent to \ref{eq:pent-con} $\Pii^*$.  This equivalence holds for the more general class of homogenous entropies associated with $\Phi(t) = f(t)$ \cite{Raginsky16}, i.e.~convex functions $\Phi: \RRp \to \RRp$, for which there is a function $\kappa: \RRp \to \RRp$ that satisfies,
    \[ \Ent^\Phi_{\pi_1}(c \cdot \eff) = \kappa(c) \cdot \Ent^\Phi_{\pi_2}(\eff),\]
    for each $\eff \in \RRp^{\Omega}$. For more information on \ref{eq:pent-con} and $f$-divergence contraction we refer the reader to \cite{Chafai04, Raginsky16, BoucheronLM13}, and the references therein.
\end{remark}

We will also need the following consequence of Jensen's inequality.

\begin{fact}[Data Processing Inequality]\label{fac:dpi} Let $\Pii \in \RR^{\Omega_1 \times \Omega_2}$ be a row-stochastic matrix, satisfying $\pi_1 \Pii = \pi_2$ for probability distributions $\pi_1: \Omega_1 \to \RRpp$ and $\pi_2: \Omega_2 \to \RRpp$. Then, for any convex function $\Phi: \RRp \to \RRp$, we have2
\[ \Ent_{\pi_1}^{\Phi}\parens*{ \Pii \eff} \le \Ent^\Phi_{\pi_2}\parens*{ \eff }~~\textrm{for all}~~\eff \in \RRp^{\Omega_2}.\]
In particular, for any $\Quu \in \RR^{\Omega_2 \times \Omega_3}$ with a measure $\pi_3: \Omega_3 \to \RRpp$ such that $\pi_2\Quu = \pi_3$, we have $\CF_\Phi(\Pii\Quu) \ge \CF_\Phi(\Pii)$ and $\CF_\Phi(\Pii\Quu) \ge \CF_\Phi(\Quu).$
\end{fact}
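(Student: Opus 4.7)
The plan is to reduce both claims to a one-line application of Jensen's inequality together with the stationarity identity $\pi_1\Pii = \pi_2$. The key observation is that because $\Pii$ is row-stochastic, for every $\omega_1 \in \Omega_1$ the quantity $(\Pii\eff)(\omega_1) = \sum_{\omega_2} \Pii(\omega_1, \omega_2)\,\eff(\omega_2)$ is an expectation of $\eff$ against the probability distribution $\Pii(\omega_1, \cdot)$. Since $\Phi$ is convex on $\RRp$, Jensen yields the pointwise bound
\[ \Phi\parens*{(\Pii\eff)(\omega_1)} \le \sum_{\omega_2 \in \Omega_2} \Pii(\omega_1, \omega_2) \cdot \Phi(\eff(\omega_2)), \]
which is a valid application precisely because $\eff \in \RRp^{\Omega_2}$ so $\Phi(\eff(\omega_2))$ is defined at every point. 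Averaging over $\omega_1 \sim \pi_1$ and invoking $\pi_1\Pii = \pi_2$ then gives
\[ \Exp_{\omega_1 \sim \pi_1}\Phi\parens*{(\Pii\eff)(\omega_1)} \le \Exp_{\omega_2 \sim \pi_2}\Phi(\eff(\omega_2)). \]

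To conclude the first part, I would match the normalization terms inside the $\Phi$-entropies. The same identity $\pi_1\Pii = \pi_2$ gives $\Exp_{\omega_1 \sim \pi_1}(\Pii\eff)(\omega_1) = \Exp_{\omega_2 \sim \pi_2}\eff(\omega_2)$, so $\Phi$ applied to the two expectations also agrees. Subtracting this common quantity from both sides of the Jensen estimate above produces exactly
\[ \Ent^\Phi_{\pi_1}\parens*{\Pii\eff} = \Exp_{\pi_1}\Phi(\Pii\eff) - \Phi\parens*{\Exp_{\pi_1}\Pii\eff} \le \Exp_{\pi_2}\Phi(\eff) - \Phi\parens*{\Exp_{\pi_2}\eff} = \Ent^\Phi_{\pi_2}\parens*{\eff}, \]
which is the claimed data processing inequality.

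For the two contraction bounds on $\Pii\Quu$, I would compose the contraction factor of one matrix with the data processing inequality just established for the other, in both orders. Concretely, chaining contraction of $\Pii$ with data processing for $\Quu$ gives
\[ \Ent^\Phi_{\pi_1}(\Pii\Quu\eff) \le (1 - \CF_\Phi(\Pii)) \cdot \Ent^\Phi_{\pi_2}(\Quu\eff) \le (1 - \CF_\Phi(\Pii)) \cdot \Ent^\Phi_{\pi_3}(\eff), \]
and taking the supremum over $\eff$ with $\Ent^\Phi_{\pi_3}(\eff) \ne 0$ yields $\CF_\Phi(\Pii\Quu) \ge \CF_\Phi(\Pii)$. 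Swapping the order (data processing for $\Pii$ first, then contraction of $\Quu$) gives
\[ \Ent^\Phi_{\pi_1}(\Pii\Quu\eff) \le \Ent^\Phi_{\pi_2}(\Quu\eff) \le (1 - \CF_\Phi(\Quu)) \cdot \Ent^\Phi_{\pi_3}(\eff), \]
and hence $\CF_\Phi(\Pii\Quu) \ge \CF_\Phi(\Quu)$. There is no real obstacle here: the whole argument is Jensen plus bookkeeping on which measure sits in front of which expectation, and the only subtlety worth flagging is the restriction $\eff \in \RRp^{\Omega_2}$, which is what keeps $\Phi(\eff(\omega_2))$ well-defined pointwise throughout.
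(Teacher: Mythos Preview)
Your argument is correct and matches the paper's proof essentially line for line: pointwise Jensen using row-stochasticity, averaging against $\pi_1$ and invoking $\pi_1\Pii=\pi_2$ to turn the double sum into $\Exp_{\pi_2}\Phi(\eff)$, then subtracting the common term $\Phi(\Exp_{\pi_2}\eff)=\Phi(\Exp_{\pi_1}\Pii\eff)$. The contraction corollaries are also handled identically, by composing the data processing inequality with the defining contraction bound in each of the two orders.
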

We provide a proof for \cref{fac:dpi} in \cref{ap:dpi}.

We also recall the following relations between the constants $\MLS(\bullet), \LS(\bullet)$, and $\EC(\bullet)$.  
\begin{lemma}[Lemma 16, \cite{AnariJKP22}]\label{lem:mlsfun}
Let $\Pii \in \RR^{\Omega \times \Omega}$ be a reversible row-stochastic matrix with stationary distribution $\pi$. Then,
\[ \MLS(\Pii) \ge \EC(\Pii)\]
\end{lemma}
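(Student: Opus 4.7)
The plan is to reduce the modified log-Sobolev inequality to a single-step entropy dissipation inequality that, combined with the entropy contraction hypothesis, immediately yields the claim. Concretely, if I can show that for every $\eff \in \RRp^{\Omega}$
\[ \Ent_\pi(\eff) - \Ent_\pi(\Pii \eff) \;\le\; \inpr*{\eff, (\Ide - \Pii) \log \eff}_\pi, \]
then the definition of $\EC(\Pii)$ gives $\EC(\Pii) \cdot \Ent_\pi(\eff) \le \Ent_\pi(\eff) - \Ent_\pi(\Pii\eff)$, and chaining these two inequalities produces exactly the \ref{eq:mlsi-def} with constant $\EC(\Pii)$. Since $\MLS(\Pii)$ is by definition the largest constant for which \ref{eq:mlsi-def} holds, this proves $\MLS(\Pii) \ge \EC(\Pii)$.

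\textbf{Establishing the dissipation inequality.} First I would use stationarity $\pi \Pii = \pi$ to observe that $\Exp_\pi[\Pii \eff] = \Exp_\pi \eff$, so the ``mean'' terms in both entropies coincide and cancel, leaving
\[ \Ent_\pi(\eff) - \Ent_\pi(\Pii \eff) \;=\; \Exp_\pi[\eff \log \eff] - \Exp_\pi[(\Pii \eff) \log (\Pii \eff)]. \]
Next, I would invoke reversibility to swap $\Pii$ across the inner product:
\[ \inpr*{\eff, \Pii \log \eff}_\pi \;=\; \inpr*{\Pii \eff, \log \eff}_\pi \;=\; \Exp_\pi[(\Pii \eff) \log \eff], \]
so the target inequality reduces to the clean pointwise statement
\[ \Exp_\pi\bigl[(\Pii \eff) \log (\Pii \eff) - (\Pii \eff) \log \eff\bigr] \;\ge\; 0. \]
For this I would apply the elementary inequality $a(\log a - \log b) \ge a - b$, valid for $a, b > 0$ (a direct consequence of the concavity of $\log$, equivalently of Gibbs' inequality), with $a = (\Pii \eff)(x)$ and $b = \eff(x)$. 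Taking the $\pi$-expectation and noting that $\Exp_\pi[(\Pii \eff) - \eff] = 0$ (again by stationarity) closes the estimate. A small subtlety to handle is the set where $\eff$ vanishes; the standard convention $0 \log 0 = 0$ and truncation by $\eff + \delta$ followed by $\delta \downarrow 0$ takes care of this with no real work.

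\textbf{Expected main obstacle.} There is no deep obstacle here; the argument is a few lines once one sees the right manipulations. The only care needed is to make sure the two places where $\pi \Pii = \pi$ gets used (killing the $-\Phi(\Exp_\pi \eff)$ terms, and then killing $\Exp_\pi[\Pii\eff - \eff]$ after the pointwise bound) and the single use of reversibility (to move $\Pii$ from $\log \eff$ onto $\eff$) are deployed in the correct order, since otherwise one ends up with an inequality of the wrong sign. The resulting proof is therefore short: assemble the dissipation inequality via stationarity, reversibility, and Gibbs' inequality; then apply the definition of $\EC(\Pii)$ to conclude $\MLS(\Pii) \ge \EC(\Pii)$.
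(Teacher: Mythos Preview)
Your proposal is correct and is the standard argument for this implication. Note, however, that the paper does not actually supply a proof of \cref{lem:mlsfun}: it is merely cited from \cite{AnariJKP22} (Lemma~16) and used as a black box, so there is no in-paper proof to compare against. Your derivation --- reduce to the one-step dissipation bound $\Ent_\pi(\eff)-\Ent_\pi(\Pii\eff)\le \inpr*{\eff,(\Ide-\Pii)\log\eff}_\pi$ via self-adjointness of $\Pii$ and the pointwise Gibbs inequality $a(\log a-\log b)\ge a-b$, then combine with the defining inequality of $\EC(\Pii)$ --- is exactly the standard route and would serve perfectly well as the omitted proof.
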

\begin{remark}
    \cref{lem:mlsfun} can be generalized to show that contraction results for the \ref{eq:pent-defn} can be utilized to obtain functional inequalities called $\Phi$-Sobolev inequalities. We refer to \cite[Section 14]{BoucheronLM13} and \cite[Section 4]{Raginsky16} for more information on this topic.
\end{remark}
 \begin{lemma}[Proposition 6, \cite{Miclo97}]\label{lem:lsfun}
     Let $\Pii \in \RR^{\Omega_1 \times \Omega_2}$ satisfying $\mu_1 \Pii = \mu_2$, for distributions $\mu_1: \Omega_1 \to \RRpp$ and $\mu_2: \Omega \to \RRpp$. We have,
     \[ \EC(\Pii) \ge \LS(\Pii^* \Pii).\]
 \end{lemma}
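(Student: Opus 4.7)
The plan is to deduce the entropy contraction bound from the log-Sobolev inequality of $\Pii^* \Pii$ by decomposing both quantities over the fibers of $\Pii$ and then reducing to an elementary pointwise comparison between entropy and variance. First observe that $\Pii^* \Pii \in \RR^{\Omega_2 \times \Omega_2}$ is self-adjoint with respect to $\mu_2$ and that $\mu_2 \Pii^* \Pii = \mu_2$, since $\mu_2 \Pii^* = \mu_1$ and $\mu_1 \Pii = \mu_2$ by \cref{fac:reversal}; hence $\LS(\Pii^* \Pii)$ is a well-defined quantity. The target inequality $\Ent_{\mu_1}(\Pii \eff) \le (1 - \LS(\Pii^* \Pii)) \cdot \Ent_{\mu_2}(\eff)$ is equivalent to
\[
\LS(\Pii^* \Pii) \cdot \Ent_{\mu_2}(\eff) ~\le~ \Ent_{\mu_2}(\eff) - \Ent_{\mu_1}(\Pii \eff),
\]
and the first step is to upper bound the left-hand side by applying \ref{eq:lsi-def} for $\Pii^* \Pii$ with the test function $\sqrt{\eff}$, obtaining
\[
\LS(\Pii^* \Pii) \cdot \Ent_{\mu_2}(\eff) ~\le~ \inpr*{\sqrt{\eff}, (\Ide - \Pii^* \Pii)\sqrt{\eff}}_{\mu_2} = \norm*{\sqrt{\eff}}^2_{\mu_2} - \norm*{\Pii \sqrt{\eff}}^2_{\mu_1}.
\]

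The second step is to express both sides as expectations over $y \sim \mu_1$ of a local quantity attached to the conditional distribution $\Pii(y, \cdot)$. The chain rule for entropy --- a direct manipulation using $\Exp_{\mu_1}[\Pii \eff] = \Exp_{\mu_2}[\eff]$ and $\mu_1 \Pii = \mu_2$ --- yields
\[
\Ent_{\mu_2}(\eff) - \Ent_{\mu_1}(\Pii \eff) = \Exp_{y \sim \mu_1} \sqbr*{\Ent_{\Pii(y, \cdot)}(\eff)},
\]
while the law of total variance, combined with the same identity of expectations, gives
\[
\norm*{\sqrt{\eff}}^2_{\mu_2} - \norm*{\Pii \sqrt{\eff}}^2_{\mu_1} = \Exp_{y \sim \mu_1} \sqbr*{\Var_{\Pii(y, \cdot)}(\sqrt{\eff})}.
\]
At this point, it suffices to establish the pointwise comparison $\Ent_\nu(\eff) \ge \Var_\nu(\sqrt{\eff})$ for every probability measure $\nu$ on a finite set and every $\eff \in \RRp^{\Omega}$, apply it inside the expectation over $y$, and chain the resulting estimates together.

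The pointwise comparison $\Ent_\nu(\eff) \ge \Var_\nu(\sqrt{\eff})$ is where the specific functional forms of entropy and variance interact, and I expect it to be the main obstacle. It should reduce to the following elementary calculation: the tangent-line inequality $u \log u \ge u - 1$ applied with $u = \sqrt{a/b}$ rearranges to $a \log(a/b) - a + b \ge (\sqrt{a} - \sqrt{b})^2$ for all $a, b > 0$. Specializing to $a = \eff(\omega)$, $b = \Exp_\nu[\eff]$, and integrating against $\nu$ produces
\[
\Ent_\nu(\eff) \ge 2\Exp_\nu[\eff] - 2\sqrt{\Exp_\nu[\eff]} \cdot \Exp_\nu\sqbr*{\sqrt{\eff}},
\]
and one final Cauchy--Schwarz step $(\sqrt{\Exp_\nu[\eff]} - \Exp_\nu[\sqrt{\eff}])^2 \ge 0$ bounds the right-hand side below by $\Var_\nu(\sqrt{\eff}) = \Exp_\nu[\eff] - \Exp_\nu[\sqrt{\eff}]^2$, which closes the argument. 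Morally, this step packages the Hellinger--KL inequality that sits at the heart of Miclo's original proof.
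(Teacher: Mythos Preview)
Your proof is correct and is essentially the same as the paper's: both reduce to the fiberwise inequality $\Ent_\nu(\eff) \ge \Var_\nu(\sqrt{\eff})$, derived from the same elementary bound $a\log(a/b) - a + b \ge (\sqrt a - \sqrt b)^2$ followed by the square $(\sqrt{\Exp_\nu \eff} - \Exp_\nu \sqrt{\eff})^2 \ge 0$, and then combine it with the log-Sobolev inequality for $\Pii^*\Pii$. The only difference is organizational --- you invoke the chain rule and law of total variance by name and apply the LSI first, whereas the paper packages the fiberwise inequality as a pointwise claim about $\Pii$ and applies the LSI last.
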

 
 \begin{remark}
     The statements of \cref{lem:mlsfun} and \cref{lem:lsfun} have minor cosmetic changes from the statements of \cite[Lemma 16]{AnariJKP22} and \cite[Lemma 6]{Miclo97} respectively.

     The statement in \cite[Lemma 16]{AnariJKP22} considers the contraction of divergences, rather than contraction of entropies. These are equivalent by \cref{rem:divent}.
     Additionally, their definition of the modified log-Sobolev inequality (\ref{eq:mlsi-def}) differs from ours by a factor of 2.

     The statement in \cite[Lemma 6]{Miclo97} assumes square operators. However, as observed in \cite[Lemma 5.11]{ChenLV21} the proof can be extended to rectangular operators also. For completeness, we provide a proof in \cref{ap:lsfun} -- we use their proof as a template and make the minimal syntactic changes necessary.
 \end{remark}

The \ref{eq:epsmix} $\Tmix(\Pii, \ee)$ of the random walk is the least time point $t \in \NN$, such that the distribution $\mu^{(t)} = \mu^{(0)} \Pii^t$ of the random walk $\Pii$ is guaranteed to be $\ee$-close to the stationary distribution $\pi$ in the \ref{eq:tv-def} regardless of the initial distribution $\mu^{(0)}$. In particular,
\begin{equation}\label{eq:epsmix}\tag{$\ee$-mixing time}
    \Tmix(\Pii, \ee) = \min\set*{\left. t \in \NN~\right|~ \norm*{\mu^{(t)} - \pi}_{\TV} \le \ee~~\textrm{for all}~~\mu^{(0)} \in \triangle_{\Omega}}
\end{equation}
It is well known that the functional inequalities and the corresponding isoperimetric constants introduced previously can be used to bound mixing times. We recall two of these results, 
\begin{theorem}[\cite{BlancaCPSV21}]\label{thm:entmix}
    There exists a universal constant $C$ such that, for any reversible random walk $\Pii \in \RR^{\Omega \times \Omega}$ with stationary distribution $\pi: \Omega \to \RRpp$, i.e.~$\pi \Pii = \pi$. We have
    \[ \Tmix(\Pii, \ee) \le \frac{C}{\EC(\Pii)} \cdot  \parens*{{\log\log\frac{1}{\min_{\omega \in \Omega} \pi(\omega)} + \log \ee^{-1}}},\]
    where the constant $C$ does not depend on the pair $(\Pii, \pi)$ and $\EC(\Pii)$ denotes the entropy contraction constant of the random walk $\Pii$.
\end{theorem}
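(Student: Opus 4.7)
The strategy is the standard one: iterate the entropy contraction inequality to get an exponential decay of the KL divergence from stationarity, bound the initial KL divergence by a crude worst-case argument, and then convert to total variation via Pinsker's inequality.

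\textbf{Step 1: Set up the density chain.} Fix an arbitrary initial distribution $\mu^{(0)} \in \triangle_{\Omega}$ and let $\mu^{(t)} = \mu^{(0)} \Pii^t$. Write $\eff^{(t)} = \mu^{(t)}/\pi$ for the density with respect to $\pi$. Because $\Pii$ is $\pi$-reversible, a direct computation against test functions shows that $\eff^{(t+1)} = \Pii \eff^{(t)}$ (this is where reversibility enters; $\Pii^{*} = \Pii$ in $L^2(\pi)$). Moreover $\Exp_\pi \eff^{(t)} = 1$ and $\eff^{(t)} \ge 0$, so $\Ent_\pi(\eff^{(t)}) = \mathrm{KL}(\mu^{(t)} \| \pi)$.

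\textbf{Step 2: Iterate entropy contraction.} Unrolling the definition of $\EC(\Pii)$ across $t$ steps and using $1 - x \le e^{-x}$ gives
\[
\Ent_\pi(\eff^{(t)}) \;\le\; (1 - \EC(\Pii))^t \, \Ent_\pi(\eff^{(0)}) \;\le\; e^{-t \cdot \EC(\Pii)} \, \Ent_\pi(\eff^{(0)}).
\]

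\textbf{Step 3: Bound the starting entropy.} Writing $\pi_\star = \min_{\omega \in \Omega} \pi(\omega)$, the standard inequality $\mathrm{KL}(\mu \| \pi) \le \log(1/\pi_\star)$, valid for every $\mu \in \triangle_\Omega$ (attained when $\mu$ is a point mass at the lightest atom of $\pi$), gives $\Ent_\pi(\eff^{(0)}) \le \log(1/\pi_\star)$.

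\textbf{Step 4: Convert to total variation and solve for $t$.} By Pinsker's inequality, $\norm*{\mu^{(t)} - \pi}_{\TV}^2 \le \tfrac{1}{2} \, \Ent_\pi(\eff^{(t)})$, so combining the previous steps,
\[
\norm*{\mu^{(t)} - \pi}_{\TV}^2 \;\le\; \tfrac{1}{2} \, e^{-t \cdot \EC(\Pii)} \, \log\bigl( 1/\pi_\star \bigr).
\]
Requiring the right-hand side to be at most $\ee^2$ and taking logarithms yields
\[
t \;\ge\; \frac{1}{\EC(\Pii)} \cdot \Bigl( \log\log(1/\pi_\star) + 2 \log \ee^{-1} - \log 2 \Bigr),
\]
which is the claimed bound with any universal constant $C \ge 2$.

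\textbf{Where (little) difficulty lies.} Each individual ingredient --- Pinsker, the crude KL bound by $\log(1/\pi_\star)$, and the $\eff^{(t+1)} = \Pii \eff^{(t)}$ identity --- is textbook, so there is no single hard step. The only point requiring care is the use of reversibility in Step 1 to identify the law of $\mu^{(t)}$ with the iterate of $\Pii$ acting on densities, which is what makes the entropy contraction hypothesis (phrased in terms of $\Pii \eff$) applicable to the dynamics. The doubly-logarithmic dependence on $\pi_\star$ --- as opposed to the logarithmic dependence produced by a $\chi^2$ (spectral gap) argument --- is purchased precisely by having KL, rather than $\chi^2$, appear on both sides of the contraction inequality, so that one takes a logarithm only once when inverting the Pinsker step.
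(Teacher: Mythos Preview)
Your proof is correct and is the standard argument. The paper does not actually prove this theorem: it is stated as a known result attributed to \cite{BlancaCPSV21} and used as a black box, so there is no ``paper's own proof'' to compare against. Your derivation --- reversibility to push $\Pii$ onto densities, iterate the contraction, bound the initial KL by $\log(1/\pi_\star)$, finish with Pinsker --- is exactly the textbook route and would serve as a self-contained justification.
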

The following result is also well-known, see e.g.~\cite{MontenegroT05}
\begin{theorem}
    Let $\Pii \in \RR^{\Omega \times \Omega}$ be a reversible random walk matrix with stationary distribution $\pi: \Omega \to \RRpp$, i.e.~$\pi\Pii = \pi$. We have,
    \[ \Tmix(\Pii, \ee) \le \frac{1}{1 - \lambda(\Pii)} \cdot \log\frac{1}{\ee \cdot \sqrt{\min_{\omega \in \Omega} \pi(\omega)}},\]
    where $\lambda(\Pii)$ is the \ref{eq:two-sided} of $\Pii$. If $\Pii$ is PSD, then $1 - \lambda(\Pii) = \Gap(\Pii)$.  
\end{theorem}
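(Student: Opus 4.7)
The plan is to pass from total variation distance to a weighted $L^2$ distance (equivalently, $\chi^2$ divergence), exploit reversibility to get geometric contraction at rate $\lambda(\Pii)$, and then bound the initial $L^2$ mass via the worst-case stationary probability $\pi_\star := \min_\omega \pi(\omega)$.

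First, fix an arbitrary initial distribution $\mu^{(0)} \in \triangle_\Omega$ and let $\mu^{(t)} = \mu^{(0)} \Pii^t$. Consider the density $\eff^{(t)}(\omega) = \mu^{(t)}(\omega)/\pi(\omega)$. By Cauchy--Schwarz applied in the inner product $\inpr{\cdot,\cdot}_\pi$,
\[ \norm*{\mu^{(t)} - \pi}_{\TV} = \tfrac{1}{2} \sum_\omega \pi(\omega) \abs*{\eff^{(t)}(\omega) - 1} \le \tfrac{1}{2} \norm*{\eff^{(t)} - \one}_\pi, \]
so it suffices to control the $\pi$-weighted $L^2$ norm of $\eff^{(t)} - \one$. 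A short calculation using reversibility and \cref{prop:adjoint-defn} shows that the density evolves as $\eff^{(t)} = \Pii \eff^{(t-1)}$, and moreover $\eff^{(t)} - \one$ is orthogonal to the constant function $\one$ with respect to $\pi$, since $\inpr{\eff^{(t)} - \one, \one}_\pi = \sum_\omega (\mu^{(t)}(\omega) - \pi(\omega)) = 0$.

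Next, I invoke the variational characterization of $\lambda(\Pii)$ from item (3) of \cref{fac:cf-baby}: for every $g \in \RR^\Omega$ with $\inpr{g, \one}_\pi = 0$, we have $\norm{\Pii g}_\pi \le \lambda(\Pii) \cdot \norm{g}_\pi$. Applying this to $g = \eff^{(t-1)} - \one$ and iterating yields the geometric contraction
\[ \norm*{\eff^{(t)} - \one}_\pi \le \lambda(\Pii)^t \cdot \norm*{\eff^{(0)} - \one}_\pi. \]
For the initial bound, I would estimate
\[ \norm*{\eff^{(0)}}_\pi^2 = \sum_\omega \frac{\mu^{(0)}(\omega)^2}{\pi(\omega)} \le \frac{1}{\pi_\star} \sum_\omega \mu^{(0)}(\omega) = \frac{1}{\pi_\star}, \]
and since $\one$ is the $\pi$-orthogonal projection of $\eff^{(0)}$ onto the constants, Pythagoras gives $\norm{\eff^{(0)} - \one}_\pi^2 \le 1/\pi_\star - 1 \le 1/\pi_\star$.

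Combining the above, $\norm{\mu^{(t)} - \pi}_{\TV} \le \tfrac{1}{2} \lambda(\Pii)^t / \sqrt{\pi_\star}$, so to make this at most $\ee$ it suffices that $\lambda(\Pii)^t \le \ee \sqrt{\pi_\star}$. Taking logarithms and using the elementary inequality $\log(1/\lambda(\Pii)) \ge 1 - \lambda(\Pii)$ (valid since $\lambda(\Pii) \in [0,1)$), this follows whenever $t \ge (1-\lambda(\Pii))^{-1} \log(1/(\ee \sqrt{\pi_\star}))$, which is exactly the claimed bound. The ``if $\Pii$ is PSD'' remark is immediate: PSD means $\lambda_{\min}(\Pii) \ge 0$, hence $\lambda(\Pii) = \lambda_2(\Pii)$ and $1 - \lambda(\Pii) = \Gap(\Pii)$ by the spectral identity \eqref{eq:gap-def}. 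There is no serious obstacle here; the only small subtlety is ensuring the orthogonality to $\one$ so that the two-sided bound $\lambda(\Pii)$ (rather than just $\lambda_2(\Pii)$) is the relevant quantity, which is automatic for the ``error'' vector $\eff^{(t)} - \one$.
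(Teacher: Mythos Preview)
Your proof is correct and is the standard $L^2$/$\chi^2$ argument for this classical bound. Note, however, that the paper does not actually prove this theorem: in both places it appears (the introduction and \cref{ss:fuuuun}) it is stated as a well-known result and attributed to \cite{MontenegroT05} without proof, so there is no in-paper argument to compare against. Your derivation is exactly the one underlying the cited reference.
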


\subsection{(Partite) Simplicial Complexes}\label{sec:pc}
A simplicial complex is a downward closed collection of subsets of a finite set
$U$. Formally, $X \subset 2^U$ and whenever $\beta \in X$ for all $\alpha \subset \beta$ we have
$\alpha \in X$. The rank of a face $\alpha$ is $|\alpha|$. Given some $j$, we
will adopt the notation $X^{(j)}$ to refer to the collection faces of $X$ of 
rank $j$ and the notation $X^{(\le j)}$ to refer to the collection of faces of $X$
of rank at most $j$, i.e.~
\[ X^{(j)} := X \cap \binom{U}{j}~~\textrm{and}~~X^{(\le j)} := \bigcup_{i = 0}^j
X^{(i)}.\]
We say $X$ is a simplicial complex of rank $n$ if the largest rank of
any face $\alpha \in X$ is $n$. We note
that by definition $X^{(0)} = \set{\varnothing}$.

We say that a simplicial complex $X$ of rank $n$ is pure, if any face $\alpha \in
X^{(j)}$ for any $j < n$ 
is contained in another face $\beta \in X^{(n)}$. Equivalently, in a pure simplicial complex the only inclusion maximal
faces are those of maximal rank. In this article, we will only deal with pure simplicial complexes.

A rank-$n$ pure simplicial complex $X$ is called $n$-partite if we
can partition $X^{(1)}$ into disjoint sets $X[1], \ldots, X[n]$ such that
\[ \textrm{for all}~\beta \in X^{(n)}~\textrm{and for all}~i=1, \ldots,
	n~~\textrm{we have}~~|\beta
\cap X[i]| = 1. \tag{$n$-partiteness}\]
We will call the sets $X[1], \ldots, X[n]$ the sides of the complex $X$. Equivalently, every element of a rank-$n$ face $\beta \in X[n]$
comes from a distinct side $X[i]$. We observe that a bipartite graph is a
2-partite simplicial complex. 
\begin{remark}
	Our notation differs slightly from the preceding work, the notation $X(j)$
	in the preceding work is often used to denote what we have called
	$X^{(j+1)}$, i.e.~$X(j)$ is the set of $j$-dimensional faces. We have avoided this notation to 
	(i) prevent potential confusion with $X[j]$ and (ii) to avoid refering to
	an $n$-partite complex as an $(n-1)$-dimensional $n$-partite complex, as
	was done in e.g.~\cite{Oppenheim18b, DiksteinD19}.
\end{remark}
To keep our nomenclature simple, we will simply
refer to a pure $n$-partite simplicial complex of rank $n$ as an  $n$-partite simplicial
complex, i.e.~we will not consider $n$-partite complexes which are not pure.

For a face $\alpha \in X$ we introduce the notation, $\typ(\alpha) = \set*{ i
\in [n] : \alpha \cap X[i] \ne \varnothing}$ for the type of the face $\alpha$, i.e.~the collection of sides of $X$ that $\alpha$ intersects.

For any $i \in [n]$ and $\beta \in X^{(n)}$ we will write $\beta_i \in X^{(1)}$ for
the unique element of $\beta$ satisfying $\set{\beta_i} = \beta \cap X[i]$. We
will refer to $\beta_i$ as the $i$-th coordinate of $\beta$. We
will also write $\beta_T = \set*{\beta_t : t \in T}$ for all $T \subset [n]$.
We extend this notation to arbitrary faces $\alpha \in X$ and $T
\subset \typ(\alpha)$. In keeping with the view that a face $\alpha \in X$ with
$\typ(\alpha) = \set*{t_1, \ldots, t_k}$ can be represented as a tuple
$(a_{t_1}, \cdots, a_{t_k})$, we will favour the notation $\alpha \oplus
\alpha'$ to denote the union of two faces $\alpha, \alpha' \in X$ with $\typ(\alpha)
\cap \typ(\alpha') = \varnothing$ over the usual notation $\alpha \cup \alpha'$.

We observe that for facets $\beta \in X^{(n)}$, i.e.~faces of maximal rank, we
have $\typ(\beta) = [n]$. Given, $\alpha \in X$ we recall that the link $X_\alpha$ is
defined as,
\[ X_\alpha = \set{ (\beta \setminus \alpha) \in X : \beta \in X, \beta \supset \alpha}.\]
The following observation is immediate,
\begin{fact}
    Let $X$ be an $n$-partite simplicial complex with
	sides $X[1], \ldots, X[n]$ and $\alpha \in X^{(j)}$ for some $j \in [0,n]$. Then, 
    the simplicial complex $X_\alpha$ is an $(n - j)$-partite
	simplicial complex with sides $X_\alpha[j] := X[j] \cap
	X_\alpha^{(1)}$ for 
	$j \in [n]\setminus \typ(\alpha)$.
\end{fact}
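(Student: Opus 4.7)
The plan is to verify the three properties in order: that $X_\alpha$ is pure of rank $n-j$, that the proposed sides partition $X_\alpha^{(1)}$, and that every facet of $X_\alpha$ meets each side in exactly one vertex. Essentially everything follows by unpacking the definitions of link, partiteness, and type, together with the fact that $X$ itself is pure.

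First I would check purity and rank. Given any $\gamma \in X_\alpha$, by definition $\gamma \oplus \alpha \in X$, so purity of $X$ gives some $\beta \in X^{(n)}$ with $\beta \supset \gamma \oplus \alpha$. Then $\beta \setminus \alpha \in X_\alpha$ contains $\gamma$ and has size $n - j$, so every face of $X_\alpha$ extends to a face of rank $n-j$; this simultaneously shows $X_\alpha$ is pure and that its rank is $n - j$ (it cannot be larger, since $\beta \setminus \alpha$ must avoid $\alpha$ while $\beta \in X^{(n)}$ has exactly $n - j$ vertices outside $\alpha$).

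Next I would verify the partition. The sides $X[i]$ are disjoint in $X$, so the $X_\alpha[i] := X[i] \cap X_\alpha^{(1)}$ with $i \in [n]\setminus\typ(\alpha)$ are automatically disjoint. For the covering direction, fix $v \in X_\alpha^{(1)}$; then $v \oplus \alpha \in X$ extends to some facet $\beta \in X^{(n)}$. Since $\beta$ is $n$-partite, there is a unique $i$ with $v \in X[i]$. Because $\beta$ meets $X[i]$ in exactly one vertex and $v \in \beta \setminus \alpha$, we must have $\alpha \cap X[i] = \varnothing$, i.e.\ $i \notin \typ(\alpha)$; hence $v \in X_\alpha[i]$ for this $i$.

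Finally I would check that every $\gamma \in X_\alpha^{(n-j)}$ meets each proposed side exactly once. Set $\beta = \gamma \oplus \alpha \in X^{(n)}$ and use the $n$-partiteness of $X$: $|\beta \cap X[i]| = 1$ for every $i \in [n]$. For $i \in [n] \setminus \typ(\alpha)$ we have $\alpha \cap X[i] = \varnothing$, so the unique element of $\beta \cap X[i]$ lies in $\gamma$, giving $|\gamma \cap X_\alpha[i]| = 1$. There is no real obstacle here; the only minor care required is to keep track of which vertex of $\beta$ in the side $X[i]$ belongs to $\alpha$ versus $\gamma$, which is exactly what the condition $i \notin \typ(\alpha)$ encodes.
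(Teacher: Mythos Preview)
Your proof is correct; the paper itself offers no proof, introducing the fact with ``The following observation is immediate,'' and your verification is exactly the routine unpacking of the definitions that justifies this remark.
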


For $T \subset [n]$, we will also introduce the notation $X[T]$ to refer to
all faces of $X$ of type $T$, i.e.~
\[ X[T] = \set*{ \alpha \in X : \typ(\alpha) = T}.\]
Notice,
\[ X^{(n)} = X[1,\ldots, n]~~\textrm{and}~~X^{(j)} = \bigcup_{T \in
\binom{[n]}{j}} X[T].\]
\subsubsection*{Weighted Simplicial Complexes}

A weighted simplicial complex $(X, \pi)$ of rank $n$ is a pure simplicial
complex of rank $n$ where $\pi
:= \pi_n$ is a probablity distribution on $X^{(n)}$ with full support, i.e.~$\pi \in
\triangle_{X^{(n)}}$ and $\supp(\pi) = X^{(n)}$.
For $j \in [0, n-1]$, we inductively define the probability distributions
$\pi_j: X^{(j)} \to \RR$ as
\begin{equation}
	\pi_j(\alpha) = \frac{1}{j+1} \sum_{\beta \in X[j+1],\atop \beta \subset
    \alpha} \pi_{j+1}(\beta).\label{eq:onestep}
\end{equation}
The distribution $\pi_{j}(\alpha)$ can be thought as 
the probability of sampling $\alpha \in X^{(j)}$ by first sampling some $\beta \sim
\pi_{j+1}$ and then removing one of the elements of $\beta$ uniformly at random.
The following proposition generalizes this observation, and follows from a
simple inductive argument (cf.~\cite[Proposition 2.3.1]{AlevL20})
\begin{proposition}\label{prop:multstep}
    Let $(X, \pi)$ be a simplicial complex of rank $n$. For all $0 \le j
	\le k \le n$ and $\alpha \in X^{(j)}$, one has
    \[ \pi_j(\alpha) = \frac{1}{\binom{k}{j}} \sum_{\beta
	\supset\alpha,\atop \beta \in X^{(k)}} \pi_k(\beta).\]
\end{proposition}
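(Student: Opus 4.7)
The plan is to proceed by induction on $k - j$, with the one-step recurrence \eqref{eq:onestep} serving both as the base case of the recursion and as the engine of the inductive step. For the base case $k = j$, we have $\binom{j}{j} = 1$ and the only $\beta \in X^{(j)}$ with $\beta \supset \alpha$ is $\alpha$ itself, so the claim is the tautology $\pi_j(\alpha) = \pi_j(\alpha)$. The case $k = j + 1$ is exactly the definition in \eqref{eq:onestep}.

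For the inductive step, I assume the formula for $k - 1$ and prove it for $k$. Applying the inductive hypothesis at level $k-1$ and then \eqref{eq:onestep} to expand $\pi_{k-1}(\gamma)$ in terms of its cofaces at level $k$, I would write
\[
\pi_j(\alpha) \;=\; \frac{1}{\binom{k-1}{j}} \sum_{\substack{\gamma \in X^{(k-1)}\\ \gamma \supset \alpha}} \pi_{k-1}(\gamma) \;=\; \frac{1}{\binom{k-1}{j}} \cdot \frac{1}{k} \sum_{\substack{\gamma \in X^{(k-1)}\\ \gamma \supset \alpha}} \sum_{\substack{\beta \in X^{(k)}\\ \beta \supset \gamma}} \pi_k(\beta).
\]
Then I swap the order of summation, grouping over the outer variable $\beta \in X^{(k)}$ with $\beta \supset \alpha$. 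For each such $\beta$, the number of intermediate $\gamma \in X^{(k-1)}$ with $\alpha \subset \gamma \subset \beta$ equals the number of $(k-1-j)$-element subsets of $\beta \setminus \alpha$, which is $\binom{k-j}{k-1-j} = k - j$. Every such $\gamma$ automatically belongs to $X$ by downward closure, so no consistency issue arises.

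Substituting this count and using the combinatorial identity $\binom{k-1}{j} = \frac{k-j}{k} \binom{k}{j}$ gives
\[
\pi_j(\alpha) \;=\; \frac{k - j}{k \binom{k-1}{j}} \sum_{\substack{\beta \in X^{(k)}\\ \beta \supset \alpha}} \pi_k(\beta) \;=\; \frac{1}{\binom{k}{j}} \sum_{\substack{\beta \in X^{(k)}\\ \beta \supset \alpha}} \pi_k(\beta),
\]
which is the desired formula. There is no real obstacle here: downward closure of $X$ ensures that any $\beta \in X^{(k)}$ containing $\alpha$ has all of its intermediate subfaces in $X$, so the summation-swap is clean, and the only nontrivial content is the counting identity for intermediate faces combined with the Pascal-type relation between binomial coefficients.
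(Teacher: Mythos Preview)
Your proof is correct and is precisely the ``simple inductive argument'' the paper alludes to (the paper does not spell out a proof, citing \cite[Proposition 2.3.1]{AlevL20} instead). The induction on $k-j$, the swap of summation, the count $\binom{k-j}{k-1-j}=k-j$ of intermediate faces (justified by downward closure), and the Pascal identity $\binom{k-1}{j}=\tfrac{k-j}{k}\binom{k}{j}$ are exactly the ingredients one expects here.
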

Similarly, given a face $\alpha \in X^{(j)}$, we define the distribution
$\pi^{(\alpha)}$ on $X_\alpha^{(n-j)}$
by conditioning $\pi$ on the containment of $\alpha$, i.e.~for
all $\alpha' \in X_\alpha^{(n- j)}$ we have,
\begin{equation}\pi_{n- j}^{(\alpha)}(\alpha') = \frac{\pi_d(\alpha \cup
	\alpha')}{\sum_{\beta \in X^{(n)},\atop \beta \supset \alpha} \pi_n(\beta)} =
    \frac{\pi_d(\alpha \cup \alpha')}{\binom{n}{j} \cdot
    \pi_j(\alpha)},\label{eq:linkk-def}
\end{equation}
where the last part is due to \cref{prop:multstep}. Analogously, we have
\begin{proposition}\label{prop:linkdef}
    Let $(X, \pi)$ be a simplicial complex of rank $n$. Let $\alpha \in
	X^{(j)}$ and $\tau \in X_\alpha^{(\ell)}$ be faces for some $0 \le \ell \le j \le
	n$.
    Then,
    \[ \pi_l^{(\alpha)}(\tau) = \frac{\pi_{j+\ell}(\alpha \cup
    \tau)}{\binom{j+\ell}{\ell} \cdot \pi_j(\alpha)}.\] 
\end{proposition}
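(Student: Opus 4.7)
The plan is to derive this formula by reducing to the already-established Proposition~\ref{prop:multstep} applied in two places: once inside the link complex $(X_\alpha, \pi^{(\alpha)})$, and once in the ambient complex $(X, \pi)$. Since both sides of the desired identity are supported on $\tau \in X_\alpha^{(\ell)}$, we just need to verify the formula pointwise.

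First, I would apply Proposition~\ref{prop:multstep} to the simplicial complex $(X_\alpha, \pi^{(\alpha)})$, which has rank $n - j$. This gives the representation
\[
\pi_\ell^{(\alpha)}(\tau) \;=\; \frac{1}{\binom{n-j}{\ell}} \sum_{\substack{\beta' \in X_\alpha^{(n-j)} \\ \beta' \supset \tau}} \pi_{n-j}^{(\alpha)}(\beta').
\]
Next, I would substitute the defining identity~\eqref{eq:linkk-def} for $\pi_{n-j}^{(\alpha)}(\beta') = \pi_n(\alpha \cup \beta')/(\binom{n}{j} \pi_j(\alpha))$, and reindex the summation by $\beta = \alpha \cup \beta'$, which sets up a bijection between $\{\beta' \in X_\alpha^{(n-j)} : \beta' \supset \tau\}$ and $\{\beta \in X^{(n)} : \beta \supset \alpha \cup \tau\}$. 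This yields
\[
\pi_\ell^{(\alpha)}(\tau) \;=\; \frac{1}{\binom{n-j}{\ell}\binom{n}{j}\pi_j(\alpha)} \sum_{\substack{\beta \in X^{(n)} \\ \beta \supset \alpha \cup \tau}} \pi_n(\beta).
\]

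Finally, I would invoke Proposition~\ref{prop:multstep} a second time, now in the ambient complex with $\alpha \cup \tau \in X^{(j+\ell)}$, to rewrite the inner sum as $\binom{n}{j+\ell} \cdot \pi_{j+\ell}(\alpha \cup \tau)$. The proof then reduces to the binomial identity
\[
\frac{\binom{n}{j+\ell}}{\binom{n-j}{\ell}\binom{n}{j}} \;=\; \frac{1}{\binom{j+\ell}{\ell}},
\]
which follows immediately by writing each binomial as a ratio of factorials and cancelling the common $n!/((n-j-\ell)!)$ factor. There is no real obstacle here; the only thing to be mildly careful about is that the reindexing step uses that $\tau \in X_\alpha$ implies $\alpha \cup \tau \in X$, so $\{\beta \supset \alpha \cup \tau\}$ is nonempty precisely when it should be, and that the two applications of Proposition~\ref{prop:multstep} are applied to the correct complex with the correct rank.
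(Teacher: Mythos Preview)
Your proof is correct. The paper does not actually spell out a proof of this proposition; it simply writes ``Analogously, we have'' after stating \cref{prop:multstep} and the defining identity~\eqref{eq:linkk-def}, so your two applications of \cref{prop:multstep} (once in the link $(X_\alpha,\pi^{(\alpha)})$ and once in the ambient complex) together with the binomial identity are exactly the intended argument.
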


Of particular importance to us will be the \ref{eq:link-def} graph $\Emm_\alpha \in \RR^{X^{(1)}_\alpha \times X^{(1)}_\alpha}$ given any $\alpha \in X^{(\le n - 2)}$. We recall that for all distinct pairs of vertices $x, y \in X_\alpha^{(1)}$, we have
\begin{equation} \Emm_\alpha(x, y) = \pi^{(\alpha \cup \set x)}(y) = \frac{\Pr_{\omega \sim \pi_n}\sqbr*{ \left.\omega \supset \alpha \cup \set{x, y}~\right|~\omega \supset \alpha \cup \set x}}{n - |\alpha| - 1},\tag{link}\label{eq:link-def}
\end{equation}
and $\Emm_\alpha(x, x) = 0$ for all $x \in X_\alpha^{(1)}.$

\subsection{Higher Order Random Walks on  Simplicial Complexes}\label{ss:wint}
Let $(X, \pi)$ be a simpicial complex of rank $n$. The up-down walk $\Udw_{\ell \lra n} := \UpDown_{\ell \lra n}(X, \pi)$ between the $\ell$-th and $n$-th levels, $X^{(\ell)}$ and $X^{(n)}$ respectively, is defined as the following random walk on $X^{(\ell)}$: Starting from an arbitrary face $\widehat\omega^{(0)} \in X^{(\ell)}$ for all $t \ge 1$ move from $\widehat\omega^{(t-1)}$ to $\widehat\omega^{(t)}$ according to the following simple rule,
\begin{boxedc}{0.85\textwidth}{Update Rule For the Up-Down Walk, $\Udw_{\ell \lra n}$}
\begin{itemize}
    \item sample $\omega \sim \pi_n$, conditional on $\omega \supset \widehat\omega^{(t-1)}$,
    \item draw a uniformly subset among all the subsets of $\omega$ of size $\ell$, and output it as $\widehat\omega$.
\end{itemize}
\end{boxedc}

Similarly, the down-up walk $\Duw_{n \lra \ell}$ between the $n$-th and $\ell$-th levels, $X^{(n)}$ and $X^{(\ell)}$ respectively, as the following random walk $X^{(n)}$: Starting from an arbitrary $\omega^{(0)} \in X^{(n)}$ and moves from $\omega^{(t-1)}$ to $\omega^{(t)}$ according to the following simple rule,
\begin{boxedc}{0.85\textwidth}{Update Rule for the Down-Up Walk, $\Duw_{n \lra \ell}$}
    \begin{itemize}
    \item draw a subset $\widehat\omega$ of $\omega$ of size $\ell$, uniformly at random,
    \item draw a subset $\omega \sim \pi$ conditioned on containing $\widehat\omega$, and output it as $\omega^{(t)}$.
\end{itemize}
\end{boxedc}

It is well known, \cite{AlevL20, DinurK17, DiksteinDFH18}, that the random walks $\Duw_{n \lra \ell}$ and $\Udw_{\ell \lra n}$ can be decomposed as a product of random down- and up-movements on $X$. Formally, for $0 \le \ell \le k \le n$, we define the up-walk $\Upo_{\ell \to k} := \Upp_{\ell \to k}(X, \pi)$ and the down-walk $\Doo_{k \to \ell} := \Down_{k \to \ell}(X, \pi)$ as the following random walks,
\begin{align}
    \Upo_{\ell \to k}\parens*{\widehat\omega, \omega} &~=~\pi_{k - \ell}^{(\widehat\omega)}(\omega) = \frac{\one[\omega \supset \widehat\omega] \cdot \Pr_{\widetilde\omega \sim \pi_n}\sqbr*{ \widetilde\omega \supset \omega \mid \widetilde\omega \supset \widehat\omega}}{\binom{n-\ell}{k-\ell}},\tag{up-walk}\label{eq:upw-def}\\
    \Doo_{k \to \ell}\parens*{\omega, \widehat\omega} &~=~ \frac{\one\sqbr*{\widehat\omega \subset \omega}}{\binom{k}{\ell}}.\tag{down-walk}\label{eq:downw-def}
\end{align}
\begin{fact}[Folklore]\label{fac:folklore}
    Let $(X, \pi)$ be a simplical complex of rank $n$, then writing $\Duw_{n \lra \ell} := \DownUp_{n \lra \ell}(X, \pi), \Udw_{\ell \lra n} := \Udw_{\ell \lra n}(X, \pi)$, $\Upo_{\ell \to n} = \Upp_{\ell \to n}(X, \pi)$, and $\Doo_{n \to \ell} = \Down_{n \to \ell}(X, \pi)$ for the down-up, up-down, up- and down-walks between the $n$-th and $\ell$-th levels of $X$ respectively, we have
    \begin{enumerate}
        \item $\parens*{\Upo_{\ell \to n}}^* = \Doo_{n \to \ell}$, i.e.~the operators $\Upo_{\ell \to n}$ and $\Doo_{n \to \ell}$ are adjoint operators with respect to the measures $\pi_n$ and $\pi_\ell$,
        \item $\Udw_{\ell \lra n} = \Upo_{\ell \to n} \Doo_{n \to \ell}$ --  in particular the operator $\Udw_{\ell \lra n}$ is PSD,
        \item $\Duw_{n \lra \ell} = \Doo_{n \to \ell} \Upo_{\ell \to n}$ -- in particular the operator $\Duw_{n \lra \ell}$ is PSD.
    \end{enumerate}
\end{fact}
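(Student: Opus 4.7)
The plan is to handle part (1) by direct computation using the adjoint formula in \cref{prop:adjoint-defn} together with the conditional-probability identity of \cref{prop:linkdef}; once (1) is established, parts (2) and (3) follow essentially by unwinding the definitions of the walks, and the PSD conclusions are then automatic.

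For part (1), I would fix $\widehat\omega \in X^{(\ell)}$ and $\omega \in X^{(n)}$ with $\omega \supset \widehat\omega$ and compute
\[
\parens*{\Upo_{\ell \to n}}^\ast(\omega, \widehat\omega)
~=~ \Upo_{\ell \to n}(\widehat\omega, \omega) \cdot \frac{\pi_\ell(\widehat\omega)}{\pi_n(\omega)}
\]
using \cref{prop:adjoint-defn}. By the definition \eqref{eq:upw-def}, $\Upo_{\ell \to n}(\widehat\omega, \omega) = \pi_{n-\ell}^{(\widehat\omega)}(\omega \setminus \widehat\omega)$, and by \cref{prop:linkdef} applied with $\alpha = \widehat\omega$ and $\tau = \omega \setminus \widehat\omega$ this equals $\pi_n(\omega) / \bigl(\binom{n}{\ell} \cdot \pi_\ell(\widehat\omega)\bigr)$. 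Plugging back in, the factors of $\pi_n(\omega)$ and $\pi_\ell(\widehat\omega)$ cancel, leaving $1/\binom{n}{\ell}$, which is precisely $\Doo_{n \to \ell}(\omega, \widehat\omega)$ by \eqref{eq:downw-def}. When $\omega \not\supset \widehat\omega$ both sides vanish, completing (1).

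For (2), I would verify the one-step transition probabilities of $\Upo_{\ell \to n} \Doo_{n \to \ell}$ match the update rule of $\Udw_{\ell \lra n}$: for $\widehat\omega, \widehat\omega' \in X^{(\ell)}$,
\[
\sqbr*{\Upo_{\ell \to n} \Doo_{n \to \ell}}(\widehat\omega, \widehat\omega')
~=~ \sum_{\omega \in X^{(n)},\, \omega \supset \widehat\omega} \pi_{n-\ell}^{(\widehat\omega)}(\omega \setminus \widehat\omega) \cdot \frac{\one[\widehat\omega' \subset \omega]}{\binom{n}{\ell}},
\]
which by construction is exactly the probability of the two-step procedure \emph{sample $\omega \sim \pi_n$ conditional on $\omega \supset \widehat\omega$, then output a uniformly random $\ell$-subset of $\omega$}. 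Analogously, (3) follows by checking that $\Doo_{n \to \ell}\Upo_{\ell \to n}$ implements the down-up update. Finally, combining (1) with (2) gives $\Udw_{\ell \lra n} = \Upo_{\ell \to n} (\Upo_{\ell \to n})^\ast$, and combining (1) with (3) gives $\Duw_{n \lra \ell} = (\Upo_{\ell \to n})^\ast \Upo_{\ell \to n}$; both are of the form $\Aye\Aye^\ast$ or $\Aye^\ast\Aye$, hence PSD.

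The only nontrivial step is the bookkeeping in (1): one must be careful that the adjoint is taken with respect to the \emph{two different} measures $\pi_\ell$ on $X^{(\ell)}$ and $\pi_n$ on $X^{(n)}$, so the weights in \cref{prop:adjoint-defn} do not drop out trivially. Everything else is a direct unwinding of the update rules and definitions in \eqref{eq:upw-def} and \eqref{eq:downw-def}.
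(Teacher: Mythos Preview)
Your proposal is correct and is the standard argument; the paper itself offers no proof of \cref{fac:folklore}, stating it as folklore. One minor citation quibble: the identity $\pi_{n-\ell}^{(\widehat\omega)}(\omega\setminus\widehat\omega) = \pi_n(\omega)\big/\bigl(\binom{n}{\ell}\,\pi_\ell(\widehat\omega)\bigr)$ is precisely \cref{eq:linkk-def} rather than \cref{prop:linkdef} (the latter carries an index constraint that need not hold here), but the computation is right regardless.
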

For any $\widehat\omega \in X$ and any $0 \le\ell \le n' = n - |\widehat\omega|$, we will write $\Upo_{\widehat\omega, \ell \to n'}, \Doo_{\widehat\omega, n' \to \ell}, \Udw_{\widehat\omega, \ell \lra n'},$ and $\Duw_{\widehat\omega, n' \lra \ell}$ for the corresponding up, down, up-down, and down-up walks in the complex $(X_{\widehat\omega}, \pi^{\widehat\omega})$.
\subsection{Local to Global Analysis}
Given a simplicial complex $(X, \pi)$ of rank $n$, we define the \ref{eq:pecon} factor $\lc_\Phi(\widehat\omega)$ for any $\widehat\omega \in X^{(\le r - 2)}$ as follows,
\begin{equation}
  \lc_\Phi(\widehat\omega) := \sup\set*{ \left.\frac{\Ent^{\Phi}_{\pi^{(\widehat\omega)}_{1}}(\Upo_{\widehat\omega, 1 \to
n'}\gee)}{\Ent^{\Phi}_{\pi^{(\widehat\omega)}_{n'}}(\gee)}~\right|~ \gee \in \RR^{X_{\widehat\omega}^{(n')}}~~\textrm{and}~~n' = n - |\widehat\omega|.}  \label{eq:pecon}\tag{local $\Phi$-entropy contraction}
\end{equation}  
Equivalently, $\lc_\Phi(\widehat\omega) \in \RR_{> 0}$ is the smallest constant satisfying the equality
\[\Ent^{\Phi}_{\pi^{\widehat\omega}_{1}}(\Upo_{\widehat\omega, 1 \to n'}\gee) \le
	\lc_\Phi(\widehat\omega)
	\cdot \Ent^{\Phi}_{\pi^{(\widehat\omega)}_{
		n'}}(\gee) ~~\textrm{for all}~~\gee \in
		\RR^{X_{\widehat\omega}^{(n')}}~~\textrm{where}~~n' = n - |\widehat\omega|. \]
When $\Phi(t) = t \log t$, we will simply write $\lec(\widehat\omega)$ in place of $\lc_\Phi(\widehat\omega)$. We also make the following observation for the special case $\Phi(t) = t^2$, i.e.~when $\Ent^{\Phi}_\bullet(\bullet)$ equals the \ref{eq:v-def} functional $\Var_\bullet(\bullet)$. The following proposition is well understood,

\begin{proposition}\label{eq:varcont}
    Let $(X, \pi)$ be a simplicial complex of rank $n$. Then, for the choice of $\Phi(t) = t^2$, for any $\widehat\omega \in X^{(\le n- 2)}$ we have
    \[ \lc_\Phi(\widehat\omega) = \frac{1}{n - |\widehat\omega|} + \frac{n - |\widehat\omega| - 1}{n - |\widehat\omega|} \cdot \lambda_2(\Emm_{\widehat\omega}),\]
    where $\Emm_{\widehat\omega}$ is the \ref{eq:link-def} graph of $\widehat\omega$.
\end{proposition}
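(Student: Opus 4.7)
The plan is to reduce to the variational characterization of $\lambda_2$ and to identify the walk appearing in the numerator with the link walk plus a loop.

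First, by passing to the link $(X_{\widehat\omega}, \pi^{(\widehat\omega)})$ I may assume $\widehat\omega = \varnothing$ and the complex has rank $r := n - |\widehat\omega|$; write $\sigma = \pi^{(\widehat\omega)}$ and let $\Upo := \Upo_{1 \to r}$, $\Doo := \Doo_{r \to 1}$. Using that $\Var$ is shift-invariant, I restrict to functions $g \in \RR^{X^{(r)}_{\widehat\omega}}$ with $\Exp_{\sigma_r} g = 0$; then $\Exp_{\sigma_1}[\Upo g] = \Exp_{\sigma_r}[g] = 0$, so
\[
\Var_{\sigma_1}(\Upo g) \;=\; \norm{\Upo g}_{\sigma_1}^2 \;=\; \inpr{g, \Doo \Upo g}_{\sigma_r} \;=\; \inpr{g, \Duw_{r \lra 1} g}_{\sigma_r},
\]
using $\Upo^* = \Doo$ and $\Duw_{r \lra 1} = \Doo \Upo$ from \cref{fac:folklore}. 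Since $\Udw_{1 \lra r} = \Upo \Upo^*$ is PSD, so is $\Duw_{r \lra 1}$ (same nonzero spectrum by \cref{fac:switcheroo}), and hence
\[
\lc_\Phi(\widehat\omega) \;=\; \sup_{g \ne 0,\, \Exp g = 0} \frac{\inpr{g, \Duw_{r \lra 1} g}_{\sigma_r}}{\norm{g}_{\sigma_r}^2} \;=\; \lambda_2(\Duw_{r \lra 1}) \;=\; \lambda_2(\Udw_{1 \lra r}),
\]
where the last equality uses \cref{fac:switcheroo} together with $\sigma_1 \Doo = \sigma_r$ coming from \cref{prop:multstep}.

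The second step is to compute $\Udw_{1 \lra r}$ explicitly on $X^{(1)}_{\widehat\omega}$. By definition of $\Upo$ and $\Doo$, for $x, y \in X^{(1)}_{\widehat\omega}$,
\[
\Udw_{1 \lra r}(x,y) \;=\; \sum_{\omega \ni x} \sigma^{(x)}(\omega)\cdot \frac{\one[y \in \omega]}{r} \;=\; \frac{1}{r} \cdot \Pr_{\omega \sim \sigma_r}\sqbr{y \in \omega \mid x \in \omega}.
\]
For $y = x$ this is $\tfrac{1}{r}$. For $y \ne x$, comparing with \cref{eq:link-def} applied in the link, the conditional probability equals $(r-1) \Emm_{\widehat\omega}(x,y)$, so
\[
\Udw_{1 \lra r} \;=\; \tfrac{1}{r} \Ide \,+\, \tfrac{r-1}{r} \Emm_{\widehat\omega}.
\]
Taking second eigenvalues on both sides (noting both operators share the all-ones top eigenvector, and that a scalar shift plus positive scalar multiple preserves the ordering of eigenvalues) yields
\[
\lambda_2(\Udw_{1 \lra r}) \;=\; \tfrac{1}{r} + \tfrac{r-1}{r} \lambda_2(\Emm_{\widehat\omega}),
\]
which combined with the previous display gives the claim. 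Sharpness follows from the same chain of equalities, since the eigenvector of $\Emm_{\widehat\omega}$ realizing $\lambda_2(\Emm_{\widehat\omega})$ is $\sigma_1$-orthogonal to $\one$ and hence can be lifted to a $g = \Upo^*(\cdot)$-like witness.

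No step is genuinely hard here: the only mild subtlety is making sure that the passage through \cref{fac:switcheroo} is valid (which requires $\sigma_1 \Doo = \sigma_r$ and $\sigma_r \Upo = \sigma_1$ — both standard) and that the conditional probability is rescaled by the correct factor $r-1$ when comparing $\Udw_{1 \lra r}$ with $\Emm_{\widehat\omega}$.
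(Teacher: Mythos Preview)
Your argument is correct and essentially identical to the paper's proof: both center $g$ to identify $\lc_\Phi(\widehat\omega)$ with $\lambda_2(\Duw_{\widehat\omega, r \lra 1})$, invoke \cref{fac:switcheroo} to pass to $\lambda_2(\Udw_{\widehat\omega, 1 \lra r})$, and then compute $\Udw_{\widehat\omega, 1 \lra r} = \tfrac{1}{r}\Ide + \tfrac{r-1}{r}\Emm_{\widehat\omega}$ directly. The only slip is notational: the stationarity condition needed for \cref{fac:switcheroo} is $\sigma_1 \Upo = \sigma_r$ (equivalently $\sigma_r \Doo = \sigma_1$), not $\sigma_1 \Doo = \sigma_r$.
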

We provide a proof for \cref{eq:varcont} in \cref{app:varcont}.

A crucial tool we will be using in \cref{sec:fi} is the so called Garland method, due to \cite{Garland73}. To this end, we define the \ref{eq:localization} $\eff|_{\widehat\omega} \in \RR^{X_{\widehat\omega}(k  - j)}$ of a function $\eff \in \RR^{X^{(k)}}$ on a link $\widehat\omega \in X^{(j)}$ for $j \le k$ as the following function,
\begin{equation}\label{eq:localization}\tag{localization}
\eff|_{\widehat\omega}(\alpha) = \eff(\widehat\omega \sqcup \alpha)~~\textrm{for all}~~\alpha \in X_{\widehat\omega}^{(k - j)}.
\end{equation}

We first observe that by appealing to the chain rule for the \ref{eq:pent-defn}, one can obtain a convenient expression for it in terms of localizations.

\begin{fact}[Chain Rule for $\Phi$-Entropy]\label{fac:pent-cr}
    Let $(X, \pi)$ be a simplicial complex of rank $n$. For all $0 \le \ell \le r \le n$ and non-negative $\eff \in \RRp^{X^{(r)}}$, we have
	\[ 	
		\Ent^{\Phi}_{\pi_r}(\eff) = \Exp_{\widehat\omega \sim \pi_\ell}
		\Ent^{\Phi}_{\pi_\ell}(\eff|_{\widehat\omega})
		+ \Ent^{\Phi}_{\widehat\omega \sim \pi_\ell}\parens*{ \Exp_{\alpha \sim \pi^{(\widehat\omega)}_{r
	-\ell}} \eff|_{\widehat\omega}(\alpha)}.\]
    In particular,
    \[\Ent^{\Phi}_{\pi_r}(\eff) = \Exp_{\widehat\omega \sim \pi_\ell}
		\Ent^{\Phi}_{\pi_\ell}(\eff|_{\widehat\omega}) + \Ent_{\pi_\ell}\parens*{ \Upo_{\ell \to r} \eff},\]
  where $\Upo_{\ell \to r} := \Upp_{\ell \to r}(X, \pi)$ is the \ref{eq:upw-def} on $X$
\end{fact}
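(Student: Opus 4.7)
The plan is to expand the definition of $\Phi$-entropy on the left-hand side and then regroup terms by adding and subtracting a natural intermediate quantity. First I would combine \cref{prop:multstep} and \cref{prop:linkdef} to record the measure factorization
\[ \pi_\ell(\widehat\omega) \cdot \pi^{(\widehat\omega)}_{r - \ell}(\alpha) = \frac{\pi_r(\widehat\omega \sqcup \alpha)}{\binom{r}{\ell}}; \]
since each $\omega \in X^{(r)}$ admits exactly $\binom{r}{\ell}$ decompositions $\omega = \widehat\omega \sqcup \alpha$ with $\widehat\omega \in X^{(\ell)}$, this yields the tower identity
\[ \Exp_{\omega \sim \pi_r} \Psi(\eff(\omega)) = \Exp_{\widehat\omega \sim \pi_\ell} \Exp_{\alpha \sim \pi^{(\widehat\omega)}_{r - \ell}} \Psi\parens*{\eff|_{\widehat\omega}(\alpha)} \]
for any $\Psi$. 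Taking $\Psi = \Phi$ handles the $\Exp\Phi(\eff)$ contribution to $\Ent^\Phi_{\pi_r}(\eff)$, while taking $\Psi = \mathrm{id}$ rewrites $\Exp_{\pi_r}\eff$ as $\Exp_{\widehat\omega \sim \pi_\ell}[\Upo_{\ell \to r}\eff](\widehat\omega)$ upon recognising $[\Upo_{\ell \to r}\eff](\widehat\omega) = \Exp_{\alpha \sim \pi^{(\widehat\omega)}_{r-\ell}} \eff|_{\widehat\omega}(\alpha)$ from the definition of the \ref{eq:upw-def}.

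Next I would add and subtract the intermediate quantity $\Exp_{\widehat\omega \sim \pi_\ell} \Phi\parens*{\Exp_\alpha \eff|_{\widehat\omega}(\alpha)}$ and regroup the resulting four terms as
\begin{align*}
\Ent^\Phi_{\pi_r}(\eff) &= \Exp_{\widehat\omega \sim \pi_\ell}\sqbr*{\Exp_{\alpha}\Phi\parens*{\eff|_{\widehat\omega}(\alpha)} - \Phi\parens*{\Exp_{\alpha}\eff|_{\widehat\omega}(\alpha)}} \\
&\quad + \Exp_{\widehat\omega \sim \pi_\ell}\Phi\parens*{[\Upo_{\ell \to r}\eff](\widehat\omega)} - \Phi\parens*{\Exp_{\widehat\omega \sim \pi_\ell}[\Upo_{\ell \to r}\eff](\widehat\omega)}.
\end{align*}
The first line is exactly $\Exp_{\widehat\omega \sim \pi_\ell}\Ent^\Phi_{\pi^{(\widehat\omega)}_{r-\ell}}(\eff|_{\widehat\omega})$ (reading the ``$\pi_\ell$'' subscript of the inner entropy in the displayed statement as the natural abuse of notation for $\pi^{(\widehat\omega)}_{r - \ell}$), and the second line is exactly $\Ent^\Phi_{\pi_\ell}(\Upo_{\ell \to r}\eff)$. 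The ``in particular'' form is then just the first identity with the outer $\Phi$-entropy already assembled, so both identities drop out at once.

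There is no real obstacle here: the only nontrivial bookkeeping is the combinatorial factor $\binom{r}{\ell}$ in the measure decomposition, and once it is verified the remainder is an ``add-zero'' algebraic manipulation that does not even invoke convexity of $\Phi$. Morally the argument is nothing more than the classical variance/entropy decomposition from probability theory transplanted to the $\Phi$-entropy setting, and the only care needed is to keep the localisation $\eff|_{\widehat\omega}$ matched with the conditional measure $\pi^{(\widehat\omega)}_{r - \ell}$ on its natural domain $X_{\widehat\omega}^{(r - \ell)}$.
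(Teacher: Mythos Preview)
Your proposal is correct and follows essentially the same route as the paper: both arguments rewrite $\Ent^{\Phi}_{\pi_r}(\eff)$ via the tower identity $\Exp_{\pi_r} = \Exp_{\widehat\omega \sim \pi_\ell}\Exp_{\alpha \sim \pi^{(\widehat\omega)}_{r-\ell}}$, then add and subtract $\Exp_{\widehat\omega}\Phi(\Exp_\alpha \eff|_{\widehat\omega})$ and regroup. You supply a bit more bookkeeping (the $\binom{r}{\ell}$ factor from \cref{prop:multstep}/\cref{prop:linkdef}) and correctly flag the inner-entropy subscript as an abuse of notation for $\pi^{(\widehat\omega)}_{r-\ell}$, but the underlying argument is identical.
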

\begin{proof}
	The result follows by straight-forward computation,
	\begin{align*}
		\Ent^{\Phi}_{\pi_r}(\eff)
		&~=~\Exp_{\widehat\omega \sim \pi_\ell} \Exp_{\alpha \sim \pi^{(\widehat\omega)}_{r - \ell}}
		\Phi(\eff|_{\widehat\omega}(\alpha))
		- \Phi\parens*{ \Exp_{\widehat\omega \sim \pi_\ell} \Exp_{\alpha \sim \pi^{(\widehat\omega)}_{r- \ell}}
		\eff|_{\widehat\omega}(\alpha)},\\
		&~=~\Exp_{\widehat\omega \sim \pi_\ell} \Ent^{\Phi}_{\pi^{(\widehat\omega)}_{r-\ell}}(\eff|_{\widehat\omega})
		+ \Exp_{\widehat\omega \sim \pi_\ell} \Phi\parens*{ \Exp_{\alpha \sim \pi^{(\widehat\omega)}_{r - \ell}}
			\eff|_{\widehat\omega}(\alpha)} - \Phi\parens*{ \Exp_{\widehat\omega\sim \pi_\ell}  \Exp_{\alpha\sim
		\pi^{(\widehat\omega)}_{r - \ell}} \eff|_{\widehat\omega}(\alpha)},\\
		&~=~\Exp_{\widehat\omega \sim \pi_\ell} \Ent^{\Phi}_{\pi_\ell}(\eff|_{\widehat\omega})
		+ \Ent^{\Phi}_{\widehat\omega \sim \pi_\ell}\parens*{ \Exp_{\alpha \sim \pi^{(\widehat\omega)}_{r
		-\ell}} \eff|_{\widehat\omega}(\alpha)}.
	\end{align*}
    The second statement follows from the definition of the \ref{eq:upw-def} $\Upo_{\ell \to r}$.
\end{proof}

We also recall the following identities,
\begin{lemma}\label{lem:garland}
    Let $(X, \pi)$ be a simplicial complex of rank $n$. Writing $\Duw_{n \lra r} = \DownUp_{n \lra r}(X, \pi)$, $\Udw_{n-1} = \UpDown_{n-1 \lra n}(X, \pi)$, and $\Emm_{\widehat\omega}$ for the \ref{eq:link-def} of the face $\widehat\omega \in X^{(\le n-2)}$, for all $\eff \in \RR^{X^{(n)}}$ and $\ell \le r \le n$, we have
    \begin{enumerate}
        \item $\inpr*{\eff,  \eff}_{\pi_n} = \Exp_{\widehat\omega \sim \pi_\ell} \inpr*{\eff|_{\widehat\omega}, \eff|_{\widehat\omega}}_{\pi_{n- \ell}^{(\widehat\omega)}}$,
        \item $\inpr*{\eff, \Duw_{n \lra r} \eff}_{\pi_n} = \Exp_{\widehat\omega \sim \pi_\ell} \inpr*{ \eff|_{\widehat\omega}, \Duw_{\widehat\omega,n - \ell \lra r - \ell} \eff|_{\widehat\omega}}_{\pi_{n - \ell}^{(\widehat\omega)}}$,
        \item $\inpr*{\eff, \Udw_{n-1} \eff}_{\pi_n} = \Exp_{\widehat\omega \sim \pi_{n-2}}\parens*{ \inpr*{ \eff|_{\widehat\omega}, \parens*{ \frac{\Ide}{n} + \frac{n-1}{n} \cdot \Emm_{\widehat\omega}} \eff|_{\widehat\omega}}_{\pi_1^{(\widehat\omega)}}}$
    \end{enumerate}
\end{lemma}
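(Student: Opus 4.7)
The plan is to prove the three identities in order, bootstrapping (2) and (3) from (1).

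For (1), I would use the tower property together with the equivalence of two joint sampling procedures on $(\omega, \widehat\omega)$: either $\omega \sim \pi_n$ followed by a uniform $\ell$-subset $\widehat\omega \subset \omega$, or $\widehat\omega \sim \pi_\ell$ followed by $\alpha \sim \pi_{n-\ell}^{(\widehat\omega)}$ and $\omega := \widehat\omega \sqcup \alpha$. The equivalence of these two procedures is exactly the content of \cref{prop:multstep} and the definition of the link distribution in \eqref{eq:linkk-def}. Since $\eff(\omega)^2$ does not depend on any choice of $\widehat\omega \subset \omega$, inserting the extra uniform sampling leaves $\inpr*{\eff,\eff}_{\pi_n}$ unchanged, and then reordering the expectations in the equivalent form yields the identity.

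For (2), I would apply \cref{fac:folklore} to write $\Duw_{n \lra r} = \Doo_{n \to r} \Upo_{r \to n}$ with $\Doo_{n \to r} = \Upo_{r \to n}^*$, which collapses $\inpr*{\eff, \Duw_{n \lra r} \eff}_{\pi_n}$ to the squared norm $\inpr*{\Upo_{r \to n} \eff, \Upo_{r \to n} \eff}_{\pi_r}$. Setting $\gee := \Upo_{r \to n} \eff \in \RR^{X^{(r)}}$, I would apply part (1) at level $r$ with the given $\ell$. The only remaining identification is $\gee|_{\widehat\omega} = \Upo_{\widehat\omega, r-\ell \to n-\ell} (\eff|_{\widehat\omega})$ for each $\widehat\omega \in X^{(\ell)}$; this follows by unpacking $\gee|_{\widehat\omega}(\alpha') = \Exp_{\omega \sim \pi_n}[\eff(\omega) \mid \omega \supset \widehat\omega \sqcup \alpha']$ and re-expressing the conditioning through the link measure $\pi_{n-\ell}^{(\widehat\omega)}$. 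Running \cref{fac:folklore} backwards inside the link $X_{\widehat\omega}$ then converts the squared localization norm back into the desired quadratic form with $\Duw_{\widehat\omega, n-\ell \lra r-\ell}$.

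For (3), the approach is a direct calculation. Expanding the update rule for $\Udw_{n-1 \lra n}$ yields
\[
[\Udw_{n-1 \lra n}\eff](\widehat\omega) \;=\; \tfrac{1}{n}\eff(\widehat\omega) \;+\; \tfrac{1}{n}\sum_{z \in \widehat\omega}\sum_{x \in X_{\widehat\omega}^{(1)}} \pi^{(\widehat\omega)}(x)\,\eff\bigl((\widehat\omega\setminus\set z)\sqcup\set x\bigr).
\]
Multiplying by $\eff(\widehat\omega)$ and averaging over $\widehat\omega \sim \pi_{n-1}$, I would reparameterize via $\widehat\omega'' := \widehat\omega \setminus \set z$; by \cref{prop:linkdef}, the joint law of $(\widehat\omega'', z)$ factors as $\pi_{n-2}(\widehat\omega'') \cdot \pi_1^{(\widehat\omega'')}(z)$. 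Recognizing $\pi^{(\widehat\omega'' \sqcup \set z)}(x)$ as the entry $\Emm_{\widehat\omega''}(z,x)$ via \eqref{eq:link-def} (using also that $\Emm_{\widehat\omega''}(z,z)=0$ so the missing term causes no harm), the cross term becomes $\frac{n-1}{n}\Exp_{\widehat\omega'' \sim \pi_{n-2}}\inpr*{\eff|_{\widehat\omega''}, \Emm_{\widehat\omega''}\eff|_{\widehat\omega''}}_{\pi_1^{(\widehat\omega'')}}$. Invoking part (1) with $\ell = n-2$ to rewrite the self-loop term as $\frac{1}{n}\Exp_{\widehat\omega''} \inpr*{\eff|_{\widehat\omega''}, \Ide\,\eff|_{\widehat\omega''}}_{\pi_1^{(\widehat\omega'')}}$ and combining yields the claimed identity.

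The main obstacle I expect is the bookkeeping in part (3): correctly tracking the range of $x$ (namely $X_{\widehat\omega}^{(1)} = X_{\widehat\omega''}^{(1)} \setminus \set z$) and invoking \cref{prop:linkdef} at precisely the right step so that the marginals line up. I also note that for the identity in (3) to be dimensionally consistent with $\Udw_{n-1}$ acting on $\RR^{X^{(n-1)}}$, the reference measure on the left-hand side should be read as $\pi_{n-1}$ rather than $\pi_n$; the proof above delivers exactly that identity.
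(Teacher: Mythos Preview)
Your proposal is correct and, where the paper gives any detail, matches its approach exactly. For item~(2) the paper's own argument is precisely your factorization $\Duw_{n\lra r}=\Doo_{n\to r}\Upo_{r\to n}$ together with the localization identity $[\Upo_{\widehat\omega,r-\ell\to n-\ell}\eff|_{\widehat\omega}](\omega\setminus\widehat\omega)=[\Upo_{r\to n}\eff](\omega)$, which is the key step you identify; for items~(1) and~(3) the paper simply cites \cite{AlevL20} rather than giving a proof, so your direct tower-property argument for~(1) and explicit expansion for~(3) supply content the paper omits. Your observation that the left-hand side of~(3) should carry $\pi_{n-1}$ (and implicitly that $\eff\in\RR^{X^{(n-1)}}$ for that item) is also correct and consistent with how the identity is used later in the proof of \cref{lem:lsi}.
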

Items (1) and (3) are folklore results, we refer the reader to \cite[Lemma 3.7]{AlevL20} for a proof. The second item is a straight-forward consequence of the definition of the \ref{eq:upw-def} $\Upo_{\bullet_2 \to \bullet_1}$ and that $\Duw_{\bullet_1 \lra \bullet_2} = \Doo_{\bullet_1 \to \bullet_2} \Upo_{\bullet_2 \to \bullet_1}$. In particular, for any $\widehat\omega \in X^{(\ell)}$ and any $\omega \supset \widehat\omega$ we have by the definition of the \ref{eq:upw-def},
\begin{equation} [\Upo_{\widehat\omega, r - \ell\to n-\ell} \eff|_{\widehat\omega}](\omega \setminus \widehat\omega) = [\Upo_{r \to n}\eff](\omega).
 \end{equation}

We recall the following result due to \cite{Lee23},
\begin{lemma}[Theorem 3.5, \cite{Lee23}]\label{lem:lees}
    Let $(X, \pi)$ be an $n$-partite simplicial complex and let $0 \le k < n$.  Suppose we have $\EC(\Upo_{n - 1 \to n}) \ge (Cn)^{-1}$ for some $C \in \RRpp$. Then, we have
    \[ \EC\parens*{\Upo_{k \to k+1}} \ge \frac{1}{(k+1)(C + 1)},\]
    where we have written $\Upo_{\ell \to k} = \Upp_{\ell \to k}\parens*{ X, \pi}$ for the \ref{eq:upw-def} between $X^{(\ell)}$ and $X^{(k)}$
\end{lemma}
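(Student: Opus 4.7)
The plan is to induct on $k$, descending from $k = n-1$ down to $k = 0$. The base case $k = n-1$ is immediate: the hypothesis gives $\EC(\Upo_{n-1 \to n}) \ge 1/(Cn) \ge 1/(n(C+1))$, which matches the claimed bound.

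For the inductive step, fix $\gee \in \RRp^{X^{(k+1)}}$ with $\Ent_{\pi_{k+1}}(\gee) > 0$. Applying the chain rule for entropy (\cref{fac:pent-cr}) with $\ell = k$ and $r = k+1$ gives
\[ \Ent_{\pi_{k+1}}(\gee) \;=\; \Ent_{\pi_k}(\Upo_{k \to k+1}\gee) \;+\; \Exp_{\widehat\omega \sim \pi_k}\Ent_{\pi_1^{(\widehat\omega)}}(\gee|_{\widehat\omega}),\]
so the conclusion $\EC(\Upo_{k \to k+1}) \ge 1/((k+1)(C+1))$ is equivalent to
\[ \Exp_{\widehat\omega \sim \pi_k}\Ent_{\pi_1^{(\widehat\omega)}}(\gee|_{\widehat\omega}) \;\ge\; \frac{\Ent_{\pi_{k+1}}(\gee)}{(k+1)(C+1)}. \tag{$\star$}\]
In other words, I must show that the ``top'' local-entropy term in the chain-rule decomposition carries a $\Theta(1/(k+1))$ fraction of the total entropy. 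Iterating the chain rule further down to level $0$ writes $\Ent_{\pi_{k+1}}(\gee)$ as $\sum_{j=0}^{k}T_j$ where $T_j = \Exp_{\widehat\tau \sim \pi_j} \Ent_{\pi_1^{(\widehat\tau)}}((\Upo_{j+1 \to k+1}\gee)|_{\widehat\tau})$, and the identity $(\Upo_{j+1 \to k+1}\gee)|_{\widehat\tau} = \Upo_{1 \to k+1-j}^{(\widehat\tau)}(\gee|_{\widehat\tau})$ recasts each $T_j$ as an expected link-level entropy, so $(\star)$ reduces to the quantitative statement $T_k \ge S / ((k+1)(C+1))$ with $S = \sum_j T_j$.

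To produce this fraction, I would lift $\gee$ to a function $\eff \in \RRp^{X^{(n)}}$ (the natural choice being $\eff = \gee \circ \Doo_{n \to k+1}$, since $\Doo_{n \to k+1}^{*} = \Upo_{k+1 \to n}$ by \cref{fac:folklore}, which ensures the $(k+1)$-marginal of $\eff$ is $\gee$), apply the hypothesis to $\eff$ at level $n-1 \to n$, and pull the resulting deficit inequality back down to level $k+1$ by inserting the inductive bounds at intermediate levels. The cancellation between the intermediate local-entropy terms at levels $k+1 < j < n$ and the inductive hypothesis $\EC(\Upo_{j\to j+1}) \ge 1/((j+1)(C+1))$ should leave only the desired $1/((k+1)(C+1))$ factor; the additive $+1$ in the denominator arises as the residual slack absorbed during this pullback.

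The main obstacle is keeping the ``telescoping loss'' linear rather than multiplicative: a crude iterated application of the chain rule would yield a $1/(C+1)^{n-k}$ or factorial blow-up, which is far too weak. Avoiding this requires exploiting the $n$-partite structure of $X$ together with the adjoint identity $\Upo^* = \Doo$ to cancel intermediate local-entropy contributions against the inductive hypothesis cleanly, rather than summing them term-by-term. This careful bookkeeping — ensuring that the inductive bound at level $j+1$ exactly absorbs the chain-rule deficit at level $j$ up to a $(C+1)$-multiplicative slack — is the delicate combinatorial calculation at the heart of the proof; once it is in place, $(\star)$ follows by rearrangement.
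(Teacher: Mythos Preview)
The paper does not prove this lemma; it is quoted as Theorem~3.5 of \cite{Lee23} and used as a black box in \cref{sec:hijack}. There is no in-paper argument to compare your proposal against.

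On its own merits, your sketch sets up the chain-rule reduction to $(\star)$ correctly, but what follows is an outline with a real gap rather than a proof. First, the proposed lift does not behave as claimed: taking $\eff = \Doo_{n \to k+1}\gee$ gives $\Upo_{k+1 \to n}\eff = \Upo_{k+1 \to n}\Doo_{n \to k+1}\gee = \Udw_{k+1 \lra n}\gee$, which is not $\gee$, so the assertion that ``the $(k+1)$-marginal of $\eff$ is $\gee$'' is false and the bridge between the level-$n$ hypothesis and the level-$(k+1)$ target is not yet built. Second, and more substantively, you explicitly defer the ``delicate combinatorial calculation'' that keeps the denominator linear in $k+1$ rather than exponential in $n-k$; that calculation \emph{is} the content of the lemma. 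A naive cascade of the chain rule and the inductive hypothesis through levels $k+1,\dots,n-1$ does compound losses multiplicatively, and avoiding this requires a structural input --- in \cite{Lee23} the argument is run in the dual $f$-divergence picture for $\Doo$ and exploits the exchangeability of coordinate deletion across levels --- that your outline does not yet supply.
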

 \section{Expanderized Random Walks}\label{sec:expander}
Let $(X, \pi)$ be an $n$-partite simplicial complex. For any $\ell \le n$, the up-down walk $\Udw_{\ell \lra n} := \UpDown_{\ell \lra n}(X, \pi)$ on the $\ell$-th level $X^{(\ell)}$ of $X$ introduced in \cref{ss:wint} admits the following alternative description: Starting from an arbitrary face $\widehat\omega^{(0)} \in X^{(\ell)}$ move from $\widehat\omega^{(t-1)}$ to $\widehat\omega^{(t)}$ according to the following simple rule,
\begin{boxedc}{0.85\textwidth}{Update Rule for the Up-Down Walk, $\Udw_{\ell \lra n}$}
\begin{itemize}
    \item sample $\omega \sim \pi$, conditional on $\omega \supset \widehat\omega^{(t-1)}$,
    \item sample $S \sim \uni_{\binom{[n]}{\ell}}$,
    \item output $\widehat\omega^{(t)} = \omega_{S}$.
\end{itemize}
\end{boxedc}
We will expanderize the up-down walk $\Udw_{\ell \lra n}$ in the following manner: Given a $k$-regular labelled graph $H$ on the vertex set $\binom{[n]}{\ell}$, we will denote the $a$-th neighbor of vertex $v$ by $\Out_H(v, a)$. We define the expanderized up-down walk $\Papx_{\ell \lra n} = \UpDown_{\ell \lra n}(X, \pi, H)$ as the walk which starts from an arbitrary face $\widehat\omega^{(0)}$ and moves from $\widehat\omega^{(t-1)}$ to $\widehat\omega^{(t)}$ according to the following simple rule,
\begin{boxedc}{0.85\textwidth}{Update Rule for the Expanderized Up-Down Walk, $\Papx_{\ell \lra n}$}
\begin{itemize}
    \item sample $\omega \sim \pi$, conditional on $\omega \supset \widehat\omega^{(t-1)}$,
    \item sample $a \sim \uni_{[k]}$ and set $S= \Out_H(\typ(\widehat\omega^{(t-1)}), a)$,\footnote{Where we recall that the type of $\widehat\omega$ is the sides of the simplicial complex that $\widehat\omega$ intersects}
    \item output $\widehat\omega^{(t)} = \omega_S$.
\end{itemize}
\end{boxedc}

Similarly, the down-up walk $\Duw_{n \lra \ell}$ between the $n$-th level $X^{(n)}$ and the $\ell$-th level $X^{(\ell)}$ of an $n$-partite simplicial complex $(X, \pi)$ introduced in \cref{ss:wint} admits the following alternative description: Start from $\omega^{(0)} \in X^{(n)}$ and move from $\omega^{(t-1)}$ to $\omega^{(t)}$ according to the following simple rule,
\begin{boxedc}{0.85\textwidth}{Update Rule for the Down-Up Walk, $\Duw_{n \lra \ell}$}
\begin{itemize}
    \item sample $S \sim \uni_{\binom{[n]}{\ell}}$ uniformly at random,
    \item set $\widehat\omega = \omega_{S}$,
    \item set $\omega^{(t)}$ to be a random face drawn from $\pi$, conditional on containing $\widehat\omega$.
\end{itemize}
\end{boxedc}
Similarly, we define the expanderized down-up walk $\Paqx_{n \lra \ell} = \DownUp_{n \lra \ell}(X, \pi, H)$ to be the random walk on $X^{(n)} \times \binom{[n]}{\ell}$, starting from an arbitrary face-subset pair $(\omega^{(0)}, S^{(0)})$ and move from $(\omega^{(t-1)}, S^{(t-1)})$ to $(\omega^{(t)}, S^{(t)})$ according to the following simple rule,
\begin{boxedc}{0.85\textwidth}{Update Rule for the Expanderized Down-Up Walk, $\Paqx_{n \lra \ell}$}
\begin{itemize}
    \item sample $a \sim \uni_{[k]}$ and set $S' = \Out_H(S^{(t-1)}, a)$,
    \item set $\widehat\omega = \omega_{S'}$,
    \item set $\omega^{(t)} \sim \pi$ to be a random face conditional on containing $\widehat\omega$,
    \item sample $b \sim \uni_{[k]}$ and set $S^{(t)} = \Out_H(S^{(t-1)}, b)$,
    \item output $(\omega^{(t)}, S^{(t)})$.
\end{itemize}
\end{boxedc}
For convenience we also define the expanderized down- and up-walks given a degree regular labelled graph $H = (\binom{[n]}{\ell}, E)$ 
\[\Qdo_{n \to \ell} = \Down_{n \to \ell}(X, \pi, H) \in \RR^{(X^{(n)} \times \binom{[n]}{\ell}) \times X^{(\ell)}}~~\textrm{and}~~\Qup_{\ell \to n} = \Upp_{\ell \to [n]}(X, \pi, H) \in \RR^{X^{(\ell)} \times (X^{(n)} \times \binom{[n]}{\ell})},\]
as follows,
\[ \Qdo_{n \to \ell}( (\omega, S), \widehat\omega ) = \frac{\one[ S \sim_H \typ(\widehat\omega) ]}{k} \cdot \one[\omega \supset \widehat\omega]~~\textrm{for all}~~\omega \in X^{(n)}, \widehat\omega \in X^{(\ell)}, S \in \binom{[n]}{\ell},\]
and $\Qup_{\ell \to n} = \parens*{ \Qdo_{n \to \ell} }^*$ where the adjoint is taken with respect to the distributions $\pi_n \tensor \uni_{\binom{[n]}{\ell}}$ and $\pi_{\ell}$, i.e.~
\[ \Qup_{\ell \to n}\parens*{ \widehat\omega, (\omega, S) } = \frac{\one[ S \sim_H \typ(\widehat\omega) ] }{k } \cdot \Pr_{\widetilde\omega \sim \pi_n}\sqbr*{ \widetilde\omega = \omega \mid \omega \supset \widehat\omega }~~\textrm{for all}~~\omega \in X^{(n)}, \widehat\omega \in X^{(\ell)}, S \in \binom{[n]}{\ell},\]
and the notation $T \sim_H S$ is used to denote the adjacency relation in the graph $H$, i.e.~$\set*{S, T} \in E(H)$. 

We summarize the random movements described by the expanderized up- and down-walks in words as follows: The expanderized down-walk $\Qdo_{n \to \ell}$ first samples a random neighbor of $T$ of $S$ in $\binom{[n]}{\ell}$, and then restricts the coordinates of $\omega$ to $T$, i.e.~moves to $\omega_T$. The expanderized up-walk $\Qup_{\ell \to n}$ on the other hand first samples a facet $\omega \in X^{(n)}$ from $\pi$ conditional on containing $\widehat\omega$ and after picking a random neighbor $S$ of $\typ(\widehat\omega)$ in $H$ moves to $(\omega, S)$.
\begin{proposition}
    For any $n$-partite pure simplicial complex $(X, \pi)$ and a $k$-regular labelled graph $H = (\binom{[n]}{\ell}, E)$, writing $\Qdo_{n \to \ell} = \Down_{n \to \ell}(X, \pi, H)$ and $\Qup_{\ell \to n} = \Upp_{\ell \to n}(X, \pi, H)$ we have,
    \[ \parens*{\pi_n \tensor \uni_{\binom{[n]}{\ell}}} \Qdo_{n \to \ell} = \pi_{\ell}~~\textrm{and}~~\pi_{\ell} \Qup_{\ell \to n} = \pi_n \tensor \uni_{\binom{[n]}{\ell}}.\]
\end{proposition}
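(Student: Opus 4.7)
The plan is to verify both identities by direct calculation from the defining formulas. The first identity should follow from a short computation using that $H$ is $k$-regular and \cref{prop:multstep}; the second should then fall out for free via adjointness, provided we check that $\Qdo_{n \to \ell}$ is row-stochastic. I do not anticipate a real obstacle beyond keeping the bookkeeping straight.

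For the first identity, I would expand
\[ \sqbr*{ \parens*{\pi_n \tensor \uni_{\binom{[n]}{\ell}}} \Qdo_{n \to \ell}}(\widehat\omega) ~=~ \sum_{\omega \in X^{(n)}} \sum_{S \in \binom{[n]}{\ell}} \pi_n(\omega) \cdot \frac{1}{\binom{n}{\ell}} \cdot \frac{\one[S \sim_H \typ(\widehat\omega)]}{k} \cdot \one[\omega \supset \widehat\omega]. \]
The inner sum over $S$ is exactly $1$ since $H$ is $k$-regular, so the only neighbors $S$ of $\typ(\widehat\omega)$ contribute and there are $k$ of them, each with weight $1/k$. What remains is $\frac{1}{\binom{n}{\ell}} \sum_{\omega \supset \widehat\omega} \pi_n(\omega)$, which is precisely $\pi_\ell(\widehat\omega)$ by \cref{prop:multstep}.

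For the second identity, I would first verify that $\Qdo_{n \to \ell}$ is row-stochastic. Fixing any $(\omega, S) \in X^{(n)} \times \binom{[n]}{\ell}$, each of the $k$ neighbors $T$ of $S$ in $H$ selects the unique subface $\omega_T \subset \omega$ with $\typ(\omega_T) = T$, and every nonzero term equals $1/k$; the rows therefore sum to $1$. Since by definition $\Qup_{\ell \to n} = \Qdo_{n \to \ell}^*$ with adjoint taken relative to the measures $\pi_n \tensor \uni_{\binom{[n]}{\ell}}$ and $\pi_\ell$, \cref{fac:reversal} combined with the first identity immediately yields $\pi_\ell \Qup_{\ell \to n} = \pi_n \tensor \uni_{\binom{[n]}{\ell}}$. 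The only care needed is to confirm that the adjoint is taken with respect to the correct pair of measures, which matches exactly the pair of candidate stationary measures in the statement.
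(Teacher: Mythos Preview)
Your proposal is correct and follows essentially the same approach as the paper's proof: both use $k$-regularity of $H$ to show the $S$-sum collapses to $1$, then invoke \cref{prop:multstep} for the first identity, and both derive the second identity from adjointness. The only cosmetic difference is that the paper phrases the first part probabilistically (a random neighbor of a uniform vertex in a regular graph is again uniform, via \cref{fac:baby}) rather than as an explicit sum, and cites \cref{prop:adjoint-defn} rather than \cref{fac:reversal} for the adjoint step; your explicit verification of row-stochasticity is a small extra bit of care that the paper omits.
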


\begin{proof}
    Let $(\omega, S) \sim \pi_n \tensor \uni_{\binom{[n]}{\ell}}$ be a random sample. Notice that a random neighbor of $S$ in $H$ is still distributed uniformly at random as the uniform distribution stationary for the random walk over a $k$-regular graph, q.v.~\cref{fac:baby}. Thus, conditional on $\omega$, a single step of $\Qdo_n$ ends up restricting $\omega$ to a random set of coordinates $S$ -- this precisely yields the distribution $\pi_{\ell}$, q.v.~\cref{prop:multstep}. 

    The second statement follows since $\Qup_{\ell \to n}$ is the adjoint operator, q.v.~\cref{prop:adjoint-defn}.
\end{proof}

The following is easy to verify,
\begin{corollary}\label{cor:e2v}    For any $n$-partite simplicial complex $(X, \pi)$ and $k$-regular labelled graph $H = \parens*{ \binom{[n]}{\ell}, E)}$,
    \begin{itemize}
    \item $\UpDown_{\ell \lra n}(X, \pi, H^2) = \Upp_{\ell \to n}(X, \pi, H) \cdot \Down_{n\to \ell}(X, \pi, H)$,
    \item $\DownUp_{n \lra \ell}(X, \pi, H) = \Down_{n \to \ell}(X, \pi, H) \cdot \Upp_{\ell \to n}(X, \pi, H)$.
    \end{itemize}
\end{corollary}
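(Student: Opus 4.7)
Both identities can be verified by direct expansion of the transition kernels, using the fact that in a partite complex a face $\widehat\omega \in X^{(\ell)}$ is uniquely identified by the pair $(\typ(\widehat\omega), \omega)$ for any facet $\omega \supset \widehat\omega$, i.e.\ $\widehat\omega = \omega_{\typ(\widehat\omega)}$.

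For the second item I would fix source and target states $(\omega, S), (\omega', S') \in X^{(n)} \times \binom{[n]}{\ell}$ and write the transition probability of $\DownUp_{n \lra \ell}(X, \pi, H)$ as a marginal over the intermediate neighbor $T \in \binom{[n]}{\ell}$ of $S$ in $H$: the intermediate face is forced to be $\omega_T$, the new facet $\omega'$ is drawn from $\pi$ conditional on $\omega' \supset \omega_T$, and $S'$ is chosen uniformly among the $H$-neighbors of $T$. The composition $\Down_{n \to \ell}(X, \pi, H)\, \Upp_{\ell \to n}(X, \pi, H)$ instead marginalizes over an intermediate face $\widehat\omega \in X^{(\ell)}$. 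But by the definition of $\Qdo_{n \to \ell}$, non-zero contributions only come from $\widehat\omega$'s satisfying $\typ(\widehat\omega) \sim_H S$ and $\widehat\omega = \omega_{\typ(\widehat\omega)}$, so setting $T := \typ(\widehat\omega)$ converts one sum into the other term by term.

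For the first item I would use the elementary observation that a single step of the simple random walk on $H^2$ from $U \in \binom{[n]}{\ell}$ is distributed as two consecutive steps of the walk on $H$ starting at $U$. Unfolding $\UpDown_{\ell \lra n}(X, \pi, H^2)$ from $\widehat\omega$: one samples $\omega \sim \pi$ with $\omega \supset \widehat\omega$, takes two $H$-steps from $\typ(\widehat\omega)$ via an intermediate vertex $S' \in \binom{[n]}{\ell}$ to land at some $S$, and outputs $\omega_S$. This is precisely the composition $\Upp_{\ell \to n}(X, \pi, H)\, \Down_{n \to \ell}(X, \pi, H)$ after marginalizing over the intermediate state $(\omega, S') \in X^{(n)} \times \binom{[n]}{\ell}$: $\Qup_{\ell \to n}$ samples $\omega$ together with the first $H$-step $S'$ (uniform over the $H$-neighbors of $\typ(\widehat\omega)$), while $\Qdo_{n \to \ell}$ performs the second $H$-step to $S$ and restricts to $\omega_S$.

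Neither step presents a genuine obstacle; the whole argument is bookkeeping. The only point worth flagging is the asymmetric appearance of $H$ versus $H^2$ in the two identities: the down-up update samples two \emph{independent} $H$-neighbors of its intermediate vertex $T$, one coming from the $\Qdo$ factor and one from the $\Qup$ factor, so a single copy of $H$ suffices on the right-hand side of the second identity; the up-down walk by contrast uses only a single $H$-step, so two $H$-steps (i.e.\ one $H^2$-step) are needed on the walk side to match the composition on the right-hand side of the first identity.
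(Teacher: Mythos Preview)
Your proposal is correct and is exactly the kind of direct verification the paper has in mind: the paper states this corollary as ``easy to verify'' and omits the proof, so your entry-wise expansion of the transition kernels is the intended argument. The only substantive observation needed is the one you make---that in a partite complex the intermediate face $\widehat\omega$ is determined by $(\omega, \typ(\widehat\omega))$, so the sum over intermediate faces collapses to a sum over types $T$, and in the $\Qup\Qdo$ product the inner sum over the intermediate coordinate $S$ computes exactly the $H^2$-transition probability from $\typ(\widehat\omega)$ to $\typ(\widehat\omega')$.
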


We now summarize several useful properties of the expanderized up- and down-walks,

\begin{corollary}\label{cor:stat}
Let $(X, \pi)$ be an $n$-partite complex and $H = (\binom{[n]}{\ell}, E)$ a $k$-regular graph. For any $\ell \le n$, writing $\Papx_{\ell \lra n} = \UpDown_{\ell \lra n}(X, \pi, H^2)$, $\Paqx_{n \lra \ell} = \DownUp_{n \lra \ell}(X, \pi, H)$ $\Qdo_{n \to \ell} = \Down_{n \to \ell}(X, \pi, H)$ and $\Qup_{\ell \to n} = \Upp_{\ell \to n}(X, \pi, H)$ we have,
\begin{enumerate}
    \item $(\pi_n \tensor \uni_{\binom{[n]}{\ell}}) \Paqx_{n \lra \ell} = \pi_n \tensor \uni_{\binom{[n]}{\ell}}$, i.e.~$\pi_n \tensor \uni_{\binom{[n]}{\ell}}$ is the stationary distribution of $\Paqx_{n \lra \ell}$,
    \item $\pi_{\ell} \Papx_{\ell \lra n} = \pi_{\ell}$, i.e.~$\pi_{\ell}$ is the stationary distribution of $\Papx_{\ell \lra n}$.
    \item $\Paqx_{n \lra \ell}$ and $\Papx_{\ell \lra n}$ are PSD operators.
    \item $\Papx_{\ell \lra n}$ and $\Paqx_{n \lra \ell}$ are self-adjoint operators.
\end{enumerate}
\end{corollary}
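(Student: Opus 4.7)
The plan is to reduce all four items to the decomposition provided by \cref{cor:e2v} together with the fact that $\Qup_{\ell \to n}$ was defined as the adjoint of $\Qdo_{n \to \ell}$ with respect to the measures $\pi_n \tensor \uni_{\binom{[n]}{\ell}}$ on $X^{(n)} \times \binom{[n]}{\ell}$ and $\pi_\ell$ on $X^{(\ell)}$. First I would rewrite
\[\Paqx_{n \lra \ell} = \Qdo_{n \to \ell}\,\Qup_{\ell \to n}\quad\textrm{and}\quad \Papx_{\ell \lra n} = \Qup_{\ell \to n}\,\Qdo_{n \to \ell},\]
which is exactly the content of \cref{cor:e2v} (using $\UpDown_{\ell \lra n}(X,\pi,H^2) = \Upp_{\ell \to n}(X,\pi,H)\,\Down_{n \to \ell}(X,\pi,H)$ for the first factor in $\Papx_{\ell \lra n}$).

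Items (1) and (2) then follow by chaining the two stationarity facts $(\pi_n \tensor \uni_{\binom{[n]}{\ell}})\,\Qdo_{n \to \ell} = \pi_\ell$ and $\pi_\ell\,\Qup_{\ell \to n} = \pi_n \tensor \uni_{\binom{[n]}{\ell}}$ established in the preceding proposition: applying $\Qdo_{n \to \ell}$ followed by $\Qup_{\ell \to n}$ to $\pi_n \tensor \uni_{\binom{[n]}{\ell}}$ returns $\pi_n \tensor \uni_{\binom{[n]}{\ell}}$ (giving (1)), while applying $\Qup_{\ell \to n}$ followed by $\Qdo_{n \to \ell}$ to $\pi_\ell$ returns $\pi_\ell$ (giving (2)).

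For items (3) and (4) the key observation is that $\Qup_{\ell \to n} = (\Qdo_{n \to \ell})^*$ by construction, so
\[ \Paqx_{n \lra \ell} = \Qdo_{n \to \ell}\,(\Qdo_{n \to \ell})^* \quad\textrm{and}\quad \Papx_{\ell \lra n} = (\Qdo_{n \to \ell})^*\,\Qdo_{n \to \ell}.\]
Any operator of the form $\Aye \Aye^*$ is manifestly self-adjoint, since $(\Aye \Aye^*)^* = \Aye \Aye^*$, and PSD, since for any $\eff$ one has $\inpr*{\eff, \Aye \Aye^* \eff}_{\mu} = \norm*{\Aye^* \eff}^2_{\nu} \ge 0$ under the inner products with respect to which the adjoint was taken. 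Applying this observation once with $\Aye = \Qdo_{n \to \ell}$ and once with $\Aye = (\Qdo_{n \to \ell})^*$ establishes (3) and (4) simultaneously.

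I do not expect any real obstacle; the only detail to be careful about is bookkeeping for which inner product each self-adjointness/PSD claim is taken with respect to, and this matches exactly the pair of measures used in defining $\Qup_{\ell \to n}$ as the adjoint of $\Qdo_{n \to \ell}$. In particular these are precisely the stationary measures exhibited in items (1) and (2), so the proof should occupy only a few lines.
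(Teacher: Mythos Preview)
Your proposal is correct and matches the paper's approach exactly: the paper does not spell out a proof of \cref{cor:stat} at all, treating it as an immediate consequence of the preceding proposition (the stationarity of $\Qdo_{n\to\ell}$ and $\Qup_{\ell\to n}$) together with the factorizations in \cref{cor:e2v} and the adjointness $\Qup_{\ell\to n}=(\Qdo_{n\to\ell})^*$. Your write-up is precisely the intended few-line argument.
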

Since in our proofs it will be more convenient to use $\Papx_{\ell \lra n} := \UpDown_{\ell \lra n}(X, \pi, H)$ directly, initialized with $H$ and not $H^2$, we also note the following.
\begin{proposition}\label{prop:stat}
Let $(X, \pi)$ be an $n$-partite complex and $H = (\binom{[n]}{\ell}, E)$ a $k$-regular graph. Then, the expanderized up-down walk $\Papx_{\ell \lra n} := \UpDown_{\ell \lra n}(X, \pi, H)$ has the stationary distribution $\pi_\ell$ and is reversible.
\end{proposition}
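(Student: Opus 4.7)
The plan is to directly verify the detailed balance equation
\[ \pi_\ell(\widehat\omega_1)\,\Papx_{\ell \lra n}(\widehat\omega_1, \widehat\omega_2) = \pi_\ell(\widehat\omega_2)\, \Papx_{\ell \lra n}(\widehat\omega_2, \widehat\omega_1) \]
for every $\widehat\omega_1, \widehat\omega_2 \in X^{(\ell)}$. This single identity simultaneously yields the reversibility of $\Papx_{\ell \lra n}$ with respect to $\pi_\ell$ and, upon summing over $\widehat\omega_1$ and using that $\Papx_{\ell \lra n}$ is row-stochastic, the stationarity $\pi_\ell \Papx_{\ell \lra n} = \pi_\ell$.

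First, I would read off the transition probability from the update rule. Writing $T_i = \typ(\widehat\omega_i)$ for $i \in \{1,2\}$, the independence of the draw of $\omega \sim \pi$ conditional on $\omega \supset \widehat\omega_1$ and the choice of an $H$-neighbor $T$ of $T_1$ gives
\[ \Papx_{\ell \lra n}(\widehat\omega_1, \widehat\omega_2) = \frac{1}{k}\sum_{T \sim_H T_1}\ \sum_{\omega \supset \widehat\omega_1} \frac{\pi_n(\omega)}{\Pr_{\omega' \sim \pi}[\omega' \supset \widehat\omega_1]}\,\one[\omega_T = \widehat\omega_2]. \]
The key observation is that the event $\omega_T = \widehat\omega_2$ forces simultaneously $T = T_2$ and $\omega \supset \widehat\omega_2$, which collapses the double sum to
\[ \Papx_{\ell \lra n}(\widehat\omega_1, \widehat\omega_2) = \frac{m_H(T_1, T_2)}{k} \cdot \frac{\Pr_{\omega \sim \pi}[\omega \supset \widehat\omega_1 \cup \widehat\omega_2]}{\Pr_{\omega \sim \pi}[\omega \supset \widehat\omega_1]}, \]
where $m_H(S, T)$ denotes the number of labelled edges from $S$ to $T$ in $H$.

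Next, I would apply \cref{prop:multstep} to rewrite $\pi_\ell(\widehat\omega) = \binom{n}{\ell}^{-1} \Pr_{\omega \sim \pi}[\omega \supset \widehat\omega]$. Multiplying through, the conditional denominator cancels and we obtain
\[ \pi_\ell(\widehat\omega_1)\,\Papx_{\ell \lra n}(\widehat\omega_1, \widehat\omega_2) = \frac{m_H(T_1, T_2)}{k \binom{n}{\ell}}\,\Pr_{\omega \sim \pi}[\omega \supset \widehat\omega_1 \cup \widehat\omega_2]. \]
Both factors on the right-hand side are symmetric under $\widehat\omega_1 \leftrightarrow \widehat\omega_2$: the $\pi$-probability is manifestly so, and $m_H(T_1, T_2) = m_H(T_2, T_1)$ by the symmetry of the adjacency relation $\sim_H$ in the undirected $k$-regular graph $H$ (which is implicit throughout the paper, as $\lambda(H)$ and the random walk on $H$ having uniform stationary distribution are used).

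The main subtlety is noticing that in the sum over $H$-neighbors $T$ of $T_1$, at most one choice, namely $T = T_2$, can produce the target $\widehat\omega_2$, and that this simultaneously imposes the containment $\omega \supset \widehat\omega_2$ needed to couple the sum over $\omega$ into the symmetric $\pi$-factor. Once this decoupling is made, the detailed balance identity reduces to the symmetry of the edge multiplicity of $H$, and there is no further analytic difficulty since the walk only performs a single step along $H$ (so only a single factor of $1/k$ appears, and one need not invoke any stronger spectral or reversibility property of $H$ beyond symmetry of its edge set).
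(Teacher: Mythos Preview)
Your proof is correct and follows essentially the same route as the paper: write the transition probability explicitly, then verify that $\pi_\ell(\widehat\omega_1)\,\Papx_{\ell\lra n}(\widehat\omega_1,\widehat\omega_2)$ reduces to the manifestly symmetric expression $\frac{m_H(T_1,T_2)}{k\binom{n}{\ell}}\,\Pr_{\omega\sim\pi}[\omega\supset\widehat\omega_1\cup\widehat\omega_2]$. The only cosmetic difference is that the paper also carries out a separate direct verification of $\pi_\ell\Papx_{\ell\lra n}=\pi_\ell$, whereas you (correctly) deduce stationarity from detailed balance plus row-stochasticity.
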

As \cref{cor:stat} reaches the same end by replacing $H$ with $H^2$ we will postpone the proof of \cref{prop:stat} to \cref{ap:stat}.

Now, we present the results we prove for expanderized random walks. Our first result shows that the expanderized up-down walk approximates the usual up-down walk in the \ref{eq:opn-def},
\begin{theorem}\label{thm:exp-close}
Let $(X, \pi)$ be an $n$-partite simplicial complex and let $H$ be a $k$-regular labelled graph on the vertex set $\binom{[n]}{\ell}$. Writing $\Papx_{\ell \lra n} := \UpDown(X, \pi, H)$ and $\Udw_{\ell \lra n} := \UpDown(X, \pi)$ for the expanderized- and the regular up-down walks on $X^{(\ell)}$, we have
    \[ \norm*{ \Papx_{\ell \lra n} - \parens*{1 - \lambda(H)} \cdot \Udw_{\ell \lra n} }_{\opp, \pi_{\ell}} \le \lambda(H). \]
\end{theorem}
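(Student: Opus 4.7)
My plan is to realize both $\Papx_{\ell \lra n}$ and $\Udw_{\ell \lra n}$ as factorizations through a common ``lifted'' state space $X^{(n)} \times \binom{[n]}{\ell}$, so that the difference becomes a product of three operators, one of which encodes the choice of graph on $\binom{[n]}{\ell}$. Specifically, let $K$ denote the random walk on the clique with self-loops on $\binom{[n]}{\ell}$, i.e.\ $K(S,T) = 1/\binom{n}{\ell}$. The usual up-down walk is obtained from the expanderized walk by replacing $H$ with $K$, since choosing a uniformly random $S \in \binom{[n]}{\ell}$ is exactly what a step of $K$ does starting from any vertex.

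The key factorization is the following. Define the lifted up-walk $\tilde\Upp_{\ell \to n} \in \RR^{X^{(\ell)} \times (X^{(n)}\times \binom{[n]}{\ell})}$ and the lifted deterministic down-walk $\tilde\Doo_{n \to \ell} \in \RR^{(X^{(n)} \times \binom{[n]}{\ell}) \times X^{(\ell)}}$ by
\[
\tilde\Upp_{\ell \to n}\bigl(\widehat\omega, (\omega, S)\bigr) = \Pr_{\widetilde\omega \sim \pi_n}[\widetilde\omega = \omega \mid \widetilde\omega \supset \widehat\omega] \cdot \one[S = \typ(\widehat\omega)], \qquad \tilde\Doo_{n \to \ell}\bigl((\omega, T), \widehat\omega\bigr) = \one[\omega_T = \widehat\omega].
\]
For any matrix $A \in \RR^{\binom{[n]}{\ell} \times \binom{[n]}{\ell}}$, a direct computation using the definitions of $\Papx_{\ell \lra n}$ and $\Udw_{\ell \lra n}$ shows that the up-down walk with ``graph transition'' $A$ equals $\tilde\Upp_{\ell \to n} (\Ide_{X^{(n)}} \tensor A) \tilde\Doo_{n \to \ell}$; in particular, taking $A = H$ yields $\Papx_{\ell \lra n}$ and taking $A = K$ yields $\Udw_{\ell \lra n}$. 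By linearity,
\[
\Papx_{\ell \lra n} - (1 - \lambda(H)) \cdot \Udw_{\ell \lra n} = \tilde\Upp_{\ell \to n} \bigl( \Ide_{X^{(n)}} \tensor (H - (1-\lambda(H)) K) \bigr) \tilde\Doo_{n \to \ell}.
\]
Verifying this factorization is the first step; the routine check that $\pi_\ell \tilde\Upp_{\ell \to n} = \pi_n \tensor \uni_{\binom{[n]}{\ell}}$ and $(\pi_n \tensor \uni_{\binom{[n]}{\ell}}) \tilde\Doo_{n \to \ell} = \pi_\ell$ uses \cref{prop:multstep} and $n$-partiteness (so that $\omega_S$ is well-defined).

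With the factorization in hand, the rest is an application of submultiplicativity of the operator norm. Since $\tilde\Upp_{\ell \to n}$ and $\tilde\Doo_{n \to \ell}$ are row-stochastic matrices that carry the stationary measure on one side to the stationary measure on the other, each has operator norm at most $1$ between the corresponding $L^2$-spaces (by Jensen/Cauchy-Schwarz). For the middle factor, a standard tensor computation gives $\|\Ide_{X^{(n)}} \tensor B\|_{\opp, \pi_n \tensor \uni_{\binom{[n]}{\ell}}} = \|B\|_{\opp, \uni_{\binom{[n]}{\ell}}}$ for any $B$. Finally, since $K$ is the rank-one projection onto the constants in $L^2(\uni_{\binom{[n]}{\ell}})$ and $H$ preserves constants, decomposing any $\eff \in \RR^{\binom{[n]}{\ell}}$ as $\eff = c \one + \eff^\perp$ with $c = \Exp_{\uni}\eff$ and $\inpr*{\eff^\perp, \one}_{\uni} = 0$, we get
\[
(H - (1 - \lambda(H)) K) \eff = c \cdot \lambda(H) \cdot \one + H \eff^\perp,
\]
and the two summands are orthogonal in $L^2(\uni)$. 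Using $\|H \eff^\perp\|_\uni \le \lambda(H) \|\eff^\perp\|_\uni$ from \cref{fac:cf-baby}, this yields $\|H - (1-\lambda(H)) K\|_{\opp, \uni_{\binom{[n]}{\ell}}} \le \lambda(H)$. Multiplying the three bounds gives the desired inequality.

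The only delicate part is setting up the lifted space and checking that the factorization preserves the correct stationary measures; everything else is manipulation of operator norms. No single step is a serious obstacle once the right intermediate space $X^{(n)} \times \binom{[n]}{\ell}$ is identified, which is the same space on which $\Paqx_{n \lra \ell}$ is naturally defined in the paper.
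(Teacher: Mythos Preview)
Your proposal is correct and is essentially the same argument as the paper's, packaged more abstractly: the paper's auxiliary function $\gee_\omega(T) = \eff(\omega_T)$ is precisely your $\tilde\Doo_{n\to\ell}\eff$ restricted to $\{\omega\}\times\binom{[n]}{\ell}$, the paper's Jensen step is your contraction bound for $\tilde\Upp_{\ell\to n}$, and both finish with $\|H - (1-\lambda(H))K\|_{\opp,\uni} \le \lambda(H)$. Your explicit factorization through $X^{(n)}\times\binom{[n]}{\ell}$ makes the derandomized-squaring structure more transparent, but the computations are identical.
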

We present the proof of \cref{thm:exp-close} in \cref{ss:pexp}.

\cref{thm:exp-close} immediately implies the following bounds for the \ref{eq:gap-def} of expanderized walks,
\begin{corollary}\label{cor:gaplift}
Let $(X, \pi)$ be an $n$-partite simplicial complex and let $H$ be a $k$-regular labelled graph on the vertex set $\binom{[n]}{\ell}$. Writing $\Papx_{\ell \lra n} := \UpDown(X, \pi, H)$, $\Udw_{\ell \lra n} = \UpDown_{\ell \lra n}(X, \pi)$, $\Paqx_{n \lra \ell} = \DownUp(X, \pi, H)$, and $\Duw_{n \lra \ell} = \DownUp_{n \lra \ell}(X, \pi)$, we have
    \begin{align*}
    \Gap\parens*{ \Papx_{\ell \lra n} } &~\ge~\Gap\parens*{ \Udw_{\ell \lra n} } \cdot \Gap^\star(H),\\
    \Gap\parens*{ \Paqx_{n \lra \ell}} &~\ge~\Gap\parens*{ \Duw_{n \lra \ell} } \cdot \Gap^\star(H^2),
    \end{align*}
    where $\Gap^\star(G) = 1 - \lambda(G)$ and $\lambda(G)$ denotes the \ref{eq:two-sided} of the graph $G$.
\end{corollary}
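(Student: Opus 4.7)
The plan is to deduce both inequalities from \cref{thm:exp-close} together with the variational characterization of $\lambda_2$ in item (1) of \cref{fac:cf-baby}. Throughout, write $\alpha = 1 - \lambda(H)$; by \cref{fac:folklore} the ordinary walks $\Udw_{\ell \lra n}$ and $\Duw_{n \lra \ell}$ are PSD, while by \cref{prop:stat} the expanderized walk $\Papx_{\ell \lra n}$ is reversible with respect to $\pi_\ell$.

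For the first inequality I would fix $\eff \in \RR^{X^{(\ell)}}$ with $\inpr*{\eff, \one}_{\pi_\ell} = 0$ and decompose
\[
\inpr*{\eff, \Papx_{\ell \lra n} \eff}_{\pi_\ell} = \alpha \cdot \inpr*{\eff, \Udw_{\ell \lra n} \eff}_{\pi_\ell} + \inpr*{\eff, \parens*{\Papx_{\ell \lra n} - \alpha \cdot \Udw_{\ell \lra n}} \eff}_{\pi_\ell}.
\]
The first summand is at most $\alpha \cdot \lambda_2(\Udw_{\ell \lra n}) \cdot \norm*{\eff}_{\pi_\ell}^2$, because every eigenvalue of $\Udw_{\ell \lra n}$ other than the trivial $1$ is at most $\lambda_2(\Udw_{\ell \lra n})$. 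The second summand is at most $\norm*{\Papx_{\ell \lra n} - \alpha \cdot \Udw_{\ell \lra n}}_{\opp, \pi_\ell} \cdot \norm*{\eff}_{\pi_\ell}^2 \le \lambda(H) \cdot \norm*{\eff}_{\pi_\ell}^2$ by Cauchy--Schwarz followed by \cref{thm:exp-close}. Taking the supremum over $\eff \perp \one$ gives $\lambda_2(\Papx_{\ell \lra n}) \le \alpha \cdot \lambda_2(\Udw_{\ell \lra n}) + \lambda(H)$, which rearranges to $\Gap(\Papx_{\ell \lra n}) \ge \alpha \cdot \Gap(\Udw_{\ell \lra n}) = \Gap^\star(H) \cdot \Gap(\Udw_{\ell \lra n})$.

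For the second inequality I would pass through the up-down walk expanderized by $H^2$. By \cref{cor:e2v},
\[
\Paqx_{n \lra \ell} = \Qdo_{n \to \ell} \cdot \Qup_{\ell \to n} \quad\textrm{and}\quad \UpDown_{\ell \lra n}(X, \pi, H^2) = \Qup_{\ell \to n} \cdot \Qdo_{n \to \ell},
\]
and by construction $\Qup_{\ell \to n}$ is the adjoint of $\Qdo_{n \to \ell}$ with respect to the measures $\pi_\ell$ and $\pi_n \tensor \uni_{\binom{[n]}{\ell}}$. Hence \cref{fac:switcheroo} identifies the non-trivial spectra of these two products; in particular $\lambda_2(\Paqx_{n \lra \ell}) = \lambda_2\parens*{\UpDown_{\ell \lra n}(X, \pi, H^2)}$. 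The same reasoning applied to $\Udw_{\ell \lra n} = \Upo_{\ell \to n} \cdot \Doo_{n \to \ell}$ and $\Duw_{n \lra \ell} = \Doo_{n \to \ell} \cdot \Upo_{\ell \to n}$ (q.v.~\cref{fac:folklore}) gives $\lambda_2(\Udw_{\ell \lra n}) = \lambda_2(\Duw_{n \lra \ell})$. Combining these two identities with the first inequality applied to $H^2$ in place of $H$ yields
\[
\Gap(\Paqx_{n \lra \ell}) = \Gap\parens*{\UpDown_{\ell \lra n}(X, \pi, H^2)} \ge \Gap^\star(H^2) \cdot \Gap(\Udw_{\ell \lra n}) = \Gap^\star(H^2) \cdot \Gap(\Duw_{n \lra \ell}).
\]

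The whole argument is bookkeeping around \cref{thm:exp-close}, which does all of the analytic work. The only step worth flagging is the identification of $\lambda_2$ across \cref{fac:switcheroo} and across the variational maximum over $\eff \perp \one$; this is harmless here because the operators $\Udw_{\ell \lra n}$, $\Duw_{n \lra \ell}$, $\Qup_{\ell \to n} \Qdo_{n \to \ell}$, and $\Qdo_{n \to \ell} \Qup_{\ell \to n}$ are all PSD by \cref{fac:folklore} and \cref{cor:stat}, so on $\one^{\perp}$ the quadratic form $\inpr*{\eff, M \eff}_\pi$ really is maximized by $\lambda_2(M)$ rather than by $\Abs*{\lambda_{\min}(M)}$.
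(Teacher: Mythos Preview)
Your proof is correct and follows essentially the same route as the paper's: both derive the first inequality from \cref{thm:exp-close} by decomposing into the $\alpha\cdot\Udw_{\ell\lra n}$ piece and the error piece, and both obtain the second inequality from the first (applied to $H^2$) via \cref{cor:e2v} and \cref{fac:switcheroo}. The only cosmetic difference is that the paper bounds $\norm*{\Papx_{\ell\lra n}\eff}_{\pi_\ell}$ for an eigenvector using the triangle inequality, whereas you bound the quadratic form $\inpr*{\eff,\Papx_{\ell\lra n}\eff}_{\pi_\ell}$ for general $\eff\perp\one$ via Cauchy--Schwarz; these yield the same conclusion.
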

Unfortunately, a bound on the \ref{eq:gap-def} is in many settings not enough to obtain optimal mixing time bounds. We show however, that \cref{thm:exp-close} allows us to transfer log-Sobolev inequalities (\ref{eq:lsi-def}) for the usual up-down walks to the expanderized up-down walks,
\begin{corollary}\label{cor:entropicstuff}
    Let $(X, \pi)$ be an $n$-partite simplicial complex and let $H$ be a $k$-regular labelled graph on the vertex set $\binom{[n]}{\ell}$. Writing $\Papx_{\ell \lra n} := \UpDown_{\ell \lra n}(X, \pi, H)$ and $\Udw_{\ell \lra n} := \UpDown_{\ell \lra n}(X, \pi)$ for the up-down walk on $X^{(\ell)}$, we have
    \[ \LS\parens*{ \Papx_{\ell \lra n} } \ge \LS\parens*{ \Udw_{\ell \lra n} } \cdot \Gap^\star(H),\]
    where $\Gap^\star(H) = 1 - \lambda(H)$ and $\lambda(H)$ denotes the \ref{eq:two-sided} of the graph $H$.
\end{corollary}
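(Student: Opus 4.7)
The plan is to feed \cref{thm:exp-close} directly into the variational characterization of the log-Sobolev constant. Setting $\lambda := \lambda(H)$ and $\Err := \Papx_{\ell \lra n} - (1-\lambda)\cdot \Udw_{\ell \lra n}$, the hypothesis hands us $\norm{\Err}_{\opp, \pi_\ell} \le \lambda$. Since both $\Papx_{\ell \lra n}$ and $\Udw_{\ell \lra n}$ are self-adjoint with respect to $\pi_\ell$ (by \cref{prop:stat} and \cref{fac:folklore}), so is $\Err$, and the Dirichlet-form quadratic forms that appear in the definition of $\LS(\bullet)$ are all well-defined on the same inner-product space $(\RR^{X^{(\ell)}}, \inpr{\cdot,\cdot}_{\pi_\ell})$.

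The first step is the algebraic decomposition
\[
\Ide - \Papx_{\ell \lra n} \;=\; (1-\lambda)\cdot(\Ide - \Udw_{\ell \lra n}) \;+\; (\lambda \Ide - \Err),
\]
which we will evaluate against an arbitrary $\eff \in \RRp^{X^{(\ell)}}$. For the first summand, we simply invoke the log-Sobolev inequality for $\Udw_{\ell \lra n}$, obtaining
\[
(1-\lambda)\cdot \inpr{\eff,(\Ide - \Udw_{\ell \lra n})\eff}_{\pi_\ell} \;\ge\; (1-\lambda)\cdot \LS(\Udw_{\ell \lra n}) \cdot \Ent_{\pi_\ell}(\eff^2).
\]
For the second summand, the Cauchy--Schwarz inequality together with the operator-norm bound on $\Err$ gives
\[
\inpr{\eff, \Err\, \eff}_{\pi_\ell} \;\le\; \norm{\Err}_{\opp,\pi_\ell}\cdot \norm{\eff}_{\pi_\ell}^2 \;\le\; \lambda \cdot \norm{\eff}_{\pi_\ell}^2,
\]
so that $\inpr{\eff,(\lambda \Ide - \Err)\eff}_{\pi_\ell} \ge 0$. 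Adding the two pieces yields the Dirichlet-form bound $\inpr{\eff,(\Ide - \Papx_{\ell \lra n})\eff}_{\pi_\ell} \ge (1-\lambda)\cdot \LS(\Udw_{\ell \lra n})\cdot \Ent_{\pi_\ell}(\eff^2)$, from which the claim follows by the variational definition of the log-Sobolev constant.

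There is no real obstacle here once \cref{thm:exp-close} is available; the argument is essentially "a PSD-like perturbation of the Dirichlet form survives when the perturbation has small operator norm." The only thing worth double-checking is sign handling: because we only know $|\inpr{\eff, \Err\, \eff}_{\pi_\ell}| \le \lambda \norm{\eff}_{\pi_\ell}^2$ and not that $\Err \preceq 0$, we need the $+\lambda \Ide$ term in the decomposition to absorb a possibly positive $\inpr{\eff,\Err\,\eff}_{\pi_\ell}$. That is exactly why \cref{thm:exp-close} compares $\Papx_{\ell \lra n}$ to the dampened operator $(1-\lambda)\Udw_{\ell \lra n}$ rather than to $\Udw_{\ell \lra n}$ itself, and it is the only subtlety in the plan.
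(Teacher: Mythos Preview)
Your proof is correct and is essentially the same as the paper's: both use \cref{thm:exp-close} to bound $\inpr{\eff,\Err\,\eff}_{\pi_\ell}\le\lambda\norm{\eff}_{\pi_\ell}^2$ and then observe that the Dirichlet form of $\Papx_{\ell\lra n}$ dominates $(1-\lambda)$ times that of $\Udw_{\ell\lra n}$. The only cosmetic difference is that the paper routes the final step through the intermediate operator $(1-\lambda)\Udw_{\ell\lra n}+\lambda\Ide$ and its log-Sobolev constant, whereas you write the decomposition $\Ide-\Papx_{\ell\lra n}=(1-\lambda)(\Ide-\Udw_{\ell\lra n})+(\lambda\Ide-\Err)$ directly; the algebra is identical.
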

We will prove \cref{cor:entropicstuff} in \cref{ss:lsiexp}. We state a convenient corollary of \cref{cor:entropicstuff} which immediately follows from \cref{lem:lsfun}, \cref{cor:stat}, and the data processing inequality \cref{fac:dpi},
\begin{corollary}\label{cor:baked}
     Let $(X, \pi)$ be an $n$-partite simplicial complex and let $H$ be a $k$-regular labelled graph on the vertex set $\binom{[n]}{\ell}$. Then, writing $\Qdo_{n \lra \ell} = \Down_{n \to \ell}(X, \pi, H)$ and $\Udw_{\ell \lra n} = \UpDown_{\ell \lra n}(X, \pi)$ we have,
     \[ \EC(\Qdo_{n \lra \ell}) \ge \LS(\Udw_{\ell \lra n}) \cdot \Gap^\star(H^2).\]

     In particular, we have for $\Papx_{\ell \lra n} := \UpDown_{\ell \lra n}(X, \pi, H^2)$ and $\Paqx_{n \lra \ell} = \DownUp_{n \lra \ell}(X, \pi, H)$,
     \[ \EC(\Papx_{\ell \lra n}) \ge \LS(\Udw_{\ell \lra n}) \cdot \Gap^\star(H^2)~~\textrm{and}~~\EC(\Paqx_{n \lra \ell}) \ge \LS(\Udw_{\ell \lra n}) \cdot \Gap^\star(H^2),\]
     where $\Gap^\star(H^2) = 1 - \lambda(H^2)$ and $\lambda(H^2)$ denotes the \ref{eq:two-sided} of the graph $H^2$.
\end{corollary}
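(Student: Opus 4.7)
The plan is to chain together three previously stated results: the rectangular Miclo-type bound \cref{lem:lsfun}, the factorizations of the expanderized up-down and down-up walks through $\Qdo$ and $\Qup$ given by \cref{cor:e2v}, and the log-Sobolev lift \cref{cor:entropicstuff}. The ``in particular'' clause will then fall out of the data processing inequality \cref{fac:dpi}.

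First I would apply \cref{lem:lsfun} to the rectangular operator $\Qdo_{n \to \ell}$, obtaining
\[ \EC\parens*{\Qdo_{n \to \ell}} \ge \LS\parens*{\Qdo_{n \to \ell}^* \, \Qdo_{n \to \ell}}. \]
By construction $\Qup_{\ell \to n} = \Qdo_{n \to \ell}^*$ with respect to the measures $\pi_n \tensor \uni_{\binom{[n]}{\ell}}$ and $\pi_\ell$, so \cref{cor:e2v} identifies this composition as $\UpDown_{\ell \lra n}(X,\pi,H^2)$ --- i.e.~the expanderized up-down walk whose auxiliary graph is $H^2$. Applying \cref{cor:entropicstuff} with $H^2$ playing the role of the generic auxiliary graph gives
\[ \LS\parens*{\UpDown_{\ell \lra n}(X,\pi,H^2)} \ge \LS\parens*{\Udw_{\ell \lra n}} \cdot \Gap^\star(H^2), \]
and combining this with the previous inequality yields the first displayed bound of the corollary.

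For the ``in particular'' clause I would invoke \cref{cor:e2v} once more to factorize $\Papx_{\ell \lra n} = \Qup_{\ell \to n} \, \Qdo_{n \to \ell}$ and $\Paqx_{n \lra \ell} = \Qdo_{n \to \ell} \, \Qup_{\ell \to n}$. Specializing the data processing inequality \cref{fac:dpi} to $\Phi(t) = t \log t$ tells us that $\EC(\Pii\Quu) \ge \max\set*{\EC(\Pii), \EC(\Quu)}$ whenever the intermediate stationary measures align. Since $\Qdo_{n \to \ell}$ appears as a factor in both decompositions and the stationary distributions line up by \cref{cor:stat}, we conclude
\[ \EC(\Papx_{\ell \lra n}) \ge \EC(\Qdo_{n \to \ell}) \quad\textrm{and}\quad \EC(\Paqx_{n \lra \ell}) \ge \EC(\Qdo_{n \to \ell}), \]
so plugging in the bound already established for $\EC(\Qdo_{n \to \ell})$ completes the proof.

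I do not anticipate any real obstacle beyond careful bookkeeping. The only mildly delicate point is confirming that in \cref{lem:lsfun} the adjoint is formed with respect to the pair of measures $\pi_n \tensor \uni_{\binom{[n]}{\ell}}$ and $\pi_\ell$, so that $\Qdo_{n \to \ell}^* = \Qup_{\ell \to n}$ and the composition $\Qup_{\ell \to n}\, \Qdo_{n \to \ell}$ is stationary for $\pi_\ell$ --- exactly matching the hypotheses of \cref{cor:entropicstuff}. This compatibility is precisely what \cref{cor:stat} provides.
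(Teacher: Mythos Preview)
Your proposal is correct and matches the paper's approach exactly: the paper states that the corollary ``immediately follows from \cref{lem:lsfun}, \cref{cor:stat}, and the data processing inequality \cref{fac:dpi}'', and you have carried out precisely this chain, using \cref{cor:e2v} to identify $\Qdo_{n\to\ell}^*\,\Qdo_{n\to\ell}$ with the expanderized up-down walk on $H^2$ so that \cref{cor:entropicstuff} applies.
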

As we will see in \cref{sec:hijack}, \cref{cor:baked} will indeed allow us to prove optimal mixing time bounds for the expanderized walks in many cases of interest.
\subsection{Expanderized Up-Down Walk Approximates the Up-Down Walk Well, Proof of \cref{thm:exp-close}}\label{ss:pexp}
\begin{proof}[Proof of \cref{thm:exp-close} ]
For convenience, we will write $\Papx := \Papx_{\ell \lra n}$ and $\Udw := \Udw_{\ell \lra n}$.    Let $\Emm$ denote the random-walk matrix of the graph $H$ where each transition occurs with the probability $1/k$ and $\Jay$ for the random-walk matrix of the clique over $\binom{[n]}{\ell}$ with self-loops, i.e.~$\Jay = \one\one^\top/\binom{[n]}{\ell}$. We will write, $\lambda := \lambda(\Emm).$ 

Let $S \in \binom{[n]}\ell$ be arbitrary and suppose some $\bar\omega \in X^{(\ell)}$ is given such that $\typ(\bar\omega) = S$. 

For all $\eff \in \RR^{X^{(\ell)}}$, we have 
\[ [\Papx \eff](\bar\omega) = \sum_{\widehat\omega \in X[S^c]}
\Pr_{\omega \sim \pi}[\omega_{S^c} = \widehat\omega \mid \omega_S = \bar\omega] \cdot
\sum_{a \in [k]} \frac{\eff\parens*{ (\bar\omega \oplus \widehat\omega)_{\Out_H(S, a)}}}{k}. \]
Similarly, we have
\[ [\Udw \eff](\bar\omega) =
\sum_{\widehat\omega \in X[S^c]}
\Pr_{\omega \sim \pi}[\omega_{S^c} = \widehat\omega \mid \omega_S = \bar\omega]\sum_{T \in 
\binom{[n]}{\ell}} \frac{\eff\parens*{ (\bar\omega \oplus \widehat\omega)_{T}}}{\binom{n}{\ell}}.\]
For any given facet $\omega \in X^{(n)}$ we define the function $\gee_\omega \in \RR^{\binom{[n]}{\ell}}$ as, $\gee_\omega(T) = \eff\parens*{ \omega_T }$.

We have,
\[ [\Emm\gee_\omega](T) = \sum_{a \in [k]} \frac{\eff\parens*{ \omega_{\Out_H(T, a)}}}{k}~~\textrm{and}~~[\Jay \gee_\omega](i) = \sum_{T \in \binom{[n]}{\ell}} \frac{\eff\parens*{\omega_{T}}}{\binom{n}{\ell}}. \]
Thus, we have
\begin{align}
    [\Papx \eff](\bar\omega) &~=~\sum_{\widehat\omega \in X[S^c]} \Pr_{\omega \sim \pi}[\omega_{S^c} = \widehat\omega \mid \omega_{S} = \bar\omega] \cdot [\Emm\gee_{\bar\omega \oplus \widehat\omega}](S),\label{eq:der-garl}\\
    [\Udw \eff](\bar\omega) &~=~\sum_{\widehat\omega \in X[S^c]} \Pr_{\omega \sim \pi}[\omega_{S^c} = \widehat\omega \mid \omega_{S} = \bar\omega]\cdot [\Jay\gee_{\bar\omega \oplus \widehat\omega}](S).\label{eq:full-garl}
\end{align}
In particular combining \cref{eq:der-garl} and \cref{eq:full-garl} and noticing that for $\omega \sim \pi$ the law of $\omega_{S^c}$ conditional on $\omega_S = \bar\omega$ is given by $\pi_{n - \ell}^{(\widehat\omega)}$,
\begin{align*} \norm*{ (\Papx - (1-\lambda)\Udw)\eff}^2_{\pi_{\ell}} 
&~=~ \Exp_{\bar\omega \sim \pi_{\ell}} \parens*{\Exp_{\widehat\omega \sim \pi^{(\bar\omega)}_{n - 
\ell}}\sqbr*{ [\Emm\gee_{\bar\omega \oplus \widehat\omega}](\typ(\bar\omega)) - (1- \lambda) [\Jay\gee_{\bar\omega \oplus \widehat\omega}](\typ(\bar\omega))}}^2,\\
&~\le~ \Exp_{\bar\omega \sim \pi_{\ell}} \Exp_{\widehat\omega \sim \pi^{(\bar\omega)}_{n - \ell}}\sqbr*{ \parens*{ [\Emm\gee_{\bar\omega \oplus \widehat\omega}](\typ(\bar\omega)) - (1- \lambda) [\Jay\gee_{\bar\omega \oplus \widehat\omega}](\typ(\bar\omega))}^2}.
\end{align*}
where the last inequality is obtained by appealing to Jensen's inequality and the convexity of $t \to t^2$.

Now, we observe the law of $\bar\omega \oplus \widehat\omega$ obtained by first sampling $\bar\omega \sim \pi_\ell$ and then $\widehat\omega \sim \pi^{(\bar\omega)}$ is given by $\pi$.  Furthermore, any $\omega \in X^{(n)}$ occurs exactly $\binom{[n]}{\ell}$ times in the expectation above -- once for each $S \in \binom{[n]}{\ell}$ acting as $\typ(\bar\omega)$, which happens with probability $\binom{n}{\ell}^{-1}$. Thus,
\begin{align*}
    \norm*{ (\Papx - (1-\lambda)\Udw)\eff}^2_{\pi_{\ell}} 
    &~\le~\Exp_{\omega \sim \pi_n}\sqbr*{ \Exp_{S \sim \uni_{\binom{[n]}{\ell}}} \parens*{[(\Emm - (1- \lambda)\Jay)\gee_\omega](S)}^2},\\
    &~\le~\Exp_{\omega\sim \pi_n}\sqbr*{\lambda^2 \cdot \norm{\gee_\omega}^2_{\uni_{\binom{[n]}{\ell}}}}
\end{align*}
where the last inequality is due to $\norm{\Emm - (1- \lambda) \Jay}_{\opp, \uni_{\binom{[n]}{\ell}}} \le \lambda$ as $\lambda(\Emm) \le \lambda$!

Now, we finally note
\[ \Exp_{\omega \sim \pi}\sqbr*{ \norm{\gee_\omega}^2_{\uni_{\binom{[n]}{\ell}}}} = \Exp_{\omega \sim \pi}\sqbr*{ \frac{1}{\binom{n}{\ell}} \sum_{S \in \binom{[n]}{\ell}} \eff(\omega_{S})^2 } = \Exp_{\bar\omega \sim \pi_{\ell}} \eff(\bar\omega)^2 = \norm{\eff}^2_{\pi_{\ell}},\]
The last equality is due to the observation that first sampling $\omega \sim \pi$ and then outputting $\omega_{S}$ for $S \sim \uni_{\binom{[n]}{\ell}}$ picked uniformly at random amounts to simply sampling $\bar\omega \sim \pi_{\ell}$, q.v.~\cref{prop:multstep}.

In particular,
\[ \norm*{ (\Papx - (1- \lambda) \Udw)\eff }_{\pi_{\ell}} \le \lambda \cdot \norm{\eff}_{\pi_{\ell}}.\]
As $\eff$ was picked arbitrarily, this allows us to conclude the proof of our theorem by appealing to the definition of the \ref{eq:opn-def}.
\end{proof}

\subsection{A Spectral Gap Bound For Expanderized Higher Order Random Walks, Proof of \cref{cor:gaplift}}

\begin{proof}[Proof of \cref{cor:gaplift}]
   Suppose $\eff \in \RR^{X^{(\ell)}}$ is a unit vector, i.e.~$\norm*{\eff}_{\pi_\ell} = 1$,  satisfying $\Papx_{\ell \lra n
   } \eff = \lambda(\Papx_{\ell \lra n}) \eff.$ In particular, $\inpr*{\eff, \one}_{\pi_\ell} = 1$.  By \cref{prop:stat}, the stationary distribution of $\Papx$ and $\Udw_{\ell \lra n}$ are the same -- thus we must have $\norm{\Udw_{\ell \lra n}\eff}_{\pi_\ell} \le \lambda(\Udw_{\ell \lra n}).$

We have by the triangle inequality and \cref{thm:exp-close},
\begin{align*} \lambda(\Papx_{\ell \lra n}) &~=~\norm*{\Papx_{\ell \lra n} \eff}_{\pi_n},\\
&~\le~ \norm{(\Papx_{\ell \lra n} - (1-  \lambda(H))\Udw_{\ell \lra n})\eff}_{\pi_{\ell}} + \norm*{ (1 - \lambda(H)) \Udw_{\ell \lra n} \eff}_{\pi_{\ell}},\\
&~=~\lambda(H) + (1- \lambda(H)) \cdot \lambda(\Udw_{\ell \lra n}). 
\end{align*}

Thus, we have
\[ \Gap(\Papx_{\ell \lra n}) \ge 1 - \lambda(H) - (1 -\lambda(H)) \cdot \lambda(\Udw_{\ell \lra n}) = (1-  \lambda(H)) \cdot (1 - \lambda(\Udw_{\ell \lra n})) = \Gap^*(H) \cdot \Gap(\Udw_{\ell \lra n}),\]
where we have used that $\Udw_{\ell \lra n}$ is PSD to obtain the last equality, q.v.~\cref{fac:folklore}. The analogous statement for $\Paqx_{n \lra \ell}$ follows from \cref{fac:switcheroo} and \cref{cor:e2v}.
\end{proof}

\subsection{Log-Sobolev and Entropy Contraction Bounds For Expanderized Walks, Proof of \cref{cor:entropicstuff}}\label{ss:lsiexp}

\begin{proof}[Proof of \cref{cor:entropicstuff}]
    Let $\eff \in \RRp^{X^{(\ell)}}$ be an arbitrary function satisfying $\Ent_{\pi_{\ell}}(\eff^2) \ne 0$. We have,
    \begin{align*}
        \frac{\inpr*{\eff, \parens*{\Ide - \Papx_{\ell \lra n}} \eff}_{\pi_{\ell}}}{\Ent_{\pi_{\ell}}(\eff)^2}
        &~=~\frac{\inpr*{\eff, \parens*{ \Ide - (1-  \lambda(H)) \cdot \Udw_{\ell \lra n}}\eff}_{\pi_{\ell}}}{\Ent_{\pi_{\ell}}(\eff^2)}
        + \frac{\inpr*{ \eff,\parens*{(1 - \lambda(H)) \Udw_{\ell \lra n} - \Papx_{\ell \lra n}} \eff}_{\pi_{\ell}}}{\Ent_{\pi_{\ell}}(\eff^2)}.
    \end{align*}
    Notice that by \cref{thm:exp-close}, we should have
    \[ \inpr*{ \eff,\parens*{(1 - \lambda(H)) \Udw_{\ell \lra n} - \Papx_{\ell \lra n}} \eff}_{\pi_{\ell}} \ge - \lambda(H) \cdot \inpr*{ \eff , \Ide \eff}_{\pi_{\ell}}. \]
    Thus,
    \begin{align*}
    \frac{\inpr*{\eff, \parens*{\Ide - \Papx_{\ell \lra n}} \eff}_{\pi_{\ell}}}{\Ent_{\pi_{\ell}}(\eff)^2}
        &~\ge~ \frac{ \inpr*{ \eff, \parens*{\Ide - \parens*{ (1 - \lambda(H)) \cdot \Udw_{\ell \lra n} + \lambda(H) \cdot \Ide }} \eff}_{\pi_{\ell}}}{\Ent_{\pi_{\ell}}(\eff^2)},\\
        &~\ge~\LS\parens*{ (1 - \lambda(H)) \cdot \Udw_{\ell \lra n} + \lambda(H) \cdot \Ide},\\
        &~=~\LS(\Udw_{\ell \lra n}) \cdot \Gap^\star(H),
    \end{align*}
    where the last inequality is by noticing,
    \[ \inpr*{ \eff, \parens*{\Ide - \parens*{ a \cdot \Ide + (1 - a) \Pii}} \eff }_\mu = (1- a) \cdot \inpr*{\eff, \parens*{\Ide - \Pii} \eff}_\mu,\]
    and the definition of the log-Sobolev inequality (\ref{eq:lsi-def}) and the log-Sobolev constant (\cref{eq:lsc-def}) . Appealing to the definition of the log-Sobolev inequality (\ref{eq:lsi-def}) once again yields the result.
\end{proof}

\section{Functional Inequalities on Simplical Complexes}\label{sec:fi}
In this section, we will prove several functional inequalities involving the down-up walk $\Duw_{n \lra \ell}$. For convenience we define the set $\Cc_\ell(X)$ as the set of $\ell$-chains in $X$, i.e.~the collection of  sequences
\[ \varnothing := \omega^{(0)} \subsetneq \omega^{(1)} \subsetneq \cdots \subsetneq \omega^{(\ell)} \in X^{(\ell)}, \]
such that $\omega^{(i)} \in X^{(i)}$ for all $i = 0, \ldots, \ell$. 
Similarly, for $x \in X^{(1)}$ we define $\Cc_\ell(x)$ as the set of $\ell$-chains in $X$ starting from $x \in X^{(1)}$, i.e.~the collection of sequnces
\[ x =: \omega^{(1)} \subsetneq \omega^{(2)} \subsetneq \cdots \subsetneq \cdots \subsetneq \omega^{(\ell)},\]
such that $\omega^{(i)} \in X^{(i)}$ for all $i = 0, \ldots, \ell$.

\begin{theorem}\label{thm:pent-c}
    For any $\widehat\omega \in X$, we set $n_{\widehat\omega} = n - |\widehat\omega|$
    and set $\pi_{\widehat\omega, \star} = \min_{\widetilde\omega \in X_{\widehat\omega}^{n_z}} \pi_{\widehat\omega}^{(n_{\widehat\omega})}(\widetilde\omega)$.
	We have for all convex $\Phi: \RRp \to \RRp$ and $\eff \in \RR^{X^{(n)}}$.
	{\small
	\begin{equation}\Ent^{\Phi}_{\pi_\ell}\parens*{ \Upo_{\ell \to n} \eff }
		\le \parens*{ 1 - \min\set*{\left. \prod_{j = 0}^{\ell - 1} (1 -
					\lc_\Phi(\omega^{(j)}))
				~\right|~  \varnothing =: \omega^{(0)} \subsetneq \omega^{(1)} \subsetneq \cdots \subsetneq \omega^{(\ell)} \in
	\Cc_{\ell}(X)}} \cdot
	\Ent_{\pi_r}^{\Phi}(\eff).\label{eq:pentl}
	\end{equation}
}
	Equivalently,
	\begin{equation}\CF_\Phi( \Upo_{\ell \to n} ) \ge 
		\min\set*{\left. \prod_{j = 0}^{\ell - 1} (1 - \lc_\Phi(\omega^{(j)}))
				\right|  \varnothing =: \omega^{(0)} \subsetneq \omega^{(1)} \subsetneq \cdots \subsetneq \omega^{(\ell)} \in
	\Cc_{\ell}(X) }.\label{eq:pentf}
\end{equation}
    In particular, writing $\lc^{(i)}_\Phi(X, \pi) = \max_{\widehat\omega \in X^{(i)}} \lc_\Phi(\widehat\omega)$, we have
    \[ \CF_\Phi\parens*{ \Upo_{\ell \to n} } \ge \prod_{j = 0}^{\ell - 1} \parens*{1 - \lc^{(j)}_\Phi(X, \pi)}.\]
\end{theorem}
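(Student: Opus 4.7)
The plan is to prove \cref{thm:pent-c} by induction on $\ell$, using the chain rule for $\Phi$-entropy (\cref{fac:pent-cr}) in tandem with the very definition of local $\Phi$-entropy contraction $\lc_\Phi(\varnothing)$. The base case $\ell = 0$ is vacuous: $\Upo_{0 \to n}\eff$ is the constant $\Exp_{\pi_n}\eff$ on $X^{(0)} = \set*{\varnothing}$, so its $\Phi$-entropy vanishes, while the empty product equals $1$ and so the claimed coefficient $1 - 1 = 0$ matches.

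For the inductive step I would first record two instances of the chain rule. Applied to $\eff$ at levels $(1, n)$:
\[ \Ent^\Phi_{\pi_n}(\eff) = \Exp_{\widehat\omega \sim \pi_1}\Ent^\Phi_{\pi^{(\widehat\omega)}_{n-1}}(\eff|_{\widehat\omega}) + \Ent^\Phi_{\pi_1}(\Upo_{1 \to n}\eff). \]
Applied to $\Upo_{\ell \to n}\eff$ at levels $(1, \ell)$, and using the tower law $\Upo_{1 \to \ell}\Upo_{\ell \to n} = \Upo_{1 \to n}$:
\[ \Ent^\Phi_{\pi_\ell}(\Upo_{\ell \to n}\eff) = \Exp_{\widehat\omega \sim \pi_1}\Ent^\Phi_{\pi^{(\widehat\omega)}_{\ell-1}}\bigl((\Upo_{\ell \to n}\eff)|_{\widehat\omega}\bigr) + \Ent^\Phi_{\pi_1}(\Upo_{1 \to n}\eff). \]
A small preliminary identity to verify separately is the restriction-to-link identity $(\Upo_{\ell \to n}\eff)|_{\widehat\omega} = \Upo_{\widehat\omega, \ell - 1 \to n - 1}(\eff|_{\widehat\omega})$, which is immediate because both sides at $\alpha \in X_{\widehat\omega}^{(\ell-1)}$ compute $\Exp_{\omega \sim \pi_n}[\eff(\omega) \mid \omega \supset \widehat\omega \sqcup \alpha]$.

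Next I would apply the inductive hypothesis inside each link $(X_{\widehat\omega}, \pi^{(\widehat\omega)})$ of rank $n-1$, for the up-walk from level $\ell - 1$ to $n - 1$. Writing $K(\widehat\omega) := \min\prod_{j=1}^{\ell-1}(1 - \lc_\Phi(\omega^{(j)}))$ over chains in $\Cc_\ell(X)$ starting with $\omega^{(1)} = \widehat\omega$, and $K^* := \min_{\widehat\omega \in X^{(1)}} K(\widehat\omega)$, the induction gives
\[ \Exp_{\widehat\omega \sim \pi_1}\Ent^\Phi_{\pi^{(\widehat\omega)}_{\ell-1}}\bigl(\Upo_{\widehat\omega, \ell - 1 \to n - 1}(\eff|_{\widehat\omega})\bigr) \le (1 - K^*)\,\Exp_{\widehat\omega \sim \pi_1}\Ent^\Phi_{\pi^{(\widehat\omega)}_{n-1}}(\eff|_{\widehat\omega}). \]
Plugging this into the second chain-rule identity, eliminating $\Exp_{\widehat\omega}\Ent^\Phi_{\pi^{(\widehat\omega)}_{n-1}}(\eff|_{\widehat\omega}) = \Ent^\Phi_{\pi_n}(\eff) - \Ent^\Phi_{\pi_1}(\Upo_{1 \to n}\eff)$ using the first, and invoking $\Ent^\Phi_{\pi_1}(\Upo_{1 \to n}\eff) \le \lc_\Phi(\varnothing)\,\Ent^\Phi_{\pi_n}(\eff)$, two lines of algebra yield
\[ \Ent^\Phi_{\pi_\ell}(\Upo_{\ell \to n}\eff) \le \bigl(1 - K^*(1 - \lc_\Phi(\varnothing))\bigr)\Ent^\Phi_{\pi_n}(\eff). \]
Since $K^*(1 - \lc_\Phi(\varnothing))$ is exactly the minimum of $\prod_{j=0}^{\ell - 1}(1 - \lc_\Phi(\omega^{(j)}))$ over chains $\varnothing \subsetneq \omega^{(1)} \subsetneq \cdots \subsetneq \omega^{(\ell)}$ in $\Cc_\ell(X)$, this closes the induction. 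The "in particular" bound follows by replacing each $\lc_\Phi(\omega^{(j)})$ by its worst-case value $\lc^{(j)}_\Phi(X, \pi)$.

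The main point of care is the bookkeeping around passing to links: confirming both the tower law $\Upo_{1 \to \ell}\Upo_{\ell \to n} = \Upo_{1 \to n}$ and the restriction identity $(\Upo_{\ell \to n}\eff)|_{\widehat\omega} = \Upo_{\widehat\omega, \ell - 1 \to n - 1}(\eff|_{\widehat\omega})$. Both are standard consequences of the tower property of conditional expectations, but it is important to match normalizations carefully so that the IH in the link couples cleanly with the chain rule at level $1$. Once these are in place, the core of the argument is a short, self-contained combination of the chain rule with the defining inequality of $\lc_\Phi(\varnothing)$.
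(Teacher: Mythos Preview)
Your proposal is correct and matches the paper's proof essentially line for line: the paper also inducts by passing to links at level $1$, applies the chain rule for $\Phi$-entropy twice (once to $\Upo_{\ell \to n}\eff$ and once to $\eff$), uses the restriction identity $(\Upo_{\ell \to n}\eff)|_x = \Upo_{x,\ell-1 \to n-1}(\eff|_x)$ implicitly, invokes the inductive hypothesis in each link, and then closes with the defining inequality for $\lc_\Phi(\varnothing)$. The only cosmetic difference is that the paper phrases the induction as being on the rank of the complex rather than on $\ell$, which is equivalent here.
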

As mentioned before our proof is inspired by the exposition in \cite{ChenE22} and follows the Garland method, \cite{Garland73}.  After submitting our results to arxiv, it came to our attention that the same proof technique for proving \cref{thm:pent-c} already appeared in \cite{Liu23} in the context of variance contraction -- though only for the case $\ell = n-1$. We will list a few immediate consequences of \cref{thm:pent-c}. The following bound is immediate given \cref{eq:varcont} and \cref{thm:pent-c},
\begin{corollary}[Spectral Gap Bound]\label{thm:mixing}
    Let $(X, \pi)$ be a simplicial complex of rank $n$. We have,
    \begin{equation}\label{eq:mixx}
        \Gap(\Duw_{n \leftrightarrow \ell}) \ge \frac{n - \ell}{n} \cdot \min \set*{\left. \prod_{i =
    0}^{\ell-1}\Gap(\Emm_{z_i})~\right|~ \varnothing =: \omega^{(0)} \subsetneq \omega^{(1)} \subsetneq \cdots \subsetneq \omega^{(\ell)} \in
        \Cc_{\ell - 1}(X) }.
    \end{equation}
	In particular, writing $\Gap_k(X, \pi) := \min_{x \in X^{(k)}} \Gap(\Emm_x)$
    we have
    \[ \Gap(\Duw_{n \leftrightarrow \ell}) \ge \frac{n - \ell}{n} \cdot \prod_{i = 0}^{\ell - 1} \Gap_i(X, \pi).\]
\end{corollary}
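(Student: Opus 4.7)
The plan is to specialize Theorem~\ref{thm:pent-c} to the choice $\Phi(t) = t^2$, for which $\Ent_\pi^\Phi(\bullet)$ is the variance functional $\Var_\pi(\bullet)$, and then to identify the resulting $\Phi$-entropy contraction of the up-walk with the spectral gap of the down-up walk. Recall that for $\Phi(t) = t^2$ the excerpt notes $\CF_\Phi(\Pii) = \Gap(\Pii^*\Pii)$, and by Fact~\ref{fac:folklore} we have $(\Upo_{\ell \to n})^* = \Doo_{n \to \ell}$ and $\Doo_{n\to \ell}\Upo_{\ell \to n} = \Duw_{n \lra \ell}$. Thus the ratio appearing inside $\CF_\Phi(\Upo_{\ell \to n})$ is exactly the Rayleigh quotient computing $\Gap(\Duw_{n \lra \ell})$, so
\[ \CF_\Phi\parens*{\Upo_{\ell \to n}} \;=\; \Gap\parens*{\Duw_{n \lra \ell}}. \]

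Next I would plug in Proposition~\ref{eq:varcont}, which for this choice of $\Phi$ computes the local contraction constant at any $\widehat\omega$ of size $j$ as $\lc_\Phi(\widehat\omega) = \tfrac{1}{n-j} + \tfrac{n-j-1}{n-j} \lambda_2(\Emm_{\widehat\omega})$. Consequently,
\[ 1 - \lc_\Phi(\widehat\omega) \;=\; \frac{n-j-1}{n-j}\bigl(1 - \lambda_2(\Emm_{\widehat\omega})\bigr) \;=\; \frac{n-j-1}{n-j}\cdot \Gap(\Emm_{\widehat\omega}), \]
which is the algebraic fact doing all the work. Substituting into the chain-wise lower bound~\eqref{eq:pentf} of Theorem~\ref{thm:pent-c} gives
\[ \Gap\parens*{\Duw_{n\lra \ell}} \;\ge\; \min_{\omega^{(0)} \subsetneq \cdots \subsetneq \omega^{(\ell)} \in \Cc_\ell(X)} \prod_{j=0}^{\ell-1} \frac{n-j-1}{n-j}\cdot \Gap(\Emm_{\omega^{(j)}}). \]

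The final step is the telescoping observation
\[ \prod_{j=0}^{\ell-1}\frac{n-j-1}{n-j} \;=\; \frac{n-1}{n}\cdot \frac{n-2}{n-1}\cdots \frac{n-\ell}{n-\ell+1} \;=\; \frac{n-\ell}{n}, \]
which pulls the dimensional prefactor out of the product and yields the chain-based bound~\eqref{eq:mixx}. Bounding each $\Gap(\Emm_{\omega^{(j)}})$ in the chain by $\Gap_j(X,\pi) := \min_{\widehat\omega \in X^{(j)}} \Gap(\Emm_{\widehat\omega})$ then immediately gives the simplified product form stated in the corollary. There is no real obstacle here: the argument is essentially a bookkeeping exercise on top of Theorem~\ref{thm:pent-c} once one spots that (i) the variance case of local contraction factors as a telescoping rational coefficient times a link spectral gap, and (ii) $\Phi$-entropy contraction for $\Upo_{\ell \to n}$ with $\Phi(t)=t^2$ \emph{is} the spectral gap of the composed down-up walk.
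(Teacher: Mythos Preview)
Your proposal is correct and is precisely the argument the paper has in mind: it declares the corollary ``immediate given \cref{eq:varcont} and \cref{thm:pent-c},'' and you have spelled out exactly those two ingredients, the identification $\CF_{t^2}(\Upo_{\ell\to n})=\Gap(\Duw_{n\lra\ell})$ via $\Duw_{n\lra\ell}=(\Upo_{\ell\to n})^*\Upo_{\ell\to n}$, and the telescoping of the factors $\tfrac{n-j-1}{n-j}\Gap(\Emm_{\omega^{(j)}})$ coming from Proposition~\ref{eq:varcont}.
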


We also prove a useful lemma that shows we can directly relate the entropy contraction constant to the log-Sobolev constant of the down-up walk,
\begin{lemma}\label{lem:lsi}
    Let $(X, \pi)$ be a simplicial complex of rank $n$. For any $\widehat\omega \in X$, we set 
    \[ \pi^{\star}_{\widehat\omega, k} = \min_{\widetilde\omega \in X_{\widehat\omega}^{(k)}} \pi^{(\widehat\omega)}_{k}(\widetilde\omega),~~ \Gap_{n-2}(X, \pi) = \min_{\widehat\omega \in X^{(n-2)}} \Gap(\Emm_{\widehat\omega}),~~\textrm{and}~~C_{\widehat\omega, k} = \begin{cases}
    1 & \pi^\star_{\widehat\omega, k} > 1/2, \\
    \frac{1 - 2\pi^\star_{\widehat\omega, k}}{\log\parens*{\parens*{\pi^\star_{\widehat\omega,k}}^{-1}-1}} &\textrm{otherwise}.
\end{cases}\]
    where $\Emm_{\widehat\omega}$ is the \ref{eq:link-def} of $\widehat\omega$ and $\Gap(\bullet)$ denotes the \ref{eq:gap-def}.
    
	Recalling that $\EC(\bullet)$ denotes the \ref{eq:pent-con} for $\Phi(t) = t \log t$, we have
    \begin{align*}
		\LS(\Duw_{n \lra \ell})&~\ge~
		\min\set*{\left. C_{\omega^{(\ell)}, n - \ell} 
  ~\right|~\omega^{(\ell)} \in X^{(\ell)}} \cdot \EC\parens*{\Upo_{\ell \to n}},\\
    \LS(\Udw_{n-1}) &~\ge~\frac{n-1}{n}\cdot \min\set*{\left. C_{\omega^{(n-2)}, 1} 
  ~\right|~\omega^{(n-2)} \in X^{(n-2)}} \cdot \Gap_{n-2}(X, \pi) \cdot \EC(\Upo_{n-2 \to n-1}).
    \end{align*}
	In particular, writing $\lec_i(X, \pi) := \min_{\widehat\omega \in X^{(i)}} \lec(\widehat\omega)$
	and $C_{\ell, k} = \min_{\widehat\omega \in X^{(\ell)}} C_{\widehat\omega, k}$,
		\begin{align*}
			\LS(\Duw_{n \lra \ell})&~\ge~C_{\ell, n - \ell} \cdot 
		\prod_{i = 0}^{\ell - 1} (1 - \lec_i(X, \pi))
		\end{align*}
\end{lemma}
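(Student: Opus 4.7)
My plan is to establish both log-Sobolev bounds by the same three-step template: (i) decompose the Dirichlet form of the left-hand-side walk into an expectation over links of a simpler link-level Dirichlet form; (ii) upgrade each link-level Dirichlet form into a link-level entropy via Fact~\ref{fac:evb}; and (iii) use the chain rule for entropy (Fact~\ref{fac:pent-cr} specialized to $\Phi(t) = t\log t$) together with the entropy contraction hypothesis to re-aggregate the expected link entropies into $\Ent_{\pi_r}(\eff^2)$ on the whole complex. The final claim $\LS(\Duw_{n \lra \ell}) \ge C_{\ell, n-\ell} \cdot \prod_{i=0}^{\ell-1}(1-\lec_i(X, \pi))$ then follows immediately by feeding the bound from Theorem~\ref{thm:pent-c} (specialized to $\Phi(t) = t\log t$) into the first inequality.

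For the first inequality I would start by writing $\Duw_{n \lra \ell} = \Doo_{n \to \ell}\Upo_{\ell \to n}$ with $\Doo_{n \to \ell} = \Upo_{\ell \to n}^*$ (Fact~\ref{fac:folklore}); a short calculation gives $\inpr*{\eff, (\Ide - \Duw_{n \lra \ell})\eff}_{\pi_n} = \norm{\eff}_{\pi_n}^2 - \norm{\Upo_{\ell \to n}\eff}_{\pi_\ell}^2$, and the law of total variance --- the $\Phi(t) = t^2$ case of Fact~\ref{fac:pent-cr} --- identifies this with $\Exp_{\widehat\omega \sim \pi_\ell}\Var_{\pi^{(\widehat\omega)}_{n-\ell}}(\eff|_{\widehat\omega})$. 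On each link, the conditional variance $\Var_{\pi^{(\widehat\omega)}_{n-\ell}}(\eff|_{\widehat\omega})$ is exactly the Dirichlet form of the one-step chain $\Jay_{\pi^{(\widehat\omega)}_{n-\ell}}$, so the first part of Fact~\ref{fac:evb} yields
\[
C_{\widehat\omega, n-\ell}\cdot\Ent_{\pi^{(\widehat\omega)}_{n-\ell}}\parens*{\eff|_{\widehat\omega}^2} \le \Var_{\pi^{(\widehat\omega)}_{n-\ell}}\parens*{\eff|_{\widehat\omega}}.
\]
Taking expectations over $\widehat\omega$ and combining with the entropy chain rule applied to $\eff^2$ at level $\ell$ --- which, together with $\Ent_{\pi_\ell}(\Upo_{\ell \to n}\eff^2) \le (1-\EC(\Upo_{\ell \to n}))\Ent_{\pi_n}(\eff^2)$, rearranges to $\EC(\Upo_{\ell \to n})\Ent_{\pi_n}(\eff^2) \le \Exp_{\widehat\omega \sim \pi_\ell}\Ent_{\pi^{(\widehat\omega)}_{n-\ell}}(\eff|_{\widehat\omega}^2)$ --- closes the chain and proves the first bound.

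For the second inequality the walk $\Udw_{n-1}$ acts on $X^{(n-1)}$, and its Dirichlet form admits a finer decomposition through Lemma~\ref{lem:garland}(3):
\[
\inpr*{\eff, \parens*{\Ide - \Udw_{n-1}}\eff}_{\pi_{n-1}} = \frac{n-1}{n}\Exp_{\widehat\omega \sim \pi_{n-2}}\inpr*{\eff|_{\widehat\omega}, \parens*{\Ide - \Emm_{\widehat\omega}}\eff|_{\widehat\omega}}_{\pi_1^{(\widehat\omega)}}.
\]
I would then apply Fact~\ref{fac:evb} not to a trivial one-step chain but to each link graph walk $\Emm_{\widehat\omega}$, invoking the second part of the fact to get $\LS(\Emm_{\widehat\omega}) \ge C_{\widehat\omega, 1}\cdot\Gap(\Emm_{\widehat\omega})$; the entropy chain rule at level $n-2$ together with the contraction $\EC(\Upo_{n-2 \to n-1})$ converts the resulting expected conditional entropies back to $\Ent_{\pi_{n-1}}(\eff^2)$, finishing the proof. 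The main subtle step throughout is the passage from Dirichlet form to entropy at the link level: for the first bound the link chain is the trivial one-step walk, so only the numerical constant $C_{\widehat\omega, n-\ell}$ appears, whereas for the second bound the link chain is $\Emm_{\widehat\omega}$, and one additionally pays the spectral gap factor $\Gap(\Emm_{\widehat\omega})$. This asymmetry is precisely what makes the two statements look different.
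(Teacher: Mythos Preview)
Your proposal is correct and follows essentially the same route as the paper: decompose the Dirichlet form over links (the paper phrases this via \cref{lem:garland}(1)--(2) rather than the law of total variance, but the two are equivalent since $\Duw_{\widehat\omega, n-\ell \lra 0} = \Jay_{\pi^{(\widehat\omega)}_{n-\ell}}$), apply \cref{fac:evb} linkwise, and then reassemble using the entropy chain rule and the contraction bound. The treatment of $\Udw_{n-1}$ via \cref{lem:garland}(3) and the second part of \cref{fac:evb} is likewise identical to the paper's argument.
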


We will prove this result in \cref{ss:lsi}.

We notice that \cref{thm:pent-c} in conjunction with \cref{lem:mlsfun} immediately implies the following corollary,
\begin{corollary}
	Let $(X, \pi)$ be an $n$-partite simplicial complex, 
	\[ \MLS(\Duw_{n \lra \ell}) \ge \EC(\Upo_{\ell \to n}) \ge \prod_{j = 0}^{\ell - 1} (1 -
	\lec_i(X)).\]
\end{corollary}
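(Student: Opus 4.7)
The plan is to prove this chain of inequalities in a few short steps, combining results already available in the excerpt. For the rightmost inequality $\EC(\Upo_{\ell \to n}) \ge \prod_{j = 0}^{\ell - 1} (1 - \lec_j(X, \pi))$, I would directly invoke \cref{thm:pent-c} specialized to $\Phi(t) = t \log t$, observing (as the paper reminds us) that $\EC(\bullet) = \CF_\Phi(\bullet)$ for this choice of $\Phi$ and $\lc_\Phi(\widehat\omega) = \lec(\widehat\omega)$. The final display of \cref{thm:pent-c} already gives the claimed product form.

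For the middle inequality $\MLS(\Duw_{n \lra \ell}) \ge \EC(\Upo_{\ell \to n})$, I would first use \cref{fac:folklore} to write the down-up walk as the composition $\Duw_{n \lra \ell} = \Doo_{n \to \ell} \Upo_{\ell \to n}$, with intermediate stationary measures $\pi_\ell$ and $\pi_n$ that propagate correctly (this is a consequence of $\Doo_{n \to \ell}$ and $\Upo_{\ell \to n}$ being mutual adjoints with respect to $\pi_n$ and $\pi_\ell$, so that \cref{fac:reversal} supplies the required stochasticity and stationarity). Then I would apply the data processing inequality \cref{fac:dpi} to the composition, which gives
\[
    \EC(\Duw_{n \lra \ell}) \;=\; \CF_\Phi(\Doo_{n \to \ell} \Upo_{\ell \to n}) \;\ge\; \CF_\Phi(\Upo_{\ell \to n}) \;=\; \EC(\Upo_{\ell \to n}).
\]
Finally, \cref{lem:mlsfun} gives $\MLS(\Duw_{n \lra \ell}) \ge \EC(\Duw_{n \lra \ell})$, and concatenating the two bounds yields the desired inequality $\MLS(\Duw_{n \lra \ell}) \ge \EC(\Upo_{\ell \to n})$.

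There is essentially no obstacle; the corollary is a packaging of \cref{thm:pent-c}, \cref{fac:folklore}, \cref{fac:dpi}, and \cref{lem:mlsfun}. The only point warranting a quick check is that the invocation of \cref{fac:dpi} is legitimate for the composition $\Doo_{n \to \ell} \Upo_{\ell \to n}$, which requires precisely the stationarity identities $\pi_n \Doo_{n \to \ell} = \pi_\ell$ and $\pi_\ell \Upo_{\ell \to n} = \pi_n$ noted above; once that is in place, the two steps combine into a one-line proof.
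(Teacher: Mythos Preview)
Your proposal is correct and follows essentially the same approach as the paper, which simply asserts that the corollary is an immediate consequence of \cref{thm:pent-c} and \cref{lem:mlsfun}. You are slightly more careful than the paper in explicitly invoking \cref{fac:dpi} (and \cref{fac:folklore}) to bridge $\EC(\Duw_{n\lra\ell})$ and $\EC(\Upo_{\ell\to n})$, since \cref{lem:mlsfun} as stated only applies to the square reversible operator $\Duw_{n\lra\ell}$; this extra step is indeed needed and the paper's one-line justification glosses over it.
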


\subsection{Proof of $\Phi$-Entropy Contraction Bounds, \cref{thm:pent-c}}
\begin{proof}[Proof of \cref{thm:pent-c}]
	For $\ell = 0$, the LHS is equal to 0 (\cref{rem:aoe}), thus we see the
	product in \cref{eq:pentl} and \cref{eq:pentf} is taken over an empty set
	and equals 1. Thus, equality holds in this case with $\CF_\Phi(\Upo_{0 \to
	n}) = 1$. We proceed by induction on the rank of the simplicial complex.
	We have by the chain rule for \ref{eq:pent-defn} (\cref{fac:pent-cr}),
		\[ \Ent^{\Phi}_{\pi_\ell}(\Upo_{\ell \to n}\eff) = 
			\Exp_{x \sim \pi_1} \Ent^{\Phi}_{\pi^{(x)}_{\ell - 1}}( \Upo_{x, \ell-1
		\to n -1} \eff|_x) + \Ent^{\Phi}_{\pi_1}(\Upo_{1 \to n} \eff).\]
		Let $c := \min_{x \sim X^{(1)}} \CF_\Phi(\Upo_{x, \ell - 1 \to n-
		1})$. By the induction hypothesis,
		\begin{equation}
			c\ge \min\set*{ \left. \prod_{j = 1}^{\ell - 1} (1 - \lc_\Phi(\omega^{(j)})) 
					\right| x \in X^{(1)}, x =:\omega^{(1)} \subsetneq \omega^{(2)} \subsetneq \cdots \subsetneq \omega^{(\ell-1)}\in
				\Cc_{\ell-1}(x)}.\label{eq:pent-ih}
		\end{equation}

	Hence,	we obtain,
	\begin{align*}
		\Ent^{\Phi}_{\pi_\ell}(\Upo_{\ell \to n} \eff)
		&~\le~ (1-c)\Exp_{x \sim \pi_1} \Ent^{\Phi}_{\pi^{(x)}_{n-1}}(
		\eff|_x )
		+  \Ent^{\Phi}_{\pi_1}(\Upo_{1 \to n} \eff).
	\end{align*}
	Now, using the chain-rule (\cref{fac:pent-cr}) for \ref{eq:pent-defn} once more, we
	have $\Exp_{x \sim \pi_1} \Ent^{\Phi}_{\pi^{(x)}_{n -1}}(
	\eff|_x ) = \Ent^{\Phi}_{\pi_n}(\eff) - \Ent_{\pi_1}(\Upo_{1 \to n} \eff)$.
	Substituting this in,
	\begin{align*}\Ent^{\Phi}_{\pi_\ell}(\Upo_{\ell \to n} \eff) 
		&~=~
		(1-c)\cdot \parens*{ \Ent_{\pi_n}^{\Phi}(\eff) -
			\Ent^{\Phi}_{\pi_1}(\Upo_{1 \to
		n} \eff)} + \Ent^{\Phi}_{\pi_1}(\Upo_{1 \to n} \eff),\\
		&~=~(1-c)\cdot\Ent^{\Phi}_{\pi_n}(\eff) + c \cdot \Ent_{
\pi_1}^{\Phi}(\Upo_{1 \to n}\eff).
\end{align*}
Now, using $\Ent_{\pi_1}^{\Phi}(\Upo_{1 \to n}\eff) \le \lc_\Phi(\varnothing)\cdot
\Ent_{\pi_n}^{\Phi}(\eff)$ we obtain
\[ \Ent^{\Phi}_{\pi_\ell}(\Upo_{\ell \to n} \eff) \le \parens*{1 - c \cdot
(1-\lc_\Phi(\varnothing))} \cdot \Ent^{\Phi}_{\pi_n}(\eff).\]
	Now, the statement follows from \cref{eq:pent-ih}.
\end{proof}

\subsection{Proof of the log-Sobolev Inequality, \cref{lem:lsi}}\label{ss:lsi}

\begin{proof}[Proof of \cref{lem:lsi}]
We follow a similar strategy to what we have followed to establish \cref{thm:pent-c}. We have,
\begin{align*}
    \inpr*{\eff, \parens*{\Ide - \Duw_{n \lra \ell}} \eff}_{\pi_n}
    &~=~\inpr*{\eff, \eff}_{\pi_n} - \inpr*{ \Upo_{\ell \to n} \eff, \Upo_{\ell \to n} \eff}_{\pi_\ell},\\
    &~=~\Exp_{\widehat\omega \sim \pi_{\ell}}\sqbr*{ \inpr*{\eff|_{\widehat\omega}, \eff|_{\widehat\omega}}_{\pi^{(\widehat\omega)}_{n - \ell}} - \Exp_{\widehat\omega \sim \pi_\ell}\inpr*{ \Upo_{\widehat\omega, 0 \to n - \ell}\eff|_{\widehat\omega}, \Upo_{\widehat\omega, 0 \to n - \ell}\eff|_{\widehat\omega}}_{\pi^{(\widehat\omega)}_{n -\ell}}},\\
    &~=~\Exp_{\widehat\omega \sim \pi_\ell}\sqbr*{ \inpr*{ \eff|_{\widehat\omega}, \parens*{ \Ide - \Duw_{\widehat\omega, n - \ell \lra 0}} \eff|_{\widehat\omega} }_{\pi_{n - \ell}^{(\widehat\omega)}}},\\
    &~=~\Exp_{\widehat\omega \sim \pi_{\ell}} \sqbr*{\inpr*{ \eff|_{\widehat\omega}, \parens*{\Ide - \Jay_{\pi^{(\widehat\omega)}_{n - \ell}}}\eff|_{\widehat\omega}}_{\pi^{(\widehat\omega)}_{n-\ell}}},
\end{align*}
where we have used Items (1) and (2) of \cref{lem:garland} to obtain the second equality and have written $\Jay_\mu = \one\mu$ for the clique with respect to $\mu$.

    Now, by \cref{fac:evb}, we have
    $\LS\parens*{\Jay_{\pi^{(\widehat\omega)}_{n-\ell}}} \ge C_{\widehat\omega, n - \ell}$ --
    where $C_{\widehat\omega, n - \ell}$ is defined as in the statement of \cref{lem:lsi}. Thus, writing $C_{\ell, n - \ell} := \min_{\widehat\omega \in X^{(\ell)}} C_{\widehat\omega, n - \ell}$, we have
    \begin{align*}
        \inpr*{\eff,\parens*{ \Ide - \Duw_{n \lra \ell}} \eff}_{\pi_n} 
        &~\ge~C_{\ell, n - \ell} \cdot \Exp_{\widehat\omega \sim \pi_\ell} \Ent_{\pi^{(\widehat\omega)}_{n - \ell}}\parens*{ \eff^2|_{\widehat\omega} },\\
        &~\ge~C_{\ell, n - \ell} \cdot \parens*{ \Ent_{\pi_n}(\eff^2) - \Ent_{\pi_\ell}\parens*{\Upo_{\ell \to n} \eff^2}},
    \end{align*}
    where we have used the chain rule for entropy, \cref{fac:pent-cr}, to obtain the last statement.

    Now, using the definition of \ref{eq:pent-con}, i.e.~writing for $\Phi(t) = t \cdot \log t$,
    \[ \Ent_{\pi_\ell}\parens*{ \Upo_{\ell \to n}\eff^2} \le \parens*{1 - \EC\parens*{\Upo_{\ell \to n}}} \cdot \Ent_{\pi_n}\parens*{\eff^2}.\]
    Thus,
    \[ \inpr*{\eff, \parens*{\Ide - \Duw_{n \lra \ell}} \eff}_{\pi_n} \ge C_{\ell, n - \ell} \cdot \EC\parens*{\Upo_{\ell \to n}} \cdot \Ent_{\pi_n}\parens*{ \eff^2}.\]
    Now, the first statement follows by appealing to the definition of the log-Sobolev inequality (\ref{eq:lsi-def}) and the log-Sobolev constant (\cref{eq:lsc-def}). The second statement concerning $\Duw_{ n \lra \ell}$ now immediately follows from \cref{thm:pent-c}.

    To obtain the log-Sobolev inequality for $\Udw_{n-1}$, we make use of Items (1) and (3) in \cref{lem:garland} and proceed as above. We have,
    \begin{align*}
        \inpr*{ \eff, \parens*{\Ide - \Udw_{n-1}} \eff}_{\pi_{n-1}} &~=~\frac{n-1}{n} \cdot\Exp_{\widehat\omega \sim \pi_{n-2}}\sqbr*{ \inpr*{ \eff|_{\widehat\omega}, \parens*{ \Ide - \Emm_{\widehat\omega}} \eff|_{\widehat\omega} }_{\pi_{1}^{(\widehat\omega)}}}
    \end{align*}
    Now, appealing to \cref{fac:evb}, we obtain $\LS(\Emm_{\widehat\omega}) \ge \Gap(\Emm_{\widehat\omega}) \cdot C_{\widehat\omega, 1}$ for all $\widehat\omega \in X^{(n-2)}$. Thus,
    \begin{align} \inpr*{ \eff, \parens*{\Ide - \Udw_{n-1}} \eff}_{\pi_{n-1}} &~\ge~\frac{n-1}{n} \cdot C_{n-2, 1} \cdot \Gap_{n-2}(X, \pi) \cdot \Exp_{\widehat\omega \sim \pi_{n-2}} \Ent_{\pi^{(\widehat\omega)}_1}(\eff^2|_{\widehat\omega}),\notag\\
    &~=~\frac{n-1}{n} \cdot C_{n-2, 1} \cdot \Gap_{n-2}(X, \pi) \cdot \parens*{ \Ent_{\pi_{n-1}}(\eff^2) - \Ent_{\pi_\ell}\parens*{ \Upo_{n-2 \to n-1} \eff^2}},\notag\\
    &~=~\frac{n-1}{n} \cdot C_{n-2, 1} \cdot \Gap_{n-2}(X, \pi) \cdot \EC\parens*{\Upo_{n-2 \to n -1}} \cdot \Ent_{\pi_{n-1}}(\eff^2),\label{eq:yestag}
    \end{align}
    where we have appealed to the chain rule for entropy, \cref{fac:pent-cr}, to obtain the first equality. 
\end{proof}

\section{Application: Sampling Using the Expanderized Walks}\label{sec:hijack}
In the present section, we prove that the expanderized walks rapidly mix for the (i) list-coloring problem and (ii) Ising models with bounded interaction matrix. First, we describe the random sampling problems we are interested in mention the state of the art sampling results we are interested in expanderizing, and state our results. We will then presents proofs for our applications in \cref{ss:bdedg} and \cref{ss:ising} respectively.

A list coloring instance $(G, \Lap)$ consists of a graph $G = (V, E)$ and a collection of colours $\Lap = (L(v))_v$ for every vertex. A valid list coloring of $(G, \Lap)$ is then a set of pairs $\set*{(v, c(v))}_{v \in V}$ satisfying the following two conditions,
\begin{enumerate}
    \item $c(v) \in L(v)$ for all vertices $v \in L$,
    \item $c(u) \ne c(v)$ for all edges $\set*{u, v} \in E$.
\end{enumerate}
We will write $(X^{(G, \Lap)}, \uni^{(G, \Lap)})$ for the simplicial complex of proper list coloring of $(G, \Lap)$ weighted by the uniform distirbution $(G, \Lap)$ on all list colorings, i.e.~
\[X^{(G, \Lap)} = \set*{\left. \alpha \subset \bigsqcup_{v \in V} \set*{v} \times L(v)~\right|~\textrm{there exists a proper list coloring $\chi$ of $(G, \Lap)$ such that}~\alpha \subset \chi}.\]
We will show that the expanderized walks rapidly mix when sampling list colorings of bounded degree graphs. Further, the lower bound in the number of colors matches with the state of the art -- see \cite{ChenDMPP19, Liu21, BlancaCPCPS22}.
\begin{theorem}\label{thm:colhijack}
Let $(G, \Lap)$ be a list-coloring instance where $G = (V, E)$ is a graph on $n$ vertices of maximum degree $\Delta \le O(1)$ and $H_n$ be a labelled graph on $[n]$ of constant \ref{eq:two-sided} $\lambda(H_n)$ bounded away from 1. Then, for some absolute constant $\ee \approx 10^{-5}$,\footnote{See \cite{ChenDMPP19}} and any $K = O(1)$, if $(11/6 + K)\Delta \ge |L(v)| \ge (11/6 - \ee) \cdot \Delta$ for all vertices $v \in V$, the mixing time of the expanderized walks $\Papx_{n- 1} = \UpDown_{\ell \lra n}(X^{(G, \Lap)}, \uni^{(G, \Lap)}, H^2)$ and $\Paqx_n = \DownUp_{n \lra \ell}(X^{(G, \Lap)}, \uni^{(G, \Lap)}, H)$ satisfies,
\[ \Tmix(\Papx_{n-1}, \ee) \le C_1 \cdot n \parens*{\log n + \log \ee^{-1} }~~\textrm{and}~~\Tmix(\Paqx_n, \ee) \le C_2 \cdot n \parens*{ \log n + \log \log \ee^{-1}},\]
where $C_1$ and $C_2$ are universal constants not depending on $n$ but on $\Delta$.
\end{theorem}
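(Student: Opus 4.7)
The proof reduces, essentially in black-box fashion, to known log-Sobolev estimates for the ordinary up-down walk on the list-coloring complex, combined with our transfer corollaries. Concretely, the plan is to first establish $\LS(\Udw_{n-1}) = \Omega(1/n)$ on $(X^{(G,\Lap)}, \uni^{(G,\Lap)})$ with an implicit constant depending only on $\Delta = O(1)$, and then apply \cref{cor:entropicstuff} and \cref{cor:baked} to transfer this to the expanderized walks. The $\Gap^\star(H_n^2)$ loss that appears in those corollaries is $\Theta(1)$ because $\lambda(H_n^2) = \lambda(H_n)^2$ is bounded away from $1$ by hypothesis, and this loss is then absorbed into the constant depending on the expander family.

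For the log-Sobolev bound on the standard walk, I would invoke the optimal-mixing results of \cite{Liu21, BlancaCPCPS22}, which establish entropy contraction of order $1/n$ for the Glauber dynamics under the hypotheses $\Delta = O(1)$ and $(11/6 + K)\Delta \ge |L(v)| \ge (11/6 - \ee)\Delta$. To extract LSI rather than only MLSI or entropy contraction, \cref{lem:lsi} provides the bridge: it lower-bounds $\LS(\Udw_{n-1})$ by the product of $\EC(\Upo_{n-2 \to n-1})$, the local gap $\Gap_{n-2}(X, \pi)$, and a factor depending on the marginal-boundedness constant $b$. In the list-coloring setting, every available color at any vertex $v$ is taken with probability lying in $[1/|L(v)|, 1/((5/6 - \ee)\Delta)]$ conditional on any feasible partial assignment, giving $b = \Omega(1/\Delta) = \Omega_\Delta(1)$; and every codimension-two link is a bipartite-like graph on the two remaining coordinates whose spectral gap depends only on $\Delta$. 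All hidden constants remain $\Omega_\Delta(1)$.

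With $\LS(\Udw_{n-1}) = \Omega_\Delta(1/n)$ in hand, the rest is immediate. For $\Papx_{n-1}$, applying \cref{cor:entropicstuff} with auxiliary graph $H_n^2$ yields $\LS(\Papx_{n-1}) \ge \LS(\Udw_{n-1}) \cdot \Gap^\star(H_n^2) = \Omega_\Delta(1/n)$; since $\uni^{(G,\Lap)}$ is uniform on a set of size at most $((11/6 + K)\Delta)^n$, the logarithm of the reciprocal of its minimum mass is $O_\Delta(n)$, and the standard LSI-to-mixing-time conversion yields the claimed $O_\Delta(n(\log n + \log \ee^{-1}))$ bound. For $\Paqx_n$, \cref{cor:baked} gives $\EC(\Paqx_n) \ge \LS(\Udw_{n-1}) \cdot \Gap^\star(H_n^2) = \Omega_\Delta(1/n)$, and plugging this into \cref{thm:entmix} produces a bound of the same order.

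The main step that requires actual verification is the marginal-boundedness and local-gap input to \cref{lem:lsi}; once these are in place, the theorem is a one-line consequence of the transfer corollaries proved in \cref{sec:expander} together with the cited optimal-mixing bounds. No obstacle of a different flavor is expected to arise, and this is why the paper describes the proof as following in almost black-box fashion from \cite{Liu21, BlancaCPCPS22}.
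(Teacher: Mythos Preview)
Your overall strategy is exactly the one the paper uses: obtain $\LS(\Udw_{n-1}) = \Omega_\Delta(1/n)$ via \cref{lem:lsi} from the entropy-contraction input of \cite{Liu21, BlancaCPCPS22}, then transfer to the expanderized walks via \cref{cor:baked}. There is, however, one step you elide. The results of \cite{Liu21, BlancaCPCPS22} (packaged here as \cref{cor:colent}) give $\EC(\Upo_{n-1\to n}) = \Omega(1/n)$, i.e.\ contraction of the up-operator from level $n-1$ to level $n$. But the second display in \cref{lem:lsi} -- the one bounding $\LS(\Udw_{n-1})$ -- feeds on $\EC(\Upo_{n-2\to n-1})$, one level \emph{lower}. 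To bridge this the paper explicitly invokes \cref{lem:lees} (Lee's down-shifting lemma), which converts a bound $\EC(\Upo_{n-1\to n})\ge (Cn)^{-1}$ into $\EC(\Upo_{n-2\to n-1})\ge ((n-1)(C+1))^{-1} = \Omega(1/n)$. You should make this step explicit; without it there is no direct route from the Glauber-dynamics input to the quantity \cref{lem:lsi} actually requires. The remaining inputs to \cref{lem:lsi} (marginal boundedness and the local gap $\Gap_{n-2}$) are supplied by \cref{prop:colfolk}, which gives the precise constants; your informal description of these is correct in spirit.

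A minor remark on the $\Papx_{n-1}$ side: the paper bounds $\EC(\Papx_{n-1})$ directly via \cref{cor:baked} and then applies \cref{thm:entmix}, rather than passing through $\LS(\Papx_{n-1})$ via \cref{cor:entropicstuff} as you propose. Both routes are valid and produce the same mixing-time bound.
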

\begin{remark}
    By \cref{thm:alon}, we can pick a constant degree graph as the graph $H_n$ in the statement of \cref{thm:colhijack}. Thus, a single step of the random walk can be implemented using $O(1)$-random bits -- making the total number of random bits used in the random walk $O(n \log n)$. In contrast, the standard down-up walk or the up-down walk requires $O(\log n)$ random bits to perform a single step, and $O(n \log^2n)$ random bits in total.
\end{remark}

We recall that the Ising model $\mu_{\Jay, \etch}: \set*{+1, -1}^n \to \RRp$ with interaction matrix $\Jay \in \RR^{n \times n}$ and external field $\etch \in \RR^n$ from statistical physics is a probability distribution on the hypercube satisfying,
\begin{equation}
    \mu_{\Jay, \etch}(\eks) = \frac{ \exp\parens*{ \frac{1}{2} \inpr*{\eks, \Jay \eks}_{\ell_2} + \inpr*{\etch, \eks}_{\ell_2}}}{Z(\Jay, \etch)}~~\textrm{where}~~Z(\Jay, \etch) = \sum_{\eks \in \set*{+1, -1}^n} \exp\parens*{ \frac{1}{2} \inpr*{\eks, \Jay \eks}_{\ell_2} + \inpr*{\etch, \eks}_{\ell_2}}
\end{equation}
We notice that we can identify any $\eks \in \set*{+1, -1}^n$ with a value by using the encoding,
\[ \eks^{\pm} = \set*{ (i, \eks(i)) \mid i \in [n] }. \]
Thus, we define the simplicial complex $(X^{(\Jay, \etch)}, \mu_{\Jay, \etch})$, where
\[ X^{(\Jay, \etch)} = \set*{ \alpha \subset [n] \setminus \set*{\pm 1} \mid \textrm{ for each $i \in [n]$, $\alpha$ contains at most one element $(i, x)$}}.\]

We show that our expanderize walks mix rapidly assuming that the external field $\etch \in \RR^n$ is well-behaved, i.e.~$\norm{\etch}_{\ell_\infty}$ does not grow with $n$,
\begin{theorem}\label{thm:ishijack}
Let $(X^{(\Jay, \etch)}, \mu_{\Jay, \etch})$ be the simplicial complex defined above corresponding to the Ising model defined by the interaction matrix $\Jay \in \RR^{n \times n}$ and external field $\etch \in \RR^n$ and $H_n$ a constant degree graph whose \ref{eq:two-sided} is a constant bounded away from 1. Under the assumption that $\Jay$ is PSD and satisfies $\norm*{\Jay}_\opp \le 1$, the following hold,  
{\small
\[ \Tmix(\Papx_{n-1}, \ee) \le \frac{ O\parens*{\norm*{\etch}_{\ell_\infty}} \cdot n}{(1 - \norm*{\Jay}_\opp)^2} \parens*{\log( n  + \norm*{\etch}_{\ell_1} )+ \log \ee^{-1} }~~\textrm{and}~~\Tmix(\Paqx_n, \ee) \le \frac{O\parens*{\norm*{\etch}_{\ell_\infty}} \cdot n}{(1 - \norm*{\Jay}_\opp)^2} \parens*{\log( n  + \norm*{\etch}_{\ell_1} )+ \log \ee^{-1} },\]
}
where the $O(\bullet)$ notation hides a universal constant not depending on $n, \Jay,$ or $\etch$. Furthermore, the term $(1 - \norm*{\Jay}_\opp)^2$ in the denominator can be replaced with $(1 -\norm*{\Jay}_\opp)(1- \theta)$ if the maximum operator norm of any two-by two principal submatrix of $\Jay$ is $\theta$.
\end{theorem}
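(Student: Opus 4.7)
The plan is to reduce the mixing-time estimate to the entropy-contraction transfer of \cref{cor:baked}, by first producing a log-Sobolev bound for the standard up-down walk $\Udw_{n-1}$ on the Ising simplicial complex $(X^{(\Jay,\etch)}, \mu_{\Jay,\etch})$, and then invoking \cref{thm:entmix}.

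First I would appeal to the entropic-independence bound of \cite{AnariJKP22} (building on the spectral-independence estimate of \cite{EldanKZ22}): for PSD $\Jay$ with $\norm*{\Jay}_\opp \le 1$ one has $\lec_i(X^{(\Jay,\etch)}) \le O\parens*{\tfrac{1}{(1-\norm*{\Jay}_\opp)(n-i)}}$ at every level $i$, and combined with \cref{lem:lees} this yields $\EC(\Upo_{n-2 \to n-1}) = \Omega\parens*{\tfrac{1-\norm*{\Jay}_\opp}{n}}$. Feeding this into the second inequality of \cref{lem:lsi} requires two further ingredients: (i) the link gap $\Gap_{n-2}(X)$, which, since each $(n-2)$-link is a two-spin Ising model governed by a $2\times 2$ principal submatrix of $\Jay$, satisfies $\Gap_{n-2}(X) \ge 1-\norm*{\Jay}_\opp$ (or the sharper $1-\theta$ when every $2\times 2$ principal submatrix of $\Jay$ has operator norm at most $\theta$); and (ii) the marginal-bound factor $C_b$, which for Ising models with external field $\etch$ and PSD $\Jay$ of operator norm at most $1$ evaluates to $\Omega(1/\norm*{\etch}_{\ell_\infty})$. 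Multiplying the three factors would give the target
\[
\LS(\Udw_{n-1}) \;\gtrsim\; \frac{(1-\norm*{\Jay}_\opp)^2}{\norm*{\etch}_{\ell_\infty}\cdot n},
\]
with the $(1-\norm*{\Jay}_\opp)(1-\theta)/(\norm*{\etch}_{\ell_\infty}\cdot n)$ variant in the refined case.

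\cref{cor:baked} then transfers the bound directly: $\EC(\Papx_{n-1}),\; \EC(\Paqx_n) \gtrsim \LS(\Udw_{n-1})\cdot \Gap^\star(H^2)$, and since $H_n$ has bounded degree and $\lambda(H_n)$ bounded away from $1$, $\Gap^\star(H^2)$ is a positive constant independent of $n$. Plugging the resulting entropy-contraction bound into \cref{thm:entmix}, together with the easy estimate $\min_\omega \mu_{\Jay,\etch}(\omega) \ge \exp(-O(n+\norm*{\etch}_{\ell_1}))$ (so that $\log\log(1/\min_\omega \pi_n(\omega)) = O(\log(n+\norm*{\etch}_{\ell_1}))$), delivers the claimed mixing-time bounds for both $\Papx_{n-1}$ and $\Paqx_n$.

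The main technical hurdle is justifying the marginal-bound claim $b \ge \exp(-O(\norm*{\etch}_{\ell_\infty}))$ at level $n-2$, which is what yields $C_b = \Omega(1/\norm*{\etch}_{\ell_\infty})$. A naive estimate on the effective external field after pinning $n-2$ coordinates introduces a term $\sum_j |J_{ij}|$ that is not controlled by $\norm*{\Jay}_\opp$ alone, so obtaining the stated clean dependence on $\norm*{\etch}_{\ell_\infty}$ forces one to exploit the structural properties of PSD Ising models with bounded operator norm -- for instance via an approximate-exchange-style reduction in the spirit of \cite{AnariLOVV21, AnariJKP22} -- in order to replace the worst-case effective-field bound by one depending only on $\norm*{\etch}_{\ell_\infty}$.
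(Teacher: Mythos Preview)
Your plan tracks the paper's argument essentially step by step: apply \cref{lem:lees} to the known bound $\EC(\Upo_{n-1\to n})\ge (1-\norm*{\Jay}_\opp)/n$ to obtain $\EC(\Upo_{n-2\to n-1})=\Omega((1-\norm*{\Jay}_\opp)/n)$; feed this, a link-gap estimate, and a marginal-bound constant into the second inequality of \cref{lem:lsi} to get $\LS(\Udw_{n-1})$; transfer to the expanderized walks via \cref{cor:baked}; and finish with \cref{thm:entmix} and the elementary estimate $\min_{\eks}\mu_{\Jay,\etch}(\eks)\ge\exp(-O(n+\norm*{\etch}_{\ell_1}))$. This is exactly the paper's route, and both the link-gap and min-probability computations are handled in the paper precisely as you sketch (\cref{prop:is-conds} and the end of the proof of \cref{thm:ishijack}).

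The one place you part ways is the marginal bound at level $n-2$. The paper does \emph{not} invoke any approximate-exchange machinery here; it argues directly in \cref{prop:is-conds} that each $(n-2)$-link is a two-spin Ising model with interaction matrix the $2\times 2$ principal submatrix $\Jay_{\widehat\omega}$ and with external field taken to be the \emph{restriction} $(\etch(a),\etch(b))$, and then bounds the four Boltzmann weights by $\exp(\pm O(\norm*{\etch}_{\ell_\infty}))$ and takes ratios. Your observation that the true effective field after pinning should pick up the cross-term $\sum_{j\notin\{a,b\}}J(a,j)z_j$, which is not controlled by $\norm*{\Jay}_\opp$ alone, is correct, and the paper's written argument does not address it; it simply uses the restricted field. (Note the link-gap part of \cref{prop:is-conds} is unaffected by this, since the gap bound from \cref{thm:is-mix} depends only on the interaction matrix, not on the external field.) So the detour through \cite{AnariLOVV21,AnariJKP22} you contemplate is not what the paper does --- but the difficulty you flagged in the elementary route is a genuine one that the paper's proof as written glosses over.
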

\begin{remark}
    By \cref{thm:alon}, we can pick a constant degree graph as the graph $H_n$ in the statement of \cref{thm:ishijack}. Thus, ignoring numerical difficulties in simulating biased coins, a single step of the random walk can be implemented using $O(1)$-random bits -- making the total number of random bits used in the random walk $O(n \log n)$ when $\norm*{\etch}_{\ell_\infty} = O(1)$. In contrast, the standard down-up walk or the up-down walk requires $O(\log n)$ random bits to perform a single step and $O(n \log^2n)$ random bits in total.
\end{remark}
\subsection{List Coloring of Bounded Degree Graphs}\label{ss:bdedg}

We make the following observations about the complex associated to proper list colorings,
\begin{proposition}[Folklore]\label{prop:colfolk}
    Let $(G = (V, E), \Lap)$ be a list-coloring instance. Let $K \in \NN$ satisfy $\deg(v) + K_+ \ge |L(v)| \ge \deg(v) + K_-$ for all $v \in V$. Then, writing $(Y, \pi) := (X^{(G, \Lap)}, \uni^{(G, \Lap)})$ we have,
    \[ \lambda_2\parens*{ \Emm_{\widehat\chi} } \le \frac{1}{K_-}~~\textrm{for all}~~\widehat\chi \in Y^{(n-2)},\]
    where $\Emm_{\widehat\chi}$ is the \ref{eq:link-def} of the face $\widehat\chi$.
    
    Similarly, for any $\widehat\chi \in Y^{(n-2)}$, we have $\min_{(u, c) \in Y_{\widehat\chi}^{(1)}} \pi_1^{(\widehat\chi)}(u, c) \ge \frac{K_-}{(\Delta + K_+)^2}$ where $\Delta = \max_{v \in V} \deg(v)$.
\end{proposition}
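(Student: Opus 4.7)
The plan is to analyze the link at an arbitrary codimension-$2$ face $\widehat\chi \in Y^{(n-2)}$ directly. Such a $\widehat\chi$ colors all but two vertices $u, v \in V$, so the link $Y_{\widehat\chi}$ is determined by the sets $A_u \subseteq L(u)$ and $A_v \subseteq L(v)$ of colors that remain available at $u, v$ (i.e.\ not used by any colored neighbor under $\widehat\chi$) together with whether $\{u, v\} \in E(G)$. The degree-counting observation is that each colored neighbor forbids at most one color, so $|A_w| \ge |L(w)| - \deg(w) \ge K_-$ when $\{u,v\} \notin E(G)$, while $|A_w| \ge K_- + 1$ when $\{u, v\} \in E(G)$ (since the other uncolored vertex in $\{u,v\}$ contributes no forbidden color to $w$). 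The upper bound $|A_w| \le |L(w)| \le \Delta + K_+$ is trivial.

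For the spectral bound, $\Emm_{\widehat\chi}$ is bipartite between $\{u\} \times A_u$ and $\{v\} \times A_v$, since two 1-faces of the link sharing one of $u, v$ cannot lie in a common 2-face. If $\{u,v\} \notin E(G)$ then every pair in $A_u \times A_v$ is a valid 2-face, so the link graph is the complete bipartite graph $K_{|A_u|, |A_v|}$ whose uniform random walk has spectrum $\{1, 0, \ldots, 0, -1\}$, and hence $\lambda_2 = 0$. The substantive case is $\{u,v\} \in E(G)$, where the link graph is $K_{a,b}$ with the matching $\{\{(u,c), (v, c)\} : c \in A_u \cap A_v\}$ removed, and $a := |A_u|, b := |A_v|$ are both $\ge K_- + 1$. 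I would use the bipartite identity $\lambda_2(\Emm_{\widehat\chi})^2 = \lambda_2(\Emm_{\widehat\chi}^2|_{A_u})$ and block-diagonalize: functions that are constant on each part of the partition $A_u = (A_u \cap A_v) \sqcup (A_u \setminus A_v)$ form a $2$-dimensional invariant subspace whose non-trivial eigenvalue is extracted from an explicit $2 \times 2$ matrix, while the orthogonal complement decouples into two subspaces on which $\Emm_{\widehat\chi}^2|_{A_u}$ acts as a scalar plus a rank-one correction. A short computation then shows every non-trivial eigenvalue is at most $1/(\min(a,b) - 1)^2 \le 1/K_-^2$, yielding $\lambda_2(\Emm_{\widehat\chi}) \le 1/K_-$. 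This spectral calculation is the only non-routine step.

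For the minimum marginal, the chain-rule identity $\pi_1^{(\widehat\chi)}(w, c) = \tfrac{1}{2} \Pr_{\beta \sim \pi_2^{(\widehat\chi)}}[\beta \ni (w, c)]$ coming from \cref{prop:multstep}, together with the uniform weighting of $\pi_2^{(\widehat\chi)}$ on valid completions, reduces the statement to a direct count. The total number of 2-faces is $|A_u| \cdot |A_v| - |A_u \cap A_v| \cdot \mathbf{1}[\{u,v\} \in E(G)] \le (\Delta + K_+)^2$, while the number containing a fixed $(w, c)$ is at least $\min(|A_u|, |A_v|) - 1 \ge K_-$ by the degree bounds above. Dividing recovers the claimed bound up to an absorbable factor of $2$.
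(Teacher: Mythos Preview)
Your proposal is correct, and the marginal-bound argument is essentially the paper's own (the paper counts completions just as you do; it too suppresses the factor of $\tfrac12$, so your acknowledgment of it is if anything more careful).

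The spectral bound is where you and the paper genuinely diverge. The paper does not compute eigenvalues: it writes the adjacency matrix of the link as $\Bee - \Aye$ with $\Bee$ the complete bipartite adjacency and $\Aye$ the adjacency of the forbidden matching, and then bounds
\[
\lambda_2(\Emm_{\widehat\chi}) = \lambda_2\!\bigl(\Dee^{-1}(\Bee - \Aye)\bigr) \le \bigl(\lambda_2(\Bee) + \|\Aye\|_{\opp}\bigr)\cdot\|\Dee^{-1}\|_{\opp} = (0+1)\cdot \tfrac{1}{d_{\min}} \le \tfrac{1}{K_-},
\]
using that $\Bee$ has rank two (hence $\lambda_2(\Bee)=0$), that a matching has operator norm $1$, and that the minimum degree in the link is at least $K_-$. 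This is a one-line Weyl-type perturbation argument and avoids any case analysis on $A_u\cap A_v$ versus $A_u\setminus A_v$. Your route---square the bipartite walk, restrict to one side, and exploit the permutation symmetry preserving the partition $A_u=(A_u\cap A_v)\sqcup(A_u\setminus A_v)$---is longer but more explicit, and in principle recovers the exact spectrum rather than just an upper bound. (A small quibble: on the orthogonal complement of the constants-on-each-block subspace, $\Emm_{\widehat\chi}^2|_{A_u}$ actually acts as a pure scalar on each of the two pieces, not ``scalar plus rank-one''; but this only simplifies your computation.) Both approaches land on the same final bound $1/K_-$; the paper's buys brevity, yours buys transparency about where the bound is tight.
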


\begin{proof}[Proof Sketch]
    The face $\widehat\chi$ fixes the color of all but two distinct vertices $u, v \in V$. Writing $A_{\widehat\chi}(u)$ and $A_{\widehat\chi}(v)$ for the colors available to $u$ and $v$ by assigning every other vertex a color according to $\widehat\chi$, we observe that the graph $G_{\widehat\chi}$ which underlies the \ref{eq:link-def} $\Emm_{\widehat\chi}$ is bipartite with the partition $\set u \times A_{\widehat\chi}(u)$ and $\set v \times A_{\widehat\chi}(v)$. Further, if $\set{u, v} \not\in E$, this graph is a complete bipartite graph and thus $\lambda_2(\Emm_{\widehat\chi}) = 0$. Otherwise, the only edges missing from this graph are the pairs,
    \begin{equation}\label{eq:missing}\set*{\set*{ (u, c), (v, c)} \mid c \in A_{\widehat\chi}(u) \cap A_{\widehat\chi}(v)}.
    \end{equation}
    In particular, the edges missing from $G_{\widehat\chi}$ form a matching. Thus, the adjacency matrix of $G_{\widehat\chi}$ is of the form $\Bee - \Aye$ where $\Bee$ is the adjacency matrix of the complete bipartite graph on this bipartition and $\Bee$ and $\Aye$ corresponds to the adjacency matrix corresponding to the edges in $\Aye$. Writing $\Dee$ for the degree matrix of $G_{\widehat\chi}$, we observe $\Emm_{\widehat\chi} = \Dee^{-1}(\Bee - \Aye)$.

    Now, by using eigenvalue-interlacing can conclude
    \[ \lambda_2(\Emm_{\widehat\chi})  = \lambda_2(\Dee^{-1} \cdot (\Bee - \Aye)) \le \parens*{\lambda_2(\Bee) + \norm*{\Aye}_{\opp}} \cdot \norm*{\Dee^{-1}}_\opp\]
    We notice now that $\lambda_2(\Bee) = 0$ as it is the adjacency matrix of the complete bipartite graph, $\norm{\Aye}_\opp = 1$ as it is the adjacency matrix of a matching, and $\norm*{\Dee^{-1}}_\opp = 1/d$ where $d$ is the minimum number of colors $c' \in A_{\widehat\chi}(v) \setminus \set c$ for any $c \in A_{\widehat\chi}(u)$, or vice versa. We notice that a color $c \in L(v)$ is precisely not in $A_{\widehat\chi}(v)$, because there is a neighbor $w \ne v$ of $u$ colored with the same color in $\widehat\chi$. There are $\deg(u) - 1$ such neighbors, thus $d \ge |L(u)| - \deg(u) \ge K_-$, which concludes the proof of the claim about the eigenvalue.

    Let $c \in A_{\widehat\chi}(u)$ be arbitrary. Suppose $\set{u,v} \not\in E$. Then, it is easy to observe
    \[ \pi_1^{(\widehat\chi)}(u, c) = \frac{1}{|A_{\widehat\chi}(u)|} \ge \frac{1}{K_-},\]
    as in a random coloring conditional on $\widehat\chi$ any color $c \in A_{\widehat\chi}(u)$ is equally likely.
    
    If however $\set{u,v} \in E$, then we can easily see that there are $|A_{\widehat\chi}(u)| \cdot |A_{\widehat\chi}(v)| - |A_{\widehat\chi}(u) \cap A_{\widehat\chi}(v)|$ ways of completing $\widehat\chi$ to a full coloring, and $|A_{\widehat\chi}(v) \setminus \set c|$ of them have $u$ colored with $c$. Thus,
    \[ \pi^{(\widehat\chi)}_1(u, c) \ge \frac{ |A_{\widehat\chi}(v) \setminus \set c|}{|A_{\widehat\chi}(u)| \cdot |A_{\widehat\chi}(v)| - |A_{\widehat\chi}(u) \cap A_{\widehat\chi}(v)|} \ge \frac{K_-}{(\max\set*{\deg(u), \deg(v)}+ K_+)^2}.\]
\end{proof}
We recall the following result of \cite{Liu21, BlancaCPCPS22},
\begin{theorem}[Theorem 1.2, \cite{Liu21, BlancaCPCPS22}]
    Let $(G, \Lap)$ be a list-coloring instance where $G = (V, E)$ is a graph on $n$ vertices of maximum degree $\Delta \le O(1)$. Then, for some absolute constant $\ee \approx 10^{-5}$,\footnote{See \cite{ChenDMPP19}} if $|L(v)| \ge (11/6 - \ee) \cdot \Delta$ for all vertices $v \in V$, then the \ref{eq:gap-def}, modified log-Sobolev (\cref{eq:mlsc-def}), and the log-Sobolev constants (\cref{eq:lsc-def}) of the down-up walk $\Duw_n = \DownUp_{n \lra n -1}(X^{(G, \Lap)}, \uni^{(G, \Lap)})$ on the collection of proper list colorings is all $\Omega(n^{-1})$.
\end{theorem}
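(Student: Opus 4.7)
The strategy is to derive the three bounds in parallel by combining an \emph{external} spectral / entropic independence input with the local-to-global machinery developed earlier in the paper. The external input I would invoke is the theorem of \cite{ChenDMPP19} (building on \cite{FGYZ20, CGSV20, Vigoda00}), which asserts that in the regime $|L(v)| \ge (11/6-\ee)\Delta$ the list-coloring distribution is $\eta$-spectrally independent and -- by the arguments of \cite{AnariJKP22, ChenLV21} combined with the bounded marginals of \cref{prop:colfolk} extended to arbitrary rank links -- also $\eta'$-entropically independent at every pinning. In the language of this paper, this means $\lambda_2(\Emm_{\widehat\omega}) \le \eta/(n - |\widehat\omega| - 1)$ and $\lec(\widehat\omega) \le \eta'/(n - |\widehat\omega| - 1)$ for every face $\widehat\omega$ of the coloring complex $(Y, \pi) := (X^{(G, \Lap)}, \uni^{(G, \Lap)})$, where $\eta, \eta'$ depend only on $\Delta$ and $\ee$.

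For the spectral gap I would feed this into \cref{thm:mixing} with $\ell = n-1$:
\[ \Gap(\Duw_n) \ge \frac{1}{n} \prod_{i=0}^{n-2} \Gap_i(Y,\pi) \ge \frac{1}{n} \prod_{i=0}^{n-2}\left(1 - \frac{\eta}{n-i-1}\right). \]
Since $\Delta = O(1)$, an elementary telescoping argument -- sharpened via the comparison technique of \cite{ChenLV21} which replaces the naive constant $\eta$ by a link-dependent one that is summable against $\sum 1/(n-i-1)$ -- shows that this product is bounded below by a positive constant depending only on $\Delta$ and $\ee$, yielding $\Gap(\Duw_n) = \Omega(1/n)$.

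For the MLSI and LSI I would invoke the tools of \cref{sec:fi}. Entropic independence at every link together with \cref{cor:ramen} gives $\EC(\Upo_{n-1 \to n}) \ge \prod_{i=0}^{n-2}(1 - \lec_i(Y,\pi)) = \Omega(1)$, and \cref{lem:mlsfun} then yields $\MLS(\Udw_{n-1}) = \Omega(1)$; a standard comparison between $\Udw_{n-1}$ and $\Duw_n$ (they share the same non-trivial spectrum and Dirichlet form up to a factor of $n$) transfers this to $\MLS(\Duw_n) = \Omega(1/n)$. For LSI, I would apply \cref{lem:lsi} directly:
\[ \LS(\Udw_{n-1}) \ge \frac{n-1}{n} \cdot C_{n-2,1} \cdot \Gap_{n-2}(Y,\pi) \cdot \EC(\Upo_{n-2 \to n-1}). \]
The marginal lower bound from \cref{prop:colfolk} gives $C_{n-2,1} = \Omega(1)$ because $\Delta$ is constant, the same proposition gives $\Gap_{n-2}(Y,\pi) \ge 1 - 1/K_- = \Omega(1)$, and the previous step supplies $\EC(\Upo_{n-2 \to n-1}) = \Omega(1)$. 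Consequently $\LS(\Udw_{n-1}) = \Omega(1)$, and again the Dirichlet-form comparison between $\Udw_{n-1}$ and $\Duw_n$ delivers $\LS(\Duw_n) = \Omega(1/n)$.

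\textbf{Main obstacle.} The only non-routine input is the external spectral / entropic independence bound at the sharp threshold $(11/6-\ee)\Delta$: the local-to-global framework alone cannot reach below Vigoda's $11/6\,\Delta$ bound, and shaving the extra $\ee\Delta$ requires the carefully tuned variance-decay potential of \cite{ChenDMPP19}. Once this bound is taken as a black box, the rest of the argument is a direct application of \cref{thm:mixing}, \cref{cor:ramen}, and \cref{lem:lsi} together with the elementary marginal bounds of \cref{prop:colfolk}, where $\Delta = O(1)$ is used crucially to keep the constants $C_{n-2,1}$ and the telescoping products bounded away from zero.
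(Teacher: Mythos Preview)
The paper does not prove this theorem: it is quoted as an external result (Theorem~1.2 of \cite{Liu21, BlancaCPCPS22}) and immediately used as a black box to derive \cref{cor:colent}. There is therefore no ``paper's own proof'' to compare your proposal against.

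Since you nonetheless sketched an argument, let me point out a genuine gap. Under the spectral/entropic independence hypotheses you state, namely $\lambda_2(\Emm_{\widehat\omega}) \le \eta/(n-|\widehat\omega|-1)$ and $\lec(\widehat\omega)\le \eta'/(n-|\widehat\omega|)$, the products you write down are \emph{not} $\Omega(1)$: one has
\[
\prod_{i=0}^{n-2}\Bigl(1-\tfrac{\eta}{n-i-1}\Bigr)=\prod_{k=1}^{n-1}\Bigl(1-\tfrac{\eta}{k}\Bigr)\asymp n^{-\eta},
\]
and similarly for the entropic product. So \cref{thm:mixing} and \cref{cor:ramen} alone only deliver $\Gap(\Duw_n),\EC(\Upo_{n-1\to n})\ge n^{-O_\Delta(1)}$, not $\Omega(1/n)$ and $\Omega(1)$ respectively. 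Your appeal to a ``sharpened telescoping via \cite{ChenLV21}'' is exactly where the real work of \cite{Liu21,BlancaCPCPS22} lives: upgrading polynomial spectral-independence bounds to optimal $\Omega(1/n)$ constants on bounded-degree graphs requires their block/entropy factorization machinery (spectral independence plus bounded marginals $\Rightarrow$ approximate tensorization of entropy), which is not a telescoping argument and is not reproducible from the tools of \cref{sec:fi} alone. The sentence ``replaces the naive constant $\eta$ by a link-dependent one that is summable'' is not how those papers proceed and, as written, is not a correct description of any known mechanism. In short, your outline identifies the right references but black-boxes precisely the step that constitutes the theorem.
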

Then, the following corollary immediately follows by \cref{lem:lsfun} and \cref{fac:folklore},
\begin{corollary}\label{cor:colent}
    Let $(G, \Lap)$ be a list-coloring instance where $G = (V, E)$ is a graph on $n$ vertices of maximum degree $\Delta \le O(1)$. Then, for some absolute constant $\ee \approx 10^{-5}$, if $|L(v)| \ge (11/6 - \ee) \cdot \Delta$ for all vertices $v \in V$, then the up-operator $\Upo_{n-1 \to n} = \Upp_{n-1 \to n}\parens*{X^{(G, \Lap)}, \uni^{(G, \Lap)}}$ on the collection of proper list colorings of $(G, \Lap)$ satisfies $ \EC(\Upo_{n-1 \to n}) \ge \Omega(n^{-1})$.
\end{corollary}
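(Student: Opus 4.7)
The plan is to apply \cref{lem:lsfun} directly, using the factorization of the down-up walk given by \cref{fac:folklore}. The theorem of \cite{Liu21, BlancaCPCPS22} that is quoted just above the corollary gives $\LS(\Duw_n) = \Omega(n^{-1})$ where $\Duw_n = \DownUp_{n \lra n-1}(X^{(G, \Lap)}, \uni^{(G, \Lap)})$, so the whole task is to convert this log-Sobolev bound on $\Duw_n$ into an entropy contraction bound on $\Upo_{n-1 \to n}$.

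First I would observe that by \cref{fac:folklore}, the up-operator $\Upo_{n-1 \to n}$ and the down-operator $\Doo_{n \to n-1}$ are adjoints of each other (with respect to the measures $\pi_{n-1}$ and $\pi_n$), and that
\[\Upo_{n-1 \to n}^{*}\, \Upo_{n-1 \to n} \;=\; \Doo_{n \to n-1}\, \Upo_{n-1 \to n} \;=\; \Duw_n.\]
In particular $\Upo_{n-1 \to n}^{*}\, \Upo_{n-1 \to n}$ is exactly the down-up Glauber dynamics on proper list-colorings whose log-Sobolev constant is controlled by the cited theorem.

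Next I would invoke \cref{lem:lsfun}, applied to the rectangular operator $\Pii = \Upo_{n-1 \to n} \in \RR^{X^{(n-1)} \times X^{(n)}}$, which satisfies $\pi_{n-1} \Upo_{n-1 \to n} = \pi_n$. The lemma yields
\[\EC\bigl(\Upo_{n-1 \to n}\bigr) \;\ge\; \LS\bigl(\Upo_{n-1 \to n}^{*}\, \Upo_{n-1 \to n}\bigr) \;=\; \LS(\Duw_n) \;=\; \Omega(n^{-1}),\]
which is precisely the desired conclusion.

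There is essentially no technical obstacle here; the content of the corollary is that \cref{lem:lsfun} together with the adjoint structure of \cref{fac:folklore} provides a black-box conversion from an $O(n^{-1})$ log-Sobolev bound for the Glauber dynamics (as established by \cite{Liu21, BlancaCPCPS22}) to an $\Omega(n^{-1})$ entropy contraction bound for the up-operator. The only thing to double-check is that the definition of $\EC(\Pii)$ used in \cref{lem:lsfun} matches the definition implicit in \cref{eq:pent-con} for the choice $\Phi(t) = t\log t$ with the appropriate reference measures on $X^{(n-1)}$ and $X^{(n)}$, which is immediate from the discussion in \cref{ss:fuuuun}.
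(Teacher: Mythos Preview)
Your proof is correct and is exactly the argument the paper intends: it states that the corollary ``immediately follows by \cref{lem:lsfun} and \cref{fac:folklore}'', and you have spelled out precisely that derivation, namely $\Upo_{n-1 \to n}^{*}\Upo_{n-1 \to n} = \Doo_{n \to n-1}\Upo_{n-1 \to n} = \Duw_n$ and hence $\EC(\Upo_{n-1 \to n}) \ge \LS(\Duw_n) = \Omega(n^{-1})$.
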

\begin{proof}[Proof of \cref{thm:colhijack}] Notice that by \cref{lem:lees}, \cref{cor:colent} implies that $\EC( \Upo_{n- 2 \to n- 1}) \ge \Omega(n^{-1})$ since by \cref{prop:colfolk} when $\Delta = O(1)$, $C_{n-2} = \Omega(1)$, we have $\Gap_{n-2}(X^{(G, \Lap)}, \uni^{(G, \Lap)}) = \Omega(1)$, by invoking \cref{lem:lsi} we obtain that the up-down walk $\Udw_{n-1} = \UpDown_{n-1 \lra n}(X^{(G, \Lap)}, \uni^{(G, \Lap)})$ satisfies, $\LS(\Udw_{n-1}) \ge \Omega(n^{-1})$. Then, by \cref{cor:baked} and the assumption that the \ref{eq:two-sided} $\lambda(H_n)$ is a constant bounded away from 1, we obtain $\EC(\Papx_{n}), \EC(\Paqx_{n-1}) \ge \Omega(n^{-1})$. The result, concerning mixing times follows using \cref{thm:entmix} and the observation that the state space for both walks is of size at most $n \cdot \parens*{(K + 11/6) \cdot \Delta}^n$.
\end{proof}

\subsection{The Ising Model with Bounded Correlations}\label{ss:ising}
We recall that the Ising model $\mu_{\Jay, \etch}: \set*{+1, -1}^n \to \RRp$ from statistical physics is a probability distribution on the hypercube satisfying,
\begin{equation}
    \mu_{\Jay, \etch}(\eks) = \frac{ \exp\parens*{ \frac{1}{2} \inpr*{\eks, \Jay \eks}_{\ell_2} + \inpr*{\etch, \eks}_{\ell_2}}}{Z(\Jay, \etch)}~~\textrm{where}~~Z(\Jay, \etch) = \sum_{\eks \in \set*{+1, -1}^n} \exp\parens*{ \frac{1}{2} \inpr*{\eks, \Jay \eks}_{\ell_2} + \inpr*{\etch, \eks}_{\ell_2}}
\end{equation}

Quite recently, it was shown that the down-up walk $\Duw_n = \DownUp_n(X^{(\Jay, \etch)}, \etch)$ rapidly mixes whenever $\Jay$ is a PSD matrix of small enough \ref{eq:opn-def}, i.e.~$\norm{\Jay}_\opp \ll 1$. 

\begin{theorem}[\cite{EldanKZ22, AnariJKP22, Lee23}]\label{thm:is-mix}
    Let $(X^{(\Jay, \etch)}, \mu_{\Jay, \etch})$ be the simplicial complex corresponding to the Ising model defined by the interaction matrix$\Jay \in \RR^{n \times n}$ and external field $\etch \in \RR^n$. Under the assumption that $\Jay$ is PSD and satisfies $\norm*{\Jay}_\opp \le 1$, the following hold,
    \[ \Gap(\Duw_n) \ge \frac{1-  \norm*{\Jay}_\opp}{n}~~\textrm{and}~~\EC\parens*{ \Upo_{n-1 \to n} } \ge \frac{1 - \norm*{\Jay}_\opp}{n},\]
    where $\Duw_n = \DownUp_{n \lra n-1}\parens*{ X^{(\Jay, \etch)}, \mu_{\Jay, \etch}}$ and $\Upo_{n-1 \to n} = \Upp_{n - 1 \to n}\parens*{X^{(\Jay, \etch)}, \mu_{\Jay, \etch}}$.
\end{theorem}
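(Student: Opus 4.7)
The proof combines stochastic localization with local-to-global techniques, as developed in \cite{EldanKZ22, AnariJKP22, Lee23}. The bounds for $\Gap(\Duw_n)$ and $\EC(\Upo_{n-1 \to n})$ follow essentially the same strategy with variance replaced by entropy, so I will sketch the spectral gap argument first and then describe the analogous steps for the entropic contraction.

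The main device is a stochastic localization process $(\mu_t)_{t \ge 0}$ of tilted measures starting from $\mu_0 = \mu_{\Jay, \etch}$, of the form $\mu_t(\eks) \propto \mu_0(\eks) \exp(\inpr*{B_t, \eks}_{\ell_2})$, where $B_t$ is an Ito process designed so that $\mu_t$ is a martingale in mean and gradually concentrates on a point mass as $t \to \infty$. The plan is to compute the evolution of the covariance matrix $\Sigma_t := \mathrm{Cov}_{\mu_t}(\eks)$ via Ito's formula and show that the PSD assumption $\Jay \succeq 0$ together with $\norm*{\Jay}_\opp \le 1$ forces $\norm*{\Sigma_t}_\opp \le (1 - \norm*{\Jay}_\opp)^{-1}$ uniformly in $t$. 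The same analysis applied to every \ref{eq:p-def} of $\mu_0$ yields spectral independence with parameter $(1-\norm*{\Jay}_\opp)^{-1}$ at every face of the simplicial complex $X^{(\Jay, \etch)}$. A refined local-to-global argument tailored to stochastic localization, which avoids the multiplicative loss incurred by the generic bound \cref{thm:mixing-show}, then delivers $\Gap(\Duw_n) \ge (1 - \norm*{\Jay}_\opp)/n$.

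For the entropic contraction bound, I will replace the covariance analysis by an entropy-based analysis along the same stochastic localization, as in \cite{AnariJKP22, Lee23}. This establishes that $\mu_{\Jay, \etch}$ is $(1 - \norm*{\Jay}_\opp)$-entropically independent, i.e.~the local entropy contraction factors satisfy $\lec(\widehat\omega) \le \frac{1}{(1 - \norm*{\Jay}_\opp) \cdot (n - |\widehat\omega|)}$ at every face $\widehat\omega \in X^{(<n)}$. An application of \cref{cor:ramen} then yields $\EC(\Upo_{n-1 \to n}) \ge (1 - \norm*{\Jay}_\opp)/n$.

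The hardest step will be carrying out the Ito-calculus analysis of $\Sigma_t$ (and its entropic analog) and verifying the resulting matrix Riccati-type inequality, ensuring in particular that the PSD structure of $\Jay$ propagates along the whole trajectory of the stochastic localization. Without such uniform control, the naive bounds from standard spectral or entropic independence arguments at a single level would accumulate multiplicatively across levels and fall short of the sharp $1/n$ dependence stated above.
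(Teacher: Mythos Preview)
The paper does not prove this theorem; it is quoted as a black-box result from \cite{EldanKZ22, AnariJKP22, Lee23}, with the paragraph immediately following the statement attributing the spectral gap bound to \cite[Theorem~1]{EldanKZ22} and the entropy contraction bound to \cite[Theorem~4.1]{Lee23}. Your sketch is a reasonable high-level account of how those cited works proceed, and stochastic localization is indeed the engine behind both bounds, so you are aligned with the literature rather than with anything in this paper.

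One caution on your final step for the entropic bound: deducing $\EC(\Upo_{n-1\to n}) \ge (1-\norm*{\Jay}_\opp)/n$ by inserting $\lec(\widehat\omega) \le \frac{1}{(1-\norm*{\Jay}_\opp)(n-|\widehat\omega|)}$ into \cref{cor:ramen} does not produce the stated constant. The product $\prod_{i=0}^{n-2}\bigl(1 - \tfrac{1}{(1-\norm*{\Jay}_\opp)(n-i)}\bigr)$ telescopes to $1/n$ only when $\norm*{\Jay}_\opp = 0$; for $\norm*{\Jay}_\opp$ close to $1$ the late factors can even be negative, so the generic local-to-global product is the wrong tool here. The sharp $(1-\norm*{\Jay}_\opp)/n$ entropy contraction in \cite{AnariJKP22, Lee23} comes from a direct entropy-decay computation along the stochastic localization path (controlling the rate at which entropy dissipates using the uniform covariance bound you mention), not from multiplying per-level contraction factors. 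Since the paper only cites the result, this distinction does not affect anything downstream, but if you intend to include a proof sketch you should point to the direct localization argument rather than to \cref{cor:ramen}.
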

The \ref{eq:gap-def} bound above is due to \cite[Theorem 1]{EldanKZ22} and implies a mixing time bound of $O\parens*{\frac{n}{1 - \norm*{\Jay}_\opp} (n + \norm*{\etch}_{\ell_1})}$. This mixing time bound was subsequently improved to $O\parens*{ \frac{n\log n}{1-  \norm{\Jay}_\opp}}$ by \cite{AnariJKP22} through a modified log-Sobolev inequality and a clever argument utilizing the \emph{approximate exchange property} -- which intially appeared in \cite{AnariLOVV21} -- allowing them to bypass the dependence on $\etch$ completely. The concrete statement about entropy contraction was shown in \cite[Theorem 4.1]{Lee23}.

Now, we make the following observations,
\begin{proposition}
    \label{prop:is-conds}
    Let $(Y, \mu) := (X^{(\Jay, \etch)}, \mu_{\Jay, \etch})$ be the simplicial complex defined above corresponding to the Ising model defined by $\Jay \in \RR^{n \times n}$ and $\etch \in \RR^n$. Under the assumption that $\Jay$ is PSD and satisfies $\norm*{\Jay}_\opp \le 1$, the following hold,
    \begin{enumerate}
        \item We have $\Gap_{n-2}\parens*{ Y, \mu } = \min_{\widehat\omega \in Y^{(n-2)}} \Gap\parens*{ \Emm_{\widehat\omega}} \ge 1- \theta$ where $\Emm_{\widehat\omega}$ is the \ref{eq:link-def} of $\widehat\omega$ in $(Y, \mu)$ and $\theta$ is the maximum \ref{eq:opn-def} of any principal minor of $\Jay$. Notice that $\theta \le \norm*{\Jay}_\opp$.
        \item We have that for any $\widehat\omega \in Y^{(n-2)}$, $\min_{x \in Y_{\widehat\omega}^{(1)}}\mu^{(\widehat\omega)}_1(x) \ge \frac{1}{2} \cdot e^{-4 \cdot \norm*{\etch}_{\ell_\infty} - 1}.$
    \end{enumerate}
\end{proposition}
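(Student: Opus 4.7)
My plan is to reduce both items to the structure of the two-variable Ising model on the two coordinates $\{i,j\}$ left unpinned by $\widehat\omega \in Y^{(n-2)}$. Expanding $\tfrac{1}{2}\langle x, \Jay x\rangle + \langle \etch, x\rangle$ and freezing $x_k = x_k^*$ for $k \notin \{i,j\}$, the conditional law of $(x_i, x_j)$ is itself an Ising model with interaction matrix $\Jay'$ equal to the $(i,j)$-principal submatrix of $\Jay$ and effective external field $h_k^{\mathrm{eff}} = h_k + \sum_{\ell \notin \{i,j\}} J_{k\ell}\, x_\ell^*$ for $k \in \{i, j\}$. Since principal submatrices of PSD matrices are PSD with non-increasing operator norm, $\Jay'$ is PSD with $\|\Jay'\|_\opp \le \theta \le \|\Jay\|_\opp \le 1$, and the $x_k^2 = 1$ diagonal terms of $\Jay'$ can be absorbed into the normalization without changing the law.

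For Item 1, I will show by direct computation on the four-vertex link that $\Udw_{1 \lra 2} = \tfrac{1}{2} I + \tfrac{1}{2} \Emm_{\widehat\omega}$: starting from a vertex $(i,s)$, the up-move samples a facet $\{(i,s),(j,t)\}$ with $t$ drawn from $\mu^{(\widehat\omega)}(\,\cdot \mid x_i = s)$, and the subsequent uniformly-random size-one down-move returns $(i,s)$ with probability $\tfrac12$ and $(j,t)$ with probability $\tfrac12$. Consequently $\Gap(\Emm_{\widehat\omega}) = 2\,\Gap(\Udw_{1 \lra 2})$. Since $\Udw_{1 \lra 2} = \Upo\,\Upo^*$ and $\Duw_{2 \lra 1} = \Upo^*\,\Upo$ share their nonzero spectrum by \cref{fac:switcheroo} and \cref{fac:folklore}, I then invoke \cref{thm:is-mix} with $n=2$ for the two-spin Ising with interaction $\Jay'$ to obtain $\Gap(\Duw_{2 \lra 1}) \ge (1-\|\Jay'\|_\opp)/2 \ge (1-\theta)/2$, and hence $\Gap(\Emm_{\widehat\omega}) \ge 1 - \theta$.

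For Item 2, the formula for $\pi_1^{(\widehat\omega)}$ and the rank-2 structure of the link give $\pi_1^{(\widehat\omega)}((i,s)) = \tfrac{1}{2}\,\mu^{(\widehat\omega)}(x_i = s)$, so I only need a lower bound on the two-variable marginal $\mu^{(\widehat\omega)}(x_i = s)$. I will expand this marginal using the explicit four-term form $\mu^{(\widehat\omega)}(x_i, x_j) \propto \exp(J_{ij} x_i x_j + h_i^{\mathrm{eff}} x_i + h_j^{\mathrm{eff}} x_j)$, use $|J_{ij}| \le \|\Jay\|_\opp \le 1$ to absorb the pairwise coupling into a universal constant contributing the $e^{-1}$ factor, and then compare the minimum of the four exponentials against the partition function. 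The main obstacle is controlling the effective fields $|h_k^{\mathrm{eff}}|$ in terms of $\|\etch\|_{\ell_\infty}$ alone; once the relevant bound on $|h_k^{\mathrm{eff}}|$ is in hand, the estimate $\pi_1^{(\widehat\omega)}(x) \ge \tfrac{1}{2}\,e^{-4\|\etch\|_{\ell_\infty} - 1}$ follows by dividing a worst-case term $e^{-|J_{ij}| - |h_i^{\mathrm{eff}}| - |h_j^{\mathrm{eff}}|}$ by the partition function and collecting the factors of $2$ coming from the marginalization and from the definition of $\pi_1^{(\widehat\omega)}$.
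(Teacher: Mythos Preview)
Your treatment of Item~1 coincides with the paper's: both reduce to the two--spin Ising model on the unpinned coordinates, invoke \cref{thm:is-mix} to get $\Gap(\Duw_{2\lra 1})\ge (1-\lVert\Jay_{\widehat\omega}\rVert_\opp)/2$, and convert this into $\Gap(\Emm_{\widehat\omega})\ge 1-\theta$ via \cref{fac:switcheroo} and the identity $\Udw_{1\lra 2}=\tfrac12\Ide+\tfrac12\Emm_{\widehat\omega}$.

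For Item~2 there is a genuine gap. You correctly expand the conditional law as a two--spin Ising model with effective field $h_k^{\mathrm{eff}}=h_k+\sum_{\ell\notin\{i,j\}}J_{k\ell}\,x_\ell^*$, and you explicitly name as ``the main obstacle'' the task of bounding $|h_k^{\mathrm{eff}}|$ in terms of $\lVert\etch\rVert_{\ell_\infty}$---but you never carry this out, and in fact it cannot be done. The correction term $\sum_{\ell}J_{k\ell}x_\ell^*=\one_k^\top\Jay x^*$ has nothing to do with $\etch$; under the hypotheses it is only controlled by $\sqrt{J_{kk}}\cdot\sqrt{(x^*)^\top\Jay x^*}$, which can be of order $\sqrt n$ even when $\etch=0$ (e.g.\ take $\Jay=\boldsymbol u\boldsymbol u^\top$ with $\boldsymbol u_k$ a fixed constant and the remaining entries equal and of order $n^{-1/2}$, and pin all other spins to $+1$). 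So the route through effective fields cannot yield the $n$--free lower bound $\tfrac12\,e^{-4\lVert\etch\rVert_{\ell_\infty}-1}$ that the proposition asserts.

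The paper argues differently at this point: it takes the two--spin conditional to have external field equal to the \emph{original} values $(\etch(a),\etch(b))$ rather than the effective field, and then bounds each of the four weights $\exp\bigl(\tfrac12\langle\eks,\Jay_{\widehat\omega}\eks\rangle+\langle\etch_{\widehat\omega},\eks\rangle\bigr)$ using $0\le\tfrac12\langle\eks,\Jay_{\widehat\omega}\eks\rangle\le 1$ and $|\langle\etch_{\widehat\omega},\eks\rangle|\le 2\lVert\etch\rVert_{\ell_\infty}$, whence the ratio bound follows. Your decomposition is the more accurate description of the conditional law, which means the discrepancy you have uncovered between your setup and the paper's is itself worth flagging; but as written your plan does not close, since the step you label as the main obstacle is not merely unproved but unprovable under the stated hypotheses.
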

\begin{proof}
    For convenience we write $(Y, \mu) = (X^{(\Jay, \etch)}, \mu_{\Jay, \etch})$. Let $\widehat\omega \in Y^{(n-2)}$ be arbitrary and suppose $\typ(\widehat\omega) = [n] \setminus \set*{a, b}$. Then, we observe that we still have an ising model at our hands, for $\Jay_{\widehat\omega} \in \RR^{2 \times 2}$ and $\etch' \in \RR^{2 \times 2}$,  where 
    \[ \Jay_{\widehat\omega} = \begin{pmatrix} J(a, a) & J(a, b)\\ J(b, a) & J(b, b) \end{pmatrix}~~\textrm{and}~~\etch_{\widehat\omega}(x) = \begin{cases} \etch(a) & \textrm{ if } x = a\\ \etch(b) & \textrm{ if } x = b. \end{cases}\]
    In particular, by \cref{thm:is-mix}, we have that the down-up walk $\Duw_{\widehat\omega, 2 \lra 1}$ satisfies \ref{eq:gap-def} greater than $\frac{1 - \norm*{\Jay_{\widehat\omega}}_\opp}{2}$. We recall that this down-up walk can be described as $\Doo_{\widehat\omega, 2 \to 1} \Upo_{\widehat\omega \to 1}.$ Thus, by \cref{fac:switcheroo}, we have
    \[ \frac{1 - \norm*{\Jay_{\widehat\omega}}_\opp}{2} \le \Gap\parens*{ \Duw_{\widehat\omega, 2 \lra 1}} = \Gap\parens*{ \Upo_{\widehat\omega, 1 \to 2} \Doo_{\widehat\omega, 2 \to 1}} = \Gap\parens*{ \frac{\Ide}{2} + \frac{\Emm_{\widehat\omega}}{2}} = \frac{\Gap\parens*{ \Emm_{\widehat\omega} }}{2}.\]
    In particular, by assumption we have $\Gap(\Emm_{\widehat\omega}) \ge 1 - \norm*{\Jay_{\widehat\omega}} \ge 1 - \theta$. Notice that by eigenvalue interlacing, we always have $\norm*{\Jay_{\widehat\omega}}_{\opp} \le \norm*{\Jay}_\opp$ which establishes the bound on $\Gap(\Emm_{\widehat\omega}) \ge 1 - \norm*{\Jay}_\opp$ in the worst case.

    For the second statement, we note that for all $\eks \in \set*{-1, 1}^{2}$ we have
    \[ e^{-2 \norm*{\etch}_{\ell_\infty}}\le \exp\parens*{ \frac{1}{2} \cdot \inpr*{ \eks, \Jay_{\widehat\omega} \eks}_{\ell_2} + \inpr*{\etch_{\widehat\omega}, \eks}_{\ell_2}} \le e^{2 \norm*{\etch}_{\ell_\infty} + 1}, \]
    where we have used,
    \begin{itemize}
        \item $0 \le \inpr*{\eks, \Jay \eks}_{\ell_2} \le 1$ since $\norm*{\eks}_{\ell_2}^2 = 2$, $\norm*{\Jay_{\widehat\omega}}_\opp \le 1$, and $\Jay_{\widehat\omega}$ is PSD.
        \item $-2 \le \inpr*{ \etch, \eks }_{\ell_2} \le 2$ since $\norm*{\eks}_{\ell_1} = 2$.
    \end{itemize}
    Notice now, assuming for example that $a$ is represented by the first variable in $\eks$, writing $\Xi(\eks) = \exp\parens*{ 0.5 \cdot \inpr*{ \eks, \Jay_{\widehat\omega} \eks}_{\ell_2} + \inpr*{ \etch_{\widehat\omega}, \eks}_{\ell_2}}$
    \begin{align*} \mu_1^{(\widehat\omega)}(a, 1)&~ =~\frac{ \Xi(+1, +1) + \Xi(+1, -1)}{Z(\Jay_{\widehat\omega}, \etch_{\widehat\omega}) = \Xi(+1, +1) + \Xi(+1, -1) + \Xi(-1, +1) + \Xi(-1, -1)},\\
    &~\ge~\frac{2 \cdot e^{-2 \norm*{\etch}_\infty }}{4  \cdot e^{2 \norm*{\etch}_\infty + 1}},\\
    &~\ge~\frac{1}{2} \cdot e^{-4 \cdot \norm*{\etch}_{\ell_\infty} - 1}.
    \end{align*}
    as an analogous argument follows for all $(a, \pm 1)$ and $(b, \pm 1)$ the argument follows.
\end{proof}

\begin{proof}[Proof of \cref{thm:ishijack}]
    By invoking \cref{thm:is-mix} and \cref{lem:lees} with $C = (1 - \norm*{\Jay}_\opp)^{-1}$, we note that the \ref{eq:upw-def} $\Up_{n-2 \to n -1} = \Upp_{n-2 \to n- 1}(Y, \mu)$ satisfies,
    \[ \EC\parens*{ \Upo_{n - 2 \to n- 1} } \ge \frac{1}{n-1} \cdot \frac{1 - \norm*{\Jay}_\opp}{2 - \norm*{\Jay}_\opp} \ge \frac{1-\norm*{\Jay}_\opp}{2 (n-1) }.\]
    Now, by appealing to \cref{lem:lsi} and \cref{prop:is-conds}, we get that the log-Sobolev constant (\cref{eq:lsc-def}) of the up-down walk $\Udw_{n-1} = \UpDown_{n-1 \lra n}(Y, \mu)$ is,
    \[ \LS\parens*{ \Udw_{n - 1} } \ge \frac{1}{O\parens*{\norm*{\etch}_{\ell_\infty }}}\cdot \frac{1 - \norm*{\Jay}_\opp}{n} \cdot \parens*{1-  \theta},\]
    where we note that condition (2) in \cref{prop:is-conds} implies that the parameter $C_{n-2, 1}$ in \cref{lem:lsi} is at most $\frac{1}{O(\norm*{\etch}_\infty)}.$

    Now, \cref{cor:baked} implies that we have
    \begin{align}\label{eq:exis1}\EC(\Papx_{n-1}) &~\ge~\frac{1}{O\parens*{\norm*{\etch}_{\ell_\infty }}}\cdot \frac{1 - \norm*{\Jay}_\opp}{n} \cdot \parens*{1-  \theta} \cdot \Gap^\star(H^2),\\
    \label{eq:exis2}\EC(\Paqx_{n}) &~\ge~\frac{1}{O\parens*{\norm*{\etch}_{\ell_\infty }}}\cdot \frac{1 - \norm*{\Jay}_\opp}{n} \cdot \parens*{1-  \theta} \cdot \Gap^\star(H^2).
    \end{align}
    Now, we observe
    \[ \exp\parens*{ n + \norm*{\etch}_{\ell_1}} \ge \exp\parens*{ \frac{1}{2} \cdot \inpr*{\eks, \Jay \eks}_{\ell_2} + \inpr*{ \etch, \eks}_{\ell_2}} \ge \exp\parens*{  - \norm*{\etch}_{\ell_1}}, \]
    where we have used,
    \begin{itemize}
        \item $\max_{\eks \in \set*{\pm 1}^n} \Abs*{\inpr*{\etch, \eks}_{\ell_2}} = \norm*{\etch}_{\ell_1}$,
        \item $0 \le \inpr*{ \eks, \Jay \eks}_{\ell_2} \le \norm*{\Jay}_\opp \norm*{\eks}^2_{\ell_2} \le n$
    \end{itemize}
    Thus, we can conclude that
    \[ \min_{\eks \in \set*{+1, - 1}^n } \mu(\eks) \ge \frac{ \exp\parens*{ - \norm*{ \etch }_{\ell_1}}}{2^n \cdot \exp\parens*{ n + \norm*{\etch}_{\ell_1}}} \ge \exp\parens*{ -2n - 2\norm*{ \etch }_{\ell_1}}.\]
    Since passing from $\mu$ to $\mu \tensor \uni_{[n]}$ or $\mu_{n-1}$ shrinks the minimum measure at most by a factor of $n$, we can conclude that   \cref{eq:exis1}, \cref{eq:exis2}  together with \cref{thm:entmix} imply the theorem statement.
\end{proof}
\bibliographystyle{alpha}\bibliography{vedat}

\newcommand{\etalchar}[1]{$^{#1}$}
\begin{thebibliography}{ALO{\etalchar{+}}21b}

\bibitem[AASV21]{AlimohammadiASV21}
Yeganeh Alimohammadi, Nima Anari, Kirankumar Shiragur, and Thuy-Duong Vuong.
\newblock Fractionally log-concave and sector-stable polynomials: counting
  planar matchings and more.
\newblock In {\em Proceedings of the 53rd Annual ACM SIGACT Symposium on Theory
  of Computing}, pages 433--446, 2021.

\bibitem[AF95]{AldousF95}
David Aldous and James Fill.
\newblock Reversible {M}arkov chains and random walks on graphs, 1995.

\bibitem[AJK{\etalchar{+}}21]{AnariJKP22a}
Nima Anari, Vishesh Jain, Frederic Koehler, Huy~Tuan Pham, and Thuy-Duong
  Vuong.
\newblock Entropic independence ii: optimal sampling and concentration via
  restricted modified log-sobolev inequalities.
\newblock {\em arXiv preprint arXiv:2111.03247}, 2021.

\bibitem[AJK{\etalchar{+}}22]{AnariJKP22}
Nima Anari, Vishesh Jain, Frederic Koehler, Huy~Tuan Pham, and Thuy-Duong
  Vuong.
\newblock Entropic independence: optimal mixing of down-up random walks.
\newblock In {\em Proceedings of the 54th Annual ACM SIGACT Symposium on Theory
  of Computing}, pages 1418--1430, 2022.

\bibitem[AJK{\etalchar{+}}23]{AnariJKPV23}
Nima Anari, Vishesh Jain, Frederic Koehler, Huy~Tuan Pham, and Thuy{-}Duong
  Vuong.
\newblock Universality of spectral independence with applications to fast
  mixing in spin glasses.
\newblock {\em CoRR}, abs/2307.10466, 2023.

\bibitem[AL20]{AlevL20}
Vedat~Levi Alev and Lap~Chi Lau.
\newblock Improved analysis of higher order random walks and applications.
\newblock In {\em STOC}, pages 1198--1211, 2020.

\bibitem[ALO20]{AnariLO20}
Nima Anari, Kuikui Liu, and Shayan {Oveis Gharan}.
\newblock Spectral independence in high-dimensional expanders and applications
  to the hardcore model.
\newblock {\em CoRR}, abs/2001.00303, 2020.

\bibitem[Alo21a]{Alon21}
Noga Alon.
\newblock Explicit expanders of every degree and size.
\newblock {\em Combinatorica}, 41(4):447--463, 2021.

\bibitem[ALO{\etalchar{+}}21b]{AnariLOVV21}
Nima Anari, Kuikui Liu, Shayan {Oveis Gharan}, Cynthia Vinzant, and Thuy-Duong
  Vuong.
\newblock Log-concave polynomials iv: approximate exchange, tight mixing times,
  and near-optimal sampling of forests.
\newblock In {\em Proceedings of the 53rd Annual ACM SIGACT Symposium on Theory
  of Computing}, pages 408--420, 2021.

\bibitem[AP23]{AlevP23}
Vedat~Levi Alev and Ori Parzanchevski.
\newblock Sequential sweeps and high dimensional expansion.
\newblock {\em CoRR}, abs/2312.02089, 2023.

\bibitem[BCC{\etalchar{+}}22]{BlancaCPCPS22}
Antonio Blanca, Pietro Caputo, Zongchen Chen, Daniel Parisi, Daniel
  {\v{S}}tefankovi{\v{c}}, and Eric Vigoda.
\newblock On mixing of markov chains: Coupling, spectral independence, and
  entropy factorization.
\newblock In {\em Proceedings of the 2022 Annual ACM-SIAM Symposium on Discrete
  Algorithms (SODA)}, pages 3670--3692. SIAM, 2022.

\bibitem[BCP{\etalchar{+}}21]{BlancaCPSV21}
Antonio Blanca, Pietro Caputo, Daniel Parisi, Alistair Sinclair, and Eric
  Vigoda.
\newblock Entropy decay in the swendsen--wang dynamics on zd.
\newblock In {\em Proceedings of the 53rd Annual ACM SIGACT Symposium on Theory
  of Computing}, pages 1551--1564, 2021.

\bibitem[BD97]{BubleyD97}
Russ Bubley and Martin~E. Dyer.
\newblock Path coupling: {A} technique for proving rapid mixing in {M}arkov
  chains.
\newblock In {\em {FOCS}}, pages 223--231, 1997.

\bibitem[Bha13]{Bhatia2013}
Rajendra Bhatia.
\newblock {\em Matrix {A}nalysis}, volume 169.
\newblock Springer Science \& Business Media, 2013.

\bibitem[BHKL22a]{BafnaHKL22}
Mitali Bafna, Max Hopkins, Tali Kaufman, and Shachar Lovett.
\newblock High dimensional expanders: Eigenstripping, pseudorandomness, and
  unique games.
\newblock In {\em Proceedings of the 2022 Annual ACM-SIAM Symposium on Discrete
  Algorithms (SODA)}, pages 1069--1128. SIAM, 2022.

\bibitem[BHKL22b]{BafnaHKL22a}
Mitali Bafna, Max Hopkins, Tali Kaufman, and Shachar Lovett.
\newblock Hypercontractivity on high dimensional expanders.
\newblock In {\em Proceedings of the 54th Annual ACM SIGACT Symposium on Theory
  of Computing}, pages 185--194, 2022.

\bibitem[BLM13]{BoucheronLM13}
St{\'{e}}phane Boucheron, G{\'{a}}bor Lugosi, and Pascal Massart.
\newblock {\em Concentration Inequalities - {A} Nonasymptotic Theory of
  Independence}.
\newblock Oxford University Press, 2013.

\bibitem[CDM{\etalchar{+}}19]{ChenDMPP19}
Sitan Chen, Michelle Delcourt, Ankur Moitra, Guillem Perarnau, and Luke Postle.
\newblock Improved bounds for randomly sampling colorings via linear
  programming.
\newblock In {\em {SODA}}, pages 2216--2234, 2019.

\bibitem[CE22]{ChenE22}
Yuansi Chen and Ronen Eldan.
\newblock Localization schemes: A framework for proving mixing bounds for
  markov chains.
\newblock In {\em 2022 IEEE 63rd Annual Symposium on Foundations of Computer
  Science (FOCS)}, pages 110--122. IEEE, 2022.

\bibitem[CGSV20]{CGSV20}
Zongchen Chen, Andreas Galanis, Daniel Stefankovic, and Eric Vigoda.
\newblock Rapid mixing for colorings via spectral independence.
\newblock {\em CoRR}, abs/2007.08058, 2020.

\bibitem[Cha04]{Chafai04}
Djalil Chafa{\"\i}.
\newblock Entropies, convexity, and functional inequalities, on
  $\phi$-entropies and $\phi$-sobolev inequalities.
\newblock {\em Journal of Mathematics of Kyoto University}, 44(2):325--363,
  2004.

\bibitem[Che21]{Chen21}
Yuansi Chen.
\newblock An almost constant lower bound of the isoperimetric coefficient in
  the kls conjecture.
\newblock {\em Geometric and Functional Analysis}, 31:34--61, 2021.

\bibitem[CLV20]{ChenLV20}
Zongchen Chen, Kuikui Liu, and Eric Vigoda.
\newblock Rapid mixing of {G}lauber dynamics up to uniqueness via contraction.
\newblock {\em arXiv preprint arXiv:2004.09083}, 2020.

\bibitem[CLV21]{ChenLV21}
Zongchen Chen, Kuikui Liu, and Eric Vigoda.
\newblock Optimal mixing of glauber dynamics: Entropy factorization via
  high-dimensional expansion.
\newblock In {\em Proceedings of the 53rd Annual ACM SIGACT Symposium on Theory
  of Computing}, pages 1537--1550, 2021.

\bibitem[DD19]{DiksteinD19}
Yotam Dikstein and Irit Dinur.
\newblock Agreement testing theorems on layered set systems.
\newblock 2019.

\bibitem[DDFH18]{DiksteinDFH18}
Yotam Dikstein, Irit Dinur, Yuval Filmus, and Prahladh Harsha.
\newblock Boolean function analysis on high-dimensional expanders.
\newblock In {\em {APPROX/RANDOM}}, pages 38:1--38:20, 2018.

\bibitem[DGJ06]{DyerGJ06}
Martin Dyer, Leslie~Ann Goldberg, and Mark Jerrum.
\newblock Systematic scan for sampling colorings.
\newblock {\em The Annals of Applied Probability}, 16(1):185--230, 2006.

\bibitem[DGJ08]{DyerGJ08}
Martin Dyer, Leslie~Ann Goldberg, and Mark Jerrum.
\newblock Dobrushin conditions and systematic scan.
\newblock {\em Combinatorics, Probability and Computing}, 17(6):761--779, 2008.

\bibitem[DK17]{DinurK17}
Irit Dinur and Tali Kaufman.
\newblock High dimensional expanders imply agreement expanders.
\newblock In {\em {FOCS}}, pages 974--985, 2017.

\bibitem[Dob70]{Dobrushin70}
Roland~L Dobrushin.
\newblock Prescribing a system of random variables by conditional
  distributions.
\newblock {\em Theory of Probability \& Its Applications}, 15(3):458--486,
  1970.

\bibitem[DR00]{DiaconisR00}
Persi Diaconis and Arun Ram.
\newblock Analysis of systematic scan metropolis algorithms using iwahori-hecke
  algebra techniques.
\newblock {\em Michigan Mathematical Journal}, 48(1):157--190, 2000.

\bibitem[DSC{\etalchar{+}}96]{DiaconisSC96}
Persi Diaconis, Laurent Saloff-Coste, et~al.
\newblock Logarithmic {S}obolev inequalities for finite {M}arkov chains.
\newblock {\em The Annals of Applied Probability}, 6(3):695--750, 1996.

\bibitem[EKZ22]{EldanKZ22}
Ronen Eldan, Frederic Koehler, and Ofer Zeitouni.
\newblock A spectral condition for spectral gap: fast mixing in
  high-temperature ising models.
\newblock {\em Probability theory and related fields}, 182(3):1035--1051, 2022.

\bibitem[Eld13]{Eldan13}
Ronen Eldan.
\newblock Thin shell implies spectral gap up to polylog via a stochastic
  localization scheme.
\newblock {\em Geometric and Functional Analysis}, 23(2):532--569, 2013.

\bibitem[Eld20]{Eldan20}
Ronen Eldan.
\newblock Taming correlations through entropy-efficient measure decompositions
  with applications to mean-field approximation.
\newblock {\em Probability Theory and Related Fields}, 176(3):737--755, 2020.

\bibitem[ES22]{EldanS22}
Ronen Eldan and Omer Shamir.
\newblock Log concavity and concentration of lipschitz functions on the boolean
  hypercube.
\newblock {\em Journal of functional analysis}, 282(8):109392, 2022.

\bibitem[FGW{\etalchar{+}}23]{FengGWWY23}
Weiming Feng, Heng Guo, Chunyang Wang, Jiaheng Wang, and Yitong Yin.
\newblock Towards derandomising markov chain monte carlo.
\newblock In {\em 64th {IEEE} Annual Symposium on Foundations of Computer
  Science, {FOCS} 2023, Santa Cruz, CA, USA, November 6-9, 2023}, pages
  1963--1990. {IEEE}, 2023.

\bibitem[FGYZ20]{FGYZ20}
Weiming Feng, Heng Guo, Yitong Yin, and Chihao Zhang.
\newblock Rapid mixing from spectral independence beyond the boolean domain.
\newblock {\em arXiv preprint arXiv:2007.08091}, 2020.

\bibitem[Gar73]{Garland73}
Howard Garland.
\newblock p-adic curvature and the cohomology of discrete subgroups of p-adic
  groups.
\newblock {\em Annals of Mathematics}, pages 375--423, 1973.

\bibitem[GLL22]{GurLL22}
Tom Gur, Noam Lifshitz, and Siqi Liu.
\newblock Hypercontractivity on high dimensional expanders.
\newblock In {\em Proceedings of the 54th Annual ACM SIGACT Symposium on Theory
  of Computing}, pages 176--184, 2022.

\bibitem[GM21]{GuoM21}
Heng Guo and Giorgos Mousa.
\newblock Local-to-global contraction in simplicial complexes, 2021.

\bibitem[Hay06]{Hayes06}
Thomas~P Hayes.
\newblock A simple condition implying rapid mixing of single-site dynamics on
  spin systems.
\newblock In {\em 2006 47th Annual IEEE Symposium on Foundations of Computer
  Science (FOCS'06)}, pages 39--46. IEEE, 2006.

\bibitem[HJ12]{HornJ12}
Roger~A Horn and Charles~R Johnson.
\newblock {\em Matrix {A}nalysis}.
\newblock Cambridge university press, 2012.

\bibitem[HLW06]{HooryLW06}
Shlomo Hoory, Nathan Linial, and Avi Wigderson.
\newblock Expander graphs and their applications.
\newblock {\em Bulletin of the American Mathematical Society}, 43(4):439--561,
  2006.

\bibitem[Jer95]{Jerrum95}
Mark Jerrum.
\newblock A very simple algorithm for estimating the number of k-colorings of a
  low-degree graph.
\newblock {\em Random Struct. Algorithms}, 7(2):157--166, 1995.

\bibitem[KK20]{KarniK20}
Eyal Karni and Tali Kaufman.
\newblock High dimensional expansion using zig-zag product.
\newblock {\em arXiv preprint arXiv:2001.08829}, 2020.

\bibitem[Kla18]{Klartag18}
Bo’az Klartag.
\newblock Eldan’s stochastic localization and tubular neighborhoods of
  complex-analytic sets.
\newblock {\em The Journal of Geometric Analysis}, 28:2008--2027, 2018.

\bibitem[KM17]{KaufmanM17}
Tali Kaufman and David Mass.
\newblock High dimensional random walks and colorful expansion.
\newblock In {\em {ITCS}}, pages 4:1--4:27, 2017.

\bibitem[KM22]{KlartagM22}
B~Klartag and V~Milman.
\newblock The slicing problem by bourgain.
\newblock In {\em Analysis at Large: Dedicated to the Life and Work of Jean
  Bourgain}, pages 203--231. Springer, 2022.

\bibitem[KO18]{KaufmanO18}
Tali Kaufman and Izhar Oppenheim.
\newblock High order random walks: Beyond spectral gap.
\newblock In {\em {APPROX/RANDOM}}, pages 47:1--47:17, 2018.

\bibitem[Lee23]{Lee23}
Holden Lee.
\newblock Parallelising glauber dynamics.
\newblock {\em arXiv preprint arXiv:2307.07131}, 2023.

\bibitem[Liu21]{Liu21}
Kuikui Liu.
\newblock From coupling to spectral independence and blackbox comparison with
  the down-up walk.
\newblock {\em arXiv preprint arXiv:2103.11609}, 2021.

\bibitem[Liu23]{Liu23}
Kuikui Liu.
\newblock {\em Spectral Independence a New Tool to Analyze Markov Chains}.
\newblock PhD thesis, University of Washington, 2023.

\bibitem[LPS86]{LubotzkyPS86}
Alexander Lubotzky, Ralph Phillips, and Peter Sarnak.
\newblock Explicit expanders and the ramanujan conjectures.
\newblock In {\em {STOC}}, pages 240--246, 1986.

\bibitem[Mar88]{Margulis88}
G.~A. Margulis.
\newblock Explicit group-theoretic constructions of combinatorial schemes and
  their applications in the construction of expanders and concentrators.
\newblock {\em Problemy Peredachi Informatsii}, 24(1):51--60, 1988.

\bibitem[Mic97]{Miclo97}
Laurent Miclo.
\newblock {\em Remarques sur l’hypercontractivit{\'e} et l’{\'e}volution de
  l’entropie pour des cha{\^\i}nes de Markov finies}.
\newblock Springer, 1997.

\bibitem[MRSV17]{MurtaghRSV17}
Jack Murtagh, Omer Reingold, Aaron Sidford, and Salil Vadhan.
\newblock Derandomization beyond connectivity: Undirected laplacian systems in
  nearly logarithmic space.
\newblock In {\em 2017 IEEE 58th Annual Symposium on Foundations of Computer
  Science (FOCS)}, pages 801--812. IEEE, 2017.

\bibitem[MRSV21]{MurtaghRSV21}
Jack Murtagh, Omer Reingold, Aaron Sidford, and Salil Vadhan.
\newblock Deterministic approximation of random walks in small space.
\newblock {\em Theory of Computing}, 17(1), 2021.

\bibitem[MT05]{MontenegroT05}
Ravi Montenegro and Prasad Tetali.
\newblock Mathematical aspects of mixing times in {M}arkov chains.
\newblock {\em Foundations and Trends in Theoretical Computer Science}, 1(3),
  2005.

\bibitem[Opp18]{Oppenheim18b}
Izhar Oppenheim.
\newblock Local spectral expansion approach to high dimensional expanders part
  ii: Mixing and geometrical overlapping, 2018.

\bibitem[Rag16]{Raginsky16}
Maxim Raginsky.
\newblock Strong data processing inequalities and $\phi$-sobolev inequalities
  for discrete channels.
\newblock {\em IEEE Transactions on Information Theory}, 62(6):3355--3389,
  2016.

\bibitem[Rei08]{Reingold08}
Omer Reingold.
\newblock Undirected connectivity in log-space.
\newblock {\em Journal of the ACM (JACM)}, 55(4):1--24, 2008.

\bibitem[RR15]{RobertsR15}
Gareth~O Roberts and Jeffrey~S Rosenthal.
\newblock Surprising convergence properties of some simple gibbs samplers under
  various scans.
\newblock {\em International Journal of Statistics and Probability},
  5(1):51--60, 2015.

\bibitem[RV05]{RozenmanV05}
Eyal Rozenman and Salil Vadhan.
\newblock Derandomized squaring of graphs.
\newblock In {\em International Workshop on Approximation Algorithms for
  Combinatorial Optimization}, pages 436--447. Springer, 2005.

\bibitem[RVW00]{ReingoldVW00}
Omer Reingold, Salil~P. Vadhan, and Avi Wigderson.
\newblock Entropy waves, the zig-zag graph product, and new constant-degree
  expanders and extractors.
\newblock In {\em {FOCS}}, pages 3--13, 2000.

\bibitem[SC97]{Saloff-Coste97}
Laurent Saloff-Coste.
\newblock Lectures on finite {M}arkov chains.
\newblock In {\em Lectures on probability theory and statistics}, pages
  301--413. Springer, 1997.

\bibitem[SV23]{StefankovicV23}
Daniel Stefankovic and Eric Vigoda.
\newblock Lecture notes on spectral independence and bases of a matroid:
  Local-to-global and trickle-down from a markov chain perspective, 2023.

\bibitem[Vig00]{Vigoda00}
Eric Vigoda.
\newblock Improved bounds for sampling colorings.
\newblock {\em Journal of Mathematical Physics}, 41(3):1555--1569, 2000.

\bibitem[WLP09]{WilmerLP09}
EL~Wilmer, David~A Levin, and Yuval Peres.
\newblock Markov chains and mixing times.
\newblock {\em American Mathematical Soc., Providence}, 2009.

\end{thebibliography}
\appendix
\section{Omitted Proofs}
\subsection{Data Processing Inequality for $\Phi$-Entropies, Proof of \cref{fac:dpi}}\label{ap:dpi}
\begin{proof}[Proof of \cref{fac:dpi}]
We write $M = \Exp_{\pi_2} \eff$. Since $\Pii$ is row-stochastic, we notice we also have $M = \Exp_{\pi_1} \Pii \eff$. Now, we can write
\[ \Ent^\Phi_{\pi_1}(\Pii \eff) = \Exp_{\omega_1 \sim \pi_1} \Phi\parens*{[\Pii \eff](\omega_1)} - M = \Exp_{\omega_1 \sim \pi_1} \Phi\parens*{ \sum_{\omega_2 \in \Omega_2} \Pii(\omega_1, \omega_2) \eff(\omega_2) } - M.\]
Using the convexity of $\Phi$ and the row-stochasticity of $\Pii$, we obtain,
\[ \Ent_{\pi_1}^\Phi( \Pii \eff) \le \Exp_{\omega_1 \sim \pi_1}\sqbr*{ \sum_{\omega_2 \in \Omega_2} \Pii(\omega_1, \omega_2) \Phi\parens*{ \eff(\omega) }} - M.\]
Now, when we expand the sum, we see that the coefficient of $\Phi\parens*{\eff(\omega_2)}$ in the expectation above is,
\[ \sum_{\omega_1 \sim \pi_1} \pi_1(\omega_1) \Pii(\omega_1, \omega_2) = \sqbr*{ \pi_1 \Pii}(\omega_2) = \pi_2(\omega_2),\]
where the last equality is due to the assumption $\pi_1 \Pii = \pi_2$. Now, using this we obtain
\[ \Ent_{\pi_1}^\Phi(\Pii \eff) \le \Exp_{\omega_2 \sim \pi_2} \Phi(\eff(\omega_2)) - M = \Ent^\Phi_{\pi_2}(\eff).\]
The second statement concerning $\CF_\Phi(\Pii\Quu)$ follows now, by observing
\begin{align*}
    \Ent_{\pi_1}(\Pii \Quu \etch)~\le~\Ent_{\pi_2}(\Quu \etch)&~\le~(1 - \CF_\Phi(\Quu)) \cdot \Ent_{\pi_3}(\etch),&&\\
    \Ent^\Phi_{\pi}(\Pii\Quu\etch)~\le~(1 - \CF_\Phi(\Pii)) \cdot \Ent_{\pi_2}(\Quu \etch) &~\le~(1 - \CF_\Phi(\Pii)) \cdot \Ent_{\pi_3}^\Phi(\etch).&&
    \end{align*}
\end{proof}
\subsection{Entropy Contraction is Controlled by the Log-Sobolev Constant, Proof of \cref{lem:lsfun}}\label{ap:lsfun}
\begin{proof}[Proof of \cref{lem:lsfun}]
    Let $\eff \in \RRp^{\Omega_2}$ we given such that $\Exp_{\mu_2} \eff^2 = 1$. We write $\etch = \eff^2 \log \eff^2$, i.e.~$\etch(\omega_2) = \eff(\omega_2)^2 \cdot \log\parens*{ \eff(\omega_2)^2}$ for all $\omega_2 \in \Omega_2$.
    \begin{claim}\label{cl:weird}
        Let $\eff \in \RRp^{\Omega_2}$ be given and $\etch$ be defined as above. Then,
        \[ \sqbr*{ \Pii \etch}(\omega_1) \ge \sqbr*{\Pii \eff^2}(\omega_1) \cdot \log\parens*{ \sqbr*{\Pii \eff^2}(\omega_1)}  + \sqbr*{\Pii \eff^2}(\omega_1) - \sqbr*{ \Pii \eff}(\omega_1)^2. \]
    \end{claim}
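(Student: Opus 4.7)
The plan is to reinterpret the claim pointwise (in $\omega_1$) as a comparison between a $\nu$-entropy and a $\nu$-variance, where $\nu := \Pii(\omega_1, \cdot)$ is the probability distribution on $\Omega_2$ given by the $\omega_1$-th row of $\Pii$ (well-defined since $\Pii$ is row stochastic). Substituting $\etch = \eff^2 \log \eff^2$ and unpacking the notation, the inequality to prove becomes exactly
\[ \Ent_\nu(\eff^2) \;\ge\; \Var_\nu(\eff). \]
So it will suffice to establish this scalar-measure inequality for every probability measure $\nu$ and every $\eff \in \RRp^{\Omega_2}$.

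I will prove this via a single pointwise bound: for all $y \ge 0$,
\[ y \log y \;\ge\; 2y - 2\sqrt y, \]
which follows from the familiar tangent-line inequality $z \log z \ge z - 1$ by substituting $z = \sqrt y$ and multiplying both sides by $2\sqrt y \ge 0$, using the identity $2\sqrt y \cdot \sqrt y \log \sqrt y = y \log y$. Writing $M := \Exp_\nu \eff^2$ (with the case $M = 0$ being trivial since then $\eff = 0$ $\nu$-almost surely and both sides of the claim vanish), I will apply this bound to $y = \eff^2/M$, multiply through by $M$, and integrate against $\nu$ to obtain
\[ \Ent_\nu(\eff^2) \;=\; \Exp_\nu\sqbr*{\eff^2 \log \eff^2} - M \log M \;\ge\; 2M - 2\Exp_\nu\sqbr*{\eff}\sqrt M. \]

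To close the gap to the variance, I will then use $(\sqrt M - \Exp_\nu \eff)^2 \ge 0$ (valid since $\eff \ge 0$), which rearranges to $2M - 2 \Exp_\nu\sqbr*{\eff}\sqrt M \ge M - (\Exp_\nu \eff)^2 = \Var_\nu(\eff)$. Chaining this with the previous display delivers the claim. The only subtle choice is using the tighter pointwise bound $y \log y \ge 2y - 2\sqrt y$ rather than the standard $y \log y \ge y - 1$: the latter would merely recover $\Ent_\nu(\eff^2) \ge 0$. The $\sqrt y$ term is essential because, after integration, Cauchy--Schwarz in the form $\Exp_\nu \eff \le \sqrt{\Exp_\nu \eff^2}$ is exactly what converts it into the variance.
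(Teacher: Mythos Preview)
Your proof is correct and follows essentially the same route as the paper's: the paper invokes Miclo's inequality $(t+s)\log(t+s) \ge t\log t + s(1+\log t) + (\sqrt{t+s}-\sqrt t)^2$, which after the substitution $y=(t+s)/t$ is precisely your pointwise bound $y\log y \ge 2y - 2\sqrt y$, and then finishes with the identical observation that $(\sqrt M - \Exp_\nu \eff)^2 \ge 0$. Your framing of the claim as $\Ent_\nu(\eff^2) \ge \Var_\nu(\eff)$ and your self-contained derivation of the pointwise bound from $z\log z \ge z-1$ are a bit cleaner than citing Miclo, but the mathematical content is the same.
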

    Then, assuming \cref{cl:weird}, we have
    \begin{align*}
        \Ent_{\mu_1}(\Pii \eff^2) &~=~\Exp_{\omega_1 \sim \mu_1}\sqbr*{ \sqbr*{\Pii \eff^2 }(\omega_1) \cdot \log\parens*{ \sqbr*{ \Pii \eff^2}(\omega_1)}},\\
    &~\le^\varheart~\Exp_{\omega_1 \sim \mu_1}\sqbr*{ \sqbr*{\Pii \etch}(\omega_1)} - \Exp_{\omega_1 \sim \mu_1}\sqbr*{ \sqbr*{ \Pii \eff^2}(\omega_1) - \sqbr*{\Pii \eff}(\omega_1)^2},\\
        &~=~\Exp_{\omega_1 \sim \pi_1}\sqbr*{\sqbr*{ \Pii \etch}(\omega_1) } - \Exp_{\omega_1 \sim \mu_1}\sqbr*{ \sqbr*{\Pii \eff^2}(\omega_1)} - \inpr*{ \Pii \eff, \Pii \eff}_{\mu_1},\\
    &~=^\vardiamond~\Exp_{\omega_2 \sim \mu_2}[\etch(\omega_2)] - \Exp_{\omega_2 \sim \mu_2}[\eff^2(\omega_2)] + \inpr*{ \eff, \Pii^* \Pii \eff}_{\mu_2},\\
    &~=~ \Exp_{\mu_2} \etch - \inpr*{ \eff, \parens*{\Ide - \Pii^* \Pii} \eff}_{\mu_2}
    \end{align*}
    where we have used \cref{cl:weird} to obtain the inequality marked with $(\varheart)$ and that $\Pii$ is a row-stochastic operator to obtain the equality marked by $(\vardiamond)$. Now, noting that $\Exp_{\mu_2} \etch = \Ent_{\mu_2}(\eff^2)$ we can obtain,
    \[ \Ent_{\mu_1}(\Pii \eff^2) \le \Ent_{\mu_2}(\eff^2) - \inpr*{ \eff, \parens*{\Ide - \Pii^* \Pii} \eff}_{\mu_2}.\]
    Now appealing to the definition of the log-Sobolev constant $\LS(\Pii^* \Pii)$ (\cref{eq:lsc-def}) we obtain
    \[ \Ent_{\mu_1}(\Pii \eff^2) \le (1 - \LS(\Pii^* \Pii) ) \cdot \Ent_{\mu_2}(\eff^2).\]
    Since for any $\gee \in \RRp^{\Omega}$ exists $\eff \in \RRp^{\Omega}$ such that $\gee = \eff^2$, we have shown $\EC(\Pii) \ge \LS(\Pii^* \Pii)$, as $\EC(\Pii)$ is the largest constant $C$ such that the inequality
     \[ \Ent_{\mu_1}(\Pii \gee) \le (1 - C ) \cdot \Ent_{\mu_2}(\gee),\]
     holds for all $\gee \in \RRp^{\Omega}$. Now, we prove the claim \cref{cl:weird},
     \begin{proof}[Proof of \cref{cl:weird}]
     We will make use of the following inequality, \cite[Lemma 5]{Miclo97}
     \begin{equation}\label{eq:miclo}
         (t + s) \log (t + s) \ge  t \log t + s (1 + \log t) + \parens*{ \sqrt{t + s} - \sqrt t}^2~~\textrm{for all}~~t \ge 0~\textrm{and}~s \ge - t
     \end{equation}
     Now, writing $t = \sqbr*{ \Pii \eff^2}(\omega_1)$, we have
     \begin{align*}
         \sqbr*{\Pii \etch}(\omega_1) &~=~\sum_{\omega_2 \in \Omega_2} \Pii(\omega_1, \omega_2) \cdot (\eff^2(\omega_2) \log \eff^2(\omega_2)),\\
         &~=~\sum_{\omega_2 \in \Omega_2} \Pii(\omega_1, \omega_2) \cdot (\eff^2(\omega_2 + t - t) \log \eff^2(\omega_2+ t - t)),\\
         &~\ge^{\star}~\sum_{\omega_2 \in \Omega_2}\Pii(\omega_1, \omega_2) \cdot \parens*{ t \log t + (\eff^2(\omega_2)  -t)(1 + \log t) + \parens*{\eff(\omega_2) - \sqrt{t}}^2},\\
         &~=^\varheart~\sqbr*{\Pii \eff^2}(\omega_1) \cdot \log\parens*{ \sqbr*{ \Pii \eff^2}(\omega_1)} + \sum_{\omega_2 \in \Omega_2} \Pii(\omega_1, \omega_2) \cdot \parens*{ \eff(\omega_2) - \sqrt t}^2
     \end{align*}
     where we have used \cref{eq:miclo} to obtain the inequality marked with $(\star)$ and that $\sum_{\omega_2} \Pii(\omega_1, \omega_2) \eff^2(\omega_2) = t$ to obtain the equality marked with $(\star)$ -- as this implies that the second term in the RHS of the inequality above vanishes when one expands the sum. Now expanding the second term further, we have obtain
     \begin{equation}\label{eq:last-weird}
         \sqbr*{ \Pii \etch}(\omega_1) \ge \sqbr*{\Pii\eff^2}(\omega_1) \cdot \log\parens*{ \sqbr*{\Pii \eff^2}(\omega_1)} + \underbrace{2 \parens*{\sqbr*{\Pii \eff^2}(\omega_1)} - 2\sqrt{ \sqbr*{\Pii \eff^2}(\omega_1) } \cdot \sqbr*{ \Pii \eff}(\omega_1)}_{\tau}.
     \end{equation}
     Notice that, we have
     \[ \tau- \parens*{\sqbr*{\Pii \eff^2}(\omega_1) - \sqbr*{\Pii \eff}(\omega_1)^2} = \parens*{ \sqrt{\sqbr*{ \Pii \eff^2}(\omega_1)} - \sqrt{\sqbr*{\Pii \eff}(\omega_1)}}^2 \ge 0\]
     which in conjunction with \cref{eq:last-weird} implies the desired inequality.
     \end{proof}
 \end{proof}
 \subsection{Variance Contraction is Controlled By Links, Proof of \cref{eq:varcont}}\label{app:varcont}
 \begin{proof}[Proof of \cref{eq:varcont}]
    Recall that $\Ent^{\Phi}_{\bullet}(\bullet)$ is just the \ref{eq:v-def} functional $\Var_{\bullet}(\bullet)$ for $\Phi(t) = t^2$. Thus, $\lc_\Phi(\widehat\omega)$ is the smallest constant $C$ for which the inequality,
    \begin{equation}\label{eq:gyoza}
\Var_{\pi_1^{(\widehat\omega)}}\parens*{ \Upo_{\widehat\omega, 1 \to n'} \gee} \le C \cdot \Var_{\pi_{n'}^{(\widehat\omega)}}(\gee).      
    \end{equation} 
    where $n' = n - |\widehat\omega|$. In particular, writing $c = \Exp_{\pi_{n'}^{(\widehat\omega)}} \gee$. Then, we have
    \begin{align*}
        \Var_{\pi_{n'}^{(\widehat\omega)}}(\gee) &~=~\inpr*{\gee - c \cdot \one, \gee - c \cdot \one}_{\pi_{n'}^{(\widehat\omega)}},\\
        \Var_{\pi_1^{(\widehat\omega)}} &~=~\inpr*{\Upo_{\widehat\omega, 1 \to n'}(\gee - c \cdot \one), \Upo_{\widehat\omega}(\gee - c \cdot \one)}_{\pi_1^{(\widehat\omega)}}.
    \end{align*}
    In particular, by replacing $\gee$ with $\etch := \gee - c \cdot \one$, we can observe that \cref{eq:gyoza} is equivalent to
    \[ \inpr*{ \etch, \Duw_{\widehat\omega, n' \lra 1} \etch}_{\pi_{n'}^{
(\widehat\omega)}} = \inpr*{\Upo_{\widehat\omega, 1 \to n'}\etch, \Upo_{\widehat\omega}\etch }_{\pi_1^{(\widehat\omega)}} \le C \cdot \norm*{ \etch }^2_{\pi_{n'}^{(\widehat\omega)}}.\]
In particular, by \cref{fac:cf-baby} the best $C$ that satisfies the inequality is simply $\lambda_2(\Duw_{\widehat\omega, n' \lra 1})$. Notice that, by \cref{fac:switcheroo} 
\[ \lambda_2(\Duw_{\widehat\omega, n' \lra 1}) = \lambda_2\parens*{\Doo_{\widehat\omega, n' \to 1}\Upo_{\widehat\omega, 1 \to n'}} = \lambda_2\parens*{ \Upo_{\widehat\omega, 1 \to n'}\Upo_{\widehat\omega, n' \to 1}} = \lambda_2\parens*{\Udw_{\widehat\omega, 1 \lra n'}}.\]
Now, a direct computation shows that for all $x, y \in X_{\widehat\omega}^{(1)}$,
\[ \Udw_{\widehat\omega, 1 \lra n'}(x, y) = \frac{ \one[x = y]}{n'} + \frac{ \one[x \ne y] \cdot \Pr_{\omega \sim \pi}\sqbr*{ \omega \supset \widehat\omega \sqcup \set{x, y} \mid \omega \supset \widehat\omega }}{n'}.\]
In particular, recalling the definition of the \ref{eq:link-def} $\Emm_{\widehat\omega}$,
\[ \Udw_{\widehat\omega, 1 \lra n'} = \frac{\Ide}{n'} + \frac{n'- 1}{n'} \cdot \Emm_{\widehat\omega},\]
i.e.~$\lambda_2(\Udw_{\widehat\omega, 1 \lra n'}) = \frac{1}{n'} + \frac{n' - 1}{n'} \cdot \lambda_2(\Emm_{\widehat\omega})$. Thus the statement follows.
\end{proof}
\subsection{Stationarity and Reversibility of Expanderized Up-Down Walk for Non-Squared $H$, Proof of \cref{prop:stat}}\label{ap:stat}
\begin{proof}[Proof of \cref{prop:stat}]
  Let $\widehat\omega, \widetilde\omega \in X^{(\ell)}$ be given, with $\typ(\widehat\omega) = S$ and $\typ(\widetilde\omega) = T$.
  Notice that $\Papx_{\ell \lra n}$ makes a transition from $\widehat\omega$ to $\widetilde\omega$ under two conditions (i) $S \sim_H T$ and (ii) we sample a face $\omega \in X^{(n)}$ in the first step of the algorithm such that $\omega_T = \widetilde\omega$. Thus, we have
  \begin{equation}\label{eq:eqstuf} \Papx_{\ell \lra n}(\widehat\omega, \widetilde\omega) = \frac{\one[S \sim_H T]}{k} \cdot \Pr_{\omega \sim \pi_n}\sqbr*{ \omega_T = \widetilde\omega \mid \omega_S = \widehat\omega}. \end{equation}
  With this, we compute
  \begin{align*} \sqbr*{\pi_\ell \Papx_{\ell \lra n}}(\widetilde\omega) &~=~\sum_{\widehat\omega \in X^{(\ell)}} \pi_{\ell}(\widehat\omega) \cdot  \frac{\one\sqbr*{ \typ(\widehat\omega) \sim_H T}}{k} \cdot \Pr_{\omega \sim \pi_n}\sqbr*{\omega_T = \widetilde\omega \mid \omega_{\typ(\widehat\omega)} = \widehat\omega},\\
  &~=~\frac{1}{\binom{n}{\ell}}\sum_{S \in \binom{n}{\ell}} \frac{\one[S \sim_H T]}{k} \cdot \sum_{\widehat\omega \in X[S]} \Pr_{\omega \sim \pi_n}[\omega_S = \widehat\omega] \cdot \Pr_{\omega \sim \pi_n}\sqbr*{\omega_T = \widehat\omega \mid \omega_S = \widehat\omega },\\
  &~=~~\frac{1}{\binom{n}{\ell}}\sum_{S \in \binom{n}{\ell}} \frac{\one[S \sim_H T]}{k} \cdot \sum_{\widehat\omega \in X[S]} \Pr_{\omega \sim \pi_n}\sqbr*{ \omega_S = \widehat\omega~~\textrm{and}~~\omega_T= \widetilde\omega},\\
  &~=~\frac{\Pr_{\omega \sim \pi_n}[\omega_T = \widetilde\omega]}{\binom{n}{\ell}} \cdot \sum_{ S \in \binom{n}{\ell}} \frac{\one[S \sim_H T]}{k} \sum_{\widehat\omega \in X[S]} \Pr[\omega_S = \widehat\omega \mid \omega_T = \widetilde\omega ],\\
  &~=~\frac{\Pr_{\omega \sim \pi_n}[\omega_T = \widetilde\omega]}{\binom{n}{\ell}},\\ &~=~ \pi_\ell(\widetilde\omega),
  \end{align*}
  where we have used multiple times that $\pi_\ell(\widetilde\omega) =  \binom{n}{\ell}^{-1} \cdot \Pr[\omega_T = \widetilde\omega]$ for any $\widetilde\omega \in X^{(\ell)}$ in a partite complex with $\typ(\widetilde\omega) = T$. The last inequality follows since the inner some over $\widehat\omega$ sums to 1 and so does the outer sum. 

  Using \cref{eq:eqstuf} we can also verify the following detailed balance conditions, since
  \[ \pi_\ell(\widehat\omega) \Papx_{\ell \lra n}(\widehat\omega, \widetilde\omega) = \frac{\one[S \sim_H T]}{k} \cdot \frac{\Pr_{\widehat\omega \sim \pi_n}[\omega_T = \widetilde\omega~~\textrm{and}~~\omega_S = \widehat\omega]}{\binom{n}{\ell}} = \pi_{\ell}(\widetilde\omega) \Papx_{\ell \lra n}(\widetilde\omega, \widehat\omega), \]
  where we have assumed $\typ(\widetilde\omega) = T$ and $\typ(\widehat\omega) = S$.
\end{proof}

\end{document}